\documentclass[11pt,letter]{article}

\usepackage[T1]{fontenc}

\usepackage{mdframed}
\usepackage{lipsum}
\usepackage{amsmath,amsthm,amssymb,amsfonts,bm,nicefrac,algpseudocode}
\usepackage{mathrsfs}
\usepackage{thmtools}
\usepackage{thm-restate}
\usepackage{lineno}
\usepackage{authblk}

\allowdisplaybreaks

\usepackage{fullpage}
\usepackage[margin = 2.54cm]{geometry}

\usepackage[numbers]{natbib}

\usepackage{epsfig}
\usepackage{microtype}
\usepackage{wrapfig}
\usepackage{graphicx}
\usepackage{enumitem}
\usepackage{indentfirst,relsize}
\usepackage{setspace}
\usepackage{latexsym}
\usepackage{stackrel}
\usepackage[all]{xy}
\usepackage[usenames,dvipsnames]{pstricks}
\usepackage{pst-grad} %
\usepackage{pst-plot} %
\usepackage{xspace}
\usepackage{bbm}
\usepackage{cancel}
\usepackage{url}

\usepackage[super]{nth}
\usepackage[colorlinks=true,urlcolor=blue,linkcolor=blue,citecolor={red!75!black},unicode, pagebackref, breaklinks]{hyperref}

\usepackage[nameinlink]{cleveref}

\usepackage[ruled,linesnumbered,vlined,boxed,commentsnumbered]{algorithm2e}

\SetKwProg{Procedure}{Procedure}{}{}
\SetKwInOut{Given}{Given}

\newcommand{\size}[1]{\left| #1 \right|}

\newcommand{\E}{\mathop{\mathbb{E}}}
\newcommand{\remove}[1]{}

\newcommand{\F}{\mathbb{F}}
\newcommand{\R}{\mathbb{R}}

\newcommand{\N}{\mathbb{N}}

\newcommand{\indic}{\mathbbm{1}}
\newcommand{\Z}{\mathbb{Z}}
\newcommand{\caD}{\mathcal{D}}
\DeclareMathOperator{\pline}{L}
\newcommand{\caP}{\mathcal{P}}

\newcommand{\cI}{\mathcal{I}}

\newcommand{\cB}{\mathcal{B}}
\newcommand{\cC}{\mathcal{C}}
\newcommand{\cD}{\mathcal{D}}

\newcommand{\cF}{\mathcal{F}}
\newcommand{\cP}{\mathcal{P}}
\newcommand{\cN}{\mathcal{N}}

\newcommand{\cR}{\mathcal{R}}

\newcommand{\lrad}{R}
\newcommand{\srad}{r}

\newcommand{\var}{\mathop{\mathsf{Var}}}
\newcommand{\maxop}{\mathop{\text{max}}}
\newcommand{\minop}{\mathop{\text{min}}}

\newcommand{\cov}{\mathop{\sf Cov}}
\newcommand{\eps}{\varepsilon}
\newcommand{\pr}{\mathrm{Pr}}

\newcommand{\ceil}[1]{{\lceil{#1}\rceil}}
\newcommand{\floor}[1]{{\lfloor{#1}\rfloor}}

\renewcommand{\vec}[1]{\ensuremath{\boldsymbol{#1}}}
\newcommand{\iprod}[2]{\ensuremath{\left\langle#1,#2\right\rangle}}
\DeclareMathOperator{\ball}{B}

\newcommand{\ideg}{\mathrm{ideg}}
\newcommand{\dist}{\mathsf{dist}} %
\newcommand{\ignore}[1]{} %

\remove{\usepackage[symbol]{footmisc}

}

\theoremstyle{plain}
\newtheorem{theo}{Theorem}[section]

\newtheorem{rem}[theo]{Remark}
\newtheorem{obs}[theo]{Observation}
\newtheorem{lem}[theo]{Lemma}

\newtheorem{cl}[theo]{Claim}

\theoremstyle{definition}
\newtheorem{defi}[theo]{Definition}
\newtheorem{fact}[theo]{Fact}

\usepackage{cleveref}

\usepackage{todonotes}

\newcounter{mynotes}
\newcommand{\paren}[1]{\left(#1\right)}

\newcommand{\infl}{\mathsf{Infl}\xspace}

\newcommand{\testlin}{\textsc{Test-$k$-Linear}\xspace}

\newcommand{\testjunta}{\textsc{Test-$k$-Junta}\xspace}

\newcommand{\testsparse}{\textsc{Test-$k$-Sparse}\xspace}

\newcommand{\apxqueryg}{\textsc{ApproxQuery-}g\xspace}

\newcommand{\apxlowdegtest}{\textsc{ApproxLowDegreeTester}\xspace}

\newcommand{\findbucket}{\mathsf{FindInfBucket}\xspace}

\newcommand{\findbuckets}{\mathsf{FindInfBuckets}\xspace}

\title{Testing Sparse Functions over the Reals}

\author[1]{Vipul Arora\thanks{This work was done in part while the author was visiting the Simons Institute for the Theory of Computing.}}
\affil{\footnotesize National University of Singapore. \texttt{\href{mailto:vipul@comp.nus.edu.sg}{vipul@comp.nus.edu.sg}}.}
\author[2]{Arnab Bhattacharyya\thanks{Research supported by a start-up
grant at the University of Warwick.}}
\affil{\footnotesize University of Warwick. \texttt{\href{mailto:arnab.bhattacharyya@warwick.ac.uk}{arnab.bhattacharyya@warwick.ac.uk}}.}
\author[3]{Philips George John}
\affil{\footnotesize CNRS@CREATE \& NUS. \texttt{\href{mailto:philips.george.john@u.nus.edu}{philips.george.john@u.nus.edu}}.}
\author[3]{Sayantan Sen\thanks{Research supported by the NRF Investigatorship award (NRF-NRFI10-2024-0006)
and CQT Young Researcher Career Development Grant (25-YRCDG-SS).}}
\affil{\footnotesize Centre for Quantum Technologies, National University of Singapore. \texttt{\href{mailto:sayantan789@gmail.com}{sayantan789@gmail.com}}.}

\date{
}

\begin{document}

\maketitle

\begin{abstract}
Over the last three decades, function testing has been extensively studied over Boolean, finite fields, and discrete settings. However, to encode the real-world applications more succinctly, function testing over the reals (where the domain and range, both are reals) is of prime importance. Recently, there have been some works in the direction of testing for algebraic representations of such functions: the work by Fleming and Yoshida (ITCS 20), Arora, Kelman, and Meir (SOSA 25) on \emph{linearity testing} and the work of Arora, Bhattacharyya, Fleming, Kelman, and Yoshida (SODA 23) for testing \emph{low-degree polynomials}. Our work follows the same avenue, wherein we study three well-studied sparse representations of functions, over the reals, namely (i) \emph{$k$-linearity}, (ii) \emph{$k$-sparse polynomials}, and (iii) \emph{$k$-junta}. 

In this setting, given approximate query access to some $f:\R^n \rightarrow \R$, we want to decide if the function satisfies some property of interest, or if it is \emph{far} from all functions that satisfy the property.
Here, the distance is measured in the $\ell_1$-metric, under the assumption that we are drawing samples from the Standard Gaussian distribution. We present efficient testers and $\Omega(k)$ lower bounds for testing each of these three properties.

\end{abstract}

\section{Introduction}
Property testing~\cite{blum1990self,RubinfeldS96,GGR} is a rigorous framework for studying the global properties of large datasets by accessing only a few entries of it. 
In particular, given query access to an unknown ``huge object'', the goal of property testing is to verify some property of the object by only inspecting a small portion of it.
Formally, we can define property testing of functions, or \emph{function testing}, by considering the ``huge object'' as a function $f$ over an underlying domain, which we can \emph{query} at a small number of points in order to verify a property of the function $f$ in an approximate sense. 
For example, consider the problem of \emph{linearity testing} of functions $f:\F^n \rightarrow \F$, where $\F$ is a finite field. We are given access to a query oracle for $f$; i.e., on input $\bm{x}$, the oracle returns $f(\bm{x})$.
The goal is to distinguish with high probability between two cases, viz. (i) $f$ is a linear function, or (ii) $f$ is ``far'' from all linear functions, while performing as few queries to the oracle as possible.

The field of property testing was initiated in the seminal work of \cite{blum1990self,blum1993self}, who studied the problem of \emph{self-testing} of programs, where the goal was to understand the correctness of a program by verifying its outputs on a set of correlated input data. 
A fundamental problem they studied is that of testing whether $f$ is \emph{linear} (or generally, a homomorphism of abelian groups) or $\varepsilon$-far from linearity, with the notion of $\eps$-farness of $f: \F^n \rightarrow \F$ from a function class $\cP$ defined as: 
\[\mathsf{dist}_{\mathsf{Unif}(\F^n),\ell_0}(f,\cP) = \inf_{g \in \cP} \Pr_{\bm x \sim \mathsf{Unif}(\F^n)}[f({\bm x}) \neq g({\bm x})] \geq \eps,\]
i.e., for any function $g$ that satisfies property $\cP$, $f$ disagrees with $g$ on at least an $\eps$-fraction of the inputs $\bm x$ (uniformly drawn from $\F^n$).
In the context of linearity testing, $\cP$ is the class of linear functions/homomorphisms (the $\ell_0$-distance definition given above can be easily generalized in terms of function domain and range, input distribution etc). They showed a constant query (independent of $|\F|$ and $n$) linearity tester, eponymously known as the \emph{BLR tester}.

Over the last three decades, property testing has been extensively studied in many settings, such as when the unknown object is a function, a graph, or a probability distribution, with many natural connections to real-world problems, e.g., in the context of probabilistically checkable proofs~\cite{pcp1,arora-safra,raz1997sub,dinur}, PAC learning~\cite{GGR}, program checking~\cite{RubinfeldS92,RubinfeldS96}, approximation algorithms~\cite{GGR,pcp2} and many more.

In particular, starting from the very first work that initiated the field of property testing~\cite{blum1990self}, 
over the last three decades, function property testing has been extensively studied in several settings, culminating in a wide array of tools and techniques. This includes settings where the function is defined over finite fields~\cite{alon2005testing,kaufman2006testing,jutla2009testing,friedl1995some,bhattacharyya2010optimal,bhattacharyya2013every,gopalan2011testing,samorodnitsky2007low}, over hyper-grids~\cite{dodis1999improved,blais2014lower,chakrabarty2013optimal,baleshzar2017optimal}, etc. One may see the books~\cite{goldreich2017introduction, bhattacharyya2022property} and the surveys~\cite{DBLP:journals/eatcs/Fischer01,ron2008property,DBLP:journals/fttcs/Ron09} for detailed references. 

\paragraph*{Our problem setting} Suppose we are given query access to an unknown function $f: \R^n \rightarrow \R$, and 
our goal is to distinguish whether (i) $f$ satisfies some property $\cP$, or (ii) $f$ is $\eps$-far from all functions that satisfy $\cP$. We define the notion of $\eps$-farness over the reals in the following way: we fix a reference distribution $\cD$ on $\R^n$, and we say $f$ is $\eps$-far from $\cP$ if the following holds\footnote{The notion of $\ell_1$-distance is more appropriate for real-valued functions, rather than the commonly used Hamming ($\ell_0$) distance, especially in the \emph{approximate query setting} which we define later. For example, consider the case if $g = f + \varepsilon$ for some $\varepsilon > 0$, then $\mathsf{dist}_{\cD,\ell_1}(f,g) = \varepsilon$, whereas $\mathsf{dist}_{\cD,\ell_0}(f,g) = 1$.
}:
\[\mathsf{dist}_{\cD,\ell_1}(f,\cP) \triangleq \inf_{g \in \cP} \left\{\E_{\bm x \sim \cD}\,[|f(\bm x) - g(\bm x)|]\right\} \geq \eps.\]

Studying function testing over the reals often requires new techniques compared to testing on finite domains. A common reference distribution on $\R^n$ is the standard $n$-dimensional Gaussian distribution $\cN(\bm 0,I_n)$, which is approximately a uniform distribution on an $\ell_2$-sphere of radius $\sqrt{n}$. And the notion of, exact queries (oracle giving $f(\bm x)$) is not realistic over the reals since one may require exponentially many bits to represent the exact function value. Although the problem of function testing over the reals has several practical motivations, there have been comparatively few works in this setting till now (compared to testing over finite domains) such as testing surface areas~\cite{neeman2014testing,kothari2014testing}, testing  halfspaces~\cite{matulef2010testing1,matulef2010testing2,matulef2009testing3}, linear separators~\cite{balcan2012active}, high-dimensional convexity~\cite{FSS17}, linear $k$-juntas~\cite{de2019your} etc.

Interestingly, these works mostly focus on the setting when $f$ is Boolean valued, i.e., $f:\R^n \rightarrow \{\pm 1\}$.  
The setting where the range of $f$ is real has also been studied, e.g., in the work of \cite{lptesting}, wherein they test properties of functions, defined over finite hyper-grids, with respect to $L_p$-distances.
In \cite{black_et_al}, functions over the hypercube, i.e., $f:\{0,1\}^n\to\R$ are studied for monotonicity.  Recently, \cite{ferreirapintojr23,Pinto24} studied $L^p$ testing of monotonicity of Lipschitz functions $f:[0,1]^n\to\R$.

\cite{fleming2020distribution} studied the problem of linearity testing for real-valued real-domain functions in full generality, i.e., for functions $f:\R^n\to\R$. Later, \cite{arora2} studied the problem of testing low-degree polynomials in this regime, followed by the work of \cite{arora2024optimaltestinglinearitysosa}, which improved these results by achieving query-optimality with respect to the proximity parameter $\eps$. In this work, we expand the landscape of testing various fundamental notions of sparsity to the general $f: \R^n \to \R$ regime. 

A recent work of \cite{bao2025testingnoisylowdegreepolynomials} studies a problem similar to sparsity testing of low degree polynomials, in a sample-based access model. However, it assumes a promise that the unknown function $f:\R^n\to\R$ is a \emph{multi-linear} polynomial to begin with, and needs a large gap for the sparsity parameter between the YES and NO cases.

When the function $f$ to be tested is real valued, two different models can be considered with respect to the \emph{query accuracy}. 
In the first model, known as the \emph{arbitrary precision arithmetic} or the \emph{exact testing} model, we assume that the oracle representing $f$ can give the exact value of $f$ at any point of query ${\bm x}$. However, this model is unrealistic from an implementation, or even from a classical complexity-theoretic viewpoint. 
So, we instead consider a \emph{finite precision arithmetic} or \emph{approximate testing} model:
\begin{restatable}[$\eta$-\emph{approximate query}]{defi}{etaapproxquery}\label{defi:etaapproxquery}
    The oracle, when queried for $f(\bm x)$, outputs  $\widetilde{f}({\bm x})$ such that $|\widetilde{f}({\bm x}) - f({\bm x})| \leq \eta$, for some small parameter $\eta \in (0,1)$, for every query point ${\bm x}$.
\end{restatable}

It is clear that for any property $\cP$, any tester for $\cP$ in the approximate model will also work in the exact model. 
The notion of approximate testing was studied in earlier works~\cite{gemmell1991self,ar1993checking,ergun2001checking}. 
In this context, $\eta$ can be thought of as the resolution limit of the computational machine, i.e., if the machine offers some $\alpha$ bits of precision, then $\eta= 2^{-\alpha}$. 
$\eta$ can also be thought of as the noise reliability threshold of a channel communicating reals, i.e., if some information $a \in \R$ is transmitted on a channel with a reliability threshold of $\eta$, then the received observable $\widetilde a\in\R$ satisfies $|a-\widetilde a|\leq\eta$. 
When $\eta = 0$, this is the exact query model. 

All our testers in this work are analyzed in the approximate query model for $\eta > 0$. Our results also require appropriate upper bounds on $\eta$, which will be stated in each context. 
But we use the exact query model for proving the lower bounds, since lower bounds for $\eta=0$ also hold for any $\eta$-approximate query model for $\eta>0$.

Similar to the context of access to the unknown functions, there are some variations in terms of the error profile of the testing algorithms. 
A tester is said to have \emph{two-sided error} if it can err in both the cases: when $f \in \cP$, and when $f$ is $\eps$-far from $\cP$. 
This is in contrast with \emph{one-sided error} testers, which always decide correctly when $f \in \cP$, and can only err when $f$ is $\eps$-far from $\cP$.
Likewise, a tester is said to be \emph{adaptive} if it performs queries based on the answers it obtained for the previous queries. On the other hand, a \emph{non-adaptive} tester performs all its queries together in a single round.

A tester over a finite domain is said to be \emph{local} if the number of queries performed by it is independent of the domain size.

In this work, we design testers in the approximate query model for three sparse function representations: $k$-linear, $k$-sparse low-degree polynomials, and $k$-juntas. 
Our testers for $k$-linearity, and $k$-sparse low-degree polynomials have two-sided error, whereas our $k$-junta tester has one-sided error. 
Our testers for $k$-linear functions and $k$-juntas are adaptive, while our tester for $k$-sparse low-degree polynomials is non-adaptive. \textcolor{black}{Importantly, all our testers are \emph{local}: the number of queries performed by our testers is independent of the domain dimension $n$, and depends only on the sparsity parameter $k$ and the proximity parameter $\varepsilon$ (as well as the total degree $d$ in the $k$-sparse low-degree polynomial case)}.

These properties are well-studied in the Boolean and finite fields regime, and this work strives to do the same over continuous domains. We believe the new techniques we have developed to design these testers will be of independent interest.

{
\color{black}
\paragraph*{Boundedness} We assume that the unknown functions are bounded inside an $\ell_2$ ball of suitable radius (typically $O(\sqrt{n})$), i.e., for every $\bm x\in \ball(\bm 0,O(\sqrt{n}))$, $f(\bm x) \leq C \|\bm x\|_2$ for some fixed constant $C$ (so $f(\bm x) \leq C \sqrt{n}$ in $\ball(\bm 0, 2\sqrt{n})$)~\footnote{This boundedness notion corresponds to the bounds we get for low-degree polynomials over compact domains, e.g., if $f(x) = a_1 x + \ldots + a_d x^d$ is a polynomial in $x$, then
$|f(x)| \leq \sum_{i=1}^{d} |a_i| R^i$ for any $x \in [-R, R]$, for all $R\geq 0$.}. Note that testing only via bounded queries is impossible without such an assumption. For example, consider an arbitrary ``good'' function $f: \R^n \rightarrow \R$ ($k$-linear, $k$-sparse polynomial, or $k$-junta), and choose a suitably small region $R$ at random (e.g. by choosing ${\bm y}$ uniformly from $\mathrm{B}(\bm{0}, \sqrt{n})$ and setting $R$ to be a small radius ball around ${\bm y}$), where $0 < \mu_n(R) \ll 1/n^c$ for any $c > 0$ under the $\cN(\bm 0, I_n)$ Gaussian measure on $\R^n$. Further, subdivide $R$ equally into $R_1$ and $R_2$, and for some $A \geq 2\eps/\mu_n(R)$, define a function $f^\prime:\R^n\to\R$ as:
\[f^\prime(\bm x)\triangleq\begin{cases} f(\bm x), &\text{ if }\bm x\not\in R,\\
A, &\text{ if }\bm x\in R_1,\text{ and }\\
-A, &\text{ if }\bm x\in R_2\end{cases}.\]
As $A$ can be arbitrarily large, $f^\prime$ can be moved arbitrarily far from the required property (in $\ell_1$ distance over the Gaussian measure). However, since $\mu_n(R)$ is arbitrarily small, no algorithm using bounded queries can distinguish between $f$ and $f^\prime$.

\paragraph*{Choice of Reference Distribution}
In this work (for the $k$-linear and $k$-sparse low-degree testers), we have used an anti-concentration result of Glazer and Mikulincer (\Cref{srt_thm:glazer-mikulincer}) which uses the Carbery-Wright anti-concentration inequality (\autoref{srt_thoe:carbery-wright}), but with a more usable expression for the variance (in terms of the polynomial coefficients). The form that we have used assumes that the distribution (on the variables) is log-concave, isotropic, and non-discrete. We have chosen $\cN(\bm 0, I_n)$ as representative of such a distribution, but we can also take $P^{\otimes n}$ for a continuous log-concave distribution $P$ on $\R$. A probabilistic upper bound for the Hankel matrix eigenvalues, which we use in the analysis of the $k$-sparsity tester, also assumes the isotropic Gaussian distribution (\Cref{lem:upperboundonsigmamax}). 

}

\section{Our results}\label{sec:results}

In this work, we focus on three problems: testing (i) $k$-linearity, (ii) $k$-sparse low-degree polynomials, and (iii) $k$-juntas. Throughout this work, we assume that our reference distribution $\cD$ is the standard Gaussian $\cN(\bm 0, I_n)$, unless otherwise stated.

\subsection{Testing \texorpdfstring{$k$}{k}-linear functions}\label{sec:intro-k-lin}

\begin{defi}[$k$-linearity]\label{def:k-lin}
Let $f: \R^n \rightarrow \R$, and $k \in \N$ be a parameter. $f$ is a \emph{$k$-linear} function if there exists a set $S \subseteq [n]:|S|\leq k$, and there exist $\{c_i\in\R:i\in S\}$, such that for any ${\bm x} \in \R^n$,
\[f({\bm x}) = \sum_{i \in S} c_i{\bm x}_i.\]
\end{defi}

This problem has been extensively studied over finite domains with exact query access. \cite{fischer2004testing} designed the first tester for $k$-linearity with query complexity $\widetilde{O}(k^2)$ by studying the related problem of testing $k$-juntas. 
Later, \cite{blais2009testing} improved the bound for testing $k$-juntas to $O(k \log k)$ queries. 
Using the BLR test~\cite{blum1990self}, along with this result gives a tester for $k$-linearity with $O(k \log k)$ query complexity, as done by \cite{bshouty2023optimal}, who presented an optimal, two-sided error, non-adaptive algorithm for this problem over Boolean domains.
The first lower bounds for this problem were presented by \cite{fischer2004testing}, proving $\Omega(\sqrt{k})$ non-adaptive, and $\Omega(\log k)$ adaptive queries are necessary for testing $k$-linearity. 
These were first improved by \cite{goldreich2010testing}, to $\Omega(k)$ non-adaptive, and $\Omega(\sqrt{k})$ adaptive query lower bounds. This was further improved by \cite{blais2012property,blais2012tight} who proved $\Omega(k)$ adaptive query lower bound. Interestingly, the $\Omega(k)$ adaptive query lower bound by \cite{blais2012property} was proved by showing a novel connection to communication complexity, which we will also use later to prove our lower bounds.
Our result for testing $k$-linearity is summarized in the following theorem.

\begin{theo}[Informal, see \autoref{srt_theo:k_linearity}]\label{srt_thm:k-lin-intro}
Suppose \textcolor{black}{$f : \R^n \rightarrow \R$ is a function bounded in the
ball $B(\bm 0, 2 \sqrt{n})$ and we are given $\eta$-approximate query access to $f$}. Let $k$ be a positive integer and $\eps, \eta \in (0,2/3)$ such that $\eta < \min\left\{\eps, O\left(\min_{\substack{i\in[n]:f(\bm e_i)\neq 0}}\frac{|f(\bm e_i)|}{(nk)^2}\right)\right\}$, where $\bm e_i$ denotes the $i^{th}$ standard basis vector. There exists an $\widetilde{O}(k \log k + \nicefrac{1}{\varepsilon})$-query tester (\autoref{srt_balg:testklinear}) that distinguishes whether $f$ is $k$-linear, or is $\eps$-far from all $k$-linear functions, with probability at least $2/3$.
\end{theo}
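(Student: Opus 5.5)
Our tester runs in two stages. Stage 1 uses an approximate linearity tester for $f:\R^n\to\R$ under $\cN(\bm 0,I_n)$, as in \cite{fleming2020distribution,arora2024optimaltestinglinearitysosa}, with $O(1/\eps)$ queries. Stage 2 is a sparsity test on the linear function that $f$ is close to; it costs $\widetilde O(k\log k)$ queries.

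\textbf{Stage 1.} We run the linearity tester with proximity parameter $\eps/2$. If it rejects, we reject. Otherwise, with high probability $f$ is $\eps/2$-close in $\ell_1$ to some linear $g(\bm x)=\iprod{\bm c}{\bm x}$. Moreover, the tester comes with a self-corrector: using $O(1)$ queries, it returns $\widetilde g(\bm y)$ with $|\widetilde g(\bm y)-g(\bm y)|=O(\eta)$ (up to mild polynomial blow-up in $n$) with high probability, for every $\bm y$ in the bounded ball. Boundedness in $B(\bm 0,2\sqrt n)$ is needed so that the self-corrector only queries there. The completeness case is immediate: if $f$ is $k$-linear, then $g=f$ and $f$ is itself linear. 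For soundness, suppose $f$ is $\eps$-far from $k$-linear. Then $g$ is $\eps/2$-far from every $k$-linear function, so $\bm c$ is not $k$-sparse. It therefore suffices to test whether $\bm c$ has at most $k$ nonzero coordinates.

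\textbf{Stage 2.} We adapt the Blais junta tester to linear functions. Randomly partition $[n]$ into $r=O(k^2)$ buckets $I_1,\dots,I_r$. For a bucket $I$, define $\bm u_I=\sum_{i\in I}\sigma_i\bm e_i$ with random signs, or random Gaussian weights $\alpha_i$. Then $g(\bm u_I)=\sum_{i\in I}\alpha_i c_i$, and this is nonzero almost surely whenever $\bm c$ has support in $I$. We decide that $I$ contains a relevant coordinate iff $|\widetilde g(\bm u_I)|$ exceeds a threshold $\tau$. The threshold must separate $0$ from the smallest possible nonzero value. This is where the hypothesis $\eta\ll\min_i|f(\bm e_i)|/(nk)^2$ is used: each nonzero $c_i=g(\bm e_i)\approx f(\bm e_i)$ is much larger than the accumulated error.

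The random weights create a difficulty, since cancellation among $\sum_{i\in I}\alpha_i c_i$ could make the sum small. To handle this, we apply anti-concentration for a degree-1 form, namely Carbery–Wright as given by \Cref{srt_thm:glazer-mikulincer}. It shows that $\Pr[|\sum\alpha_ic_i|<\delta\max_i|c_i|]=O(\delta)$, and we repeat $O(\log k)$ times per bucket to amplify. We also need $|c_i|$ nonzero coordinates to be at least $\approx|f(\bm e_i)|-O(\eta)$. Since $f$ may be far from $g$ at the specific point $\bm e_i$, we obtain this through the self-corrector rather than a direct query.

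Given these pieces, we use the adaptive procedure that finds $k+1$ relevant buckets by binary search, or the simpler count-based approach. If more than $k$ buckets test relevant, we reject; otherwise we accept. With $O(k^2)$ buckets, two relevant coordinates collide with probability $\le 1/10$. Each bucket test costs $O(\log k)$ amplified self-corrected queries, so the total is $O(k\log k)$ via the \cite{blais2009testing}-style adaptive search, which only probes $O(k)$ bucket unions.

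\textbf{Soundness.} If $\bm c$ has at least $k+1$ nonzeros, then with probability $\ge 9/10$ they land in distinct buckets, each bucket is detected, and we reject. If $\bm c$ is $k$-sparse, no empty bucket exceeds $\tau$, since $|\widetilde g(\bm u_I)|=O(\eta\sqrt n)<\tau$, and we accept.

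\textbf{Main obstacle.} The hardest step is the error analysis of the self-corrected evaluations at the structured points $\bm u_I$ and $\bm e_i$. These points are not Gaussian, and the Stage 1 linear approximation holds only on average. We need the per-query error to stay $O(\eta\cdot\mathrm{poly}(n))$ while still separating it from the smallest nonzero coefficient, which is what forces the $(nk)^2$ factor. We plan to use the random-shift self-correction $g(\bm y)\approx \widetilde f(\bm y/m+\bm z)-\widetilde f(\bm z)$ with $\bm z$ Gaussian and rescaling, and a majority or median over repetitions. This mirrors \cite{arora2}.
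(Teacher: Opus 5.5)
Your route is essentially the paper's: the approximate additivity tester with its self-corrector \textsc{Approximate-$g$}, a random partition into $O(k^2)$ buckets, a \textsf{FindInfBucket}-style binary search driven by anti-concentration of a Gaussian linear form, and rejection iff more than $k$ influential buckets are found. The one design difference is the test point. The paper compares the oracle at $\bm x_V\bm y_{\overline V}$ and at $\bm y$. The linear part of that difference, $\sum_{j\in V}a_j(x_j-y_j)$, is exactly your Gaussian-weighted $g(\bm u_V)$. The gain is that each of the two query points is marginally $\cN(\bm 0,I_n)$, so the average-case closeness of \textsc{Approximate-$g$} (\Cref{srt_claim:g-approxquery-g-are-close}) applies to it directly. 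This is what lets the paper avoid your ``main obstacle.'' With bucket-supported points $\bm u_I$ you would instead use the every-point bound of \Cref{srt_lem:g-is-additive-arora2} plus a median over repetitions, which also works.

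The step that does not go through is the soundness lower bound on the nonzero $c_i$. The hypothesis controls $|f(\bm e_i)|$, but in the far case $f$ is not linear, and $\{\bm e_1,\dots,\bm e_n\}$ is a Gaussian-null set. Redefining $f$ there (say $f(\bm e_i)=1$) changes neither the $\ell_1$-distance of $f$ to $k$-linear functions nor, almost surely, any query your tester makes. Yet it makes $\min_i|f(\bm e_i)|$ whatever you like. So nothing ties $c_i=g(\bm e_i)$ to $f(\bm e_i)$: running the self-corrector at $\bm e_i$ only returns $c_i$ and cannot certify that $c_i$ is large.

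Without such a bound, ``$\bm c$ is not $k$-sparse'' is only qualitative. A relevant bucket's mass $\|\bm c_I\|_2$ can lie below your threshold $\tau=\Theta(\mathrm{poly}(n)\,\eta)$, and then the search never finds $k+1$ buckets. Farness alone does not rescue this under $\eta<\eps$. It gives $\|\bm c-\bm c^{(k)}\|_2\gtrsim\eps$, where $\bm c^{(k)}$ keeps the $k$ largest entries, but that mass may be spread as $\approx\eps/k$ per bucket. Detecting it would require $\eta\ll\eps/(k\,\mathrm{poly}(n))$.

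The paper's soundness argument makes the same identification. It applies \Cref{claim:close-to-linear-implies-good:findInfbuckets} under $(48(12R/r)+750Rn^{1.5})\eta\le\min_{a_k\neq0}|a_k|/(100k^2)$ for the coefficients of the approximating linear function, and equates this with the stated $f(\bm e_i)$ condition. To make your proof rigorous, you should state the hypothesis in terms of those coefficients.

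Relatedly, your Stage~1 conclusion that $f$ is $\eps/2$-close in $\ell_1$ to a linear $g$ is stronger than what \Cref{srt_thm:additivity-tester-arora2} provides. That theorem only gives $\Pr_{\bm p}[|f(\bm p)-h(\bm p)|>O(Rn^{1.5}\eta)]\le\eps$ for some additive $h$. Passing to $\ell_1$ needs control of $|f-h|$ on the exceptional set, which neither your sketch nor the paper supplies.
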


\begin{rem}\label{rem:eta-upper-bound-discussion}
    Note the necessity of sufficiently good machine precision (small $\eta$) to successfully test $k$-linearity. This is inevitable because if a linear function has $k$ ``large'' coefficients $a_1,\ldots,a_k$ ($\geq \eta$) and a small coefficient $a_{k+1}$ (say $<\eta/n^2$), we cannot distinguish between such a function and a $k$-linear function $\sum_{i=1}^{k} a_i x_i$ using $\eta$-approximate queries and bounding the absolute difference between the function values (as we do in \autoref{srt_balg:testklinear}), since the function values will differ by at most $|a_{k+1}(x_{k+1})|$. Similar restrictions apply to testing $k$-sparse polynomials, and $k$-juntas.

    In another sense, a qualitative dependence of $\eta$ on the function structure is inevitable. Otherwise, if given an $\eta$-approximate oracle  $\widetilde{f}$ to $f$, $\widetilde{f}^{\prime} \triangleq \widetilde{f}/2^n$ (which can be computed in $O(n)$ time given $\widetilde{f}$, assuming a variable-length binary floating-point representation) would be an $\eta/2^n$-approximate oracle to $f$. But such a scaling would not help with our results, since the coefficients of $f$, the non-zero influences, etc., would be similarly scaled-down.
\end{rem}

\subsection{Testing \texorpdfstring{$k$}{k}-sparse low degree polynomials}\label{sec:k-sparse-intro}

\begin{defi}[$k$-sparsity]\label{def:k-sparse}
Let $f: \R^n \rightarrow \R$, and $k \in \N_{>0}$ be a parameter. 
Define $\bm x^{\bm\alpha}\triangleq\prod_{i=1}^n x_i^{\alpha_i}$, 
for any ${\bm x}\triangleq (x_1, \ldots, x_n)\in\R^n$, 
and $\bm\alpha\triangleq(\alpha_1,\hdots,\alpha_n)\in\N^n$. 
A polynomial $f(x_1, \ldots, x_n) = \sum_{i=1}^{\ell} a_i {\bm x}^{\bm d_i}$, with $a_i \neq 0$, and $\bm d_i\in\N^n$ for every $i \in [\ell]$, is said to be a \emph{$k$-sparse} polynomial if $\ell \leq k$.    
\end{defi}

Grigorescu \emph{et al.}~\cite{gjr10} studied this problem for functions on finite fields, i.e., $f:\F_q^n\to\F_q$, assuming that $q$ is large enough, using the machinery of Hankel matrices associated with a polynomial:

\begin{restatable}[Hankel Matrix for polynomials \cite{gjr10,bot88}]{defi}{Hankelmatrixpoly}\label{srt_defi:Hankel-matrix-poly}
    Consider any $\bm u\triangleq (u_1,\hdots,u_n)\in\R^n$, and define $\bm u^i\triangleq (u_1^i,\hdots,u_n^i)\in\R^n,\forall i\in\N$. For a function $f:\R^n\to\R$ and an integer $t\in\Z_{>0}$, define the $t$-dimensional \emph{Hankel matrix} associated with $f$ at $\bm u$ to be the following:
	\[ H_t(f,\bm u)\triangleq
	\begin{pmatrix}
		f(\bm u^0) & f(\bm u^1) & \hdots & f(\bm u^{t-1}) \\
		f(\bm u^1) & f(\bm u^2) & \hdots & f(\bm u^t) \\
		\vdots & \vdots & \ddots & \vdots \\
		f(\bm u^{t-1}) & f(\bm u^t) & \hdots & f(\bm u^{2t-2})
	\end{pmatrix} \in \R^{t \times t}.
	\]
\end{restatable}
From Ben-Or and Tiwari~\cite{bot88}'s \autoref{obs:Hankel-charcterization}, they designed a tester with a query complexity of $O(k)$ (independent of $d$), assuming $f$ to be an individual-degree-$d$ polynomial (note that all functions $\F_q^n \rightarrow \F_q$ are polynomials of individual degree $\leq q$).

Note that if we have exact query access to $f$, the Hankel matrix $H_t(f,\bm u)$ can be computed using only $2t-1$ queries to $f$ for any point $\bm u\in\R^n$.
For the problem of testing a \emph{polynomial} $f:\R^n\to\R$ for sparsity, given exact query access, the tester of \cite{gjr10} works as it is. In fact, it achieves perfect soundness and completeness (proved in \autoref{srt_lem:perfect-sparsity-tester}). However, the assumption of an exact query oracle is not realistic.
\textcolor{black}{Moreover, for general functions $f: \R^n \to \R$, the promise of polynomiality no longer holds}, and therefore a preliminary step is needed to eliminate functions that are far from being low-degree polynomials. \cite{arora2} designed a \emph{local}, approximate query, low degree tester with a query complexity of $O(d^5)$, for polynomials over $\R^n$ with \emph{total degree} $d$, which we will use in our work. Thus, we restrict our attention to polynomials of \emph{total degree} at most $d$.

\begin{theo}[Informal, see \autoref{srt_theo:ksparsefinalmain}]\label{srt_thm:k-sparse-intro}
Let $k,d \in \N$, $\eps, \eta \in (0,1)$ be parameters such that $\eta < \min\{\eps, 1/2^{2^n}\}$, and \textcolor{black}{$f: \R^n \rightarrow \R$ be bounded in $\ball(\bm 0,2d\sqrt{n})$}, given via an $\eta$-approximate query access. Then there exists an $\widetilde{O}(d^5 + \nicefrac{d^2}{\eps} + d k^3)$-query tester (\autoref{srt_alg:testksparse}), that distinguishes whether $f$ is a $k$-sparse,  degree-$d$ polynomial, or is $\eps$-far from all such polynomials, with probability $\geq 2/3$.
\end{theo}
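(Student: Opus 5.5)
We would build the tester in two stages: first certify that $f$ is close to a degree-$d$ polynomial, then test sparsity through the Hankel matrix. For the first stage we run the local approximate-query low-degree tester of \cite{arora2}. It uses $O(d^5)$ queries and rejects, with high probability, any $f$ that is $\eps$-far from total-degree-$d$ polynomials. If $f$ passes, there is a degree-$d$ polynomial $g$ with $\mathsf{dist}_{\cN(\bm 0,I_n),\ell_1}(f,g)\le\eps$. That tester also supplies a self-correction procedure: for any point $\bm u$ in the ball $\ball(\bm 0,2d\sqrt{n})$, we can estimate $g(\bm u)$ with additive error $O(\eta\cdot\mathrm{poly}(d))$ using $\widetilde{O}(d^2)$ queries along a random line through $\bm u$, with failure probability small enough to union bound over many points. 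The extra $\widetilde{O}(d^2/\eps)$ queries in the stated bound come from an $\ell_1$ closeness check: we sample Gaussian points and compare $f$ with the self-corrected $g$ there, which ensures that rejecting the far case reduces to deciding whether $g$ itself is $k$-sparse.

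For the second stage we draw random points $\bm u$ and, using self-corrected values, form the approximate Hankel matrix $\widetilde H_{k+1}(g,\bm u)$, which needs $2k+1$ entries at the points $\bm u^0,\dots,\bm u^{2k}$. By the Ben-Or--Tiwari characterization (\autoref{obs:Hankel-charcterization}), $H_{k+1}(g,\bm u)=V^\top D V$. Here $V$ is the Vandermonde matrix with rows indexed by $0,\dots,k$ and columns indexed by the monomial values $\bm u^{\bm d_i}$, and $D=\mathrm{diag}(a_i)$. So $g$ is $k$-sparse if and only if $\det H_{k+1}(g,\bm u)=0$ identically in $\bm u$; if $g$ has $\ell\ge k+1$ monomials, the determinant is a nonzero polynomial in $\bm u$.

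The tester accepts if the smallest singular value of $\widetilde H_{k+1}$ lies below a threshold $\tau$, and rejects otherwise.
\begin{itemize}
\item \emph{Completeness.} If $g$ is $k$-sparse then $H_{k+1}(g,\bm u)$ is singular. Its approximation differs by a perturbation of operator norm at most $(k+1)\eta'$, where $\eta'$ is the per-entry error, so $\sigma_{\min}(\widetilde H_{k+1})\le (k+1)\eta'$.
\item \emph{Soundness.} If $g$ is not $k$-sparse, $\det H_{k+1}(g,\bm u)$ is a nonzero polynomial in $\bm u$ of degree at most $O(dk^2)$. By the anti-concentration theorem of Glazer--Mikulincer (\Cref{srt_thm:glazer-mikulincer}, built on Carbery--Wright), with constant probability over Gaussian $\bm u$ we have $|\det|$ at least a threshold that depends on its coefficients. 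We then need an upper bound $\sigma_{\max}(H_{k+1}(g,\bm u))\le M$ with high probability (\Cref{lem:upperboundonsigmamax}). Combining the two, $\sigma_{\min}\ge|\det|/M^{k}$ is bounded away from the perturbation level, because $\eta<1/2^{2^n}$ is tiny enough to dominate all these (possibly doubly-exponentially small) quantities.
\end{itemize}
Repeating this a constant number of times and taking a majority gives a $2/3$ success probability. Query count: $2k+1$ points, each self-corrected with $\widetilde{O}(d)$ to $\widetilde{O}(dk^2)$ queries (the latter accounting for the union bound and the amplification needed for accuracy), for $\widetilde{O}(dk^3)$ in total.

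We expect soundness to be the main obstacle, specifically getting a quantitative lower bound on $|\det H_{k+1}(g,\bm u)|$. The variance expression in Glazer--Mikulincer is in terms of the coefficients of the determinant polynomial, and these are sums of products of the $a_i$ times squared Vandermonde minors in monomial exponents, so cancellations must be ruled out. Our first attempt: expand the determinant by Cauchy--Binet as $\sum_{|T|=k+1}\prod_{i\in T}a_i\,\det(V_T)^2$. The terms have distinct leading exponent patterns, so at least one coefficient has magnitude $\prod_{i\in T}|a_i|$, which is nonzero. We then absorb the remaining dependence on the unknown sizes of the $a_i$ into the extremely small $\eta$ bound. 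The self-corrected points $\bm u^j$ have norm growing like $\|\bm u\|^{j}$, so we restrict to $\bm u$ with coordinates in $[-1,1]$ (rescaling if needed) to keep every query inside the bounded ball.
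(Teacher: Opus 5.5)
Your plan follows the paper's proof. You run the approximate low-degree tester with \textsc{ApproxQuery-}$g$, then threshold $\sigma_{\min}$ of the noisy $H_{k+1}$. Completeness comes from \Cref{obs:approx-hankel-structure} and Weyl's inequality. Soundness comes from $\sigma_{\min}(H)\ge|\det H|/\sigma_{\max}(H)^{k}$, \Cref{srt_thm:glazer-mikulincer} and \Cref{lem:upperboundonsigmamax}.

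Your deviations are mostly harmless:
\begin{itemize}
\item \emph{Rounds.} The paper runs $T=\Theta(dk^2)$ rounds, accepts only if all pass, and uses only $\Pr[Q(\bm u)\le\Delta]\le 1-1/(Cd_Q)$ per round, so its $O(dk^3)$ is $T(2k+1)$ oracle calls. Lowering the threshold by a further factor $10^{-d_Q}$ makes a single round reject with probability $0.9$. This costs $2^{-O(dk^2)}$ in $\eta$, the same order as the paper's loss, so your constant number of rounds works.
\item \emph{Per-point cost.} Your $\widetilde O(dk^2)$ per point is an unexplained over-count: a median of $O(\log k)$ calls suffices for the union bound over $2k+1$ points.
\item \emph{Restricting $\bm u$ to $[-1,1]^n$.} This fixes something the paper does not address. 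For Gaussian $\bm u$, $\|\bm u^{2k}\|_2$ is typically of order $\sqrt{(4k-1)!!\,n}$, outside $\ball(\bm 0,2d\sqrt n)$, where \Cref{lem:main_approx_poly} and the boundedness of $f$ are stated. But $\bm u$ is then no longer Gaussian. You need Carbery--Wright in its general log-concave form rather than the Gaussian statements. \Cref{lem:upperboundonsigmamax} becomes unnecessary, since $|M_i(\bm u)|\le 1$ gives $\sigma_{\max}\le (k+1)\|\bm a\|_1$ deterministically.
\end{itemize}

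The step that fails is your Cauchy--Binet lower bound on the coefficients of $\det H_{k+1}(g,\bm u)$. Order $T=\{t_1,\dots,t_{k+1}\}$ with $M_{t_1}>\dots>M_{t_{k+1}}$ in a graded monomial order. The leading monomial of $\prod_{i<j\in T}(M_i-M_j)^2$ is then $\prod_m M_{t_m}^{2(k+1-m)}$, which does not depend on $t_{k+1}$. So all $T$ sharing their top $k$ elements collide, and their coefficients can cancel. Already for $k=1$ the coefficient of $M_1^2$ is $a_1\sum_{j\ge 2}a_j$. For $g(\bm x)=x_1^2+x_1-(1-\delta)$,
\[
\det H_2(g,\bm u)=g(\bm u^0)\,g(\bm u^2)-g(\bm u)^2=\delta u_1^4-2u_1^3+(2-\delta)u_1^2+2(1-\delta)u_1-2(1-\delta).
\]
Hence $\mathrm{coeff}_{d_Q}(Q)=\delta^2$, even though $g$ has three coefficients of size $\Theta(1)$ and is $\Omega(1)$-far from every $1$-sparse polynomial. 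So no bound like $\prod_{i\in T}|a_i|$ holds. No $\eta$ depending only on $n,d,k,\eps$ can absorb this dependence, since $\delta\to 0$ forces the required $\eta\to 0$.

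The paper does not prove such a bound either. \Cref{lem:approx-query-sparsity-testing-poly} simply assumes $\eta\le \mathrm{coeff}_{d_Q}(Q)^{1/2}/\bigl(2(k+1)2^{\Theta(k^3d)}\sigma_{\max}(H_{\bm u})^{k}\bigr)$. Its final proof then asserts that $\eta<2^{-2^n}$ meets this, which the example above shows also needs more justification. To match the paper, drop the leading-term claim and state soundness under that explicit coefficient-dependent hypothesis. To actually close the gap, you would need two things, because the top-degree part alone is not robust:
\begin{itemize}
\item anti-concentration in terms of the full variance of $\det H_{k+1}(g,\bm u)$ (\Cref{srt_thoe:carbery-wright});
\item a lower bound on that variance derived from $\eps$-farness.
\end{itemize}
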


\begin{rem}\label{rem:compare-k-lin-sparse}
We note that \testsparse (\autoref{srt_alg:testksparse}) also works for testing $k$-linear functions (by setting the degree $d=1$).
However, in this context, we make the following remarks.
\begin{itemize}
    \item[(i)] Since the query complexity of \testsparse is $\widetilde{O}(d^5 + \nicefrac{d^2}{\eps} + d k^3)$, if we invoke it for $k$-linearity testing, the bound would be $\widetilde{O}(k^3 + \nicefrac{1}{\eps})$ which is worse than that of \testlin.

    \item[(ii)] The restriction on the approximation parameter $\eta$ is $1/2^{2^n}$ for \testsparse, as compared to $1/(nk)^2$ for \testlin. Thus, for a wider range of parameters when $\eta$ is not too small, invoking \testlin is better, compared to \testsparse.

    \item[(iii)] \testlin is an adaptive tester, while \testsparse is non-adaptive.
\end{itemize}
\end{rem}

\subsection{Testing \texorpdfstring{$k$}{k}-juntas}\label{sec:intro-k-junta}

\begin{defi}[$k$-junta]\label{def:k-junta}
Let $f: \R^n \rightarrow \R$, and $k \in \N$ be a parameter. A coordinate $i \in [n]$ is said to be \emph{influential} with respect to $f$, if for some $\bm x\in\R^n$, changing the value of ${\bm x}_i$ changes the value of $f(\bm x)$. $f$ is said to be a \emph{$k$-junta}, if there are at most $k$ influential variables with respect to $f$.    
\end{defi}

Testing whether a Boolean function is a $k$-junta has been extensively studied in the exact query model.  The first result in this context was by \cite{parnas2002testing}, which was followed by the work of \cite{fischer2004testing}, who designed a $\widetilde{O}(k^2)$-query tester. Later, \cite{diakonikolas2007testing} extended it to the finite range setting.  \cite{blais2008improved} then gave an $\widetilde{O}(k^{3/2})$-query non-adaptive tester for this problem, while for adaptive testers, \cite{blais2009testing} showed $\widetilde{O}(k \log k + k/\eps)$ queries suffice. It is important to note that all these results use Fourier-analytic techniques.

Notably, \cite{blais2015partially} designed a new algorithm for testing $k$-juntas with similar optimal bounds in the context of testing partial isomorphism of functions.  Interestingly, this work deviates from the common Fourier analytic approach and instead presents a combinatorial approach to this problem. This algorithm from \cite{blais2015partially}  will be used in designing our tester for $k$-juntas. Recently \cite{de2019your} studied the linear $k$-junta testing problem, where the function $f$ is defined as $f: \R^n \rightarrow \{+1,-1\}$~\footnote{A function $f: \R^n \rightarrow \{-1, 1\}$ is said to be a \emph{linear k-junta} if there are $k$ unit vectors $u_1 \ldots,  u_k \in \R^n$ and $g: \R^k \rightarrow \{-1, 1\}$ such that $f(x) =
g(\langle u_1, x \rangle, \ldots, \langle u_k , x \rangle)$.}. This is different from our setting of real-valued functions. 
It is not clear to us if their techniques can be generalized to our setting.

In terms of lower bounds, \cite{fischer2004testing} showed a lower bound of $\Omega(\sqrt{k})$ queries for non-adaptive testers, which was improved to $\widetilde{\Omega}(k^{3/2}/\eps)$ by \cite{ChenSTWX18}. For adaptive testers, \cite{chockler2004lower} showed an $\Omega(k)$ lower bound, which was then improved to $\Omega(k\log k)$ by \cite{Saglam18}.

Our result for testing $k$-juntas is summarized in the following theorem.

\begin{theo}[Informal, see \autoref{theo:kjuntafinalmain}]\label{srt_thm:junta-intro}
Let $k \in \N$, and $\eps, \eta \in (0,1)$ be parameters such that $\eta < \min\{O(\eps/k^2), O(\nicefrac{1}{k^2 \log^2 k})\}$, and \textcolor{black}{$f: \R^n \rightarrow \R$ is bounded in $\ball(\bm 0,2\sqrt{n})$} given via $\eta$-approximate query access. There exists a one-sided error $\widetilde{O}\left(\frac{k \log k}{\eps}\right)$-query tester (\autoref{alg:testjuntaapprox}), that distinguishes if $f$ is a $k$-junta, or is $\eps$-far from all $k$-juntas, with probability at least $2/3$.

\end{theo}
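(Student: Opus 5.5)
The plan is to adapt the combinatorial junta tester of \cite{blais2015partially} to real-valued functions under the Gaussian measure and approximate queries. Each step of that algorithm will be replaced by a robust real-valued analogue.

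\textbf{Algorithm.} First pick a random partition of $[n]$ into $s = \Theta(k^2)$ buckets $I_1,\dots,I_s$. For a set $J\subseteq[n]$, define the influence $\infl_f(J) = \E_{\bm x,\bm y}\,|f(\bm x) - f(\bm x_{\bar J}\bm y_J)|$, where $\bm x,\bm y \sim \cN(\bm 0,I_n)$ are independent. The tester then repeatedly runs a subroutine $\findbucket$. It samples $\bm x,\bm y$, re-randomizes the coordinates in the union of the buckets not yet marked relevant, and checks whether the two approximate function values differ by more than $2\eta$. If they do, it binary searches over the buckets: at each step it swaps half of the remaining coordinates and compares approximate values against the $2\eta$ threshold, which isolates a single bucket that witnesses a difference. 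This costs $O(\log k)$ queries per bucket found. If more than $k$ buckets are found, the tester rejects. Otherwise it accepts after $O(k/\eps)$ failed attempts to find a new bucket.

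\textbf{Completeness.} A $k$-junta touches at most $k$ buckets. Any bucket flagged by the binary search comes with two points at which the approximate values differ by more than $2\eta$, so the true values differ. Hence that bucket contains a genuinely influential coordinate, and the tester never rejects a $k$-junta. This gives one-sided error, and the $2\eta$ threshold is exactly what makes it robust to the approximate queries.

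\textbf{Soundness.} Here I would follow \cite{blais2015partially}. If $f$ is $\eps$-far from every $k$-junta, then for any union $J$ of at most $k$ buckets, the complement $\bar J$ has $\ell_1$-influence at least about $\eps$. To see this, average $f$ over the coordinates in $\bar J$ to obtain a $J$-junta $g$ (not yet a $k$-junta, since $J$ is a union of buckets). Then $\|f-g\|_1 \le \infl_f(\bar J)$ by Jensen's inequality. We also need that a random partition with $s=\Theta(k^2)$ separates the relevant coordinates with good probability, so that ``few relevant buckets'' implies closeness to a $k$-junta. The real-valued analogue uses submodularity/subadditivity of the influence $\infl_f$, which holds for $\ell_1$ influence via the triangle inequality.

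\textbf{Main obstacle.} The hardest step is handling the threshold. Influence that is spread out could produce many differences of magnitude at most $2\eta$, which the tester cannot detect. I would bound the probability that $|f(\bm x)-f(\bm x')|>4\eta$ from below using Markov's inequality in reverse. The boundedness assumption gives $|f|\le O(\sqrt n)$ on the ball, and Gaussian tails control the region outside it. Together with $\eta = O(\eps/k^2)$, the loss from ignoring differences below $4\eta$ is at most $O(\eta)$ per bucket. Summed over $O(k^2)$ buckets, this loss is below $\eps/2$, which is exactly where the condition $\eta < O(\eps/k^2)$ enters. The $1/(k^2\log^2 k)$ condition should come from accumulated error along the $O(\log k)$-depth binary search chains.

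Finally, I would choose repetition counts so that each round detects a new bucket with probability $\Omega(\eps)$, giving $\widetilde O(k\log k/\eps)$ queries in total.
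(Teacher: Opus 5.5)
Your completeness argument is fine, and so is your Jensen bound $\dist(f,\mathcal J_J)\le\infl_f(\bar J)$; it is equivalent to Theorem~\ref{thm:structure-junta}, which uses a random restriction instead of an average. The soundness argument, however, is missing its key idea. A union $J$ of $k$ buckets has about $n/k$ coordinates, so the Jensen bound only gives $\infl_f(\bar S)\ge\eps$ for $|S|\le k$. ``The partition separates the relevant coordinates'' cannot close this gap: when $f$ is far from every $k$-junta, its influence may be spread over all $n$ coordinates, so there is no small relevant set to separate.

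The paper (Lemma~\ref{lem:influence_partition_complement}, following \cite{blais2015partially}) first uses subadditivity (Lemma~\ref{lem:infl-submodular}) to show that $\cF_{1/2}=\{S:\infl_f(\bar S)<\eps/2\}$ is $(k+1)$-intersecting, and then splits into two cases. If $\cF_{1/2}$ has a member of size $<2k$, a birthday bound separates it; this is the only case your sketch covers. Otherwise $\cF_{1/4}$ is $2k$-intersecting. The bound $\mu_p(\cF_{1/4})\le p^{2k}$ with $p=k/r$ (Theorem~\ref{theo:pmesasure}), union-bounded over the $\binom rk$ unions of $k$ buckets, then gives $\infl_f(\bar J)\ge\eps/4$ for every such $J$.

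Your treatment of the threshold, which you identify as the main obstacle, also fails as written. Reverse Markov on $Z=|f(\bm x)-f(\bm x')|\in[0,M]$ gives $\Pr[Z>4\eta]\ge(\E Z-4\eta)/M$. With only $|f|=O(\sqrt n)$ on the ball, $M=O(\sqrt n)$, so a round succeeds with probability $\Omega(\eps/\sqrt n)$, not $\Omega(\eps)$, and the query bound depends on $n$. These two facts cannot give more, since $Z$ may equal $\Theta(\sqrt n)$ on an event of probability $\Theta(\eps/\sqrt n)$ and vanish otherwise. In addition, boundedness says nothing about $f$ outside $\ball(\bm 0,2\sqrt n)$, so Gaussian tails control the mass there but not its contribution to $\infl_f$.

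The paper handles this in Claim~\ref{claim:close-to-linear-implies-good_new_bucket_junta}. It conditions on $\mathcal E$ and applies Paley--Zygmund under the extra hypothesis $\kappa\ge\max\{2\|f\|_{\infty,\ball(\bm 0,2\sqrt n)},\eps/(4k^2)\}$, where $\kappa$ is the minimum over influential $V$ of $\E[\,|f(\bm x_V\bm y_{\bar V})-f(\bm y)|\mid\mathcal E]$. This yields detection probability at least $(1-4\eta/\kappa)^2$ per call. Boundedness does not imply that hypothesis, so you need an assumption of this kind stated explicitly.

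Finally, your binary search can stall. For a fixed pair of points, an approximate gap $>2\eta$ between the endpoints only forces one half to have at least half that gap. A gap of $3\eta$ can split as $1.5\eta+1.5\eta$, falling below the certification threshold after a single level. You have two ways to fix this:
\begin{itemize}
\item Demand a top-level gap of order $\eta r=\Theta(\eta k^2)$ and always follow the larger half. This is where $\eta=O(\eps/k^2)$ would genuinely enter, and only the top-level comparison needs anticoncentration.
\item Do what the paper's \textsf{FindInfBucket} does: draw fresh $\bm x,\bm y$ in every recursive call and union-bound over the $O(\log k)$ calls on the path. This requires every influential sub-collection along the path to be detectable, which is why the paper needs a uniform lower bound $\kappa$.
\end{itemize}
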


\subsection{Lower bounds}\label{sec:intro-lower-bounds}

Now we briefly mention our lower bound results, which hold even for adaptive testers. For all these three properties ($k$-linearity, $k$-sparse degree-$d$ polynomials, and $k$-juntas), we prove lower bounds of $\Omega(\max\{k, \nicefrac{1}{\eps}\})$ queries. Additionally, for $k$-sparse degree-$d$ polynomials, we prove a lower bound of $\Omega(\max\{k, d, \nicefrac{1}{\eps}\})$ queries. 
All our lower bounds follow from the general reduction from communication complexity, introduced by \cite{blais2012property}, coupled with some folklore results. 
We use \textsc{Set-Disjointness} as the hard instance to prove our lower bounds.

\begin{restatable}{theo}{lowerbounds}\label{srt_thm:lowerbounds}
Given exact query access to $f: \R^n \rightarrow \R$, some $k, d \in\N$ and a distance parameter $\eps \in (0,1)$, $\Omega\left(\max\{k, \nicefrac{1}{\eps}\}\right)$ queries are necessary for testing the following properties with probability at least $2/3$:
\begin{itemize}
    \item[(i)] $k$-linearity.
    \item[(ii)] $k$-junta.
    \item[(iii)] $k$-sparse  degree-$d$ polynomials.
\end{itemize}
Moreover, for testing $k$-sparse degree-$d$ polynomial, the lower bound is improved to $\Omega\left(\max\{d,k, \nicefrac{1}{\eps}\}\right)$.
\end{restatable}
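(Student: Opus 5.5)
The plan is to prove the three $\Omega(k)$ bounds with the communication-complexity reduction of \cite{blais2012property}, applied to \textsc{Set-Disjointness}. The $\Omega(1/\eps)$ bounds and the $\Omega(d)$ bound will come from separate, folklore-style arguments. Throughout, $\eta=0$, which is harmless because a lower bound for exact queries also holds for approximate ones.

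\emph{The $\Omega(k)$ bound.} Alice and Bob get $A,B\subseteq[n]$ with $|A|=|B|=k'$ (where $k'\approx k/2$, chosen below) and the promise $|A\cap B|\in\{0,1\}$. Deciding disjointness under this promise needs $\Omega(k')$ bits of randomized communication. They jointly define
\[
f_{A,B}(\bm x)=\sum_{i\in A}\bm x_i+\sum_{i\in B}\bm x_i=\sum_{i\in A\triangle B}\bm x_i+2\sum_{i\in A\cap B}\bm x_i .
\]
Each query is simulated by Alice sending $\sum_{i\in A}\bm x_i$ and Bob sending $\sum_{i\in B}\bm x_i$. The tester's queries are not exactly finite, so I will restrict it to rational or finite-precision points, or argue that $O(1)$ words per query suffice in the standard communication model. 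This gives $q$ queries $\Rightarrow O(q)$ communication.

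\emph{Matching the $k$-linearity gap.} If $A,B$ are disjoint, $f_{A,B}$ has $2k'$ relevant variables. If they intersect, it has $2k'-1$. So I take $k=2k'-1$: the intersecting case is $k$-linear, and the disjoint case is $(k+1)$-linear with all coefficients equal to $1$.

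\emph{Farness, the main technical step.} I must show $\sum_{i\in T}\bm x_i$ with $|T|=k+1$ is $\eps_0$-far in $\ell_1(\cN(\bm 0,I_n))$ from every $k$-junta, for a constant $\eps_0$. This covers $k$-linear functions and $k$-sparse polynomials at once, since both are $k$-juntas. Let $g$ depend only on $S$ with $|S|\le k$. Then some $j\in T\setminus S$ exists, and conditioning on all coordinates except $\bm x_j$ gives
\[
\E|f-g|\ \ge\ \E\bigl|\bm x_j - c\bigr|\ \ge\ \E|\bm x_j|=\sqrt{2/\pi},
\]
where $c$ is a quantity independent of $\bm x_j$. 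The last inequality holds because the median minimizes absolute deviation and $\bm x_j$ is symmetric.

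\emph{Reusing the reduction.} Identifying the relevant set of $f_{A,B}$ is the same for all three properties. So the one reduction gives $\Omega(k)$ for all three, for constant $\eps$ (hence for all $\eps$ below that constant). For sparse polynomials it works for any $d\ge1$.

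\emph{The $\Omega(1/\eps)$ bound.} I use a standard hiding argument. Take $f\equiv 0$ versus $f'=\bm x_1+\dots+\bm x_{k+1}$ restricted to a random region $R$ of Gaussian measure $\Theta(\eps)$, with $f'=0$ outside $R$. For example, $R$ is a random cap $\{\langle \bm v,\bm x\rangle>t\}$ with random $\bm v$, and $f'$ is scaled so that it stays bounded and is $\eps$-far.

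The obstacle here is making $f'$ far from \emph{every} $k$-junta, not just from $0$, while respecting the boundedness assumption. A cleaner alternative is $f'=\eps$-scaled structure concentrated on the cap. Any tester making $o(1/\eps)$ queries misses $R$ with probability $>2/3$, so it cannot distinguish $f$ from $f'$ (Yao's principle over the random $R$). I expect this farness verification to be the fiddliest step.

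\emph{The $\Omega(d)$ bound.} Take $k=1$ and reduce again from disjointness, now on the exponent. Alice and Bob hold $a,b\subseteq[d']$, and the function is
\[
f=\prod_i \bm x_1^{\dots}\ \text{or}\ \ \sum_{j\in a}\bm x_1^{j}+\sum_{j\in b}\bm x_1^{j}\ \ \text{(univariate, degree }\le d).
\]
The sparsity is $|a|+|b|-|a\cap b|$, so I set the sparsity threshold accordingly with $|a|=|b|\approx k/2$. The univariate version uses $d$ as the universe size: disjointness on a universe of size $d$ with sets of size $\Theta(\min(k,d))$ gives $\Omega(\min(k,d))$.

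To obtain $\Omega(d)$ even for small $k$, I would instead use the disjointness lower bound on a universe of size $d$ with large sets, which forces $k$ large. If that fails, I would fall back on a direct Yao argument: distinguish $\bm x_1^d$ from $\bm x_1^d+c\bm x_1^{j}$ for a random $j$ with $o(d)$ queries, via interpolation and Vandermonde rank. Whichever works, farness again follows from the conditioning argument in the single variable $\bm x_1$, using orthogonality of Hermite expansions to lower bound the $\ell_1$ distance through $\ell_2$ and hypercontractivity for degree $\le d$.
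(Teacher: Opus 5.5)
Your $\Omega(k)$ argument takes the paper's route, the Blais--Brody--Matulef reduction from \textsc{Set-Disjointness}. Its weak point is the simulation step you flag, and for this hard family that step cannot be repaired. \Cref{lem:querycommunicationbound} charges $O(1)$ bits per query because over $\F_2$ each party's share of $h(\bm x)$ is a single bit. Here each party must send an exact real number $\sum_{i\in A}\bm x_i$. Query the single integer point $\bm z=(3^0,3^1,\dots,3^{n-1})$: it returns $f_{A,B}(\bm z)=\sum_i c_i3^{i-1}$ with $c_i=\indic[i\in A]+\indic[i\in B]\in\{0,1,2\}$. The base-$3$ digits of this value reveal $A\cup B$ and $A\cap B$. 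So one exact query separates your YES and NO families, and any simulation must spend $\Omega(k)$ bits on that one query. Your proposed fixes do not help:
\begin{itemize}
\item a lower bound must hold against every tester, so you cannot restrict its query points;
\item $\bm z$ is rational anyway;
\item rescaling $\bm z$ into $\ball(\bm 0,2\sqrt n)$ keeps the encoding under exact arithmetic.
\end{itemize}
The paper applies the same lemma (to $h=f-g$) without addressing this, so following it does not close the gap. An exact-query $\Omega(k)$ bound would need either a different hard distribution (e.g.\ continuous random coefficients) with a direct indistinguishability argument, or a bounded-precision model in which each answer costs a bounded number of bits.

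Three further steps fail as written.

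(1) $k$-sparse polynomials are not $k$-juntas: $\bm x_1\bm x_2\cdots\bm x_m$ is $1$-sparse but depends on $m$ variables. So conditioning on all coordinates but $\bm x_j$ does not show that $\sum_{i\in T}\bm x_i$ is far from $k$-sparse polynomials, because a competitor $p$ may involve every $\bm x_j$ with $j\in T$. A repair: for $j\in T$, let $P_jh(\bm x)=\E_{\bm\sigma}[\sigma_j h(\bm\sigma\odot_T\bm x)]$, averaging over random sign flips of the coordinates in $T$. This operator contracts $\ell_1$ under $\cN(\bm 0,I_n)$ and sends $\sum_{i\in T}\bm x_i$ to $\bm x_j$. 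It keeps a monomial only if the exponent is odd in $\bm x_j$ and even in the other coordinates of $T$. Each of the $k$ monomials of $p$ survives for at most one $j$, so some $j^\ast$ kills $p$. Hence $\E|\sum_{i\in T}\bm x_i-p|\ge\E|\bm x_{j^\ast}|=\sqrt{2/\pi}$, for every degree. For juntas and $k$-linear functions your conditioning argument is correct. It is in fact stronger than the paper's \Cref{lem:lbfar}, which only compares two particular $0/1$-coefficient linear functions rather than bounding the distance to the class.

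(2) In the $\Omega(1/\eps)$ part, the random cap does not hide anything. The tester need not query Gaussian points. A fixed point of norm $\rho\gg\sqrt{n\log(1/\eps)}$ lies in $\{\langle\bm v,\bm x\rangle>t\}$ with probability close to $1/2$ over $\bm v$. Already at $\rho=2\sqrt n$ the probability is roughly $\eps^{1/4}$ up to lower-order factors, far more than $\eps$. You need a random $R$ with $\Pr_R[\bm z\in R]=O(\eps)$ for \emph{every} fixed $\bm z$, followed by a genuine farness proof. One example is a uniformly random shift of a fine periodic set of density $\eps$ along a direction involving more than $k$ coordinates. The paper simply cites the folklore result here.

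(3) The $\Omega(d)$ part reaches no conclusion. The exponent-disjointness reduction yields only $\Omega(\min\{k,d\})$, which $\Omega(k)$ already covers. It also inherits the encoding problem: one query at $\bm x_1=3$ reads off all exponents. In your fallback, the YES instance $\bm x_1^d$ is a fixed function, so one query at any point with $\bm x_1^j\neq 0$ separates it from $\bm x_1^d+c\bm x_1^j$. The paper instead invokes the folklore indistinguishability of degree-$d$ and degree-$(d+1)$ polynomials on few points. The usual YES instances there have up to $d+1$ monomials, so you would still have to reconcile that argument with $k$-sparsity when $k<d$.
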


\subsection*{Discussion}\label{sec:discussion}
In this work, we design efficient algorithms for the problem of testing real-valued functions given via approximate queries, over continuous domains, for three properties: (i) $k$-linearity, (ii) $k$-sparse, low-degree polynomials, and (iii) $k$-juntas. 
Our work opens directions to several interesting questions. 

\begin{itemize}
    \item We note that our results have constraints on the approximate query parameter $\eta$. 
The first open question is whether these can be improved. 

\item Furthermore, our $k$-sparse degree-$d$ polynomial tester performs $\widetilde{O}(d^5 + \nicefrac{d^2}{\eps} + d k^3)$ queries, and our lower bound for this problem is $\Omega\left(\max\{d,k, \nicefrac{1}{\eps}\}\right)$. 
The second open question is whether the gaps in these bounds (w.r.t. $k$, and  $d$) can be improved, e.g., by assuming additional structure on the underlying function, like Lipschitzness, etc. 

\item Another interesting direction is to design tolerant testers~\cite{parnas2006tolerant} for these properties.
This is different from the approximate query testing notion, as in tolerant testing, the decision boundary is expanded to require that functions that are sufficiently close to the property are also accepted with high probability, whereas in the approximate query model, we accept functions $\widetilde{f}$ such that $\dist_{\cD,\infty}(f, \widetilde{f}) \leq \eta$ (pointwise $\eta$-close) which is a stronger constraint compared to the expected $\ell_1$-distance $\dist_{\cD,\ell_1}(f, g)$ that we use between functions.

\item \textcolor{black}{In this work, we have focused on optimizing the query complexity in terms of the sparsity parameter $k$ and degree $d$ (for low-degree polynomial testing). It is an interesting problem to optimize the dependence of $\eta$ in our arguments.}

\item Only our $k$-junta tester has a one-sided error profile, while our $k$-linearity/sparsity testers have two-sided errors. Designing one-sided error testers for these problems is an open problem.

\item Finally, we use the standard Gaussian distribution as the reference distribution. It would be interesting to see if our results can be extended to other concentrated distributions as well.

\end{itemize}

\begin{table}[ht]
    \centering
    \begin{tabular}{|c|c|c|c|}
    \hline
         Problem & Upper Bound 
         & Restriction \\
         \hline
         $k$-linearity & $\widetilde{O}(k \log k + \nicefrac{1}{\varepsilon})$ 
         & $\eta < \min\left\{\eps, O\left(\min_{\substack{i\in[n]:f(\bm e_i)\neq 0}}\frac{|f(\bm e_i)|}{(nk)^2}\right)\right\}$ \\
         $k$-sparsity 
         & $\widetilde{O}(d^5 + \nicefrac{d^2}{\eps} + d k^3)$ 
         & $\eta < \min\{\eps, 1/2^{2^n}\}$ 
         \\
         $k$-junta 
         & $\widetilde{O}\left(\frac{k \log k}{\eps}\right)$ 
         & $\eta < \min\{O(\eps/k^2), O(\nicefrac{1}{k^2 \log^2 k})\}$ \\
         \hline
    \end{tabular}\caption{A comparison of the upper bounds, as well as the corresponding restrictions, for the three sparse representation testing problems. The upper bounds, and the restrictions in the three rows follow sequentially from \autoref{srt_thm:k-lin-intro}, \autoref{srt_thm:k-sparse-intro}, and \autoref{srt_thm:junta-intro}, respectively.
    }
    \label{srt_tab:comparison}
\end{table}
\subsection{New Technical Contributions}\label{sec:new-tech-contri}
\paragraph*{$k$-sparsity Tester:} A critical ingredient in proving the $k$-sparse low degree tester (\autoref{srt_thm:k-sparse-intro}) is a new probabilistic upper bound on the maximum singular value of Hankel matrices associated with sparse, low-degree polynomials (\autoref{srt_defi:Hankel-matrix-poly}). This may be of independent interest,  is sketched out here, and proved in \autoref{sec:sparsity-poly-approx}.

\begin{restatable}[Probabilistic Upper Bound on $\sigma_{\max}$]{theo}{upperboundonsigmamax}\label{lem:upperboundonsigmamax}
Let $f:\R^n\to\R$, $f(\vec{x}) = \sum_{i=1}^{k} a_i M_i(\vec{x})$ be a $k$-sparse, degree-$d$ polynomial, where $M_i$'s are its non-zero monomials, and $\sigma_{\max}(H_t(f,\vec{u}))$ denote the largest singular value of the $t$-dimensional Hankel matrix associated with $f$ at a point $u\in\R^n$, $H_t(f,\vec{u})$. Then, for any $\gamma \in (0,1)$, with $\vec{a} \triangleq (a_1, \cdots, a_k)^\top\in\R^k$,
\[\Pr_{\vec{u} \sim \cN(\bm 0, I_n)}\left[\sigma_{\max}(H_t(f,\vec{u})) \geq \|\vec{a}\|_2^2 \left(2^{d/2} \lceil d/2 \rceil ! + \sqrt{\frac{k}{\gamma}} 2^{d/2} \sqrt{d!}\right)^{2t}\right] \leq \gamma.\]
\end{restatable}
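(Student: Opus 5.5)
The plan is to use the Ben-Or–Tiwari factorization of the Hankel matrix. Write $f(\vec{x}) = \sum_{i=1}^k a_i M_i(\vec{x})$ and set $b_i \triangleq M_i(\vec{u})$. Since $M_i$ is a monomial, $M_i(\vec{u}^j) = M_i(\vec{u})^j = b_i^j$, so
\[
f(\vec{u}^{j}) = \sum_{i=1}^k a_i b_i^{j}.
\]
This gives $H_t(f,\vec{u}) = V^\top D V$, where $V \in \R^{k\times t}$ is the Vandermonde-type matrix $V_{i,j} = b_i^{j}$ for $j = 0,\dots,t-1$, and $D = \mathrm{diag}(a_1,\dots,a_k)$. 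Hence
\[
\sigma_{\max}(H_t) \le \|D\|\,\|V\|^2 \le \|\vec{a}\|_2\,\|V\|_F^2 = \|\vec{a}\|_2 \sum_{i=1}^k\sum_{j=0}^{t-1} b_i^{2j}.
\]
This is at most $t\,\|\vec a\|_2 \sum_i \max(1,|b_i|)^{2t}$. The bound in the statement has $\|\vec a\|_2^2$; the difference is harmless if we assume a normalization such as $\|\vec a\|_2\ge 1$, or if we use the cruder bound $\sigma_{\max}\le \|H_t\|_F$ with each entry bounded by $\|\vec a\|_2\|\vec b^{j}\|_2$ via Cauchy–Schwarz. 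I would pick whichever bookkeeping reproduces the stated shape. Either way, the task reduces to bounding $\max_i |M_i(\vec{u})|$, or $\|\vec b\|_2 = (\sum_i M_i(\vec u)^2)^{1/2}$, by a quantity $B$ with high probability, and then raising $B$ to the power $2t$ (absorbing factors of $t$ and $k$ into the base).

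For the probabilistic step I would control $\|\vec b\|_2$ by its mean plus a Chebyshev deviation. This matches the form $\mathbb{E}+\sqrt{k/\gamma}\cdot(\text{std})$ in the statement. For a monomial $M = \prod_\ell x_\ell^{\alpha_\ell}$ of degree at most $d$ under $\cN(\bm 0,I_n)$, independence across coordinates gives
\[
\mathbb{E}[M^2] = \prod_\ell (2\alpha_\ell-1)!! \le (2d-1)!! \le 2^{d} d!,
\]
so $\sqrt{\mathbb{E} M_i^2}\le 2^{d/2}\sqrt{d!}$. Likewise $\mathbb{E}|M|\le \prod_\ell \mathbb{E}|x_\ell|^{\alpha_\ell}$, and using $\mathbb{E}|x|^{m}\le 2^{m/2}\Gamma((m+1)/2)/\sqrt{\pi}$ one obtains $\mathbb{E}|M_i|\le 2^{d/2}\lceil d/2\rceil!$. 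By Chebyshev together with a union bound over the $k$ monomials (each failing with probability $\gamma/k$), with probability at least $1-\gamma$ every $i$ satisfies
\[
|M_i(\vec{u})| \le \mathbb{E}|M_i| + \sqrt{k/\gamma}\,\sqrt{\mathbb{E} M_i^2} \le 2^{d/2}\lceil d/2\rceil! + \sqrt{k/\gamma}\,2^{d/2}\sqrt{d!} \triangleq B.
\]
Since $B\ge 1$, this gives $\max(1,|b_i|)\le B$ for all $i$.

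Combining the two steps, $\sigma_{\max}(H_t)\le \|\vec a\|_2\cdot t k\cdot B^{2t-2}$. The remaining polynomial factors $tk$ must then be absorbed into $B^2$. For $d\ge 1$ and $\gamma<1$ we have $B^2 \ge 2k/\gamma\cdot d!\ge k$, but absorbing $t$ as well needs more care. An alternative is to bound $\sum_{j<t} b^{2j}\le B^{2t}$ directly when $B\ge 2$, using a geometric sum, and to spend one factor of $B^2\ge k$ on the sum over $i$. The main obstacle I expect is precisely this bookkeeping, namely matching the constants and powers exactly (including the $\|\vec a\|_2$ versus $\|\vec a\|_2^2$ factor). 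The structural argument, a factorization followed by Gaussian moment bounds and Chebyshev, should be routine.
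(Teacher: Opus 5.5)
Your route is the paper's: the Ben-Or--Tiwari factorization $H_t(f,\bm u)=V^\top DV$, operator norm bounded by a Frobenius norm, the folded-normal moment bounds $\E|M_i(\bm u)|\le 2^{d/2}\lceil d/2\rceil!$ and $\E[M_i(\bm u)^2]\le 2^d d!$, Chebyshev at $\sqrt{k/\gamma}$ standard deviations, and a union bound over the $k$ monomials.

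The point you leave open is not bookkeeping, and no choice of bookkeeping will ``reproduce the stated shape''. Let $B$ be the base in the statement. Then $\sigma_{\max}(H_t(f,\bm u))$ is homogeneous of degree $1$ in $\bm a$, while the threshold $\|\bm a\|_2^2B^{2t}$ is homogeneous of degree $2$. So the statement is false for small coefficients. Concretely, take $f=c\,x_1$ (so $k=d=1$). The $(1,1)$ entry of $H_t(f,\bm u)$ is $f(\bm u^0)=f(\bm 1)=c$, so $\sigma_{\max}\ge|c|$ for every $\bm u$. Once $0<|c|<B^{-2t}$ this exceeds $c^2B^{2t}$, and the probability in question is $1>\gamma$. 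Your Cauchy--Schwarz alternative is also only linear in $\|\bm a\|_2$.

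The paper's proof reaches the square only through a slip. The matrix it displays as $D^{1/2}V$ has entries $|a_p|\,|M_p(\bm u)|^{r}$, i.e.\ it is $DV$. Correctly, $\|D^{1/2}V\|_{\rm F}^2=\sum_p|a_p|\sum_{r<t}|M_p(\bm u)|^{2r}\le\|\bm a\|_1U^{2t}$, which is linear in $\bm a$. So the provable statement has a factor linear in $\bm a$. It implies the written one only under your normalization $\|\bm a\|_2\ge 1$.

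Your remaining worry about absorbing $t$ and $k$ is easily settled. On the event $\max_i|M_i(\bm u)|<B$,
\[\sigma_{\max}(H_t(f,\bm u))\le\max_i|a_i|\,\|V\|_{\rm F}^2\le\max_i|a_i|\cdot k\sum_{j=0}^{t-1}B^{2j}=\max_i|a_i|\cdot\frac{k\,(B^{2t}-1)}{B^2-1}.\]
The first summand of $B$ is at least $1$, and the second exceeds $\sqrt k$ since $\gamma<1$. Hence $B>1+\sqrt k$, so $B^2-1\ge k$ and $\sigma_{\max}<\max_i|a_i|\,B^{2t}\le\|\bm a\|_2B^{2t}$, with no stray factor of $t$.

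Your suggestion of first bounding $\sum_{j<t}b^{2j}$ by $B^{2t}$ and then spending a factor $B^2\ge k$ overshoots by $B^2$. It is the $(B^2-1)^{-1}$ from the geometric sum that pays for the sum over $i$.
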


\begin{proof}[Proof Sketch]
For any vector $\bm{v} \in \R^n$, let $|\bm{v}| \in \R_{\geq 0}^n$ denote the vector with the absolute values of the coordinates of $\bm{v}$. Then, for any $\bm{z}, \bm{u} \in \R^n$, using the triangle inequality, and a Vandermonde-like decomposition of $H_{\bm u}\triangleq H_t(|f|, |\bm{u}|)$, (from \autoref{obs:Hankel-charcterization}), where $|f|(\bm{x}) \triangleq \sum_{i \in [k]} |a_i| M_i(\bm{x})$, we can upper-bound $|\bm{z}^\top H_{\bm u} \bm{z}|$ by $|\bm{z}|^\top V^\top D V |\bm{z}| = \|D^{\frac12} V |\bm{z}|\|_2^2$, where $D = \mathrm{diag}(|\bm{a}|)$, $V$ is the Vandermonde matrix $\mathsf{V}_t(|M_1(\bm{u})|,\ldots,|M_k(\bm{u})|) \in \R^{k \times t}$, and $M(|\bm{u}|) = |M(\bm{u})|$ for any monomial $M(\cdot)$. 

By the Courant-Fischer characterization, we have
\[\sigma_{\max} (H_{\bm{u}}) \leq \max_{\|\bm{z}\|_2 \leq 1} \|D^{\frac12} V |\bm{z}|\|_2^2 \leq \max_{\|\bm{z}\|_2 \leq 1} \|D^{\frac12} V\|_{\rm op}^2 \|\bm{z}\|_2^2 \leq \|D^{\frac12} V\|_{\rm F}^2.\]

Using some results from \cite{elandt1961folded} for the moments of the folded normal distribution, along with the properties of the Gamma function and the fact that all the monomials $M_i$ have total degree $\leq d$, we can upper bound 
\[\E_{\bm{u} \sim \cN(\bm{0}, I_n)} \left[|M_i(\bm{u})|\right] \leq 2^{d/2} \lceil d/2 \rceil!, \mbox{ and } \mathrm{Var}[|M_i(\bm{u})|] \leq 2^d \cdot d!.\]

Finally, using Chebyshev's inequality and the union bound (over the $k$ monomials) gives the upper bound for $\sigma_{\max} (H_{\bm{u}})$.
\end{proof}

\begin{rem}\label{sec:prob-upper-bound-extension}
    \autoref{lem:upperboundonsigmamax} seems to be extendable to more general mean-zero distributions of $\bm{u}$ with appropriate concentration (e.g. subgaussian, subexponential) since the crux of the proof, in addition to the use of properties of the Hankel matrix (which do not depend on the distribution of $\bm{u}$), is to bound all the degree $d$ moments of the distribution.
\end{rem}

\paragraph{$k$-linearity Tester:} For $k$-linearity, our tester is similar to the algorithm proposed in \cite{bshouty2023optimal} for testing $k$-linearity of functions $f: \F_2^n \to \F_2$, with the BLR test replaced by the linearity tester from \cite{arora2, arora2024optimaltestinglinearitysosa}. The crucial difficulties in the analysis over the reals come from (i) the $\eta$-approximate query access, and (ii) the use of $\ell_1$-distance for farness. 
We first reject all functions which are far from linearity by means of the linearity test, and, conditioned on the fact that the tester does not reject with high probability, proceed to testing the self-corrected function instead (with closeness to linearity guaranteed). But we now have only approximate query access to the function, which we test for $k$-linearity by splitting the variables into $O(k^2)$ random buckets as in \cite{bshouty2023optimal} and identify the influential ones. 
However, there is no analogous result for real-valued functions. Hence, we present a new analysis in \autoref{claim:close-to-linear-implies-good_new_bucket} for the \textsf{FindInfBucket} (\autoref{alg:findinfbucket}) that performs a binary search over the buckets, but with a different test for influential buckets that accounts for $\eta$-approximate queries. The analysis with approximate queries involves the appropriate use of anti-concentration results (\autoref{srt_thoe:carbery-wright}) for real linear polynomials.

\paragraph{$k$-junta tester:} Our tester is inspired by the junta testers used in \cite{blais2009testing} and \cite{blais2015partially}, but our definition of influence (\autoref{defn:influence}) is different ($\ell_1$ rather than Hamming), to work with approximate query oracle. \autoref{thm:structure-junta}, \autoref{lem:infl-submodular}, and \autoref{lem:influence_partition_complement}, in our analysis are similar to that of \cite{blais2015partially}. The significant difference in analysis for the $\ell_1$ distance comes in our \autoref{claim:close-to-linear-implies-good_new_bucket_junta}, and the subsequent proof of \autoref{theo:kjuntafinalmain}.

\subsection*{Organization of the paper}
In \Cref{sec:overview}, we present an overview of our results and techniques, followed by a discussion of the preliminaries required, in \Cref{sec:prelim}. 
In \Cref{sec:klinearitytest}, we present our $k$-linearity tester, which is followed by our tester 
for $k$-sparse low-degree polynomials in \Cref{sec:ksparsepolynomialtest}, and our $k$-junta tester in \Cref{sec:kjuntatest}. 
Finally, in \Cref{sec:lowerbound}, we present the lower bounds. 
Some associated proofs and subroutines from prior works are moved to the appendix for brevity.

\section{Technical Overview}\label{sec:overview}

We present a brief overview of our techniques, starting with $k$-linearity testing.
\paragraph{Testing $k$-linearity}
We build upon the \emph{self-correct and test} approach of \cite{blum1990self,halevy2007distribution}. 
Instead of directly testing if $f$ is $k$-linear, we construct a function $g_{\text{self-correct}}$ such that if $f$ is $k$-linear, then $g_{\text{self-correct}}$ will also be $k$-linear. 
Moreover, we simulate queries to $g_{\text{self-correct}}$
using queries to $f$, and test this newly constructed function $g_{\text{self-correct}}$.

We use the Gaussian distribution as the reference distribution, since there are no uniform distributions over continuous domains with infinite support. 
(This deviates from the self-correction approach of \cite{blum1990self}.)
In particular, we evaluate $f$ on a set of points sampled from $\cN(\bm 0,I_n)$ to construct the self-corrected function $g_{\text{self-correct}}$. 
To deal with the fact that different points from $\cN(\bm 0,I_n)$ have different probability masses, the idea is to radially project the sampled points from $\cN(\bm 0,I_n)$ into a small Euclidean ball $\ball(\bm 0,\srad)$ of a small (constant) radius ($r=1/50$ suffices) such that the probability masses of all points sampled from that ball is roughly the same. 
Moreover, since we work with an approximate oracle, we use the following notion of the self-corrected function,
\[
g(\bm p) \triangleq \
\kappa_{\bm p} \cdot \mathop{\mathsf{med}}_{\bm x\sim\mathcal{N}(\bm 0,I_n)} \left[  f \left( \frac{\bm p}{\kappa_{\bm p}}- \bm x \right) +f \left (\bm x \right)  \right]
\]
where $\kappa_{\bm p}:\R^n \to \R$ is a contraction factor, defined as:
\[\kappa_{\bm p}\triangleq\begin{cases}
    1 &\text{, if }\|\bm p\|_2\leq r \\
    \lceil \|\bm p\|_2/r\rceil &\text{, if }\|\bm p\|_2 > r
\end{cases}    \]
so that $\bm p/\kappa_{\bm p}\in \ball(\bm 0,r)$, and \textcolor{black}{$\mathsf{med}$ denotes the median function}. This definition of self-correction function was used in \cite{fleming2020distribution, arora2, arora2024optimaltestinglinearitysosa}.

To test $k$-linearity (\autoref{srt_balg:testklinear}), we first test if $f$ is pointwise close to some additive (aka. linear) function using \textsc{Approximate Additivity Tester} (\autoref{alg:zero-mean-additivity}). 
If it rejects $f$, we also reject $f$. 
However, if \autoref{alg:zero-mean-additivity} does not reject, then the self-corrected function $g$ is pointwise close to some linear function. 
As we only have access to an approximate oracle access to $f$, we can't simulate $g$ exactly.  
So, we use \textsc{Approximate-$g$}, the approximate query oracle for $g$ (part of \autoref{alg:subroutines_noisy}). 
The work of \cite{arora2, arora2024optimaltestinglinearitysosa} proves: (i) $g$ and \textsc{Approximate-$g$} are pointwise close in $\ball(\bm 0,r)$, and (ii) $f$ and \textsc{Approximate-$g$} are also pointwise close. 
So, $f$ is pointwise close to \textsc{Approximate-$g$}. 

We partition the $n$-variables $[n]$ into $k^2$ buckets uniformly at random. As a result of the partition, if $f$ is $k$-linear, the influential variables will be separated into different buckets w.h.p. We can then detect them via the subroutine $\findbucket$ (\autoref{alg:findinfbucket}), which recursively isolates the variables in a bucket and returns only the influential ones.  We then reject if the number of buckets with influential variables (variables whose values determine the value of $f$) is more than $k$.

\paragraph{Testing $k$-sparsity}
Our $k$-sparse, degree-$d$ polynomial tester (\autoref{srt_alg:testksparse}) adopts a similar approach. 
We first test if $f$ is a low-degree polynomial, using the \apxlowdegtest (\autoref{alg:low_degree_main_algorithm_approx}) from \cite{arora2, arora2024optimaltestinglinearitysosa}. 
If it rejects $f$, we also reject $f$. 
However, if \autoref{alg:low_degree_main_algorithm_approx} does not reject $f$, then $f$ is point-wise close to a low-degree polynomial.

As in $k$-linearity testing, we use a self-corrected function $g$ from \cite{arora2, arora2024optimaltestinglinearitysosa}: 
For points $\bm p \in \ball(\bm 0, \srad)$, 
$ g(\bm p)$ is the (weighted) median value of 
$g_{\bm q}(\bm p) \triangleq \sum_{i=1}^{d+1}(-1)^{i+1}\binom{d+1}{i} f(\bm p+i\bm q)$, 
weighted according to the probability of $\bm q \sim \mathcal{N}(\bm 0,I_n)$, i.e., 
\[ g(\bm p) \triangleq \underset{\bm q \sim \mathcal{N}(\bm 0, I_n)}{\mathsf{med}}[g_{\bm q}(\bm p)].\]
Intuitively, $g_{\bm q}(\bm p)$ is the value that $f$ should take if, when restricted to the line $\pline_{\bm p,\bm q}\triangleq\{\bm p+t\bm q,t\in\R\}$, $f$ would be a degree-$d$ univariate polynomial. 
Taking the weighted median over all directions $\bm q\sim\cN(\bm 0,I_n)$, ensures that the self-correction proportionately respects the values of $f$, in a local neighborhood of $\bm p$. 
For $\bm p\not\in \ball(\bm 0,r), g$ is defined via radial extrapolation from within $\ball(\bm 0, \srad)$ along the radial line $\pline_{\bm 0, \bm p}$.

Given exact query access to $f$, we can simulate query access to the self-corrected function $g$. 
As we only have approximate query access to $f$, we use $\apxqueryg$ (\autoref{alg:subroutines_approx}, the approximate oracle to the self-corrected function associated with $f$), which was proved to be pointwise close to $g$. 
As a result, $f$ will be pointwise close to $\apxqueryg$ as well. 

Once we have that $f$ is close to a low-degree polynomial, we use a Hankel matrix (\autoref{srt_defi:Hankel-matrix-poly}) based characterization for sparse polynomials. 
\cite{gjr10,bot88} proved \autoref{obs:Hankel-charcterization}: a polynomial $f$ (over finite fields) is $k$-sparse, if and only if its associated Hankel matrix has a non-zero determinant. 
This can be efficiently tested with only $2k+1$ queries to $f$.
For polynomials over the reals with exact-queries, the idea from \cite{gjr10} can be extended to give a zero-error (perfect soundness and completeness), zero-gap (does not require $\varepsilon$-farness for $\varepsilon$ > 0 in the No case) tester, making $2k + 1$ queries for deciding whether (i) $f$ is $k$-sparse, or (ii) $f$ is not $k$-sparse (proved in \Cref{srt_lem:perfect-sparsity-tester}). This result is even stronger than what \cite{gjr10} gets for finite fields, since the zero set of any polynomial over the reals has zero Gaussian (or Lebesgue) measure, whereas the corresponding probability needs to be bounded in terms of the degree of the polynomials over finite fields (e.g. by the Schwartz-Zippel lemma).
Unfortunately, since we only have approximate query access to $f$, this technique no longer works, as determinants of sums of matrices do not behave nicely. 
So we take a probabilistic approach, showing that the \emph{noisy} Hankel matrix constructed from the approximate query to $f$ is not too far from the exact Hankel matrix (proved in \autoref{obs:approx-hankel-structure}). 
Finally, we show in \autoref{lem:approx-query-sparsity-testing-poly} using Weyl's inequality (\autoref{srt_thm:weyls-ineq}), that if $f$ is a $k$-sparse, low-degree polynomial, then the smallest eigenvalue of the \emph{noisy} Hankel matrix associated with $\apxqueryg$ is not too large.
Combining them all, our main result is proved in \autoref{sec:k-sparsity-tester-analysis}.

Since, we only have access to the Hankel matrix of $\widetilde{f}=\apxqueryg$ which is $\eta^\prime$-close to $f$ (on all the query points, whp),
we can express $\widetilde{H}_{\bm{u}} \triangleq H_{t}(\widetilde{f}, \bm{u})$ (for $t = k+1$) of such a $\widetilde{f}$ as the sum of $H_{\bm{u}} \triangleq H_{t}(f, \bm{u})$ and a Hankel-structured error matrix with bounded spectral/operator norm (\autoref{obs:approx-hankel-structure}). From this, and Weyl's inequality, we get completeness as long as we \textsc{Accept} when the smallest singular value $\sigma_{\min} (\widetilde{H}_{\bm{u}}) \leq \eta^\prime(k + 1)$ in all $\Theta(d(k+1)^2)$ rounds.

To prove soundness of the tester, we need to show that $\sigma_{\min} (\widetilde{H}_{\bm{u}}) > \eta^\prime (k+1)$ in some round with probability $\geq 2/3$, as long as $Q(\bm{u}) \triangleq \det(H_{\mathbf{u}})$ is a non-zero polynomial. This analysis (in \autoref{lem:approx-query-sparsity-testing-poly}) follows by invoking the probabilistic upper bound on $\sigma_{\max} (H_\mathbf{u})$, for $\bm{u} \sim \cN(\bm 0, I_n)$.

\paragraph{Testing $k$-junta}
Finally, we discuss our algorithm for testing $k$-juntas (\autoref{alg:testjuntaapprox}). 
Our approach is to first randomly partition the $n$-variables into $k^2$ buckets. 
If $f$ is a $k$-junta, the $k$ influential variables will be separated into distinct buckets w.h.p (by the birthday paradox).

Now we run $O(k/\eps)$ iterations to find if there exists any influential variable in any bucket, using the subroutine $\findbucket$ (\autoref{alg:findinfbucket}). 
After these iterations, if we find more than $k$ influential variables in $f$, we reject it.
Otherwise, we accept $f$.
Our analysis follows a combinatorial style similar to \cite{blais2015partially}. 
We would like to note that although we use $\findbucket$ for $k$-linearity testing as well, the analysis here significantly deviates from that of $k$-linearity testing and is presented in \autoref{claim:close-to-linear-implies-good_new_bucket_junta}. The main result is formally proved in \autoref{sec:k-junta-analysis}.

\section{Preliminaries}\label{sec:prelim}

\paragraph{Notations} Throughout this work, we use boldface letters to represent vectors of length $n$ and normal face letters for variables. 
Specifically, $\bm e_i\triangleq(0,\hdots,0,1_i,0,\hdots,0)$ denotes the $i^{th}$ standard unit vector. 
For $n\in\N$, let $[n]$ denote the set $\{1, \dots, n\}$. 
For a matrix $A$, let $\|A\|_{\infty}$, $\|A\|_{\mathrm{op}}$, and $\|A\|_{\mathrm F}$ denote the supremum, operator, and Frobenious norms of $A$, respectively. 
See \cite{horn2012matrix} for the formal definitions. 
For concise expressions and readability, we use the asymptotic complexity notion of $\widetilde{O}(\cdot)$, where we hide poly-logarithmic dependencies of the parameters.
For any $f:\R^n\to\R$, let $\|f\|_{-\infty(/\infty),C}$ denote the infimum(resp. supremum) value of $f$ over some $C\subseteq\R^n$.
Let $\Pi_n$ be the class of functions $\R^n \rightarrow \R$ satisfying some particular first-order property and let $\Pi = \cup_{n \geq 1} \Pi_n$. 
\begin{defi}[$\ell_1$-distance]\label{srt_def:l1-distance}
Let $\caD = \{\caD_n\}_{n \geq 1}$ is family of distributions with $\caD_n$ being a distribution on $\R^n$. For two arbitrary functions $f, g: \R^n \rightarrow \R$, the $\ell_1$-distance between $f$ and $g$ is defined as:
\[\mathsf{dist}_{\caD,\ell_1}(f, g) \triangleq \E_{\bm x \sim \caD_n}\,[|f(\bm x) - g(\bm x)|].\]
\end{defi}

We also define the $\ell_p$-distance of $f$ to the class $\Pi_n$, and hence the class $\Pi$, by
\[\mathsf{dist}_{\caD,\ell_p}(f,\Pi)\triangleq\mathsf{dist}_{\caD,\ell_p}(f,\Pi_n) \triangleq \inf_{g \in \Pi_n} \mathsf{dist}_{\caD,\ell_p}(f, g).\]

\begin{rem}
    Since our distributions are supported over continuous spaces, the distances we consider are also defined over such continuous spaces, i.e., $\dist_{\caD,\ell_j}(\cdot,\cdot)\equiv\dist_{\caD, L_j}(\cdot,\cdot)$, for all $j$.
\end{rem}

We have $\mathsf{dist}_{\caD,\ell_p}(f, \Pi) = 0$ if and only if there exists a function $g \in \Pi_n$ which agrees with $f$ almost everywhere with respect to $\caD_n$ (in the measure-theoretic sense). We will only concern ourselves with $p\in\{0,1\}$ since we are dealing with scalar-valued functions (aka functionals) and hence $\|f(\bm x) - g(\bm x)\|_p = |f(\bm x) - g(\bm x)|$ for all $p > 0$.

The notion of $\ell_1$-distance makes more sense in the case of real-valued functions, especially if the evaluation of $f$ is not exact. To see this, consider the case when $g = f + \varepsilon$ for some $\varepsilon > 0$, then $\mathsf{dist}_{\caD,\ell_1}(f,g) = \varepsilon$ whereas $\mathsf{dist}_{\caD,\ell_0}(f,g) = 1$.

We have the following observation that shows that over finite fields, any $(k+1)$-linear function is always far from any $k$-linear function in $\ell_0$-distance.

\begin{restatable}{obs}{distlzero}\label{srt_obs:distlzero}
For $1 \leq k < n$, for any $(k+1)$-linear function $f: \F_2^n \rightarrow \F_2$, we have $\mathsf{dist}_{\ell_0}(f, g) = 1/2$ for any $k$-linear function $g: \F_2^n \rightarrow \F_2$.
\end{restatable}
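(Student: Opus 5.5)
The plan is to note that $f-g$ is itself a nonzero linear function over $\F_2$, and that a nonzero linear form over $\F_2^n$ is balanced. Write $f(\bm x)=\sum_{i\in S}\bm x_i$ with $|S|=k+1$ and $g(\bm x)=\sum_{i\in T}\bm x_i$ with $|T|\le k$. Over $\F_2$ the coefficients of a linear function are in $\{0,1\}$, so a $k$-linear function is just a parity on a set of at most $k$ coordinates. Then
\[
f(\bm x)-g(\bm x)=\sum_{i\in S\triangle T}\bm x_i .
\]
Because $|S|=k+1>k\ge |T|$, we have $S\ne T$, and so $S\triangle T\neq\emptyset$. (Here I read ``$(k+1)$-linear'' as exactly $k+1$ relevant coordinates, which is presumably what is intended. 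If it only meant ``at most $k+1$'', $f$ could itself be $k$-linear and the statement would fail.)

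Next, by definition $\mathsf{dist}_{\ell_0}(f,g)=\Pr_{\bm x\sim\mathsf{Unif}(\F_2^n)}\big[\sum_{i\in S\triangle T}\bm x_i=1\big]$. I fix some $j\in S\triangle T$ and condition on all coordinates other than $\bm x_j$. The sum equals $\bm x_j$ plus a fixed bit, so it is $1$ with probability exactly $1/2$. Averaging over the conditioning gives $\mathsf{dist}_{\ell_0}(f,g)=1/2$ for every $k$-linear $g$. This also covers $g\equiv 0$, where $T=\emptyset$ and $S\triangle T=S$.

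There is no real obstacle. The only point needing care is the interpretation of $(k+1)$-linear, namely that $f$ depends on exactly $k+1$ coordinates, which is what forces $S\triangle T\neq\emptyset$. The hypothesis $k<n$ serves only to make such an $f$ exist.
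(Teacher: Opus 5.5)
Your proof is correct and is essentially the paper's argument: the paper also picks some $i^\ast \in S$ whose coefficient in $g$ is zero (so $i^\ast \in S\setminus T \subseteq S \triangle T$), and then uses that the nonzero parity $f+g$ is balanced because of the free coordinate $x_{i^\ast}$. Your reading of ``$(k+1)$-linear'' as depending on exactly $k+1$ coordinates matches the paper's assumption $|S| = k+1$.
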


\begin{proof}
Let $f(\vec{x}) = \sum_{i \in [n]} c_i x_i$, and $S \triangleq \{i \in [n] : c_i \neq 0\}$ with $|S| = k+1$, since $f$ is $(k+1)$-linear. For any $k$-linear $g$ with $g(\vec{x}) = \sum_{i \in [n]} a_i x_i$, there is at least one $i^\ast \in S$ for which $a_{i^\ast} = 0$. Then
\begin{align*}
\mathsf{dist}_{\ell_0}(f,g) &= \Pr_{\bm x \leftarrow F_2^n}\left[f(\bm x) \neq g(\bm x)\right] = \Pr_{\bm x \leftarrow F_2^n}\left[f(\bm x) + g(\bm x) = 1\right] \tag{Note $c_{i^\ast}=1,a_{i^\ast}=0$}\\
&=\Pr_{\bm x \leftarrow F_2^n}\left[ \sum_{i=1}^{n} (a_i+c_i)x_i = 1\right] = \Pr_{\bm x \leftarrow \F_2^n}\left[x_{i^\ast} + \sum_{i \neq i^\ast} (a_i+c_i) x_i = 0\right] = \frac{1}{2}.\qedhere 
\end{align*}
\end{proof}

Unfortunately, there is no general analogous result in the real case with the $\ell_1$-distance, but there is an upper bound, provided the distribution $\cD_n$ has some concentration.

\begin{restatable}{claim}{distlone}
If $f, g: \R^n \rightarrow \R$ are linear functions with $f(\vec 0) = g(\vec 0)$, then
\[\|f - g\|_{-\infty,\textrm{supp}(\cD)} \leq \mathsf{dist}_{\cD,\ell_1}(f, g) \leq \|f - g\|_2 \cdot \E_{\vec{x} \sim \cD_n} \|\vec{x}\|_2. \]
\end{restatable}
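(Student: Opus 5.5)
Since $f$ and $g$ are linear with $f(\vec 0)=g(\vec 0)$, the difference $h \triangleq f-g$ is a homogeneous linear function. Write $h(\vec x)=\iprod{\vec c}{\vec x}$ for some $\vec c\in\R^n$, and read $\|f-g\|_2$ as the Euclidean norm $\|\vec c\|_2$ of this coefficient vector. Then
\[\mathsf{dist}_{\cD,\ell_1}(f,g)=\E_{\vec x\sim\cD_n}\left[|\iprod{\vec c}{\vec x}|\right],\]
and both inequalities become statements about this single expectation. I will prove the two bounds separately.

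\emph{Upper bound.} Apply Cauchy--Schwarz pointwise: $|\iprod{\vec c}{\vec x}|\le \|\vec c\|_2\|\vec x\|_2$ for every $\vec x$. Taking expectations over $\vec x\sim\cD_n$ and using monotonicity and linearity of expectation gives
\[\mathsf{dist}_{\cD,\ell_1}(f,g)\le \|\vec c\|_2\cdot\E_{\vec x\sim\cD_n}\|\vec x\|_2.\]
This step needs no assumption beyond $\E\|\vec x\|_2<\infty$, which is where the concentration of $\cD_n$ enters; for the Gaussian this expectation is at most $\sqrt n$.

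\emph{Lower bound.} By the paper's notation, $\|f-g\|_{-\infty,\mathrm{supp}(\cD)}$ is the infimum of $f-g$ over $\mathrm{supp}(\cD)$. The expectation of $|h|$ dominates $\inf|h|$ over the support, and $\inf|h|\ge \inf h$. So
\[\E_{\vec x\sim\cD_n}|h(\vec x)|\ \ge\ \inf_{\vec x\in\mathrm{supp}(\cD)}|h(\vec x)|\ \ge\ \|f-g\|_{-\infty,\mathrm{supp}(\cD)}.\]
The only care needed is that the expectation ignores measure-zero sets, whereas the infimum is over the support. This is handled by the fact that $|h|$ is continuous. Suppose $|h|<m$ at some point of the support. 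Then $|h|<m$ on a neighbourhood of that point, and every neighbourhood of a support point has positive mass. Hence $|h|\ge \inf_{\mathrm{supp}}|h|$ holds $\cD_n$-almost surely, and the expectation bound follows.

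\emph{Main obstacle.} There is nothing deep here. The only point to settle is interpretation: I will fix $\|f-g\|_2$ to mean the coefficient norm $\|\vec c\|_2$, and note the measure-theoretic subtlety in the lower bound. For the Gaussian, $\vec 0$ lies in the support and $h(\vec 0)=0$, so the infimum is at most $0$ and the lower bound is in fact trivially true.
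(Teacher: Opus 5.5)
Your proof is correct and follows the paper's argument: write $f-g=\iprod{\vec a-\vec b}{\vec x}$, apply Cauchy--Schwarz pointwise for the upper bound, and use $\E|h|\ge\inf_{\mathrm{supp}(\cD)}|h|$ for the lower bound. One small remark: your continuity argument proves the converse direction (that the infimum over the support is at least the essential infimum). The lower bound needs only $\cD_n(\mathrm{supp}(\cD))=1$, which holds for every Borel probability measure on $\R^n$, so $|h|\ge\inf_{\mathrm{supp}}|h|$ almost surely without any continuity.
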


\begin{proof}
Suppose $f(\vec{x}) = \sum_{i \in [n]} a_i x_i$ and $g(\vec{x}) = \sum_{i \in [n]} b_i x_i$ for some $\vec{a}, \vec{b} \in \R^n$. Let $h = f - g$, so that $h(\vec{x}) = \sum_{i \in [n]} (a_i - b_i) x_i = \iprod{\vec{a} - \vec{b}}{\vec{x}}$. We have, by Cauchy-Schwarz inequality,
\[\mathsf{dist}_{\cD,\ell_1}(f, g) \triangleq \E_{\vec{x} \sim \cD_n}\,\left|h(\vec{x})\right| \leq \|\vec{a} - \vec{b}\|_2 \cdot \E_{\vec{x} \sim \cD_n} \|\vec{x}\|_2.\]
and the fact that $\E[|h(\vec{x})|] \geq \inf_{\vec{x}\sim\cD} |h(\vec{x})|$ gives the lower bound.
\end{proof}

Our definitions hold for the general reference distributions $\caD$ which are suitably concentrated. Later, we work with the standard Gaussian distribution $\cN(\bm 0,I_n)$, which also has this desired concentration property.

\begin{defi}[Concentrated distribution]\label{defi:distribution-concentration}
Let $\eps\in(0,1),R\geq 0$, and $\bm c\in\R^n$.
A distribution $\caD$ supported on $\R^n$ is $(\eps,R,\bm c)$-\emph{concentrated} if most of its mass is contained in a ball of radius $R$ centered at some point $\bm c\in\R^n$, i.e.,
\[\Pr_{\bm p\sim\caD}[\bm p\in \ball(\bm c,R)]\geq 1-\eps.\]
\end{defi}

\begin{fact}[{\cite[Theorem 2.9]{blum_hopcroft_kannan_2020}}]\label{fact:gaussian-concentration}
    The standard Gaussian distribution $\cN(\bm 0,I_n)$ is $(0.01,2\sqrt{n},\bm 0)$-concentrated.
\end{fact}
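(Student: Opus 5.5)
The statement to prove is Fact~\ref{fact:gaussian-concentration}: for $\bm x\sim\cN(\bm 0,I_n)$ we need $\Pr[\|\bm x\|_2\geq 2\sqrt n]\leq 0.01$. The plan is to treat $\|\bm x\|_2^2=\sum_{i=1}^n x_i^2$ as a sum of $n$ i.i.d.\ $\chi^2_1$ variables, with mean $n$, and bound the probability that it reaches $4n$, i.e.\ exceeds its mean by a factor of $4$. I would handle large $n$ with a concentration bound and small $n$ with a direct tail computation.

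\textbf{Step 1 (Chebyshev, for large $n$).} Since $\var[x_i^2]=2$, we have $\var[\|\bm x\|_2^2]=2n$, so
\[
\Pr\big[\|\bm x\|_2^2\geq 4n\big]\leq \Pr\big[\|\bm x\|_2^2-n\geq 3n\big]\leq \frac{2n}{9n^2}=\frac{2}{9n}.
\]
This is at most $0.01$ once $n\geq 23$.

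\textbf{Step 2 (Chernoff, for all $n$).} For every $n$ I would use the Chernoff bound via the moment generating function $\E[e^{\lambda x_i^2}]=(1-2\lambda)^{-1/2}$, valid for $\lambda<1/2$. This gives
\[
\Pr\big[\|\bm x\|_2^2\geq 4n\big]\leq \inf_{\lambda<1/2}(1-2\lambda)^{-n/2}e^{-4\lambda n}=\big(4e^{-3}\big)^{n/2},
\]
with the infimum attained at $\lambda=3/8$. Numerically $4e^{-3}\approx0.199$, so the bound is below $0.01$ once $n\geq 6$ (indeed $0.199^3\approx0.0079$).

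\textbf{Step 3 (small $n$).} For $n\leq 5$, I would check $\Pr[\chi^2_n\geq 4n]$ directly from the chi-squared tail:
\begin{itemize}
\item $n=1$: $\Pr[|x|\geq2]\approx0.0455$.
\item $n=2$: $e^{-4}\approx0.018$.
\item $n=3$: roughly $0.007$.
\item $n=4,5$: smaller still.
\end{itemize}

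The main obstacle is exactly this small-dimension regime. The literal constants fail for $n=1$ and $n=2$, since $0.0455$ and $0.018$ both exceed $0.01$. So the fact as stated holds only for $n\geq 3$ (equivalently, for all sufficiently large $n$, which is the regime used by the paper). For $n\leq 2$ one would either enlarge the radius to $3\sqrt n$ or rely on the cited theorem's intended asymptotic setting. Apart from that caveat, Steps 1 and 2 give the claim.
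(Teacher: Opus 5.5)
Your argument is correct, and so is your caveat. For $n=1$ the mass outside $\ball(\bm 0,2)$ is $\Pr[|x|>2]\approx 0.0455$, and for $n=2$ it is $e^{-4}\approx 0.018$, so the Fact as literally stated fails for $n\le 2$.

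The paper gives no argument of its own; it cites the Gaussian Annulus Theorem of Blum--Hopcroft--Kannan. That theorem bounds $\Pr\bigl[\,\bigl|\|\bm x\|_2-\sqrt n\bigr|>\beta\,\bigr]$ by $3e^{-c\beta^2}$ for $\beta\le\sqrt n$, via a general moment-based tail bound for sums of independent variables. Taking $\beta=\sqrt n$, the annulus becomes exactly the ball $\ball(\bm 0,2\sqrt n)$. This route is a one-line appeal to a general two-sided concentration result (whose lower side is vacuous here), but its constant is weak. The proof there gives $c=1/96$, so $3e^{-n/96}\le 0.01$ only for $n\ge 548$, and the citation by itself supports the Fact only for large $n$.

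Your route uses only the upper tail and the exact $\chi^2$ moment generating function, which is much sharper. The optimized Chernoff bound $(4e^{-3})^{n/2}$ covers all $n\ge 6$. A finite check covers the rest: about $0.0074$ for $n=3$, $9e^{-8}\approx 0.0030$ for $n=4$, and about $0.0012$ for $n=5$. So you get the statement for every $n\ge 3$ with explicit constants, and you locate precisely where it breaks.

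Two small remarks. Your Chebyshev step is subsumed by the Chernoff step and can be dropped. Also, since the $n=3$ value is not far below $0.01$, record it via the closed form $\Pr[\chi^2_3\ge x]=2\bigl(1-\Phi(\sqrt x)\bigr)+\sqrt{2x/\pi}\,e^{-x/2}$ (with $\Phi$ the standard normal CDF) rather than as ``roughly $0.007$''.
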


For brevity, we may omit the third entry corresponding to the center of the ball, when the center is the origin. For example,  $(0.01,2\sqrt{n})$-concentrated means $(0.01,2\sqrt{n},\bm 0)$-concentrated.

\begin{defi}[Property Tester]\label{srt_def:low_degree_sparsity_tester}
Let $\caP$ be a real function property. 
An algorithm is said to be a \emph{tester} for $\caP$ with respect to distance measure $\dist(\cdot,\cdot)$; with proximity parameter $\eps > 0$, completeness error $c \in (0, 1)$, and soundness error $s \in (0, 1)$ if, given query access (either exact, or $\eta$-approximate query access) to a function $f:\R^n \to \R$, and sample access to a reference distribution $\caD_n$ on $\R^n$, the algorithm performs $q(n,d,k,\eps,c,s)$ queries to $f$ and:
\begin{enumerate}
	\item[(i)] Outputs \textsc{Accept} with probability $\geq 1 - c$ (over the randomness of the algorithm), if $f \in \caP$.
 
	\item[(ii)] Outputs \textsc{Reject} with probability $\geq 1 - s$, if $\dist(f, g) \geq \varepsilon$ for all $g \in \caP$.
\end{enumerate}
\end{defi}

\subsection{Preliminaries on Polynomials and Linear Algebra}

We briefly discuss some notions, and properties of polynomials:

\begin{defi}[Monomials]
Suppose $x_1,\ldots,x_n$ are indeterminates. A \emph{monomial} in these indeterminates is a product of the form $x_1^{\alpha_1} x_2^{\alpha_2} \cdots x_n^{\alpha_n}$ where $\bm \alpha \triangleq (\alpha_1, \ldots, \alpha_n) \in \N^n$, which we also denote by $\bm x^{\bm \alpha}$. The total degree of monomial $\bm x^{\bm \alpha}$ is $\|\bm \alpha\|_1 \triangleq \alpha_1 + \cdots + \alpha_n$. The degree of $\bm x^{\bm \alpha}$ in the variable $x_i$ is $\alpha_i$. The individual degree of $\bm x^{\bm \alpha}$ is $\|\bm \alpha\|_\infty \triangleq \max_{i \in [n]} \alpha_i$. Over a field $\mathbb{F}$, each monomial $\bm x^{\bm M} = x_1^{M_1} \cdots x_n^{M_n}$ for $\bm M \in \N^n$ corresponds to a function $\bm M: \F^n \rightarrow \F$, $\bm M(\vec{z}) \triangleq z_1^{M_1} \cdots z_n^{M_n}$.
\end{defi}

\begin{defi}[Polynomials]
A \emph{polynomial} in $x_1,\ldots,x_n$ (aka an $n$-variate polynomial) over a field $\F$ is a finite $\F$-linear combination of monomials in $x_1,\ldots,x_n$. That is, an $n$-variate polynomial over $\F$ is of the form $P(\bm x) = \sum_{i=1}^{m} a_i\bm x^{\bm M_i}$ where $m \geq 0$, $\bm M_1,\ldots,\bm M_m \in \N^n$ and $a_1,\ldots,a_m \in \F \setminus \{0\}$. The set of all such polynomials is denoted as $\F[x_1,\ldots,x_n]$.
\end{defi}

\begin{defi}[Polynomial sparsity]\label{srt_defi:poly_sparsity}
The \emph{sparsity} of a polynomial $P(\bm x) = \sum_{i=1}^{m} a_i\bm x^{\bm M_i}$, denoted $\|P\|_0$, is the number of non-zero coefficients $a_i$ in its monomial-basis representation. The unique polynomial $P$ with $\|P\|_0 = 0$ is the \emph{zero polynomial}, denoted by $P \equiv 0$.
\end{defi}

\begin{defi}[Total and individual degrees]\label{srt_defi:poly_degree}
The \emph{total degree} $\deg(P)$ of a non-zero polynomial $P(\bm x) = \sum_{i \in [m]} a_i\bm x^{\bm M_i}$ is the maximum total degree of its monomials, i.e., $\deg(P) \triangleq \max_{i \in [m]} \|\bm M_{i}\|_1$. Similarly, the \emph{individual degree} $\ideg(P)$ of $P$ is the maximum individual degree of its monomials, i.e., $\ideg(P) \triangleq \max_{i \in [m]} \|\bm M_{i}\|_\infty$. The total, as well as the individual degree of the zero polynomial $P \equiv 0$ is defined to be $-\infty$.
\end{defi}

In this work, we will be primarily working on the total degree. So, we will use degree to represent total degree when it is clear from the context.

When $\F$ is an infinite field, each polynomial $P(x_1,\hdots,x_n) = \sum_{i \in [m]} a_i\bm x^{\bm M_i}$ over $\F$ corresponds to a \emph{unique} $\F$-valued function $P: \F^n \rightarrow \F$ with $P(\vec{z}) = \sum_{i=1}^{m} a_i\bm M_i(\vec{z}) = \sum_{i=1}^{m} a_i \vec{z}_1^{M_{i,1}} \cdots \vec{z}_n^{M_{i,n}}$. Such functions are referred to as \emph{polynomial functions}. Unlike with finite fields, the equality of two real polynomial functions implies the equality of the formal polynomials. Thus, we use the formal polynomial $P \in \R[x_1,\ldots,x_n]$ and the 	\emph{polynomial function} $P: \R^n \rightarrow \R$ interchangeably.

\begin{defi}\label{srt_defi:deg_d_poly_class}
Let $\cP^{\mathrm{tot}}_{n,d,k}$ denote the class of all $n$-variate real polynomials with \emph{total degree} $\leq d$ and sparsity $\leq k$.
\[\cP^{\mathrm{tot}}_{n,d,k} \triangleq \left\{P \in \R[x_1,\ldots,x_n] \mid P = \sum_{i=1}^{k} a_i\bm x^{\bm M_i} : \forall\, i \in [k], a_i \in \R, \bm M_i \in \N^n: \|\bm M_{i}\|_1 \leq d\right\}.\]
Similarly, let $\cP^{\mathrm{ind}}_{n,d,k}$ denote the class of all $n$-variate real polynomials with \emph{individual degree} $\leq d$ in all variables and sparsity $\leq k$.
\[\cP^{\mathrm{ind}}_{n,d,k} \triangleq \left\{P \in \R[x_1,\ldots,x_n] \mid P = \sum_{i=1}^{k} a_i\bm x^{\bm M_i} : \forall\, i \in [k], a_i \in \R,\bm M_i \in \{\{0\} \cup [d]\}^{\otimes n}\right\}.\] 
\end{defi}

We will use the notion of Vandermonde matrices in this work.

\begin{defi}[Vandermonde matrix and determinant]\label{srt_defi:vandermonde}
For any $m, n > 0$ and $x_1,\ldots,x_m \in (R, +, \cdot)$, the $m \times n$ Vandermonde matrix with nodes $x_1,\ldots,x_m$ denoted by $\mathsf{V}_n(x_1,\ldots,x_m)$ is:
	\[\mathsf{V}_n(x_1,\ldots,x_m) \triangleq 
    \begin{pmatrix}
    1 & x_1 & \cdots & x_1^{n-1}\\
    1 & x_2 & \cdots & x_2^{n-1}\\ 
    \vdots & \vdots & \ddots & \vdots\\
    1 & x_m & \cdots & x_m^{n-1}
    \end{pmatrix}.\]
	If $m=n$, $\det\paren{\mathsf{V}_n(x_1,\ldots,x_n)} = \prod_{1 \leq i < j \leq n} (x_j - x_i)$, which is the Vandermonde determinant formula.
\end{defi}

\subsection{Results from Perturbation Theory, and Measure Theory}

For analyzing our testers, we use the following result from perturbation theory.

\begin{theo}[Weyl's Inequality~\cite{horn1994topics}]\label{srt_thm:weyls-ineq}
    Let $\hat{\Sigma} = \Sigma + E$, where $\Sigma, E\in\R^{n \times n}$ are both symmetric matrices. Let $\lambda_i(A)$ denote the $i^{th}$ eigenvalue of (symmetric matrix) $A$, sorted in non-increasing order, and let $\|A\|_{\mathrm{op}}$ denote its $\ell_2$-operator (spectral) norm. Then,
    \[
    \max_{i\in [n]} \{|\lambda_i(\hat{\Sigma}) - \lambda_i(\Sigma)|\} \leq \|E\|_{\mathrm{op}} \leq \max\{|\lambda_d(E)|,|\lambda_1(E)|\}.
    \]
\end{theo}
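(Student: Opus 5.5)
The plan is to prove the two inequalities separately, using only the spectral theorem for real symmetric matrices and the Courant--Fischer min--max characterization of eigenvalues. Throughout I read the index $d$ in the right-hand bound as $n$, i.e.\ $\lambda_n(E)$ is the smallest eigenvalue of $E$.

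\textbf{The right-hand inequality.} Since $E$ is real symmetric, the spectral theorem gives $E = U \Lambda U^\top$ with $U$ orthogonal and $\Lambda = \mathrm{diag}(\lambda_1(E),\ldots,\lambda_n(E))$. Orthogonal matrices preserve the $\ell_2$ norm, so
\[
\|E\|_{\mathrm{op}} = \|\Lambda\|_{\mathrm{op}} = \max_{j\in[n]} |\lambda_j(E)|.
\]
The eigenvalues are sorted in non-increasing order, so the entry of largest absolute value is either the largest eigenvalue $\lambda_1(E)$ or the smallest eigenvalue $\lambda_n(E)$. Hence $\|E\|_{\mathrm{op}} = \max\{|\lambda_n(E)|,|\lambda_1(E)|\}$, and the stated inequality in fact holds with equality. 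A useful consequence is that, for every unit vector $\bm x$,
\[
|\bm x^\top E \bm x| \leq \|E\|_{\mathrm{op}}.
\]

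\textbf{The left-hand inequality.} I would invoke Courant--Fischer: for a symmetric $A \in \R^{n\times n}$,
\[
\lambda_i(A) = \max_{\substack{V \subseteq \R^n \\ \dim V = i}} \ \min_{\substack{\bm x \in V \\ \|\bm x\|_2 = 1}} \bm x^\top A \bm x .
\]
Fix $i$ and any $i$-dimensional subspace $V$. For every unit $\bm x \in V$, the bound above on $|\bm x^\top E \bm x|$ gives
\[
\bm x^\top \hat{\Sigma} \bm x = \bm x^\top \Sigma \bm x + \bm x^\top E \bm x \leq \bm x^\top \Sigma \bm x + \|E\|_{\mathrm{op}}.
\]
Taking the minimum over unit $\bm x \in V$ and then the maximum over all such $V$ preserves this inequality. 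This yields
\[
\lambda_i(\hat{\Sigma}) \leq \lambda_i(\Sigma) + \|E\|_{\mathrm{op}}.
\]
For the reverse direction, write $\Sigma = \hat{\Sigma} + (-E)$, where $-E$ is symmetric and $\|-E\|_{\mathrm{op}} = \|E\|_{\mathrm{op}}$. The same argument gives
\[
\lambda_i(\Sigma) \leq \lambda_i(\hat{\Sigma}) + \|E\|_{\mathrm{op}}.
\]
Combining the two bounds gives $|\lambda_i(\hat{\Sigma}) - \lambda_i(\Sigma)| \leq \|E\|_{\mathrm{op}}$ for each $i$, and taking the maximum over $i \in [n]$ gives the claim.

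\textbf{Main obstacle.} The argument is standard; the only substantive step is justifying the Courant--Fischer characterization itself. If it cannot simply be cited (for example from \cite{horn2012matrix}), I would prove it directly. Let $\bm v_1,\ldots,\bm v_n$ be an orthonormal eigenbasis of $A$, ordered to match $\lambda_1(A) \geq \cdots \geq \lambda_n(A)$.
\begin{itemize}
\item Choosing $V = \mathrm{span}(\bm v_1,\ldots,\bm v_i)$ gives $\min_{\bm x} \bm x^\top A \bm x = \lambda_i(A)$, so the maximum is at least $\lambda_i(A)$.
\item For any $i$-dimensional $V$, the subspace $\mathrm{span}(\bm v_i,\ldots,\bm v_n)$ has dimension $n-i+1$, so by dimension counting it meets $V$ in a nonzero vector. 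Any unit vector in that intersection has Rayleigh quotient at most $\lambda_i(A)$, so the maximum is at most $\lambda_i(A)$.
\end{itemize}
This establishes the equality and completes the plan.
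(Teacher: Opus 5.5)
Your proof is correct. The paper states this result without proof and simply cites Horn--Johnson, and your Courant--Fischer argument, with the reverse bound obtained by swapping the roles of $\Sigma$ and $\hat{\Sigma}$, is exactly the standard textbook proof being cited.

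Reading $\lambda_d(E)$ as $\lambda_n(E)$ is actually forced: for a literal intermediate index $d<n$ the right-hand bound can fail, e.g.\ $E=\mathrm{diag}(1,0,-5)$ with $d=2$. Your observation that, under this reading, the right-hand inequality holds with equality is also correct.
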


The following celebrated inequality of Carbery and Wright~\cite{CarberyWright2001Distributional} provides anti-concentration bounds for polynomials of i.i.d Gaussian random variables (more generally, log-concave), which will be used in our proofs, particularly in the analysis of our $k$-linearity tester.

\begin{theo}[{\cite[Theorem 8]{CarberyWright2001Distributional}}]\label{srt_thoe:carbery-wright}
    Let $f:\R^n\to\R$ be a polynomial of degree $d$, such that $\var_{\cN(\bm 0,I_n)}[f]=1$. 
    Then, for any $t\in\R$, and any $\eps>0$,
    \[\Pr_{\bm x\sim\cN(\bm 0,I_n)}[|f(\bm x)-t|\leq\eps]\leq O(d)\eps^{1/d}.\]
\end{theo}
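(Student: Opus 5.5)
Since this is the classical Carbery--Wright inequality, the plan is to reproduce its standard proof in a slightly streamlined form for the Gaussian case. Write $g \triangleq f - t$. Since $\var[f]=1$, we have $\E[g^2] = \var[f] + (\E f - t)^2 \geq 1$. The proof then splits into two independent inequalities:
\[
\text{(A)}\quad \Pr_{\bm x\sim\cN(\bm 0,I_n)}[|g(\bm x)|\leq \eps] \leq C\, d\, \eps^{1/d}\,\Big(\E|g|^{1/d}\Big)^{-1},
\qquad
\text{(B)}\quad \E|g|^{1/d} \geq c\,\big(\E g^2\big)^{1/(2d)} \geq c,
\]
with absolute constants $c, C > 0$. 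Combining (A) and (B) gives $\Pr[|f-t|\leq\eps] \leq (C/c)\, d\,\eps^{1/d} = O(d)\,\eps^{1/d}$, which is the claim.

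For (B) I would use Gaussian hypercontractivity, i.e.\ moment comparability for degree-$d$ polynomials: $\|g\|_{L^2} \leq K^{d}\, \|g\|_{L^{q}}$ for any fixed $q>0$. For $q \geq 1$ this follows from Bonami--Nelson. For small $q$, and specifically $q = 1/d$, one needs a version with constant $K^{d}$ where $K$ is an absolute constant, for instance the bound $(K/q)^{d}$ applied with the normalisation below. Taking $d$-th roots, $(\E|g|^{1/d})^{d} \geq K^{-d}\|g\|_{2}$, so $(\E|g|^{1/d}) \geq K^{-1}\|g\|_2^{1/d}$. The exponential-in-$d$ loss becomes a constant after the $1/d$-th root. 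This is why (A) is stated with the weak $1/d$-moment rather than the variance.

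For (A) I would use the Lov\'asz--Simonovits/Kannan--Lov\'asz--Simonovits localization lemma. The Gaussian measure is log-concave, and (A) is equivalent to the nonnegativity of a pair of integrals of the form $\int \big(C d\,\eps^{1/d}\lambda^{-1}\big)\,d\gamma - \int \mathbbm{1}_{\{|g|\leq\eps\}}\,d\gamma$ and $\int (|g|^{1/d} - \lambda)\,d\gamma$, after homogenising in a scale parameter $\lambda$. Localization reduces any counterexample to a ``needle'': a segment $[a,b]\subset\R$ with a log-affine density $e^{\alpha s}$, on which $g$ restricts to a univariate polynomial $p$ of degree $\leq d$. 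On such a needle I would prove the one-dimensional estimate
\[
\mu(\{|p|\leq\eps\}) \leq C d\, \eps^{1/d} \Big(\int |p|^{1/d}\, d\mu\Big)^{-1}
\]
for every log-concave probability $\mu$ on an interval. Using $|p(s)| = |a_0|\prod_i |s - r_i|$, the set $\{|p|\leq \eps\}$ is covered by $d$ sets where some factor $|s-r_i|$ is at most about $(\eps/|a_0|)^{1/d}$ times a geometric-mean normaliser. This is a Remez/Cartan-type argument. The $\mu$-measure of each such set is controlled by the fact that a log-concave density on an interval has bounded density relative to its spread, so small intervals have proportionally small mass. Summing the $d$ pieces gives the factor $d$.

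The main obstacle will be step (A), and specifically making the univariate estimate uniform over all log-concave needles, including very skewed exponential densities, while keeping the constant linear in $d$. My first attempt would be to normalise $\mu$ to have unit ``width'' (median absolute deviation $\asymp 1$). I would then bound $\int|p|^{1/d}d\mu$ from above by the geometric-mean normaliser times an absolute constant, using $\int \log|s-r|\,d\mu \leq O(1)$ for log-concave $\mu$ together with Jensen, and finally match it against the Cartan covering.
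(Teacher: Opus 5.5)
\paragraph{Comparison with the paper.} The paper does not prove this statement. It quotes it from Carbery--Wright: their distributional inequality for log-concave measures, applied with exponent $q=2d$, gives $\Pr[|g|\le\eps]\le 2Cd\,\eps^{1/d}(\E g^2)^{-1/(2d)}$ for $g=f-t$. So the question is whether your reconstruction closes, and as written it does not.

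\paragraph{The gap: step (B) is false.} Take $n=1$, $d$ odd, $f(x)=x^d/\sqrt{(2d-1)!!}$ and $t=0$. Then $\var[f]=1$, but $\E|f|^{1/d}=\sqrt{2/\pi}\,\big((2d-1)!!\big)^{-1/(2d)}\asymp d^{-1/2}$, so no absolute $c$ exists. The slip is in your moment comparison: the constant $(K/q)^d$ evaluated at $q=1/d$ is $(Kd)^d$, not $K^d$, so after taking $d$-th roots you lose a factor $Kd$. Such a comparison also does not follow from Bonami--Nelson: hypercontractivity plus H\"older interpolation only gives $\|g\|_2\le C^{d^2}\|g\|_{1/d}$, because comparisons at exponent $1/d$ are themselves small-ball statements. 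With the correct version of (B), namely $\E|g|^{1/d}\ge (Kd)^{-1}(\E g^2)^{1/(2d)}$, your (A) with its own factor $d$ only yields $O(d^2)\,\eps^{1/d}$.

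That second factor of $d$ comes from your union bound over the $d$ root neighbourhoods. The one-dimensional step has two further problems. Jensen bounds $\int|p|^{1/d}d\mu$ from \emph{below} by the geometric mean, not from above. And $\int\log|s-r|\,d\mu\le O(1)$ fails for roots far from the bulk of $\mu$.

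\paragraph{A repair.} The factor $d$ may be spent only once. The cleanest way to arrange this inside your localization framework is to drop the $L^{1/d}$ intermediary and localize directly against $\E g^2\ge1$. Apply the localization lemma to $\int(\indic_{\{|g|<\eps'\}}-Cd\,\eps^{1/d})\,d\gamma$ and $\int(g^2-1+\delta)\,d\gamma$. It then suffices to show, for every log-concave probability $\nu$ on an interval and every univariate $p$ of degree $\le d$, that $\nu(|p|\le\eps)\le Cd\,\eps^{1/d}\big(\int p^2\,d\nu\big)^{-1/(2d)}$.

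To prove this, let $m,\sigma$ be the mean and standard deviation of $\nu$, and set $J_k=[m-2^k\sigma,m+2^k\sigma]\cap\mathrm{supp}\,\nu$; one checks $|J_0|\ge\sigma$. Use the log-concave bound $\rho_\nu(s)\le (C/\sigma)e^{-c|s-m|/\sigma}$. Then:
\begin{itemize}
\item Remez's inequality $|\{s\in J:|p(s)|\le\eps\}|\le 4|J|\,(\eps/\|p\|_{L^\infty(J)})^{1/d}$, applied on each $J_k$ and summed against the density bound, gives $\nu(|p|\le\eps)\le C(\eps/\|p\|_{L^\infty(J_0)})^{1/d}$ with no factor $d$.
\item Chebyshev's growth bound $\sup_{J_k}|p|\le (C2^k)^d\|p\|_{L^\infty(J_0)}$, summed against the tails, gives $\|p\|_{L^2(\nu)}\le (C'd)^d\|p\|_{L^\infty(J_0)}$.
\end{itemize}
The single factor $d$ thus comes from the exponential tails of the needle. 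This is also why Carbery--Wright's constant is $Cq$, which becomes $O(d)$ at $q=2d$.

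The same two estimates give $\int|p|^{1/d}d\nu\le C\|p\|_{L^\infty(J_0)}^{1/d}$, hence your (A) with an \emph{absolute} constant. So you may instead keep your split, provided (A) carries no factor $d$ and (B) is used with its genuine $1/(Kd)$ loss, which is itself obtained by localization.
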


A recent result of Glazer and Mikulincer~\cite{glazer2022anticonc} provided a more suitable form for applying Carbery-Wright to polynomials, where the variance is lower bounded in terms of the coefficients.
\begin{theo}[{\cite[Corollary 4]{glazer2022anticonc}}]\label{srt_thm:glazer-mikulincer}
If $f: \R^n \rightarrow \R$ is a polynomial $f(\bm x) = \sum_{i=1}^{k} a_i \bm x^{\bm M_i}$ of degree $d$, then there exists an absolute constant $C > 0$ such that for any $t \in \R$ and $\eps > 0$,
\[
\Pr_{\vec{x} \sim \cN(\bm 0,I_n)}[|f(\vec{x}) - t| \leq \eps] \leq Cd\left(\frac{\eps}{\mathrm{coeff}_d(f)}\right)^{1/d}\text{, where }
\mathrm{coeff}^2_d(f) \triangleq \sum_{\substack{i \in [k]: \|\bm M_i\|_1 = d}} a_{i}^2.
\]
\end{theo}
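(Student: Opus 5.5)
The plan is to derive the statement from the Carbery--Wright inequality (\autoref{srt_thoe:carbery-wright}) by showing that the variance of $f$ under $\cN(\bm 0,I_n)$ is at least $\mathrm{coeff}_d^2(f)$. Once that variance bound holds, the rest is a rescaling argument. Since $f$ has degree exactly $d\geq 1$, some monomial of total degree $d$ has a nonzero coefficient, so $\mathrm{coeff}_d(f)>0$. (If $\mathrm{coeff}_d(f)=0$, or if $d=0$, the right-hand side is infinite or the statement is vacuous.) Set $\sigma^2\triangleq\var[f]$ and $\tilde f\triangleq f/\sigma$, so that $\var[\tilde f]=1$. Carbery--Wright then gives
\[\Pr[|f(\bm x)-t|\leq\eps]=\Pr[|\tilde f(\bm x)-t/\sigma|\leq \eps/\sigma]\leq O(d)(\eps/\sigma)^{1/d}\leq Cd\,(\eps/\mathrm{coeff}_d(f))^{1/d},\]
where the last step uses $\sigma\geq\mathrm{coeff}_d(f)$ and the fact that $s\mapsto s^{-1/d}$ is decreasing.

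The main step is the variance lower bound, which I would prove via the Hermite (Wiener chaos) expansion. Let $He_m$ be the probabilists' Hermite polynomials, and for $\bm\alpha\in\N^n$ set $He_{\bm\alpha}(\bm x)\triangleq\prod_i He_{\alpha_i}(x_i)$. These satisfy three facts:
\begin{itemize}
\item[(i)] $\{He_{\bm\alpha}\}$ is orthogonal in $L^2(\cN(\bm 0,I_n))$, with $\E[He_{\bm\alpha}^2]=\bm\alpha!\triangleq\prod_i\alpha_i!\geq 1$;
\item[(ii)] each $He_m$ is monic of degree $m$, so $\bm x^{\bm\alpha}=He_{\bm\alpha}(\bm x)+(\text{a combination of }He_{\bm\beta}\text{ with }\|\bm\beta\|_1<\|\bm\alpha\|_1)$;
\item[(iii)] $He_{\bm 0}=1$, so every $He_{\bm\beta}$ with $\bm\beta\neq\bm 0$ has mean zero.
\end{itemize}

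Substituting (ii) into $f=\sum_i a_i\bm x^{\bm M_i}$ shows that the coefficient of $He_{\bm\alpha}$ for $\|\bm\alpha\|_1=d$ is exactly $a_i$ when $\bm\alpha=\bm M_i$, and $0$ otherwise. The reason is that the lower-order corrections never reach total degree $d$, and distinct monomials have distinct multi-indices. Hence the projection of $f$ onto the degree-$d$ chaos is $\sum_{i:\|\bm M_i\|_1=d}a_iHe_{\bm M_i}$.

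To finish, write $f-\E f=\sum_{\bm\beta\neq\bm 0}c_{\bm\beta}He_{\bm\beta}$ using (iii). Orthogonality (i) then gives
\[\var[f]=\sum_{\bm\beta\neq \bm 0}c_{\bm\beta}^2\,\bm\beta!\ \geq\sum_{i:\|\bm M_i\|_1=d}a_i^2\,\bm M_i!\ \geq\ \mathrm{coeff}_d^2(f).\]
This is the only step requiring care, and it reduces to the bookkeeping in (ii), namely checking that only the leading Hermite term of each monomial lands in the top chaos. The absolute constant $C$ is inherited from the $O(d)$ in \autoref{srt_thoe:carbery-wright}.
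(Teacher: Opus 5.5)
Your proof is correct. The paper does not prove this statement; it imports it from Glazer and Mikulincer. What you give is therefore a self-contained derivation from the paper's own Carbery--Wright statement (\autoref{srt_thoe:carbery-wright}). You normalize by $\sigma=\sqrt{\var[f]}$, apply Carbery--Wright to $f/\sigma$ with shift $t/\sigma$ and threshold $\eps/\sigma$, and then use $\sigma\geq\mathrm{coeff}_d(f)$. You obtain that variance bound from the Hermite expansion: the corrections to $\bm x^{\bm\alpha}$ involve only $He_{\bm\beta}$ with $\bm\beta\leq\bm\alpha$ componentwise and $\bm\beta\neq\bm\alpha$, so only the leading term $He_{\bm M_i}$ of each top-degree monomial lands in the degree-$d$ chaos, and $\E[He_{\bm\alpha}^2]=\bm\alpha!\geq 1$. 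This bookkeeping is right, and it actually gives the slightly stronger $\var[f]\geq\sum_{i:\|\bm M_i\|_1=d}a_i^2\,\bm M_i!$.

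The cited result has the same two-step shape: a lower bound on the variance in terms of the top-degree coefficients, fed into Carbery--Wright. The difference is that Glazer and Mikulincer prove the variance bound, dimension-free, for much more general measures (independent coordinates, isotropic $L_p$ balls). Your version buys transparency and an explicit constant for $\cN(\bm 0,I_n)$, which is all the paper actually uses. The citation is what would support the extension to $P^{\otimes n}$ for a continuous log-concave $P$, mentioned in the paper's ``Choice of Reference Distribution'' paragraph.

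Your argument does adapt to such $P$: replace the Hermite polynomials with $P$'s monic orthogonal polynomials. You would then also need to show that their squared norms are at least $c^m$, so that the loss can be absorbed into $C$ after taking $1/d$-th roots.
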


The Paley-Zygmund anti-concentration result will be useful for analyzing our junta tester. 
\begin{theo}[\cite{Paley_Zygmund_1932}]\label{thm:paley-zygmund}
    For a random variable $X\geq 0$ with finite variance, for all $\theta\in[0,1]$,
    \[\Pr[X\geq\theta\E[X]]\geq(1-\theta)^2 \frac{\E^2[X]}{\E[X^2]}.\]
\end{theo}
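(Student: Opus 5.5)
The plan is the standard truncation argument combined with Cauchy--Schwarz. Write $\mu \triangleq \E[X]$ and let $A$ denote the event $\{X \geq \theta \mu\}$. Two degenerate cases are handled first. If $\mu = 0$, then since $X \geq 0$ we have $X = 0$ almost surely. In that case the right-hand side is either $0$ or the indeterminate $0/0$, which we interpret as $0$, so the claim is trivial. If $\theta = 1$, the right-hand side is $0$ and again there is nothing to prove. Hence we may assume $\mu > 0$ and $\theta \in [0,1)$. Then $\E[X^2] \geq \mu^2 > 0$ by Jensen, and $\E[X^2] < \infty$ by the finite-variance hypothesis.

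The first key step is to split the mean along $A$:
\[
\mu = \E[X \indic_{A^c}] + \E[X \indic_{A}].
\]
On $A^c$ we have $0 \leq X < \theta\mu$, which uses nonnegativity of $X$, so the first term is at most $\theta \mu$. The second step is to bound the other term by Cauchy--Schwarz:
\[
\E[X \indic_A] \leq \sqrt{\E[X^2]\,\E[\indic_A^2]} = \sqrt{\E[X^2]\,\Pr[A]}.
\]
Combining the two bounds gives
\[
(1-\theta)\mu \leq \sqrt{\E[X^2]\Pr[A]}.
\]
Both sides are nonnegative because $\theta \leq 1$ and $\mu > 0$, so we may square. Dividing by $\E[X^2] > 0$ then yields $\Pr[A] \geq (1-\theta)^2 \mu^2 / \E[X^2]$, which is exactly the statement.

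There is no real obstacle here. The only points needing care are the degenerate cases above and checking that squaring preserves the inequality, which holds since both sides are nonnegative. Nonnegativity of $X$ is used in exactly one place: to bound $\E[X\indic_{A^c}]$ by $\theta\mu$ rather than by something possibly negative, so that dropping that term goes in the right direction.
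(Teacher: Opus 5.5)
Your proof is correct and complete. It is the standard Cauchy--Schwarz proof of the Paley--Zygmund inequality; the paper gives no proof of this statement and only cites the classical result.

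One correction to your closing remark about where nonnegativity is used. The bound $\E[X\indic_{A^c}]\le\theta\mu\Pr[A^c]\le\theta\mu$ holds for any $X$, because $X<\theta\mu$ on $A^c$ and $\theta\mu\ge 0$; so nonnegativity is not needed there. What nonnegativity actually supplies is $\mu\ge 0$, together with $X=0$ a.s.\ when $\mu=0$. This is what makes your case split exhaustive and justifies squaring, since it gives $(1-\theta)\mu\ge 0$.
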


\section{\texorpdfstring{$k$}{k}-linearity Testing}\label{sec:klinearitytest}

In this section, we present and analyze our algorithm \testlin (\autoref{srt_balg:testklinear}).

\begin{theo}[Generalization of \autoref{srt_thm:k-lin-intro}]\label{srt_theo:k_linearity}
Let $k \in \N$, \textcolor{black}{$f : \R^n \rightarrow \R$ is a function bounded in the
ball $B(\bm 0, 2 \sqrt{n})$}, and $\eps, \eta \in (0,2/3)$ be such that $\eta < \min\left\{\eps, O\left(\min_{\substack{i\in[n]:f(\bm e_i)\neq 0}}\frac{|f(\bm e_i)|}{(nk)^2}\right)\right\}$, where $\bm e_i$ denotes the $i^{th}$ standard unit vector. Given $\eta$-approximate query access to $f$, there exists a tester \testlin (\autoref{srt_balg:testklinear}) that performs $\widetilde{O}(k \log k + \nicefrac{1}{\varepsilon})$ queries and guarantees:
\begin{itemize}
    \item \textsc{Completeness:} If $f$ is a $k$-linear function, then \autoref{srt_balg:testklinear} \textsc{Accepts} with probability at least $2/3$.

    \item \textsc{Soundness:} If $f$ is $\eps$-far from all $k$-linear functions, then \autoref{srt_balg:testklinear} \textsc{Rejects} with probability at least $2/3$.

\end{itemize}

\end{theo}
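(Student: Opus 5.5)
The proof follows the \emph{self-correct and test} template sketched in \Cref{sec:overview}. The tester \testlin has three stages. First, it runs the \textsc{Approximate Additivity Tester} from \cite{arora2, arora2024optimaltestinglinearitysosa} with proximity $\Theta(\eps)$, using $\widetilde{O}(1/\eps)$ queries, and rejects if that tester rejects. Second, it samples a uniformly random partition of $[n]$ into $m=\Theta(k^2)$ buckets $B_1,\dots,B_m$. Third, using \textsc{Approximate-$g$}, it runs a binary-search subroutine $\findbucket$ that locates buckets carrying a non-zero linear part, and it rejects iff more than $k$ such buckets are found.

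To test whether a set $T$ of buckets is influential, I would query the self-corrected oracle at $\bm x$ and at $\bm x$ with the coordinates in $T$ re-randomized, for $\bm x\sim\cN(\bm 0,I_n)$ (rescaled into $\ball(\bm 0,\srad)$ when needed). The set is declared influential if the absolute difference exceeds a threshold $\tau$ that is a few multiples of $\eta'$, where $\eta'=O(\eta)$ is the pointwise accuracy of \textsc{Approximate-$g$}. The binary search halves $T$ and recurses on an influential half. It costs $O(\log k)$ rounds per influential bucket, each repeated $O(\log k)$ times for amplification, for $\widetilde{O}(k\log k)$ queries in total. The search stops once $k+1$ influential buckets are found.

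For \textbf{completeness}, let $f=\sum_{i\in S}c_ix_i$ with $|S|\le k$. The additivity tester accepts with probability at least $0.9$, and \textsc{Approximate-$g$} returns values within $\eta'$ of $f$ at every query point with high probability, by a union bound over the $\widetilde{O}(k\log k)$ queries. A bucket containing no element of $S$ produces a difference of at most $2\eta'<\tau$ deterministically, so it is never declared influential. The number of influential buckets is therefore at most $|S|\le k$ and we accept. No birthday-paradox argument is needed here.

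For \textbf{soundness}, suppose the additivity test passes with non-negligible probability. The results of \cite{arora2, arora2024optimaltestinglinearitysosa} then give a linear $L(\bm x)=\iprod{\bm c}{\bm x}$ with \textsc{Approximate-$g$} pointwise $\eta'$-close to $L$ on the relevant ball, and $\dist_{\cD,\ell_1}(f,L)\le\eps/2$. If $f$ is $\eps$-far from $k$-linear, then $L$ is $\eps/2$-far from every $k$-linear function, so $L$ has at least $k+1$ non-zero coefficients. By the birthday paradox with $m=\Theta(k^2)$, any fixed $k+1$ of these coordinates land in distinct buckets with probability at least $0.9$.

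It then suffices to show that every bucket set $T$ containing some $i$ with $c_i\neq 0$ is detected with constant probability per trial. The difference in question is $\iprod{\bm c_T}{\bm x_T-\bm y_T}$ plus an error of at most $2\eta'$. This is a Gaussian of standard deviation $\sqrt2\|\bm c_T\|_2\ge\sqrt2|c_i|$. By \autoref{srt_thoe:carbery-wright} with $d=1$, or directly, $\Pr[|\iprod{\bm c_T}{\bm x_T-\bm y_T}|\le 2\tau]\le O(\tau/|c_i|)$. The hypothesis $\eta<O(|f(\bm e_i)|/(nk)^2)$, together with $c_i\approx f(\bm e_i)$ up to $\eta'$, makes this at most a small constant. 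The $(nk)^2$ slack absorbs two effects: the rescaling by $\kappa_{\bm p}\le O(\sqrt n/\srad)$, which inflates the error, and the union bound over $O(k\log k)$ binary-search steps.

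I expect the main obstacle to be this soundness step, namely relating $c_i$ to the observable $f(\bm e_i)$. The assumption in \autoref{srt_theo:k_linearity} is stated in terms of $f$, while detection depends on $L$. The error amplification from the contraction $\kappa$ when querying \textsc{Approximate-$g$} at points of norm about $\sqrt n$ must also be tracked. I would handle both by working throughout in $\ball(\bm 0,\srad)$, using linearity of $L$ to rescale, and bounding $|c_i-f(\bm e_i)|\le O(\eta)$ via the pointwise closeness guarantee at $\bm e_i/\kappa$.
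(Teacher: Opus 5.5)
Your architecture matches the paper's: approximate additivity test plus self-correction, $\Theta(k^2)$ random buckets, binary search with the re-randomization test run on \textsc{Approximate-$g$}, Carbery--Wright with $d=1$, and rejection on more than $k$ buckets. Your completeness argument is fine, and you argue soundness directly (via the birthday paradox) where the paper argues a contrapositive.

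\textbf{The main gap.} It sits exactly at the step you flag, and your proposed fix does not work. Pointwise guarantees at fixed points exist only for $g$ (and, per point with constant probability, for \textsc{Approximate-$g$}). The function $f$ itself is controlled only off an $\eps$-fraction of the Gaussian mass (\autoref{srt_obs:f-approx-g-close}). The value $f(\bm e_i)$ lives on a null set that the tester almost surely never queries. So redefining $f$ there, say $f(\bm e_i):=1$, changes neither $\dist_{\cD,\ell_1}(f,\cdot)$ nor the tester's behavior, yet weakens the hypothesis to $\eta<\min\{\eps,O(1/(nk)^2)\}$.

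This is fatal rather than cosmetic. Let $L=k^{-1/2}\sum_{i\le k}x_i$ plus a tail of $\ell_2$-mass $3\eps$ spread evenly over the other coordinates, so $L$ is about $2.4\eps$-far from $k$-linear. Then:
\begin{itemize}
\item every tail bucket has mass $\Theta(\eps/k)$;
\item completeness forces $\tau$ to be at least of order $\eta$;
\item for small $\eps$ the hypothesis allows $\eta=\eps/2$.
\end{itemize}
So the tail is never detected and the tester accepts.

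What soundness actually needs is a lower bound coming from farness. The $\ell_1$-distance from $L=\iprod{\bm c}{\cdot}$ to the $k$-linear functions is $\sqrt{2/\pi}\,\min_{|S|\le k}\|\bm c_{\overline S}\|_2$, so the $(k+1)$-st largest $|c_i|$ is $\Omega(\eps/\sqrt n)$. Your birthday-paradox argument, applied to the $k+1$ largest coordinates, then goes through under a hypothesis of the form $\eta\le\eps/\mathrm{poly}(n,k)$, rather than one phrased through $f(\bm e_i)$. The paper's proof does not close this either. Its soundness part applies \autoref{claim:close-to-linear-implies-good:findInfbuckets} with $\min|a_k|$ taken over the coefficients of the linear approximant, and tacitly identifies this with $\min|f(\bm e_i)|$.

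\textbf{A second unsupported step.} The bound $\dist_{\cD,\ell_1}(f,L)\le\eps/2$ does not follow from \autoref{srt_thm:additivity-tester-arora2}. That theorem gives only $\Pr_{\bm p\sim\cD}[|f(\bm p)-L(\bm p)|>O(n^{2}\eta)]\le\eps$. Turning this into an $\ell_1$ bound requires controlling $|f-L|$ on the exceptional set and outside $\ball(\bm 0,2\sqrt n)$, where $f$ is unconstrained. For instance, a linear function made enormous on a far-out region of tiny Gaussian mass is $\ell_1$-far from every $k$-linear function, yet is accepted with high probability. The paper's soundness likewise stops at a thresholded pointwise-closeness statement. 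An honest $\ell_1$ conclusion therefore needs an extra growth assumption on $f$, and correspondingly a smaller proximity parameter in the additivity test.
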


\begin{algorithm}[h]
\caption{\testlin}\label{srt_balg:testklinear}
{\bf Inputs:} $\eta$-approximate query oracle $\widetilde{f}$ for $f: \R^n \rightarrow \R, k\in\N$, proximity parameter $\eps \in (0,1)$.\

{\bf Output:} Returns \textsc{Accept} iff $f$ is a $k$-linear function.

Run \textsc{Approximate Additivity Tester}($f,\cN(\bm 0,I_n),\eta,\eps,2\sqrt{n}$) (\autoref{alg:zero-mean-additivity}). \

If \autoref{alg:zero-mean-additivity} rejects, \Return \textsc{Reject}. \

Set $r= O(k^2)$. \

Bucket each $i\in[n]$ into $B = \{B_1,\hdots,B_r\}$, uniformly at random.\Comment{$[n]=\uplus_{i=1}^r B_i$.}\

$\mathsf{InfBuckets}=\mathsf{FindInfBuckets}(\textsc{Approximate-$g$},B,[r])$.\Comment{$\findbuckets=$ \autoref{alg:findinfbuckets}}\

\If{$|\mathsf{InfBuckets}| > k$}{

\Return \textsc{Reject}.

}
\Return \textsc{Accept}. 
                           
\end{algorithm}

\autoref{srt_balg:testklinear} uses two subroutines $\findbucket$ (\autoref{alg:findinfbucket}) and $\findbuckets$ (\autoref{alg:findinfbuckets}).
We will first present and analyze them in \autoref{sec:analyses-subroutines}, and then analyze \autoref{srt_balg:testklinear} in \autoref{sec:k-lin-tester-analysis}.

A closely related (and \emph{efficiently testable}) notion of \emph{additivity} may be noted here.
\begin{defi}[Additive function]
A function $f:A\to B$ is \emph{additive}, if for all $x,y\in A, f(x\oplus_A y)=f(x)\oplus_B f(y)$, where $\oplus_A$ and $\oplus_B$ denote the bitwise-xor operations in $A$ and $B$, respectively.
\end{defi}

Over finite domains, additivity implies linearity. But over continuous domains, this isn't always the case. However, for continuous functions, testing additivity suffices from the following fact:

\begin{fact}[{\cite[Section 5.2]{kuczma2009inequalities}}]
For continuous functions, additivity is equivalent to linearity.
\end{fact}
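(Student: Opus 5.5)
The statement to prove is the Fact that a continuous function $f:\R^n\to\R$ is additive, i.e. $f(\bm x+\bm y)=f(\bm x)+f(\bm y)$ for all $\bm x,\bm y$, if and only if it is linear. Here addition is read as ordinary vector addition over the reals. The converse direction is immediate: a linear map $f(\bm x)=\sum_i c_i x_i$ is additive and continuous. So the plan concentrates on the forward direction, following the classical Cauchy functional equation argument.

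\textbf{Step 1: $\Q$-homogeneity.} From additivity with $\bm x=\bm y=\bm 0$ we get $f(\bm 0)=0$. Taking $\bm y=-\bm x$ gives $f(-\bm x)=-f(\bm x)$. Induction on $m$ gives $f(m\bm x)=mf(\bm x)$ for all $m\in\N$, and hence for all $m\in\Z$. Applying this to $\bm x/q$ gives $qf(\bm x/q)=f(\bm x)$. Combining, $f(\lambda\bm x)=\lambda f(\bm x)$ for every $\lambda\in\Q$ and every $\bm x\in\R^n$.

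\textbf{Step 2: extension to real scalars.} Fix $\bm x$ and $\lambda\in\R$, and choose rationals $\lambda_j\to\lambda$. Then $\lambda_j\bm x\to\lambda\bm x$. Continuity of $f$ gives $f(\lambda\bm x)=\lim_j f(\lambda_j\bm x)=\lim_j\lambda_j f(\bm x)=\lambda f(\bm x)$.

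\textbf{Step 3: conclusion.} Write $\bm x=\sum_i x_i\bm e_i$. Additivity and real homogeneity give $f(\bm x)=\sum_i x_i f(\bm e_i)$, which is linear with $c_i=f(\bm e_i)$.

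The only step using more than algebra is Step 2, where continuity rules out the pathological Hamel-basis solutions. Even there, continuity is needed only along lines, or alternatively boundedness on a neighbourhood. No genuine obstacle is expected.
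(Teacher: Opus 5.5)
Your proof is correct. It is the standard Cauchy-equation argument: $\mathbb{Q}$-homogeneity from additivity, extension to real scalars by continuity and density of $\mathbb{Q}$, then expansion in the basis $\bm e_i$. The paper gives no proof of its own and cites Kuczma, Section 5.2, which uses essentially the same argument; reading the paper's ``bitwise-xor'' in its definition of additivity as ordinary vector addition on $\R^n$, as you do, is the right interpretation here.
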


\autoref{srt_balg:testklinear} uses a query oracle to the self-corrected function, \textsc{Approximate-$g$} (presented in \autoref{alg:subroutines_approx}). We present a brief discussion of self-correction for additivity testing:

\paragraph*{Self-Correction} We use the definition of self-correction from \cite{arora2, arora2024optimaltestinglinearitysosa}, who studied the problem of testing additive (linear) functions, given approximate oracle access. 
Let $r$ be a sufficiently small rational; $r \triangleq 1/50$ suffices. 
Define the value of the self-corrected function $g$ at a point $\bm{p}\in \ball(\bm 0,r)$ as the (weighted) median value of $g_{\bm{x}}(\bm{p})\triangleq f(\bm{p}-\bm{x}) + f(\bm{x})$, each weighted according to its probability mass under $\bm{x} \sim \mathcal{N}(\bm 0,I_n)$. 
For points $\bm p\not\in \ball(\bm 0,r)$, we project them into the ball by scaling by a sufficiently large contraction factor that depends on the magnitude of $\bm p$. 
\begin{equation}\label{eqn:gselfcorrection}
g(\bm p) \triangleq \kappa_{\bm p} \cdot \mathop{\mathsf{med}}_{\bm x\sim\mathcal{N}(\bm 0,I_n)} \left[  g_{\bm x} \left( \frac{\bm p}{\kappa_{\bm p}} \right)  \right] = \kappa_{\bm p} \cdot \mathop{\mathsf{med}}_{\bm x\sim\mathcal{N}(\bm 0,I_n)} \left[  f \left( \frac{\bm p}{\kappa_{\bm p}}- \bm x \right) +f \left (\bm x \right)  \right],   
\end{equation}

where $\kappa_{\bm p}:\R^n \to \R$ is defined as $\kappa_{\bm p}\triangleq\begin{cases}
    1 &\text{, if }\|\bm p\|_2\leq r \\
    \lceil \|\bm p\|_2/r\rceil &\text{, if }\|\bm p\|_2 > r
\end{cases}$, so that $\bm p/\kappa_{\bm p}\in \ball(\bm 0,r)$.

We note the following results from \cite{arora2, arora2024optimaltestinglinearitysosa} about their approximate additivity tester.

\begin{theo}[{\cite[Theorem D.1]{arora2}}, and {\cite[Theorem 3.3]{arora2024optimaltestinglinearitysosa}}]\label{srt_thm:additivity-tester-arora2}
Let $\alpha, \varepsilon > 0$, \textcolor{black}{for $L > 0$, suppose $f : \R^n \rightarrow \R$ is a function that is bounded in the
ball $B(0, L)$} and $\cD$ be an unknown $(\varepsilon/4,R)$-concentrated distribution. There exists a one-sided error, $O (\nicefrac{1}{\varepsilon})$-query tester (\autoref{alg:zero-mean-additivity}) which with probability at least $99/100$, distinguishes when $f$ is pointwise $\alpha$-close to some additive function and when, for every additive function $h$, $\Pr_{\bm p \sim \cD}[|f(\bm p)-h(\bm p)| > O(R n^{1.5}\alpha)] > \eps$.
\end{theo}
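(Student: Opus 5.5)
The plan is to follow the \cite{arora2, arora2024optimaltestinglinearitysosa} additivity tester directly. \autoref{alg:zero-mean-additivity} has two phases. First, it runs approximate additivity checks on points drawn from a small ball $\ball(\bm 0,r)$. Second, it checks on $O(1/\eps)$ samples $\bm p\sim\cD$ that $|\widetilde f(\bm p)-\textsc{Approximate-}g(\bm p)|$ is small.

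\textbf{Completeness.} Suppose $f$ is pointwise $\alpha$-close to an additive $h$. Then every tested identity $f(\bm x)+f(\bm y)-f(\bm x+\bm y)$ has magnitude at most $3\alpha+O(\eta)$, so the thresholds are never exceeded. Likewise, every median defining $g$ is within $O(\alpha)$ of $h(\bm p/\kappa_{\bm p})$. After rescaling by $\kappa_{\bm p}\le O(R)$, we get $|g(\bm p)-h(\bm p)|\le O(R\alpha)$, which lies within the acceptance threshold. So the tester never rejects, and it is one-sided.

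\textbf{Soundness, argued in the contrapositive.} Assume the tester accepts with probability more than $1/100$. The goal is to produce an additive $h$ with $\Pr_{\bm p\sim\cD}[|f(\bm p)-h(\bm p)|>O(Rn^{1.5}\alpha)]\le\eps$. The steps are as follows.
\begin{itemize}
\item[(a)] Passing the first phase implies that, for most pairs $\bm x,\bm y$ in the small ball, the additivity defect is $O(\alpha)$. A median/majority argument in the style of BLR, using rotation invariance of $\cN(\bm 0,I_n)$ and near-uniformity of the projected measure on $\ball(\bm 0,r)$, then shows that $g$ is approximately additive on $\ball(\bm 0,r)$: $|g(\bm p)+g(\bm q)-g(\bm p+\bm q)|=O(\alpha)$ for all $\bm p,\bm q$ there.
\item[(b)] Define $h$ as the linear function with $h(\bm e_i)=g(r\bm e_i)/r$. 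A Hyers--Ulam stability argument, built by telescoping over the $n$ coordinates, gives $|g(\bm p)-h(\bm p)|=O(n^{1.5}\alpha)$ on $\ball(\bm 0,r)$. The losses come from a factor $n$ in summing coordinate errors and a factor $\sqrt n$ in converting between $\ell_1$ and $\ell_2$ norms of $\bm p$. The scaling $\kappa_{\bm p}$ extends this to $|g-h|=O(Rn^{1.5}\alpha)$ on $\ball(\bm 0,R)$.
\item[(c)] Passing the second phase with $O(1/\eps)$ samples implies $\Pr_{\bm p\sim\cD}[|f(\bm p)-g(\bm p)|>O(\alpha)]\le\eps/2$. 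Adding the $\eps/4$ mass of $\cD$ outside $\ball(\bm 0,R)$ gives the claim.
\end{itemize}

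\textbf{Main obstacle.} The hard step is (a)/(b): obtaining a worst-case, pointwise approximate additivity of $g$ from average-case guarantees when only approximate queries are available. This requires the boundedness of $f$ on $\ball(\bm 0,L)$ to control the medians. Since this is exactly \cite[Theorem D.1]{arora2} and \cite[Theorem 3.3]{arora2024optimaltestinglinearitysosa}, I would cite those results for this step rather than reprove it.
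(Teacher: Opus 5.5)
Your outline matches how the paper handles this statement. The paper gives no proof of its own and imports the theorem from \cite{arora2, arora2024optimaltestinglinearitysosa}. It records the same ingredients you list: approximate additivity of $g$ on $\ball(\bm 0,r)$ (\autoref{srt_lem:g-additive-in-ball}), rescaling by the integer $\kappa_{\bm p}$, and closeness of $f$ to \textsc{Approximate-}$g$ (\autoref{srt_obs:f-approx-g-close}). It defers the whole argument to those works, as you do for the crux.

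However, step (c) is false as written and needs repair. The second phase of \autoref{alg:zero-mean-additivity} compares $f(\bm p)$ with the single-sample value $\kappa_{\bm p}\bigl(f(\bm p/\kappa_{\bm p}-\bm x_1)+f(\bm x_1)\bigr)$, not with the median $g(\bm p)$. Its threshold is $5\delta n^{1.5}\kappa_{\bm p}=15n^{1.5}\alpha\kappa_{\bm p}$, not $O(\alpha)$. A counterexample: take an additive $h$ plus a bump of height $10n^{1.5}\alpha$ on a tiny ball $A\subseteq\ball(\bm 0,R)$, with $\cD$ putting mass $\gg\eps$ on $A$. This function passes both phases with constant probability, yet $|f-g|\approx 10n^{1.5}\alpha$ on $A$.

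What passing actually gives is $|f(\bm p)-g(\bm p)|=O(n^{1.5}\alpha\kappa_{\bm p})$ outside a set of $\cD$-mass $O(\eps)$. Even this needs the concentration bound of \autoref{srt_lem:g-is-additive-arora2}, namely $\Pr_{\bm x}[|g(\bm p)-g_{\bm x}(\bm p)|\ge 12\alpha]<12/125$, transported to all of $\ball(\bm 0,R)$ by $\kappa_{\bm p}$. That bound ensures every $\bm p$ with large $|f(\bm p)-g(\bm p)|$ triggers rejection with constant probability over $\bm x_1$. Since $\kappa_{\bm p}=O(R)$ on $\ball(\bm 0,R)$, the corrected bound is $O(Rn^{1.5}\alpha)$. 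It still combines with (b) to give the theorem, so your conclusion survives.

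There are two smaller inaccuracies.
\begin{itemize}
\item \textsc{TestAdditivity} (\autoref{alg:subroutines_noisy}) samples $\bm x,\bm y,\bm z\sim\cN(\bm 0,I_n)$, not points of $\ball(\bm 0,r)$. The transfer to the small ball in (a) rests on the fact that shifting $\cN(\bm 0,I_n)$ by a vector of norm at most $r=1/50$ changes it little in total variation.
\item In completeness there is no extra $O(\eta)$ term; with one, the threshold $\delta=3\alpha$ could be exceeded. Each tested defect is at most $3\alpha=\delta$. Each single sample (not a median) gives $|f(\bm p)-\textsc{Approximate-}g(\bm p)|\le\alpha+2\alpha\kappa_{\bm p}\le\delta\kappa_{\bm p}$, using $h(\bm p)=\kappa_{\bm p}h(\bm p/\kappa_{\bm p})$ for integer $\kappa_{\bm p}$.
\end{itemize}
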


\begin{lem}[{\cite[Lemma D.3 and D.6]{arora2}}]\label{srt_lem:g-is-additive-arora2}
If \textsc{TestAdditivity}{$(f,3\alpha)$} (\autoref{alg:subroutines_noisy}) accepts with probability at least $1/3$, then $g$ is a $42\alpha$-additive function inside the small ball $\ball(\bm 0,r)$, and furthermore, for every $\bm p\in\ball(\bm 0,r)$  it holds that 
  \[\Pr_{\bm x \sim \mathcal{N}(\bm 0,I_n)}[\left|g(\bm p) - g_{\bm x}(\bm p)\right|\geq 12\alpha ] < 12/125.\]
\end{lem}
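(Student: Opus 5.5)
The proof follows the template of the analogous lemmas for exact self-correction (as in \cite{fleming2020distribution}), with the tolerance parameters adjusted for the $\eta$-approximate setting. There are two parts: first a pointwise concentration statement for $g_{\bm x}(\bm p)$ around $g(\bm p)$, then approximate additivity of $g$ inside $\ball(\bm 0,r)$.

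\textbf{Step 1 (concentration).} Since \textsc{TestAdditivity}$(f,3\alpha)$ accepts with probability at least $1/3$, a standard averaging argument shows that for $\bm x,\bm y\sim\cN(\bm 0,I_n)$,
\[\Pr[|f(\bm x)+f(\bm y)-f(\bm x+\bm y)|>3\alpha]\]
is at most a small constant $\delta$. Otherwise a constant number of repetitions would reject with probability above $2/3$. Now fix $\bm p\in\ball(\bm 0,r)$ and draw independent $\bm x,\bm y$. I compare $g_{\bm x}(\bm p)=f(\bm p-\bm x)+f(\bm x)$ with $g_{\bm y}(\bm p)$ by passing through the intermediate quantity $f(\bm p-\bm x-\bm y)+f(\bm x)+f(\bm y)$. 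Each of the two comparisons is a single additivity check, but on pairs whose joint law is a shifted or rescaled Gaussian pair: $(\bm p-\bm x-\bm y,\bm y)$ rather than an i.i.d.\ standard Gaussian pair.

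The main obstacle is transferring the tester's guarantee to these shifted pairs. Because $\|\bm p\|_2\le r=1/50$, the shifted Gaussian $\cN(\bm p,\cdot)$ is close in total variation to the unshifted one, with distance $O(r)$ per coordinate direction. The correlation structure also matches that of the tester's own queries, since the tester samples $\bm x,\bm y$ and the pair $(\bm x+\bm y,-\bm y)$ has the required law. Putting these together gives
\[\Pr[|g_{\bm x}(\bm p)-g_{\bm y}(\bm p)|>6\alpha]\le 2\delta+O(r).\]
A collision-type argument then finishes Step 1: if $g_{\bm x}(\bm p)$ were $12\alpha$-far from its median with probability at least $12/125$, then two independent copies would be $6\alpha$-apart with probability of the same order. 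Choosing the constants appropriately contradicts the bound above.

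\textbf{Step 2 (additivity).} Take $\bm p,\bm q,\bm p+\bm q\in\ball(\bm 0,r)$. Apply Step 1 to each of the three points and union-bound; each event fails with probability below $12/125$, so their total failure probability is $36/125<1$. Hence some $\bm x$ satisfies $g(\bm p)\approx g_{\bm x}(\bm p)$, some $\bm y$ satisfies $g(\bm q)\approx g_{\bm y}(\bm q)$, and $g(\bm p+\bm q)\approx g_{\bm x+\bm y}(\bm p+\bm q)$, each within $12\alpha$. I also require, with positive probability simultaneously, two further additivity checks, each of error at most $3\alpha$:
\[f(\bm p+\bm q-\bm x-\bm y)\approx f(\bm p-\bm x)+f(\bm q-\bm y)\quad\text{and}\quad f(\bm x+\bm y)\approx f(\bm x)+f(\bm y).\]
These events again have small failure probability by the same transfer argument, so a common good choice of $(\bm x,\bm y)$ exists. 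Summing the errors gives
\[|g(\bm p+\bm q)-g(\bm p)-g(\bm q)|\le 3\cdot 12\alpha+2\cdot 3\alpha=42\alpha,\]
which is the stated constant. This proves that $g$ is $42\alpha$-additive in $\ball(\bm 0,r)$.
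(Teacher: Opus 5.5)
\textbf{Step~1.} Step~1 is the standard argument and works as you actually use it. The difference between $g_{\bm x}(\bm p)$ and $f(\bm p-\bm x-\bm y)+f(\bm x)+f(\bm y)$ is exactly the second \textsc{TestAdditivity} check applied to the pair $(\bm p-\bm x,\bm y)$, which is a shift of an i.i.d.\ pair. The same holds with $\bm x,\bm y$ swapped. Make sure the transfer uses the dimension-free bound $\mathrm{TV}(\cN(\bm p,I_n),\cN(\bm 0,I_n))\le\|\bm p\|_2/2$. An $O(r)$ cost ``per coordinate direction'' accumulated over $n$ coordinates would be useless.

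\textbf{The gap in Step~2.} You never say how $(\bm x,\bm y)$ are jointly distributed. For the only coupling you have introduced, independent copies, the argument fails. Then $\bm x+\bm y\sim\cN(\bm 0,2I_n)$, whereas $g(\bm p+\bm q)$ is a median over $\cN(\bm 0,I_n)$. Step~1 says nothing about $g_{\bm z}(\bm p+\bm q)$ for $\bm z\sim\cN(\bm 0,2I_n)$; these two Gaussians are nearly mutually singular in high dimension. So the event $|g(\bm p+\bm q)-g_{\bm x+\bm y}(\bm p+\bm q)|\le 12\alpha$ has no probability bound, and your union bound with $36/125$ is unfounded. Moreover, for an independent pair, $f(\bm a+\bm b)\approx f(\bm a)+f(\bm b)$ is not one of the tester's checks. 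The first two checks give it only with error $2\cdot 3\alpha$, which would yield $48\alpha$, not $42\alpha$. The first sentence of your Step~1 contains the same conflation.

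\textbf{The missing idea.} This is exactly why \textsc{TestAdditivity} includes its third, $\sqrt2$-scaled check. Draw $\bm u,\bm v,\bm w\sim\cN(\bm 0,I_n)$ independently and set $\bm x=(\bm u-\bm w)/\sqrt2$ and $\bm y=(\bm w-\bm v)/\sqrt2$. Then:
\begin{itemize}
\item $\bm x$, $\bm y$ and $\bm x+\bm y=(\bm u-\bm v)/\sqrt2$ are each $\cN(\bm 0,I_n)$, with cross-covariance $-\tfrac12 I_n$, so Step~1 applies at $\bm p$, $\bm q$ and $\bm p+\bm q$.
\item The bound $|f(\bm x+\bm y)-f(\bm x)-f(\bm y)|\le 3\alpha$ is literally the third check.
\item Since $(-\bm x,-\bm y)$ has the same law as $(\bm x,\bm y)$, the pair $(\bm p-\bm x,\bm q-\bm y)$ is that law shifted by $(\bm p,\bm q)$. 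Its total-variation cost is still $O(r)$, because the joint covariance has smallest eigenvalue $1/2$.
\end{itemize}
With this coupling, your chain $g(\bm p+\bm q)\approx f(\bm p+\bm q-\bm x-\bm y)+f(\bm x+\bm y)\approx g_{\bm x}(\bm p)+g_{\bm y}(\bm q)\approx g(\bm p)+g(\bm q)$ goes through, as do the union bound and the count $3\cdot 12\alpha+2\cdot 3\alpha=42\alpha$.
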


\begin{lem}[{\cite[Lemma D.4 and D.5]{arora2}}]\label{srt_lem:g-additive-in-ball}
  If \textsc{TestAdditivity}($f, 3\alpha$) accepts with probability at least $1/3$, then for every $\bm p,\bm q \in \ball(\bm 0,r)$ with $\|\bm p + \bm q\|_2 \leq \srad$, it holds that \[\left|g(\bm p + \bm q) -g(\bm p) - g(\bm q)\right|\leq 42\alpha.\]
\end{lem}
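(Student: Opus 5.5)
The plan is a BLR-style union bound. We sample the auxiliary Gaussian points once, and write each of $g(\bm p)$, $g(\bm q)$ and $g(\bm p+\bm q)$ as an approximate sum of $f$-values. We then show that, with positive probability, all the approximations hold simultaneously, which forces the stated inequality deterministically. The constant $42 = 3\cdot 12 + 2\cdot 3$ suggests the bookkeeping: three self-correction errors of $12\alpha$, each supplied by \autoref{srt_lem:g-is-additive-arora2}, and two additivity errors of $3\alpha$, supplied by the acceptance of \textsc{TestAdditivity}$(f,3\alpha)$.

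\textbf{Step 1: the self-correction events.} Fix $\bm p,\bm q\in\ball(\bm 0,r)$ with $\|\bm p+\bm q\|_2\le r$, so that $\kappa=1$ at all three points. Draw $\bm x,\bm y$ independently; the aim is $\bm x,\bm y,\bm x+\bm y$ each (close to) the law used in the median defining $g$. By \autoref{srt_lem:g-is-additive-arora2}, each of the following fails with probability $<12/125$:
\[
|g(\bm p)-f(\bm p-\bm x)-f(\bm x)|<12\alpha,\qquad |g(\bm q)-f(\bm q-\bm y)-f(\bm y)|<12\alpha,
\]
\[
|g(\bm p+\bm q)-f(\bm p+\bm q-\bm x-\bm y)-f(\bm x+\bm y)|<12\alpha.
\]

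\textbf{Step 2: the additivity events.} Acceptance of \textsc{TestAdditivity}$(f,3\alpha)$ with probability at least $1/3$ means that for the pair distributions the test samples, the fraction of pairs $(\bm a,\bm b)$ with $|f(\bm a+\bm b)-f(\bm a)-f(\bm b)|>3\alpha$ is a small constant; otherwise repeating the check would reject with probability above $2/3$. We apply this to the two pairs $(\bm x,\bm y)$ and $(\bm p-\bm x,\bm q-\bm y)$.

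\textbf{Step 3: combining.} On the intersection of the five events, the triangle inequality gives
\[
|g(\bm p+\bm q)-g(\bm p)-g(\bm q)|\le 36\alpha+6\alpha=42\alpha.
\]
Since the left-hand side does not depend on $\bm x,\bm y$, it suffices that the five failure probabilities sum to less than $1$: the three from Step 1 contribute less than $36/125$, leaving ample room for the two from Step 2.

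\textbf{Main obstacle: distribution mismatch.} The sum $\bm x+\bm y$ is not distributed as $\cN(\bm 0,I_n)$ if $\bm x,\bm y$ are, and the shifted points $\bm p-\bm x$, $\bm q-\bm y$ are only approximately Gaussian. I would handle this in two ways.
\begin{itemize}
\item Match the variance scaling to the pair distribution of \textsc{TestAdditivity}: either take $\bm x,\bm y\sim\cN(\bm 0,I_n/2)$, using the variant of \autoref{srt_lem:g-is-additive-arora2} for that scaling from \cite{arora2}, or otherwise fit $\bm x,\bm y$ to whatever pair law the test samples.
\item Bound the total variation distance between $\cN(\bm 0,\sigma^2 I_n)$ and its translate by a vector of norm at most $r=1/50$. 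This distance is $O(r/\sigma)$, a small constant, and is absorbed into the failure budget.
\end{itemize}
If the shifts make the budget tight, I would fall back on the smallness of $r$: every TV term is $O(r)$, so $r=1/50$ keeps the total failure probability strictly below $1$.
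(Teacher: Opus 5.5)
The paper gives no proof of this lemma (it is imported from \cite{arora2}), so I am judging your argument on its own terms. Your skeleton is the right one, including the accounting $42=3\cdot 12+2\cdot 3$. However, the step you flag as the main obstacle is a genuine gap, and neither of your proposed fixes closes it. With $\bm x,\bm y$ independent $\cN(\bm 0,I_n)$, the point $\bm x+\bm y$ is $\cN(\bm 0,2I_n)$. So your third Step~1 event is not an instance of \autoref{srt_lem:g-is-additive-arora2}, which only controls $g_{\bm w}$ for $\bm w\sim\cN(\bm 0,I_n)$, the law defining the median $g$. This is a change of scale, not a translation. The total variation distance between $\cN(\bm 0,I_n)$ and $\cN(\bm 0,2I_n)$ tends to $1$ with $n$, so it cannot be absorbed into the failure budget, and the smallness of $r$ does not help. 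Switching to $\bm x,\bm y\sim\cN(\bm 0,I_n/2)$ only moves the mismatch to $g(\bm p)$ and $g(\bm q)$. The cited lemma says nothing about $g_{\bm x}$ at that scale, and no check of \textsc{TestAdditivity} tests additivity on independent $\cN(\bm 0,I_n/2)$ pairs.

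The missing idea is the correlated coupling that the third check of \textsc{TestAdditivity} exists to supply. Take $\bm x,\bm y,\bm z$ i.i.d.\ $\cN(\bm 0,I_n)$ and set $\bm a=(\bm x-\bm z)/\sqrt2$, $\bm b=(\bm z-\bm y)/\sqrt2$. Then $\bm a$, $\bm b$ and $\bm a+\bm b=(\bm x-\bm y)/\sqrt2$ are each exactly $\cN(\bm 0,I_n)$. Hence each of $|g(\bm p)-g_{\bm a}(\bm p)|$, $|g(\bm q)-g_{\bm b}(\bm q)|$ and $|g(\bm p+\bm q)-g_{\bm a+\bm b}(\bm p+\bm q)|$ is below $12\alpha$ except with probability $<12/125$. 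The bound $|f(\bm a+\bm b)-f(\bm a)-f(\bm b)|\le 3\alpha$ is literally the third check at $(\bm x,\bm y,\bm z)$. The bound $|f(\bm p+\bm q-\bm a-\bm b)-f(\bm p-\bm a)-f(\bm q-\bm b)|\le 3\alpha$ is the third check at $(\sqrt2\,\bm p-\bm x,\,-\sqrt2\,\bm q-\bm y,\,-\bm z)$. That triple is a standard Gaussian triple translated by a vector of norm at most $2r$, and this is where your translation bound enters, at cost $O(r)$. Acceptance with probability at least $1/3$ over a sufficiently large constant number of rounds makes the per-round failure probability of the third check a small constant. So the five failure probabilities sum to less than $1$, and your triangle inequality yields $36\alpha+6\alpha=42\alpha$.
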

The following lemma gives us a way to scale the closeness for degree-$d$ polynomials.

\begin{lem}[{\cite[Lemma 4.19]{arora2}}]\label{lem:extrapolation_approx}
    Let $R>r'>0$ be any real numbers. If $g$ is pointwise $\eta$-close to a degree-$d$ polynomial in $\ball( \bm 0, r')$, then $g$ is pointwise $(12R/r')^d\eta$-close to a degree-$d$ polynomial on all points in $\ball(\bm 0,R)$.
\end{lem}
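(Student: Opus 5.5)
The plan is to argue along radial lines, using how the self-corrected function $g$ is defined outside the small ball. As noted in the technical overview, for $\bm p \notin \ball(\bm 0,\srad)$ the value $g(\bm p)$ is obtained by radial extrapolation from inside the ball along $\pline_{\bm 0,\bm p}$. The lemma only makes sense for such a $g$, since an arbitrary function could take any values outside $\ball(\bm 0,r')$. I will therefore assume the extrapolation works as follows. Fix $d+1$ distinct nodes $0< t_0<\dots<t_d\le r'$; I take them equispaced, $t_i = (i+1)r'/(d+1)$. For a unit vector $\hat{\bm p}$ and $s=\|\bm p\|_2$, define
\[
g(\bm p) = \sum_{i=0}^{d} L_i(s)\, g(t_i\hat{\bm p}),
\]
where $L_i$ are the Lagrange basis polynomials for these nodes. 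In other words, $g(s\hat{\bm p})$ is the unique degree-$d$ univariate polynomial in $s$ that interpolates $g$ at the nodes. If the paper's definition uses other nodes inside the ball, the same argument goes through with the constant recomputed.

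Let $P$ be the degree-$d$ polynomial with $|g-P|\le\eta$ on $\ball(\bm 0,r')$. The restriction $s\mapsto P(s\hat{\bm p})$ is a univariate polynomial of degree at most $d$, so Lagrange interpolation reproduces it exactly:
\[
P(\bm p)=\sum_i L_i(s)P(t_i\hat{\bm p}).
\]
Subtracting this from the definition of $g(\bm p)$ gives
\[
|g(\bm p)-P(\bm p)| \le \sum_i |L_i(s)|\,|g(t_i\hat{\bm p})-P(t_i\hat{\bm p})| \le \eta\,\Lambda(s), \qquad \Lambda(s) \triangleq \sum_{i=0}^{d}|L_i(s)|.
\]
Every node $t_i\hat{\bm p}$ lies in $\ball(\bm 0,r')$, so the closeness hypothesis applies at each of them. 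The same polynomial $P$ then witnesses closeness on all of $\ball(\bm 0,R)$, and inside $\ball(\bm 0,r')$ the bound holds trivially.

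The remaining step, and the only real obstacle, is to bound the Lebesgue function $\Lambda(s)$ for $0\le s\le R$ by $(12R/r')^d$. With equispaced spacing $h=r'/(d+1)$ the denominators satisfy
\[
\prod_{j\ne i}|t_i-t_j| = h^d\, i!\,(d-i)!,
\]
and each numerator factor satisfies $|s-t_j|\le R+r'\le 2R$ since $R> r'$. Hence
\[
\Lambda(s)\le \frac{(2R)^d}{h^d}\sum_i \frac{1}{i!(d-i)!} = \frac{(2R)^d (d+1)^d\, 2^d}{r'^d\, d!}.
\]
Using $d!\ge (d/e)^d$ and $(d+1)^d\le e\,d^d$, this is at most $e\,(4eR/r')^d$. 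For $d\ge 1$ we have $e\,(4e)^d \le 12^d$, since $e\cdot 4e \approx 29.6 > 12$ fails only at $d=1$ with the crude bound $R+r'\le 2R$. If that case fails, I will redo it with the sharper factor bound $|s-t_j|\le R$ where applicable, or choose the nodes to include $r'$ and normalize more carefully. The target is only some $c^d$ with $c\le 12$, and the bookkeeping to reach it is routine.
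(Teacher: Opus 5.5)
Your strategy is the right one: restrict to the radial line, use that interpolation reproduces $P$ exactly there, and bound $|g(\bm p)-P(\bm p)|$ by $\eta$ times the Lebesgue function of the nodes. The paper imports the lemma from \cite{arora2} without reproving it, and this is the natural argument for how $g$ is built.

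\textbf{The nodes.} You did not need to guess them, and your guess is not the paper's. \textsc{ApproxQuery-}$g$ in \autoref{alg:subroutines_approx} defines $g(\bm p)=p_{\bm p}(1)$ for $\|\bm p\|_2>\srad$. Here $p_{\bm p}$ interpolates $g(c_i\bm p)$ at $c_i=\frac{\srad}{\|\bm p\|_2}\cos\theta_i$ with $\theta_i=\frac{\pi(i+1/2)}{d+1}$. These are Chebyshev nodes on both sides of the origin, not equispaced nodes in $(0,r']$. So as written you prove the statement for a different $g$. The statement should also be read with $r'=\srad$, so that every node lies where closeness is assumed.

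Your ``recompute the constant'' step does go through for these nodes, and more cleanly. In the variable $x=c\|\bm p\|_2/\srad$ the nodes are $x_i=\cos\theta_i$ and the evaluation point is $x^\ast=\|\bm p\|_2/\srad\in(1,R/\srad]$. The Chebyshev polynomial satisfies $T_{d+1}(x)=2^d\prod_j(x-x_j)$ and $|T_{d+1}'(x_i)|=(d+1)/\sin\theta_i$, so
\[
|L_i(x^\ast)|=\frac{2^d\sin\theta_i}{d+1}\prod_{j\neq i}|x^\ast-x_j|\le\frac{2^d(x^\ast+1)^d}{d+1},
\qquad\text{hence}\qquad
\sum_{i=0}^{d}|L_i(x^\ast)|\le\bigl(2(x^\ast+1)\bigr)^d\le(4R/\srad)^d .
\]
All nodes $c_i\bm p$ have norm $\srad|\cos\theta_i|<\srad$, so the hypothesis applies at each of them.

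\textbf{The arithmetic.} Your equispaced computation is wrong as stated. The inequality $e(4e)^d\le 12^d$ is equivalent to $\bigl(12/(4e)\bigr)^d\ge e$, which holds only for $d\ge 11$. So it fails for every $1\le d\le 10$, not just $d=1$. The loss comes from pairing $d!\ge(d/e)^d$ with $(d+1)^d\le e\,d^d$. Instead, apply $n!\ge e(n/e)^n$ with $n=d+1$. This gives $(d+1)^d/d!\le e^d$, hence exactly $(4eR/r')^d\le(12R/r')^d$. The sharper bound $|s-t_j|\le R$ (both $s$ and $t_j$ lie in $[0,R]$) improves this to $(2eR/r')^d$.

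With either node set handled this way, the proof is complete.
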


Using \Cref{srt_lem:g-is-additive-arora2}, \autoref{lem:extrapolation_approx} and the structure of \textsc{Approximate-$g$} subroutine, they bound the distance between $g$ and \textsc{Approximate-$g$}.

\begin{cl}\label{srt_claim:g-approxquery-g-are-close}
If \textsc{TestAdditivity}{$(f, 3\alpha)$} (\autoref{alg:subroutines_noisy}) accepts with probability at least $1/3$, then
\textsc{Approximate-$g$} is pointwise $6\alpha$-close to $g$, in the ball $\ball(\bm 0,r)$ with high probability; i.e., 
\[\Pr_{\bm p\sim\cD}[|g(\bm p)-\textsc{Approximate-$g$}(\bm p)|\leq 6\alpha(12R/r)\mid \bm p\in \ball(\bm 0,R)]\geq 1-\eps/4.\]
\end{cl}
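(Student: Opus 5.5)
The natural plan is to unpack \textsc{Approximate-$g$} and compare it to the median definition of $g$ in \eqref{eqn:gselfcorrection}. For a query point $\bm p$, \textsc{Approximate-$g$} first contracts $\bm p$ to $\bm p' = \bm p/\kappa_{\bm p}\in\ball(\bm 0,r)$. It then draws $m = O(\log(1/\eps))$ independent samples $\bm x_1,\dots,\bm x_m\sim\cN(\bm 0,I_n)$, computes the noisy values $\widetilde f(\bm p'-\bm x_j)+\widetilde f(\bm x_j)$, takes their empirical median, and multiplies by $\kappa_{\bm p}$. I would first handle a fixed point $\bm p\in\ball(\bm 0,r)$, where $\kappa_{\bm p}=1$, and show that the output is within $6\alpha$ of $g(\bm p)$ except with small probability.

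\textbf{Step 1 (fixed point in the small ball).} By \Cref{srt_lem:g-is-additive-arora2}, each sample is \emph{good}, meaning $|g_{\bm x_j}(\bm p)-g(\bm p)|<12\alpha$ (with the constant rescaled to match $\alpha$), with probability at least $1-12/125>1/2$. The approximate oracle adds at most $2\eta\le 2\alpha$ to each sample, since $\eta<\alpha$ is part of the parameter regime. If strictly more than half the samples are good, the empirical median is sandwiched between two good noisy values, so it is within $O(\alpha)$ of $g(\bm p)$. A Chernoff bound shows that fewer than half the samples are good with probability at most $e^{-\Omega(m)}$, and choosing $m=O(\log(1/\eps))$ makes this at most $\eps/8$. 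The step needing care is bookkeeping the constants so that the final bound is $6\alpha$. This likely requires invoking \Cref{srt_lem:g-is-additive-arora2} with a rescaled $\alpha$ (testing with $3\alpha$ versus the $12\alpha$ deviation), and I would simply track the constant through.

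\textbf{Step 2 (extension to $\ball(\bm 0,R)$).} For $\bm p$ with $\|\bm p\|_2>r$, both $g(\bm p)$ and \textsc{Approximate-$g$}$(\bm p)$ are $\kappa_{\bm p}$ times the corresponding quantities at $\bm p'\in\ball(\bm 0,r)$. The error at $\bm p'$ is therefore multiplied by $\kappa_{\bm p}\le\lceil R/r\rceil\le 12R/r$ for $\|\bm p\|_2\le R$. This is exactly the degree-$1$ case of \Cref{lem:extrapolation_approx}, and it gives the factor $(12R/r)$ in the statement. Conditioning on $\bm p\in\ball(\bm 0,R)$ and averaging the per-point failure probability over $\bm p\sim\cD$, via Fubini over the joint randomness, gives success probability at least $1-\eps/4$.

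\textbf{Main obstacle.} The delicate point is that \Cref{srt_lem:g-is-additive-arora2} is conditioned on the event that \textsc{TestAdditivity} accepts with probability $\ge 1/3$. This is a deterministic property of $f$, so it legitimately gives the per-point guarantee for every $\bm p\in\ball(\bm 0,r)$ simultaneously. One must also make sure that the median in the definition of $g$ is weighted by the same Gaussian that \textsc{Approximate-$g$} samples from, so that the good-sample probability transfers directly. Once that is checked, the rest is the Chernoff-plus-scaling argument above.
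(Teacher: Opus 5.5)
\textbf{Gap: Step 1 analyzes a different subroutine.} In \autoref{alg:subroutines_noisy}, \textsc{Approximate-$g$} draws a \emph{single} $\bm x_1\sim\cN(\bm 0,I_n)$ and returns $\kappa_{\bm p}\bigl(f(\bm p/\kappa_{\bm p}-\bm x_1)+f(\bm x_1)\bigr)=\kappa_{\bm p}\,g_{\bm x_1}(\bm p/\kappa_{\bm p})$. There is no empirical median over $m=O(\log(1/\eps))$ draws. Your Chernoff amplification is the only place where the argument turns the constant-probability guarantee of \Cref{srt_lem:g-is-additive-arora2} into an $\eps$-dependent failure probability, so it is exactly the step that fails for the actual algorithm.

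For the single-sample estimator, the lemma gives the following. For every $\bm p\in\ball(\bm 0,R)$, over the internal sample, $|g(\bm p)-\textsc{Approximate-$g$}(\bm p)|<\kappa_{\bm p}(12\alpha+2\eta)$ with probability greater than $1-12/125$. This per-point bound, followed by the scaling, is what the paper's one-line justification combines. Averaging over $\bm p\sim\cD$ conditioned on $\ball(\bm 0,R)$ cannot push this failure probability below a constant; nothing in the hypothesis controls a single sample better. With one sample, an $\eps/4$ can only come from the $(\eps/4,R)$-concentration of $\cD$. That controls the mass outside $\ball(\bm 0,R)$ and says nothing about the conditional probability inside it.

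If you want the conditional bound $1-\eps/4$ literally, you must say explicitly that you replace \textsc{Approximate-$g$} by a median of $O(\log(1/\eps))$ independent copies. That is a modified subroutine, costing an extra logarithmic factor in queries per call. For that modified subroutine your Step 1 is correct, but it is not the algorithm the claim is about.

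\textbf{Smaller issue: the constant.} You cannot reach $6\alpha$ by ``invoking \Cref{srt_lem:g-is-additive-arora2} with a rescaled $\alpha$.'' The hypothesis pins the threshold of \textsc{TestAdditivity} at $3\alpha$, and the lemma then yields a deviation of $12\alpha$. A smaller deviation would require acceptance of \textsc{TestAdditivity}$(f,1.5\alpha)$, which you are not given. So the honest constant along your route is $12\alpha$ plus the $2\eta$ oracle error you account for.

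\textbf{Step 2 is fine.} It matches the paper's use of the degree-$1$ case of \Cref{lem:extrapolation_approx}. Since $g(\bm p)=\kappa_{\bm p}\,g(\bm p/\kappa_{\bm p})$ and \textsc{Approximate-$g$} scales the same way, the error is multiplied exactly by $\kappa_{\bm p}\le\lceil R/r\rceil\le 12R/r$.
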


The following observation directly follows using the triangle inequality, along with the above claim and the notion of $\eta$-approximate queries.

\begin{obs}\label{srt_obs:f-approx-g-close}
If \textsc{Approximate Additivity Tester} does not reject $f$ with probability at least $2/3$, then following the description of \textsc{Approximate Additivity Tester}, we have
\[\Pr_{\bm p\sim\cD}[|f(\bm p)-\textsc{Approximate-$g$}(\bm p)|\leq 750Rn^{1.5}\alpha\mid\bm p\in \ball(\bm 0,R)]\geq 1-\frac{\eps}{4}.\]
\end{obs}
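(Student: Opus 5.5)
The plan is to combine \autoref{srt_thm:additivity-tester-arora2}, \autoref{srt_claim:g-approxquery-g-are-close}, and the definition of $\eta$-approximate queries through the triangle inequality on the event $\bm p\in\ball(\bm 0,R)$.

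\textbf{Step 1: $f$ is close to an additive function.} Suppose the \textsc{Approximate Additivity Tester} fails to reject with probability at least $2/3$. Then its contrapositive guarantee (\autoref{srt_thm:additivity-tester-arora2}, which holds with probability $99/100$) places us outside the far case. Concretely, there is an additive $h$ with
\[\Pr_{\bm p\sim\cD}\bigl[|f(\bm p)-h(\bm p)|>O(Rn^{1.5}\alpha)\bigr]\le \eps .\]
I will use this with the explicit constant hidden in the $O(\cdot)$ of \cite{arora2}. Since $\cD$ is $(\eps/4,R)$-concentrated, conditioning on $\bm p\in\ball(\bm 0,R)$ changes failure probabilities only by a constant factor. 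The tester is run with a proximity parameter that is a constant fraction of $\eps$, so the conditional failure probability can be taken to be at most $\eps/8$.

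\textbf{Step 2: \textsc{Approximate-$g$} is close to $h$.} Because the tester accepts with probability at least $1/3$, \autoref{srt_lem:g-is-additive-arora2} and \autoref{srt_claim:g-approxquery-g-are-close} give that \textsc{Approximate-$g$} is within $6\alpha(12R/r)=3600R\alpha$ of $g$ (using $r=1/50$). This holds with conditional probability at least $1-\eps/8$ after adjusting constants. Separately, I need $g$ to be close to $h$ on $\ball(\bm 0,R)$. By \autoref{srt_lem:g-additive-in-ball}, $g$ is $42\alpha$-additive inside $\ball(\bm 0,r)$. Then \autoref{lem:extrapolation_approx} with $d=1$ and $r'=r$ extends this to $O(R\alpha)$-closeness to a linear function on $\ball(\bm 0,R)$. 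That linear function must agree with $h$ up to $O(Rn^{1.5}\alpha)$ on most points, since both are close to $f$ on a set of large measure.

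\textbf{Step 3: Triangle inequality.} On the intersection of the good events,
\[|f-\textsc{Approximate-}g|\le |f-h|+|h-g|+|g-\textsc{Approximate-}g| .\]
A union bound puts this intersection at conditional probability at least $1-\eps/4$. Each term is $O(Rn^{1.5}\alpha)$, and tracking the explicit constants from \cite{arora2} (the $n^{1.5}$ term dominates for $n\ge 1$) gives the bound $750Rn^{1.5}\alpha$. The $\eta$-approximation of the queries is absorbed because $\eta\le\alpha$ in the tester's parametrization.

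\textbf{Main obstacle.} The hardest part is identifying the linear function that $g$ is close to with the same $h$ that $f$ is close to. The plan is to argue that two linear functions which agree up to $\delta$ on a set of Gaussian measure greater than $1/2$ inside $\ball(\bm 0,R)$ must have their difference bounded by $O(\delta)$ everywhere on the ball, since a linear function's sublevel sets are slabs. Matching the constant $750$ exactly may require taking the constants directly from \cite{arora2}'s proof rather than rederiving them.
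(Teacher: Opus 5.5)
\textbf{There is a genuine gap, and it sits exactly at the step you flagged as the main obstacle.}

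\emph{Circularity in Step 2.} The linear function $\ell$ that $g$ tracks on $\ball(\bm 0,R)$ is only known to be close to $g$. Saying that $\ell$ is close to $f$ on a set of large measure presupposes that $g$ (or \textsc{Approximate-$g$}) is close to $f$ on most of the ball, which is the statement you are trying to prove.

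\emph{Loss of a factor $R$.} Suppose you instead derive $g\approx h$ honestly from the median definition \eqref{eqn:gselfcorrection} together with $f\approx h$. Then the error is multiplied by $\kappa_{\bm p}$, which can be as large as $50R$ on $\ball(\bm 0,R)$. Your slab argument loses the same factor. If $|\langle\bm w,\bm x\rangle|\le\delta$ on a set of Gaussian mass $>1/2$, you only get $\|\bm w\|_2=O(\delta)$, hence a bound $O(\delta R)$ on $\ball(\bm 0,R)$, not $O(\delta)$. So at best this route yields $O(R^2 n^{1.5}\alpha)$.

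\emph{Bookkeeping.} \autoref{srt_thm:additivity-tester-arora2} is applied with the tester's own $\eps$. It gives an unconditional failure probability $\le\eps$, not $\eps/8$, so your union bound cannot land at $\eps/4$.

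\emph{The constant.} The constant $750$ cannot come out of this chain. Your own term $6\alpha(12R/r)=3600R\alpha$, and the extrapolation term $42\alpha\cdot 600R$, each already exceed $750Rn^{1.5}\alpha$ for small $n$.

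\textbf{The missing idea} is the explicit check inside the loop of \autoref{alg:zero-mean-additivity}; this is what ``following the description of \textsc{Approximate Additivity Tester}'' refers to. For each of its $N=O(1/\eps)$ samples $\bm p\sim\cD$ with $\bm p\in\ball(\bm 0,R)$, the tester rejects if $|f(\bm p)-\textsc{Approximate-}g(\bm p,f,\delta)|>5\delta n^{1.5}\kappa_{\bm p}$.

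Suppose the conditional probability of this event, over $\bm p$ and the fresh internal randomness of \textsc{Approximate-$g$}, exceeded $\eps/4$. Since $\Pr[\bm p\in\ball(\bm 0,R)]\ge 1-\eps/4\ge 3/4$, each iteration would independently reject with probability at least $3\eps/16$. The tester would then accept with probability at most $e^{-3\eps N/16}$, which is $<2/3$ once the constant in $N$ is at least $3$. This contradicts acceptance with probability at least $2/3$.

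It remains to evaluate the threshold. We have $\delta=3\alpha$ and $\kappa_{\bm p}\le R/r=50R$ on the ball, up to the ceiling. Hence $5\delta n^{1.5}\kappa_{\bm p}\le 5\cdot 3\alpha\cdot n^{1.5}\cdot 50R=750Rn^{1.5}\alpha$, which is exactly the stated constant. The oracle error contributes only one more additive term via the triangle inequality. No additive $h$, no closeness of $g$ to a linear function, and no identification of two linear functions is needed.
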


We will invoke them with $\cD=\cN(\bm 0,I_n)$, implying $R=2\sqrt{n}$ (from \autoref{fact:gaussian-concentration}).

\subsection{Analyses of Subroutines}\label{sec:analyses-subroutines}

\begin{algorithm}[ht]
\caption{$\findbucket$($f, B, S$)}\label{alg:findinfbucket}
{\bf Inputs:} $\eta$-approximate query oracle $\widetilde{f}$ for $f: \R^n \rightarrow \R$, Random Bucketing $B = \{B_1,\hdots,B_r\}$ of $[n]$, $\emptyset\neq S \subseteq [r]$\\

{\bf Output:} Either $\emptyset$, or an influential bucket $B_j$ (with $j \in S$) of $f$. $B_j$ is an influential bucket if $x_v$ is an influential variable of $f$ for some $v \in B_j$.

    Let $V \triangleq \{i \in [n] : i \in B_j \mbox{ for some } j \in S\}$.\Comment{The \#variables in $V$ will be $|V|\approx \frac{|S|n}{r}$.}\

    Sample two independent Gaussian vectors $\vec{x},\vec{y} \sim \cN(\bm 0,I_{n})$.\
    
    \If{
    $|\widetilde{f}(\bm x_{V} \bm y_{\overline{V}})-\widetilde{f}(\bm y)|\leq 2\eta$}{
        \Return $\emptyset$.
    }
    \Else{\If{$S = \{j\}$ for some $j$} {
        \Return $B_j$.
    }
    \Else{
        Partition $S$ into two parts: $S^{(L)}$ and $S^{(R)}$ with $1 \leq |S^{(R)}| \leq |S^{(L)}|\leq |S^{(R)}| + 1$.\\
        
        $\mathsf{ret}_L \leftarrow$ $\findbucket$($f$,$B$, $S^{(L)}$)\\
        \If{$\mathsf{ret}_L = \emptyset$} {
            \Return $\findbucket$($f$,$B$, $S^{(R)}$).
        }
        \Else {
            \Return $\mathsf{ret}_L$.
        }
    }}
\end{algorithm}
The subroutine $\findbucket$ is presented in \autoref{alg:findinfbucket}. To analyze it, we need:

\begin{defi}[Influential bucket for linear functions]
    For a function $f: \R^n \rightarrow \R$, a bucket $B \subseteq [n]$ is said to be an \emph{influential bucket} if there exists at least one variable $x_i,i \in B$, which is influential with respect to $f$ (i.e., the value of $f(\bm{x})$ changes with a change in $x_i$). 
    For a linear $f$, with $f(\bm x) = \sum_{i \in [n]} a_i {x}_i$, ${x}_i$ is influential w.r.t $f$ if and only if $a_i \neq 0$, and hence $B$ is an influential bucket iff $a_i \neq 0$ for some $i \in B$.
\end{defi}

\begin{cl}[Correctness of $\findbucket$]\label{claim:close-to-linear-implies-good_new_bucket}
If $f: \R^n \rightarrow \R$, with $f(\vec{x}) = \sum_{i \in [n]} a_i \vec{x}_i$ is given via an $\eta$-approximate query oracle $\tilde f$, where $\eta\leq\frac{1}{100k^2}\min_{k\in[n]:a_k\neq 0}|a_k|$, $B = \{B_1,\ldots,B_r\}$ is a partition of $[n]$, and $\emptyset\neq S \subseteq [r]$, then $\findbucket(f,B,S)$ (\autoref{alg:findinfbucket}) guarantees the following:
    \begin{enumerate}
        \item If none of the buckets $\{B_i, i\in S\}$ are influential, $\findbucket(f, B, S)$ always returns $\emptyset$ and performs exactly $2$ queries to $f$.
        
        \item Otherwise, with probability at least $1-8 \lceil\log |S|\rceil^2/10k^2$, $\findbucket(f,B,S)$ returns $B_j$ for some $j \in S$ which is an influential bucket, and performs $\leq 8 \ceil{\lg(|S|)}^2$ queries to $f$.
    \end{enumerate}
\end{cl}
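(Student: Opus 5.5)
The plan is to analyse a single call of $\findbucket$ first and then handle the recursion by induction on $|S|$. For linear $f$, the test point $\bm x_V\bm y_{\overline V}$ gives
\[
f(\bm x_V\bm y_{\overline V})-f(\bm y)=\sum_{i\in V}a_i(x_i-y_i).
\]
Two cases follow from this identity.

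\textbf{No influential bucket.} If no bucket $B_j$ with $j\in S$ is influential, then $a_i=0$ for all $i\in V$, so the difference is exactly $0$. The $\eta$-approximate oracle then satisfies $|\widetilde f(\bm x_V\bm y_{\overline V})-\widetilde f(\bm y)|\le 2\eta$ deterministically. The algorithm therefore returns $\emptyset$ after exactly two queries, which proves item~1.

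\textbf{Some influential bucket.} Let $Z\triangleq\sum_{i\in V}a_i(x_i-y_i)$. This is a mean-zero Gaussian with variance $2\sum_{i\in V}a_i^2\ge 2a_{\min}^2$, where $a_{\min}=\min_{a_k\ne0}|a_k|$. A single test wrongly returns $\emptyset$ only if $|Z|\le 4\eta$. By Carbery--Wright (\autoref{srt_thoe:carbery-wright}) with $d=1$, or directly from the Gaussian density bound after normalising the variance, this probability is $O(\eta/a_{\min})$. Under the hypothesis $\eta\le a_{\min}/(100k^2)$ it is at most $c/k^2$ for a small constant $c$; I will aim for $\le 1/(10k^2)$ after fixing constants.

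Call a test \emph{bad} if it is run on an influential set yet falls into the $\emptyset$ branch. On a non-influential set, the test never enters the else branch, by the first case.

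\textbf{Recursion when no test is bad.}
\begin{itemize}
\item At the top level, the else branch is entered.
\item If $S=\{j\}$, the else branch fires only if $B_j$ is influential, so a returned singleton is always a genuinely influential bucket (deterministically, by the first case).
\item Otherwise we recurse into $S^{(L)}$. If $S^{(L)}$ contains no influential bucket, the call returns $\emptyset$ after two queries, and we recurse into $S^{(R)}$, which must contain an influential bucket. If $S^{(L)}$ is influential, we recurse there.
\end{itemize}
Hence the recursion descends a single path of depth $\lceil\lg|S|\rceil$. At each level it performs at most two calls, with two queries each, i.e.\ $O(1)$ queries per level.

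\textbf{Failure probability.} On the good path there are at most $2\lceil\lg|S|\rceil$ tests on influential sets. A union bound gives failure probability at most $2\lceil\lg|S|\rceil/(10k^2)$, which is within the claimed $8\lceil\lg|S|\rceil^2/(10k^2)$.

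\textbf{Query count.} The subtle case is when a bad event makes $S^{(L)}$ return $\emptyset$ although it is influential. The algorithm may then wander into $S^{(R)}$, which might be non-influential and return $\emptyset$ overall. This is a failure, which is already counted in the probability bound, and it still costs only $O(\log|S|)$ queries. To be safe with the stated $8\lceil\lg|S|\rceil^2$ bound, I will instead bound the query count crudely by a recurrence: $Q(m)\le 2+Q(\lceil m/2\rceil)+Q(\lfloor m/2\rfloor)$, but only one branch can recurse beyond depth one. This gives $Q(m)\le 4\lceil\lg m\rceil+2\le 8\lceil\lg m\rceil^2$ for $m\ge2$. The case $m=1$ costs two queries, and I will adjust the statement's interpretation there.

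\textbf{Main obstacle.} The main obstacle is the anti-concentration step: making sure the threshold $2\eta$ in the test, together with the $\eta$ oracle error on both queries, is handled correctly. The bad event needs $|Z|\le 4\eta$ (since $|\widetilde f-f|\le\eta$ at two points), and the constants must be chosen so that the per-test error is $\le 1/(10k^2)$ uniformly over all $V$. The variance lower bound via the single smallest nonzero coefficient makes this uniform.
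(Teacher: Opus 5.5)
Your proof is correct and has the same skeleton as the paper's: an identical Part~1, a per-test anti-concentration bound, and a union bound over the recursive calls. Your anti-concentration step is different, though, and it is the sound one. The paper bounds $\Pr[|\sum_{j\in V}a_j(x_j-y_j)|\le 4\eta]$ by $\Pr[\max_{j\in V}|a_j|\cdot|\sum_{j\in V}(x_j-y_j)|\le 4\eta]$, normalizes by $\sqrt{2|V|}$, and applies Carbery--Wright. That comparison is not justified: the event inclusion fails, and since $\sum_{j\in V}a_j^2\le |V|\max_{j\in V}a_j^2$, the left-hand Gaussian has the smaller variance. So the claimed inequality is actually reversed unless all $|a_j|$, $j\in V$, are equal. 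Your direct lower bound $\mathrm{Var}(Z)=2\sum_{i\in V}a_i^2\ge 2a_{\min}^2$ uses the hypothesis on $\eta$ exactly as intended. It gives an explicit constant, $\Pr[|Z|\le 4\eta]\le 4\eta/(\sqrt{\pi}\,a_{\min})<1/(10k^2)$, so the paper's conclusion survives via your route. Your bookkeeping is also tighter. You observe that a returned singleton is deterministically influential, and the first bad test must lie on the single intended path. This gives failure probability $O(\lceil\lg|S|\rceil/k^2)$, rather than the $O(\lceil\lg|S|\rceil^2/k^2)$ from the paper's looser inductive accounting.

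Two small corrections. First, your remarks that a failure ``still costs only $O(\log|S|)$ queries'' and that ``only one branch can recurse beyond depth one'' are false off the good event. If, say, every bucket in $S$ is influential, every internal test passes and every leaf test is bad, then the whole recursion tree is explored, costing $2(2|S|-1)$ queries. This is harmless, because the statement only requires the query bound on the success event, where your recurrence is valid; just state it there. Second, at $|S|=1$ the stated bounds read as success probability $1$ and $0$ queries, which are false: there is one test, it costs two queries, and it has positive failure probability. Your count of at most $2\lceil\lg|S|\rceil$ influential tests also undercounts in this case. Handle it by reading $\lceil\lg|S|\rceil$ as $\max\{1,\lceil\lg|S|\rceil\}$, which is effectively what the paper's base case (success probability $1-1/(10k^2)$) does.
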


\begin{proof}
Let $\vec{x}$ and $\vec{y}$ be the Gaussian random vectors sampled by \autoref{alg:findinfbucket}, and $\vec{w} \triangleq  \vec{y}_{\overline{V}} \vec{x}_V\in \R^n$.

(Part 1) Since no $B_j$ is influential for $j \in S$, we have $a_i = 0$ for all $i \in V$, giving us 
\[f(\vec{w}) - f(\vec{y}) = \sum_{j \in \overline{V}} a_j \underbrace{(\vec{w}_j - \vec{y}_j)}_{= \bm{y}_j-\bm{y}_j = 0} + \sum_{j \in V} \underbrace{a_j}_{=0} \underbrace{(\vec{w}_{j} - \vec{y}_{j})}_{=\bm x_j-\bm y_j} = 0.\]
Hence, $|\widetilde f(\bm w)-\widetilde f(\bm z)|=|\widetilde f(\bm w)-f(\bm w)-(\widetilde f(\bm z)-f(\bm z))|\leq \overbrace{|\widetilde f(\bm w)-f(\bm w)|}^{\leq\eta}+\overbrace{|\widetilde f(\bm z)-f(\bm z)|}^{\leq\eta}\leq 2\eta$ by the triangle inequality, ensuring the check in Line $5$ (\autoref{alg:findinfbucket}) always holds. So in this case, $\findbucket(f,B,S)$ will always return $\emptyset$, and make exactly $2$ queries to $f$.

(Part 2) It only remains to prove the second part of the claim (when there exists some $j \in S$ with $B_j$ being influential, i.e., there exists $k \in B_j$ such that $a_k \neq 0$). 
We do so by strong induction on the size of $S$. 

When $|S| = 1$, we have $S = \{j\}$ and $V = B_j$, with $a_k \neq 0$ for some $k \in B_j$. 
Then $f(\vec{w}) - f(\vec{y}) = \sum_{k \in B_j} a_k (\vec{x}_{k} - \vec{y}_{k})$. This implies $\Pr_{\vec{x},\vec{y}\sim\cN(\bm 0,I_{n})}[f(\vec{w}) = f(\vec{z})] = 0$, since $\sum_{k \in B_j} a_k(\bm x_k-\bm y_k) \not\equiv 0$, and any non-zero polynomial over reals vanishes only on a set (of its zeroes) of measure zero.
Moreover,
\begin{align*}
        \Pr_{\bm x,\bm y\sim\cN(\bm 0,I_{n})}[|\widetilde f(\bm w)-\widetilde f(\bm y)|> 2\eta] &\geq \Pr_{\bm x,\bm y\sim\cN(\bm 0,I_{n})}[|f(\bm w)-f(\bm y)|>4\eta]\\ 
        &=1-\Pr_{\bm x,\bm y\sim\cN(\bm 0,I_{n})}\left[\Big|\sum_{k \in B_j} a_k (\vec{x}_{k} - \vec{y}_{k})\Big|\leq 4\eta\right]\\
        &\geq1-\Pr_{\bm x,\bm y\sim\cN(\bm 0,I_{n})}\left[\max_{k \in B_j}|a_k|\Big|\sum_{k \in B_j} (\vec{x}_{k} - \vec{y}_{k})\Big|\leq 4\eta\right] \\
        & = 1-\Pr_{\bm x,\bm y\sim\cN(\bm 0,I_{n})} \left[\frac{| \sum_{k \in B_j}(\vec{x}_{k} - \vec{y}_{k})|}{\sqrt{2|B_j|}}\leq\frac{2\sqrt{2}\eta}{\sqrt{|B_j|}\max_{k \in B_j}|a_k|}\right].
    \end{align*}    
As $\bm x,\bm y \sim\cN(\bm 0,I_n)$ are independent, 
we have $\sum_{k\in B_j}\frac{\bm x_k-\bm y_k}{\sqrt{2|B_j|}}\equiv \sum_{k\in B_j}\frac{x_k-y_k}{\sqrt{2|B_j|}}\in\R[\cup_{k\in B_j}\{x_k,y_k\}]$, 
$\cov[x_i,x_j]=\cov[y_i,y_j]=0$, for all $i\neq j\in B_j$, 
and $\cov[x_i,y_j]=0$, for all $i,j\in S$, giving us
\[\var_{\bm x,\bm y \sim\cN(\bm 0,I_{n})}\left[\sum_{k\in B_j}\frac{\bm x_k-\bm y_k}{\sqrt{2|B_j|}} \right]=\frac{1}{2|B_j|}\sum_{k\in B_j} \left(\var_{x_k\sim\cN(0,1)}[x_k]+\var_{y_k\sim\cN(0,1)}[y_k]\right)=1.\]
Now, applying \autoref{srt_thoe:carbery-wright} on $\sum_{k\in B_j}\frac{\bm x_k-\bm y_k}{\sqrt{2|B_j|}}$, (with $d=1,n=2|B_j|\text{, and }t=0$) we get
    \[\Pr_{\bm x,\bm y \sim\cN(\bm 0,I_{n})} \left[\frac{| \sum_{k \in B_j}(\vec{x}_{k} - \vec{y}_{k})|}{\sqrt{2|B_j|}}\leq\frac{2\sqrt{2}\eta}{\sqrt{|B_j|}\max_{k \in B_j}|a_k|}\right]\leq O\left(\frac{2\sqrt{2}\eta}{\sqrt{|B_j|}\max_{k \in B_j}|a_k|}\right)\ll\frac{1}{10k^2}.\]
    The last inequality follows by the assumption that
    $\eta\leq\frac{1}{100k^2}\min_{k\in[n]:a_k\neq 0}|a_k|\leq\frac{\sqrt{|B_j|}}{100k^2}\max_{k\in B_j}|a_k|$.
    Hence, with probability at least $1-1/10k^2$, \autoref{alg:findinfbucket} will return $S = \{j\}$ in Line 9.

Now, let $|S| = k > 1$. Using a similar argument as in the base case, $\findbucket(f,B,S)$ will return $\emptyset$ with low probability, since $f(\vec{w}) - f(\vec{y}) = \sum_{j \in V} a_j (\vec{x}_j - \vec{y}_j)$ will be $0$ only when $\vec{x}_V - \vec{y}_V$ lies in a $(|V|-1)$-dimensional hyperplane. However,    
    \begin{align*}
        \Pr_{\bm x,\bm y\sim\cN(\bm 0,I_{n})}[|\tilde f(\bm w)\!-\!\tilde f(\bm y)|\!\leq\! 2\eta] 
        &\leq\!\! \Pr_{\bm x,\bm y\sim\cN(\bm 0,I_{n})}[| f(\bm w)\!-\!f(\bm y)|\!\leq\! 4\eta]=\!\! \Pr_{\bm x,\bm y\sim\cN(\bm 0,I_{n})}\!\! \bigg[\bigg|\sum_{j \in V} a_{j}(\vec{x}_{j} - \vec{y}_{j})\bigg|\!\leq\! 4\eta\bigg]\\
        &\leq\Pr_{\bm x,\bm y\sim\cN(\bm 0,I_{n})}\bigg[\max_{j\in V} |a_{j}|\bigg|\sum_{j \in V} (\vec{x}_{j} - \vec{y}_{j})\bigg|\leq 4\eta\bigg]\\
        &\leq\Pr_{\bm x,\bm y\sim\cN(\bm 0,I_{n})}\left[\left|\sum_{j \in V} \frac{\vec{x}_{j} - \vec{y}_{j}}{\sqrt{2|V|}} \right|\leq \frac{2\sqrt{2}\eta}{\sqrt{|V|}\cdot\max_{j\in V} |a_{j}|}\right]\triangleq (\star).
    \end{align*}
    As earlier, we may observe: as $\bm x,\bm y \sim\cN(\bm 0,I_{n})$ are sampled independently, we have
    \[\sum_{j\in V}\frac{\bm x_{j}-\bm y_{j}}{\sqrt{2|V|}}\equiv \sum_{j\in V}\frac{x_{j}-y_{j}}{\sqrt{2|V|}}\in\R[\cup_{j\in V}\{x_{j},y_{j}\}],\]
    and with $\cov[x_i,x_j]=\cov[y_i,y_j]=0$, for all $i\neq j\in V$, and $\cov[x_i,y_j]=0$, for all $i,j\in V$, we get
    \[\var_{\bm x,\bm y\sim\cN(\bm 0,I_{n})}\left[\sum_{j\in V}\frac{x_{j}-y_{j}}{\sqrt{2|V|}}\right]=\frac{1}{2|V|} \sum_{j=1}^{|V|} \left(\var_{x_{j}\sim\cN(0,1)}[x_{j}]+\var_{y_{j}\sim\cN(0,1)}[y_{j}]\right)=1.\]
    Applying \autoref{srt_thoe:carbery-wright} on $\sum_{j\in V}\frac{x_{j}-y_{
j}}{\sqrt{2|V|}}$, (with $d=1,n=2|V|\text{, and }t=0$) we get
    \[(\star)=\Pr_{\bm x,\bm y\sim\cN(\bm 0,I_{n})}\left[\left|\sum_{j \in V} \frac{\vec{x}_{j} - \vec{y}_{j}}{\sqrt{2|V|}} \right|\leq \frac{2\sqrt{2}\eta}{\sqrt{|V|}\cdot\max_{j\in V} |a_{j}|}\right] \leq O\left(\frac{2\sqrt{2}\eta}{\sqrt{|V|}\cdot\max_{j\in V} |a_{j}|}\right)\ll\frac{1}{10k^2}.\]
    The last inequality again follows by
    $\eta\leq\frac{1}{100k^2}\min_{k\in[n]:a_k\neq 0}|a_k|\leq\frac{\sqrt{|V|}}{100k^2}\max_{k\in V}|a_k|$.
   
Thus, with probability at least $1-1/10k^2$, the condition in Line 5 will not hold, ensuring \autoref{alg:findinfbucket} reaches the recursive step (Lines 8--13). 

By construction, $S = S^{(L)} \sqcup S^{(R)}$, with $1 \leq |S^{(L)}| \leq \ceil{|S|/2}$ and $1 \leq |S^{(R)}| \leq \floor{|S|/2}$. Either,
    \begin{enumerate}[label=(\roman*)]
        \item $a_j = 0$ for all $j \in S^{(L)}$, and there exists $S_R \subseteq S^{(R)}:S_R \neq \emptyset$
        and $a_{j} \neq 0$ for all $j \in S_R$, or
        
        \item there exists $S_L \subseteq S^{(L)}$ such that $S_L \neq \emptyset$ and $a_{j} \neq 0$ for all $j \in S_L$.
    \end{enumerate}

    In Case (i), $\findbucket(f,B,S^{(L)})$ will always return $\emptyset$ and perform $2$ queries (by Part 1) and hence the return value of $\findbucket(f,B,S)$ will be $\findbucket(f,B,S^{(R)})$. 
    Using the strong induction hypothesis we get, with probability at least $1-4\lceil\log |S|\rceil/10k^2$, this return value will be $\{j\}$ for some $j \in S_R$, and the total number of queries made will be $\leq 2 + 2 + 4 \ceil{\lg(\floor{|S|/2})} \leq 4 + 4(\ceil{\lg |S|} - 1) = 4 \ceil{\lg |S|}$.

    In Case (ii), again using strong induction hypothesis, with probability at least $1-4\lceil\log |S|\rceil/10k^2$, $\findbucket(f,B,S^{(L)})$ will return $\{j\}$ for some $j \in S_L$, 
    and thus the algorithm will return $\{j\}$ (line 13). 
    The number of queries made will be $\leq 2 + 4 \ceil{\lg(\ceil{|S|/2})} \leq 2 + 4(\ceil{\lg |S|}-\frac{1}{2}) = 4 \ceil{\lg |S|}$. 
    
    We may note, if we run $\findbucket(f, B, S)$ on some set $S$ of bucket-indices with $|S| > 1$, each recursive step (irrespective of whether it falls in case (i) or case (ii) above) will succeed with probability $\geq 1-4\lceil\log |S|\rceil/10k^2$. 
    For the top level call to succeed, all the recursive calls (at most $2 \lceil \lg |S| \rceil$ in number) must also succeed. 
    So, we do a union bound over all the failure events to upper bound the total failure probability by $8 \lceil\log |S|\rceil^2/10k^2$, and the query complexity by $8 \lceil \lg |S| \rceil^2$.
\end{proof}

Now we are ready to describe and analyze the subroutine $\findbuckets$ (\autoref{alg:findinfbuckets}).

\begin{algorithm}[ht]
\caption{$\findbuckets$($\widetilde{f}, B, X$)}\label{alg:findinfbuckets}
{\bf Inputs:} $\eta$-approximate query oracle $\widetilde{f}$ for $f: \R^n \rightarrow \R$, Bucketing $B = \{B_1,\ldots,B_r\}$, $\emptyset\neq X \subseteq [r]$.

{\bf Output:} A set of  influential buckets $\mathsf{InfBuckets} \subseteq \{B_j : j \in X\}$ with respect to $f$.

    $\mathsf{InfBuckets} \gets \emptyset$ \

\For{j=1 to 8k}{

    $\mathsf{RetVal} = \findbucket(f,B,X)$ \

    \If{$\mathsf{RetVal} = B_i$ for some $i$} {
        $\mathsf{InfBuckets} \gets \mathsf{InfBuckets} \cup \{B_i\}$ \
        
        $X\gets X \setminus \{i\}$ \
    }

}

\Return $\mathsf{InfBuckets}$. \
\end{algorithm}

\begin{cl}[Correctness of $\findbuckets$]\label{claim:close-to-linear-implies-good:findInfbuckets}
If $f: \R^n \rightarrow \R$, with $f(\vec{x}) = \sum_{i \in [n]} a_i \vec{x}_i$, is given via the $\eta$-approximate query oracle $\tilde f$, where $\eta\leq\frac{1}{100k^2}\min_{k\in[n]:a_k\neq 0}|a_k|$, and $\emptyset\neq X \subseteq [n]$, then $\findbuckets(f,B,X)$ (\autoref{alg:findinfbuckets}) performs at most $64 k \lceil \log (|X|)\rceil^2$ queries, and guarantees:
    \begin{enumerate}
        \item[(i)] If $f$ is $\ell$-linear function for some $\ell > 8k$, then \autoref{alg:findinfbuckets} will return a set of $8k$ influential buckets in $f$ with probability at least $1-64 \lceil\log |X|\rceil^2/10k$.

        \item[(ii)] If $f$ is $\ell$-linear function for some $\ell \leq 8k$, then \autoref{alg:findinfbuckets} will return the set of all influential buckets in $f$ with probability at least $1-64 \lceil\log |X|\rceil^2/10k$.

    \end{enumerate}
\end{cl}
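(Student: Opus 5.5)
The plan is to analyze the $8k$ iterations of \autoref{alg:findinfbuckets} one at a time, applying \autoref{claim:close-to-linear-implies-good_new_bucket} to each call $\findbucket(f,B,X)$ on the current set $X$, and then take a union bound over the failure events of all iterations. The query bound comes first and is mechanical. Each call to $\findbucket$ on a set $X' \subseteq X$ makes at most $8\lceil\lg|X'|\rceil^2 \le 8\lceil\lg|X|\rceil^2$ queries (if no bucket in $X'$ is influential it makes exactly $2$). Summing over $8k$ iterations gives at most $64k\lceil\log|X|\rceil^2$ queries.

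For correctness, define a \emph{good} iteration as one in which the following holds. If the current $X$ still contains an influential bucket, then $\findbucket$ returns some influential $B_i$ with $i\in X$. If $X$ contains none, then by Part 1 of \autoref{claim:close-to-linear-implies-good_new_bucket} it deterministically returns $\emptyset$. By Part 2 of that claim, conditioned on the history, an iteration fails with probability at most $8\lceil\log|X|\rceil^2/10k^2$. The subroutine's randomness is fresh Gaussian samples, so the claim applies to the current deterministic $X$ at each step. A union bound over the $8k$ iterations bounds the probability that some iteration fails by $8k\cdot 8\lceil\log|X|\rceil^2/10k^2 = 64\lceil\log|X|\rceil^2/10k$, which matches the stated bound.

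On the event that every iteration is good, each iteration that returns a bucket returns a \emph{new} influential bucket, since the returned index is removed from $X$. No non-influential bucket is ever added. In case (ii), with at most $8k$ influential buckets, every iteration strictly reduces the number of influential buckets remaining in $X$ until none remain, after which iterations return $\emptyset$. Hence within $8k$ iterations all influential buckets are collected. In case (i) I interpret ``$\ell$-linear with $\ell>8k$'' as yielding more than $8k$ influential buckets, or at least enough that an influential bucket always remains in $X$ during all $8k$ rounds. Then every iteration adds a distinct influential bucket, giving exactly $8k$.

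The main obstacle is this interpretive point in case (i). $\ell$-linearity counts variables, not buckets, so two influential variables could share a bucket. I would handle it by phrasing the argument in terms of the number $m$ of influential buckets. We collect $\min(m,8k)$ of them, which gives (ii) directly and gives (i) whenever $m>8k$. A secondary subtlety is justifying the union bound under adaptivity: the failure bound of \autoref{claim:close-to-linear-implies-good_new_bucket} holds for every fixed $X$, so it also holds conditionally on the past.
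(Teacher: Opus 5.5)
Your proposal is correct and follows essentially the same route as the paper: apply the correctness claim for $\mathsf{FindInfBucket}$ to each of the $8k$ calls on the current index set, union-bound the per-call failure probability $8\lceil\log|X|\rceil^2/10k^2$ (using $|X'|\le|X|$), and sum the per-call query bound to get $64k\lceil\log|X|\rceil^2$. Your reformulation of (i) in terms of the number $m$ of influential buckets, collecting $\min(m,8k)$ of them, is in fact necessary: the paper's proof silently identifies influential variables with influential buckets, and (i) as literally stated fails when more than $8k$ influential variables fall into at most $8k$ buckets.
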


\begin{proof}

Note, \autoref{alg:findinfbuckets} calls \autoref{alg:findinfbucket}. So, we use \Cref{claim:close-to-linear-implies-good_new_bucket}, and proceed case-wise:
\begin{itemize}
    \item[(i)] Consider the case when $f$ is $\ell$-linear for some $\ell > 8k$. 
    This implies that the set of indices $X$ given as input to \autoref{alg:findinfbuckets} contains more than $8k$ influential variables. 
    Thus, in every iteration of the \textsc{for} loop starting in Line $4$ of \autoref{alg:findinfbuckets}, \autoref{alg:findinfbucket} in Line $5$ will return an influential bucket, say $B_i$, with probability at least $1-8 \lceil\log |X|\rceil^2/10k^2$, following \Cref{claim:close-to-linear-implies-good_new_bucket} (ii). 
    Line 7 then computes $\mathsf{Val}=a_i$, followed by Line 9 updating $f$ to $f-a_ix_i$, and $X$ to $X\setminus\{i\}$, removing $\{i\}$ from any future iterations. 
    Since the \textsc{for} loop in \autoref{alg:findinfbuckets} runs for $8k$ iterations, and the number of influential variables is more than $8k$, using a union bound over all the iterations, with probability at least $1-64 \lceil\log |X|\rceil^2/10k$, \autoref{alg:findinfbuckets} returns a set of $8k$ influential variables.

    \item[(ii)] When $f$ is $\ell$-linear for some $\ell \leq 8k$. As before, this implies that the set of indices $X$ given as input to \autoref{alg:findinfbuckets} contains at most $8k$ influential variables. 
    Following the same argument, as in the above case, we may claim, the $8k$ iterations of the \textsc{for} loop in \autoref{alg:findinfbuckets} returns all the $\ell\leq 8k$ influential variables in $f$, with probability at least $1-64 \lceil\log |X|\rceil^2/10k$.
\end{itemize}
\emph{Query Complexity:} Since each call of $\findbucket(f,B,X)$ makes $\leq 8 \ceil{\lg(|X|)}^2$ queries to $f$, and there are at most $8k$ such calls, the total number of queries to $f$ is $\leq 64k\ceil{\lg(|X|)}^2$. 
\end{proof}

\subsection{Analysis of the \texorpdfstring{$k$}{k}-linearity tester (\autoref{srt_balg:testklinear})}\label{sec:k-lin-tester-analysis}

We are now ready to prove the main theorem of this section:

\begin{proof}[Proof of \autoref{srt_theo:k_linearity}]

\textbf{Completeness:}
Since $f$ is a $k$-linear function, following \autoref{srt_thm:additivity-tester-arora2}, we have: \textsc{Approximate Additivity Tester} accepts $f$, and hence by \autoref{srt_lem:g-additive-in-ball}, $g$ is pointwise $42\eta$-close to linearity in $\ball(\bm 0,r)$. 
Combined with \autoref{lem:extrapolation_approx}, we get $g$ is pointwise $42\eta(12R/r)$-close to linearity in $\ball(\bm 0,R)$. 
Now from \autoref{srt_claim:g-approxquery-g-are-close}, we know that $g$ and \textsc{Approximate-$g$} are pointwise $6\eta(12R/r)$-close in $\ball(\bm 0,R)$ with probability at least $1-\eps/4$. Using the triangle inequality, this implies that \textsc{Approximate-$g$} is pointwise $48\eta(12R/r)$-close to some linear function with probability at least $1-\eps/4$. Moreover, from \autoref{srt_obs:f-approx-g-close}, we get: \textsc{Approximate-$g$} is in fact pointwise $750Rn^{1.5}\eta$-close to $f$, with probability at least $1-\eps/2$.

With $R=2\sqrt{n}$, following the guarantee of \autoref{claim:close-to-linear-implies-good:findInfbuckets}, which we ensure by our assumption on $\eta$:
\[O(n^{2}\eta)\leq\frac{1}{100k^2}\min_{k\in[n]:a_k\neq 0}|a_k|\text{, or equivalently } \eta\leq O\left(\min_{\substack{i\in[n]:f(\bm e_i)\neq 0}}\frac{|f(\bm e_i)|}{(nk)^2}\right),\]
\color{black}
we get that $\findbuckets(\textsc{Approximate-$g$},B,[r])$ will return at most $k$-influential variables of $f$ with probability at least $1-128 \lceil\log k\rceil^2/5k$.
Thus with probability at least $1-\eps/2-128 \lceil\log k\rceil^2/5k$, \textsc{Test-$k$-Linear} will \textsc{Accept}.

\textbf{Soundness:}
Let us consider the case when $f$ is $\eps$-far from $k$-linearity. We will prove the contrapositive. We will show that if \testlin does not reject $f$ with probability at least $1-\delta(\geq 2/3)$, then $f$ is pointwise close to some $k$-linear function, with non-zero probability.

Note that if \textsc{Approximate Additivity Tester} rejects $f$ with probability $\geq 1-\delta$, we are done. So, let us consider the case when \textsc{Approximate Additivity Tester} accepts with probability $\geq\delta$.
As in the completeness proof, from \autoref{srt_lem:g-is-additive-arora2}, and \autoref{lem:extrapolation_approx}, 
we know that $g$ is pointwise $42\eta(12R/r)$-close to linearity in $\ball(\bm 0,R)$ with probability at least $1-12/125$, and from \autoref{srt_claim:g-approxquery-g-are-close}, 
we have that $g$ and \textsc{Approximate-$g$} are pointwise $6\eta(12R/r)$-close in $\ball(\bm 0,R)$ with probability at least $1-\eps/4$. 
This implies that \textsc{Approximate-$g$} is pointwise $48\eta(12R/r)$-close to some linear function with probability at least $1-\eps/4 - 12/125$. 
Again, from \autoref{srt_obs:f-approx-g-close}, we get: \textsc{Approximate-$g$} is pointwise $750Rn^{1.5}\eta$-close to $f$, with probability at least $1-\eps/4$, implying now $f$ must be pointwise $(48(12R/r)+750Rn^{1.5})\eta$-close to linearity, with probability at least $1-\eps/2 - 12/125$. Note, here $R=2\sqrt{n}$, and $r=1/50$.

With our assumption on $\eta$ again ensuring the conditions for \autoref{claim:close-to-linear-implies-good:findInfbuckets} are met, i.e.,
\[(48(12R/r)+750Rn^{1.5})\eta\leq\frac{1}{100k^2}\min_{k\in[n]:a_k\neq 0}|a_k|,\]
\color{black}
we get  $\findbuckets(\textsc{Approximate-$g$},B,[r])$ will return at most $8k$-influential variables of $f$ with probability at least $1-128 \lceil\log k\rceil^2/5k$.

Since \testlin accepts $f$ with probability $\geq\delta$, this implies that the total number of influential variables returned by $\findbuckets$ is at most $k$, with probability $\geq\delta$.  Combining the above, we can conclude that $f$ is pointwise $(48(12R/r)+750Rn^{1.5})\eta$-close to a $k$-linear function, with probability at least $\delta- 12/125 -\eps/2-128 \lceil\log k\rceil^2/5k$. This concludes the soundness argument.

\textbf{Query complexity:} From \autoref{srt_thm:additivity-tester-arora2}, we know that \textsc{Approximate Additivity Tester} performs $O(\frac{1}{\varepsilon})$ queries. Following \autoref{claim:close-to-linear-implies-good:findInfbuckets}, we also know that $\findbuckets$ performs $\widetilde{O}(k \log k)$ queries. Combining them, we have: \testlin performs $\widetilde{O}(k \log k + \nicefrac{1}{\varepsilon})$ queries in total.
\end{proof}

\section{\texorpdfstring{$k$}{k}-Sparse Low Degree Testing}\label{sec:ksparsepolynomialtest}

In this section, we present and analyze our sparse low degree tester (\autoref{srt_alg:testksparse}).

\begin{theo}[Generalization of \autoref{srt_thm:k-sparse-intro}]\label{srt_theo:ksparsefinalmain}
Let $\eta < \min\{\eps, 1/2^{2^n}\}$. Given $\eta$-approximate query access to $f: \R^n \rightarrow \R$ that is bounded in $\ball(\bm 0,2d\sqrt{n})$, there exists a tester \testsparse(\autoref{srt_alg:testksparse}) that performs $\widetilde{O}(d^5 + \nicefrac{d^2}{\eps} + d k^3)$ queries and guarantees:
\begin{itemize}
    \item \textsc{Completeness:} If $f$ is a $k$-sparse, degree-$d$ polynomial, then \autoref{srt_alg:testksparse}  \textsc{Accepts} with probability at least $1-\eps/4$.
    
    \item \textsc{Soundness:} If $f$ is $\eps$-far from all $k$-sparse degree-$d$ polynomials, then \autoref{srt_alg:testksparse} \textsc{Rejects} with probability at least $2/3$.
    
\end{itemize}
\end{theo}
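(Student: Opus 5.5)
We analyze \testsparse in two stages, mirroring the proof of \autoref{srt_theo:k_linearity}. First it runs \apxlowdegtest (\autoref{alg:low_degree_main_algorithm_approx}) from \cite{arora2, arora2024optimaltestinglinearitysosa} with $\widetilde{O}(d^5+d^2/\eps)$ queries, and rejects if that test rejects. Second, for $\Theta(d(k+1)^2)$ independent rounds, it samples $\bm u\sim\cN(\bm 0,I_n)$ and queries \apxqueryg at $\bm u^0,\dots,\bm u^{2k}$. From these values it forms the noisy Hankel matrix $\widetilde H_{\bm u}=H_{k+1}(\widetilde f,\bm u)$. It accepts iff $\sigma_{\min}(\widetilde H_{\bm u})\le \eta'(k+1)$ in every round, where $\eta'$ is the pointwise closeness guarantee between \apxqueryg and the self-corrected polynomial. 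The query count is $O(k)$ calls to \apxqueryg per round. Each call costs $O(d)$ queries to $f$ (via $g_{\bm q}$), possibly with median amplification adding a log factor. Over $\Theta(dk^2)$ rounds this gives $\widetilde O(dk^3)$, and adding the low-degree test cost yields the stated bound.

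\textbf{Completeness.} If $f$ is a $k$-sparse degree-$d$ polynomial, the low-degree tester accepts. By the analogues of \autoref{srt_claim:g-approxquery-g-are-close} and \autoref{srt_obs:f-approx-g-close} for the low-degree self-corrector, \apxqueryg is pointwise $\eta'$-close to $f$ on the query points with probability $1-\eps/4$. We then invoke \autoref{obs:approx-hankel-structure} to write $\widetilde H_{\bm u}=H_{\bm u}+E$, where $E$ is a Hankel matrix with entries bounded by $\eta'$, so $\|E\|_{\rm op}\le \|E\|_{\rm F}\le \eta'(k+1)$. By \autoref{obs:Hankel-charcterization}, $H_{k+1}(f,\bm u)=V^\top DV$ with $V$ a $k\times(k+1)$ Vandermonde matrix, so it is singular. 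Weyl's inequality (\autoref{srt_thm:weyls-ineq}) then gives $\sigma_{\min}(\widetilde H_{\bm u})\le\eta'(k+1)$ in every round.

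\textbf{Soundness.} We argue the contrapositive. If the low-degree stage accepts with non-negligible probability, the guarantees from \cite{arora2} show that $f$ is $\eps/2$-close in $\ell_1$, and pointwise $\eta'$-close to \apxqueryg, to a degree-$d$ polynomial $P$. This uses boundedness in $\ball(\bm 0,2d\sqrt n)$ and \autoref{lem:extrapolation_approx} to pass from $\ball(\bm 0,r)$ to the concentration ball. If $f$ is $\eps$-far from $k$-sparse degree-$d$ polynomials, then $P$ itself is not $k$-sparse. Hence, by \autoref{obs:Hankel-charcterization}, $Q(\bm u)=\det H_{k+1}(P,\bm u)$ is a nonzero polynomial of degree at most $D=O(dk^2)$.

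It remains to show that $\sigma_{\min}(H_{\bm u})$ is large with constant probability, since Weyl then gives $\sigma_{\min}(\widetilde H_{\bm u})>\eta'(k+1)$. We use $\sigma_{\min}\ge |\det H_{\bm u}|/\sigma_{\max}^{k}$. We bound $\sigma_{\max}$ by \autoref{lem:upperboundonsigmamax} with a constant $\gamma$. We lower bound $|Q(\bm u)|$ via the Glazer--Mikulincer anti-concentration (\autoref{srt_thm:glazer-mikulincer}): $|Q(\bm u)|\ge \tau$ except with probability $CD(\tau/\mathrm{coeff}_D(Q))^{1/D}$. This is where the degree-$D$ root enters, and it forces the $\Theta(d k^2)$ repetitions, since each round succeeds only with probability about $1/D$ after suitable $\tau$.

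\textbf{Main obstacle.} The hard step is making the threshold nonvacuous. We need $\eta'(k+1)\cdot\sigma_{\max}^{k}$ to fall below the anti-concentration scale $\tau$, yet $\mathrm{coeff}_D(Q)$ can be tiny and $\sigma_{\max}$ grows like $(d!)^{O(k)}$. My plan is to absorb all these quantities into the crude assumption $\eta<1/2^{2^n}$, which is doubly exponentially small. To do this I would lower bound the relevant coefficient of $Q$ in terms of the coefficients of $P$ (the determinant expansion gives products of coefficients times squared Vandermonde factors in monomial values), and treat the parameters as fixed relative to $n$. I expect the delicate part to be justifying that this coefficient is not smaller than the bound absorbed by $2^{-2^n}$. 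If a direct bound fails, I would instead condition on a nonzero monomial of $Q$ of top degree and argue qualitatively.
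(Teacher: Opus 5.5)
Your plan follows the paper's proof step for step. You run the low-degree stage, then the noisy Hankel test on \apxqueryg. Completeness comes from \autoref{obs:Hankel-charcterization}, \autoref{obs:approx-hankel-structure} and Weyl. Soundness comes from $\sigma_{\min}\ge|\det H_{\bm u}|/\sigma_{\max}^{k}$, \autoref{lem:upperboundonsigmamax} and \autoref{srt_thm:glazer-mikulincer}, i.e.\ \autoref{lem:approx-query-sparsity-testing-poly}.

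The gap is the step you flag as the main obstacle, and it is not merely delicate. The paper does not supply it either. Its proof records only the upper bound $\mathrm{coeff}^2_{d_Q}(Q)\le(\|\bm a\|_2^2n^d)^{k+1}((k+1)!)^2$, which points the wrong way for \eqref{eq:assumption1}, and then asserts that $\eta\le2^{-2^n}$ suffices.

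In fact, with your threshold $\eta'(k+1)$, soundness cannot follow from $\eta<\min\{\eps,2^{-2^n}\}$. Here $\eta'$ is the closeness guarantee you actually have for \apxqueryg. By \autoref{lem:main_approx_poly} it is at least $(24d\sqrt n/\srad)^d2^{2d+5}\eta$, and the paper uses $(3\cdot2^{(2n)^{45d}}+1)\eta$. Some amplification of this kind is unavoidable, since \apxqueryg extrapolates from $\ball(\bm 0,\srad)$ to points such as $\bm u^0=\bm 1$.

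Here is a counterexample. Take $k=d=1$ and $\eps\le 2^{-2^n}$, so the hypothesis only says $\eta<\eps$. Let $\eta=\eps/2$, use an exact oracle, and let $f=x_1+\delta x_2$ with $\delta=\sqrt{\pi/2}\,\eps$.
\begin{itemize}
\item This $f$ is $\eps$-far from every $1$-sparse degree-$1$ polynomial.
\item The one-sided low-degree stage accepts, and \apxqueryg returns $f$ exactly.
\item We have $\det H_2(f,\bm u)=\delta(u_1-u_2)^2$ and $\sigma_{\max}(H_2(f,\bm u))\ge f(\bm 1)=1+\delta$, so $\sigma_{\min}(H_2(f,\bm u))\le\delta(u_1-u_2)^2$.
\item A round therefore rejects only if $(u_1-u_2)^2>2\eta'/\delta\ge 10^7\sqrt n$.
\end{itemize}
So this $\eps$-far $f$ is accepted with overwhelming probability. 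The underlying issue is scaling: the soundness signal is proportional to the small coefficients of $P$, which can be of order $\eps$ for an $\eps$-far $P$, while the threshold is $\eta$ times a huge factor.

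A correct version needs a smallness hypothesis on $\eta$ that depends on $\eps$, not just $\eta<\eps$, and that absorbs both the amplification of \apxqueryg and the losses in \eqref{eq:assumption1}. It also needs a lower bound on $\mathrm{coeff}_D(Q)$ that is uniform over degree-$d$ $P$ which are $\Omega(\eps)$-far from $k$-sparse and whose coefficients are bounded via the boundedness of $f$. More conveniently, you could instead lower-bound $\var[\det H_{k+1}(P,\bm u)]$ and use \autoref{srt_thoe:carbery-wright}.

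Your qualitative fallback does deliver this uniformity. The set of such $P$ is compact, and $P\mapsto\Pr_{\bm u}[\sigma_{\min}(H_{k+1}(P,\bm u))\le\tau]$ is upper semicontinuous and decreases to $0$ as $\tau\downarrow0$, so a Dini argument applies. But it yields only an unspecified threshold $\eta_0(n,d,k,\eps)$, not the stated one.

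Smaller points:
\begin{itemize}
\item A constant $\gamma$ in \autoref{lem:upperboundonsigmamax} swamps your per-round success probability of about $1/(CD)$; the paper's $\gamma=0.01$ has the same defect. Take $\gamma\le1/(2CD)$, or choose $\tau$ so that Glazer--Mikulincer gives failure at most $1/2$. The latter costs only a further $2^{-O(D)}$ factor, after which $O(1)$ rounds suffice, so the $\Theta(dk^2)$ repetitions are not forced.
\item Your count $O(k)\cdot O(d)\cdot\Theta(dk^2)$ is $O(d^2k^3)$, not $\widetilde O(dk^3)$. Moreover, each call to \apxqueryg outside $\ball(\bm 0,\srad)$ costs $(d+1)^2$ queries to $f$; the paper's $O(dk^3)$ counts calls to \apxqueryg.
\item \autoref{thm:approximate_low_degree_tester} gives only $\Pr_{\bm p}[|f(\bm p)-h(\bm p)|>\beta]\le\eps$, not the $\ell_1$-closeness you invoke.
\item \autoref{lem:main_approx_poly} controls \apxqueryg only for $\bm p\sim\cD$ in a ball, not at the points $\bm u^i$.
\end{itemize}
The last two are further unproved steps you share with the paper.
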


\begin{algorithm}[h]
\caption{{\testsparse}}\label{srt_alg:testksparse}
{\bf Inputs:} $\eta$-approximate query oracle $\widetilde{f}$ for $f: \R^n \rightarrow \R$, that is bounded in $\ball(\bm 0,R)$, sparsity parameter $ k\in\N$, proximity parameter $\eps \in (0,1)$, degree parameter $d\in\N$.\

{\bf Output:} \textsc{Accept} iff $f$ is $k$-sparse function. \

Run \textsc{ApproxLowDegreeTester}($f,d,\cN(\bm 0,I_n),\eta,\eps,\lrad,\lrad$) (\autoref{alg:low_degree_main_algorithm_approx}). \

\If{\autoref{alg:low_degree_main_algorithm_approx}  rejects}{ 

\Return \textsc{Reject}.\

}

Call $\textsc{Approx-Poly-Sparsity-Test}(\apxqueryg)$ (\autoref{alg:approx-polynomial-sparse-test}). \

\If{\textsc{Approx-Poly-Sparsity-Test}($\apxqueryg$) accepts}
{
\Return \textsc{Accept}.\Comment{$\apxqueryg$ in \autoref{alg:subroutines_approx}}
}

\Else{
\Return \textsc{Reject}.
}

\end{algorithm}

We will prove \autoref{srt_theo:ksparsefinalmain} in \autoref{sec:k-sparsity-tester-analysis}, after developing the necessary machinery. 
\autoref{srt_alg:testksparse} first invokes \apxlowdegtest (\autoref{alg:low_degree_main_algorithm_approx}) to reject functions that are far from any low-degree polynomial.
We record some useful claims about \autoref{alg:low_degree_main_algorithm_approx}, and its subroutines (\autoref{alg:subroutines_approx}) from \cite{arora2} and (its improvement in) \cite{arora2024optimaltestinglinearitysosa}:

\begin{restatable}[{\cite[Theorem 3.6]{arora2024optimaltestinglinearitysosa}}]{theo}{approximateLowDegree}\label{thm:approximate_low_degree_tester}
Let $d\in\mathbb{N}$, for $L>0$, $f: \R^n \to \R$ be bounded in the ball $\ball(\bm{0}, L)$, and for $\eps \in (0,1), R>0$, let $\cD$ be an $(\eps/4, R)$-concentrated distribution.
For $\alpha>0, \beta\geq 2^{(2n)^{O(d)}}(R/L)^d\alpha$, given $\alpha$-approximate query access to $f$, and sampling access to $\mathcal{D}$, there is an one-sided error, $O(d^5+\frac{d^2}{\varepsilon})$-query \apxlowdegtest (\autoref{alg:low_degree_main_algorithm_approx}) which, distinguishes between the case when  $f$ is pointwise $\alpha$-close to some degree-$d$ polynomial and the case when, for every degree-$d$ polynomial $h\colon\mathbb{R}^n\to\mathbb{R}$, $\Pr_{\bm p \sim \cD}[|f(\bm p)-h(\bm p)| > \beta ] > \eps$.
\end{restatable}

\begin{lem}[{\cite[Lemma 4.4]{arora2}}] \label{lem:main_approx_poly}
    Let $\srad=(4d)^{-6}$, $\delta=2^{d+1}\alpha$, as  set in \autoref{alg:subroutines_approx}, and $\lrad>\srad$. 
    If \textsc{ApproxCharacterizationTest} fails with probability at most $2/3$, then $g$ is pointwise $2^{(2n)^{45d}}(\lrad/L)^d\delta$-close to a degree-$d$ polynomial in $\ball(\bm 0,2d\lrad\sqrt{n}/L)$.
    Furthermore, for every point $\bm p \in \ball(\bm 0,2d\lrad\sqrt{n}/L)$, \textsc{ApproxQuery-$g$}($\bm{p}$) well approximates $g(\bm p)$ with high probability, that is, 
    \[\Pr_{\bm p\sim\cD}\Big[|g(\bm{p})-\text{\textsc{ApproxQuery-$g$}($\bm{p}$})|\leq\Big(\frac{24d\lrad\sqrt{n}}{L\srad}\Big)^d2^{d+4}\delta\Big]\geq 1-\frac{\eps}{4}.\]   
\end{lem}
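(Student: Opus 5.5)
The plan follows the template of the additivity analysis (\autoref{srt_lem:g-is-additive-arora2}, \autoref{srt_lem:g-additive-in-ball}, \autoref{srt_claim:g-approxquery-g-are-close}), with the additivity relation $f(\bm p-\bm x)+f(\bm x)$ replaced by the $(d+1)$-st finite-difference characterization of degree-$d$ polynomials along lines. Throughout, every sampled point $\bm p+i\bm q$ with $\bm p\in\ball(\bm 0,\srad)$ and $\srad=(4d)^{-6}$ tiny is distributed almost like a scaled Gaussian. This lets the rejection probability of \textsc{ApproxCharacterizationTest} be transferred to statements about the specific points used below, at the cost of constant factors.

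\textbf{Step 1 (robustness of the median).} For every $\bm p\in\ball(\bm 0,\srad)$, I will show that $\Pr_{\bm q}[|g(\bm p)-g_{\bm q}(\bm p)|>c\delta]$ is a small constant. Take two independent directions $\bm q_1,\bm q_2$ and expand $g_{\bm q_1}(\bm p)$ and $g_{\bm q_2}(\bm p)$ using the characterization identity on the lines through $\bm p+i\bm q_1$ in direction $j\bm q_2$ (and symmetrically). Each such line is a configuration the test checks, so it is $\delta$-consistent except with small probability. Summing with binomial weights gives $|g_{\bm q_1}(\bm p)-g_{\bm q_2}(\bm p)|\le 2^{O(d)}\delta$ with probability bounded away from $0$. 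A collision argument then shows that the median is close to most $g_{\bm q}$.

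\textbf{Step 2 (local approximate polynomiality).} Next I will show that for $\bm p,\bm q$ with all points $\bm p+i\bm q$ ($0\le i\le d+1$) in $\ball(\bm 0,\srad)$, $\bigl|\sum_i(-1)^i\binom{d+1}{i}g(\bm p+i\bm q)\bigr|\le 2^{O(d)}\delta$. To do this, pick an auxiliary direction $\bm z$ such that, by a union bound over the $d+2$ points using Step 1, each $g(\bm p+i\bm q)$ is close to $g_{\bm z}(\bm p+i\bm q)$. Then use the exact algebraic identity that the alternating sum of the $g_{\bm z}$'s is again an alternating sum of $f$-values on lines the test checks.

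\textbf{Step 3 (stability, the main obstacle).} I expect this step to be the hardest, and it is where the $2^{(2n)^{45d}}$ factor arises. The goal is to turn the approximate local identity of Step 2 into pointwise closeness of $g$ to an actual degree-$d$ polynomial on a small ball. I would first fix an interpolation grid of $\binom{n+d}{d}$ points in $\ball(\bm 0,\srad)$ in general position and let $P$ be the unique degree-$d$ interpolant of $g$ there. For any other point, connect it to the grid through a chain of lines, of length polynomial in $n$ and $d$, along which Step 2 applies. Each step is a univariate Lagrange extrapolation that amplifies the error by at most $2^{O(d)}$ times the inverse of a Vandermonde-type conditioning factor. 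Compounding these factors gives a doubly-exponential-in-$d\log n$ bound, which is consistent with the stated constant.

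\textbf{Steps 4 and 5 (extension and the oracle).} To extend to the large ball, I will invoke \autoref{lem:extrapolation_approx} with ratio $2d\lrad\sqrt n/(L\srad)$, which gives the first claim. For the second claim there are two cases.
\begin{itemize}
\item For $\bm p$ in the small ball, \textsc{ApproxQuery-$g$} takes a median of sampled $\widetilde g_{\bm q}(\bm p)$. Each sampled value is within $2^{d+1}\alpha=\delta$ of $g_{\bm q}(\bm p)$, so Step 1 together with a Chernoff bound makes the empirical median $O(\delta)$-close to $g(\bm p)$.
\item For $\bm p$ outside the small ball, both $g$ and the oracle are defined by the same radial interpolation from $d+1$ points inside $\ball(\bm 0,\srad)$. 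The error is therefore amplified by the Lagrange factor $(24d\lrad\sqrt n/(L\srad))^d2^{3}$.
\end{itemize}
Finally, conditioning on $\bm p\in\ball(\bm 0,\lrad)$, which holds with probability at least $1-\eps/4$ by $(\eps/4,\lrad)$-concentration, yields the stated probability.
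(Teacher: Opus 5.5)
\textbf{Main gap: Step 3.} The problem here is not missing estimates. Your argument never uses that $f$ is bounded on $\ball(\bm 0,L)$; this is the hypothesis under which the lemma is invoked, and the only reason $L$ appears. You use $L$ only as a rescaling in Step 4. Without boundedness the lemma is false:
\begin{itemize}
\item Take $d=1$ and let $f$ be additive but not $\R$-linear (a Hamel-basis function).
\item Every check of \textsc{ApproxCharacterizationTest} evaluates to $f(\bm p)-2f(\bm p+\bm q)+f(\bm p+2\bm q)=0$.
\item For every $\bm q$, $g_{\bm q}(\bm p)=2f(\bm p+\bm q)-f(\bm p+2\bm q)=f(\bm p)$, so $g=f$ on $\ball(\bm 0,\srad)$.
\item This $g$ is unbounded on every open set, so it is at infinite distance from every polynomial.
\end{itemize}
The conclusions of your Steps 1 and 2 hold exactly for this $g$ (with $\delta=0$). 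Your chaining argument would therefore ``prove'' it is linear, so it cannot work as described.

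Concretely, a $(d+2)$-term difference relation determines one value from $d+1$ others only when all of them lie on a common arithmetic progression. Starting from a finite grid, chains of such relations reach only a countable set, namely the $\mathbb{Q}$-affine span of the grid. Moreover, Step 2 does not imply that $g$ restricted to a line is close to a univariate polynomial, and that is exactly what ``univariate Lagrange extrapolation along a line'' presupposes. Separately, ``general position'' gives no quantitative control on the conditioning of the multivariate interpolation.

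What is missing is a regularity step that exploits boundedness. For example, show that $g$ is close to an exact solution of the Fr\'echet equation (a generalized polynomial), then use the fact that a generalized polynomial bounded on a ball is an ordinary polynomial. The stated error does not involve $\sup_{\ball(\bm 0,L)}|f|$, so boundedness can enter only qualitatively.

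\textbf{Other steps needing repair.}

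\emph{Step 2.} With a single auxiliary direction $\bm z$, regrouping $\sum_i(-1)^i\binom{d+1}{i}g_{\bm z}(\bm p+i\bm q)$ gives the sums $\sum_i(-1)^i\binom{d+1}{i}f\bigl((\bm p+j\bm z)+i\bm q\bigr)$. These run along the fixed short direction $\bm q$, which is a null event for the test: its directions are Gaussian with variance at least $1$. You need $i$-dependent directions $\bm z_1+i\bm z_2$. The regrouped lines then become $\bm p+j\bm z_1+i(\bm q+j\bm z_2)$, whose base and direction are nearly $\cN(\bm 0,j^2I_n)$, matching the test's last check. The price is that Step 1 must hold for directions of variance $1+i^2$; the $(t^2+1)$-variance checks supply this.

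\emph{Step 1.} Agreement of $g_{\bm q_1}(\bm p)$ and $g_{\bm q_2}(\bm p)$ with probability merely bounded away from $0$ does not make the median close to most $g_{\bm q}(\bm p)$. You need agreement probability above $1/2$, and in fact close to $1$. This comes from the $O(d^2)$ repetitions, which force each individual check to fail with probability $O(1/d^2)$; that is what the union bounds in Steps 2 and 5 need.

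\emph{Step 5.} \textsc{ApproxQuery-$g$-InBall} uses a single sample $\bm q_1$, not an empirical median, so there is no Chernoff amplification. The in-ball guarantee is exactly the Step 1 bound, and it must survive a union bound over the $d+1$ Chebyshev-node calls in the out-of-ball case. Consequently the final probability cannot be obtained from the concentration of $\cD$ alone.
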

To invoke these results, we assume: (i) $f$ is bounded in $\ball(\bm 0, R)$, i.e., we set $L=R$, and (ii) $\alpha=\eta$. Additionally, since we work over standard Gaussians, we set $R=2d\sqrt{n}$.

\subsection{Testing sparsity of polynomials given exact query access}\label{sec:sparsity-poly-exact}

In this section, we design an algorithm for testing sparsity of \emph{polynomial functions} $f: \R^n \rightarrow \R$, by extending the machinery developed in \cite{gjr10} and \cite{bot88} to the real numbers.

Recall the notion of Hankel Matrices associated with polynomials (\autoref{srt_defi:Hankel-matrix-poly}).
\Hankelmatrixpoly*
We note the following observation about Hankel matrices. It follows essentially the same argument as in \cite{bot88,gjr10}, since the decomposition that they use over finite domains also works over the reals. For completeness, a proof of this observation is provided in \Cref{sec:omitted-proofs}.

\begin{restatable}[Generalization of {\cite[Section 4]{bot88}}, and {\cite[Lemma 4]{gjr10}}]{obs}{generalizedhenkelmatrix}\label{obs:Hankel-charcterization}
	Let $f:\R^n\to\R$ be an exactly $k$-sparse polynomial over the reals, i.e., $f(\bm x)=\sum_{i=1}^k a_i M_i(\bm x)$, where $a_1,\ldots,a_k \in \R \setminus \{0\}$, and $M_1,\ldots,M_k$ are the monomials of $f$. Then for all $\ell+1 \leq k$,
	\[\det \paren{H_{\ell+1}(f,\vec{x})}= \sum_{\substack{ S\subseteq [k]\\ |S|=\ell+1}} \prod_{i\in S} a_i \prod_{\substack{i,j\in S \\ i<j}} (M_j(\vec{x}) - M_i(\vec{x}))^2,\]
	is a non-zero polynomial of degree $\leq 2 \binom{\ell+1}{2} \deg(f)$ in $\vec{x}$, while for all $\ell+1>k$, $\det \paren{H_{\ell+1}(f,\vec{x})} \equiv 0$.
\end{restatable}

As a preliminary, we argue that \Cref{obs:Hankel-charcterization} can be used to test whether a polynomial $f$ is $(\leq k)$-sparse, or not (with error probability $0$, in fact!), given exact query access to $f$. 

\begin{lem}\label{srt_lem:perfect-sparsity-tester}
Let $f: \R^n \rightarrow \R$ be a polynomial function (of any degree). Given exact query access to $f$, there is an algorithm that makes $2k+1$ queries to $f$, and exactly tests whether $\|f\|_0 \leq k$ (returns \textsc{Accept}), or $\|f\|_0 > k$ (returns \textsc{Reject}) with error probability $0$.  
\end{lem}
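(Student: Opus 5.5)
The algorithm samples a single point $\bm u \sim \cN(\bm 0, I_n)$. It then queries $f$ at the $2k+1$ points $\bm u^0, \bm u^1, \ldots, \bm u^{2k}$; the point $\bm u^0$ is the all-ones vector. From these values it forms the Hankel matrix $H_{k+1}(f,\bm u)$ of \autoref{srt_defi:Hankel-matrix-poly}, whose entries are exactly $f(\bm u^{i+j})$ for $0\le i,j\le k$. It \textsc{Accepts} iff $\det(H_{k+1}(f,\bm u))=0$. Correctness will follow from \autoref{obs:Hankel-charcterization} together with the fact that a nonzero real polynomial vanishes on a Lebesgue-null (hence Gaussian-null) set.

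\textbf{Completeness.} Suppose $\|f\|_0=m\le k$. If $m=0$, then $f\equiv 0$ and the matrix is zero. Otherwise $f$ is exactly $m$-sparse, and applying \autoref{obs:Hankel-charcterization} with $\ell+1=k+1>m$ gives $\det(H_{k+1}(f,\bm x))\equiv 0$ as a polynomial in $\bm x$. So the determinant is $0$ at every $\bm u$, and we accept with certainty.

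\textbf{Soundness.} Suppose $\|f\|_0=m>k$, so $k+1\le m$. \autoref{obs:Hankel-charcterization} with $\ell+1=k+1$ says $Q(\bm x)\triangleq\det(H_{k+1}(f,\bm x))$ is a nonzero polynomial in $\bm x$. This is the one point to check carefully. In the expansion
\[\sum_{|S|=k+1}\prod_{i\in S}a_i\prod_{i<j\in S}(M_j-M_i)^2,\]
the distinct monomials $M_i$ make each summand nonzero. Cancellation across different $S$ must still be excluded, and the cited observation (via \cite{bot88,gjr10}, proved in \Cref{sec:omitted-proofs}) asserts exactly this non-vanishing, so I will invoke it directly. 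It then remains to show $\Pr_{\bm u}[Q(\bm u)=0]=0$. This is the standard fact that the zero set of a nonzero polynomial on $\R^n$ has Lebesgue measure zero, proved by induction on $n$. Write $Q$ as a polynomial in $x_n$ whose coefficients lie in $\R[x_1,\ldots,x_{n-1}]$, and pick some nonzero coefficient. Off a null set of $(x_1,\ldots,x_{n-1})$, that coefficient is nonzero, and then $Q$ has finitely many roots in $x_n$; Fubini finishes the argument. Since $\cN(\bm 0,I_n)$ is absolutely continuous with respect to Lebesgue measure, $Q(\bm u)\ne 0$ almost surely, and we reject with probability $1$.

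The query count is $2k+1$, one query for each distinct power $\bm u^0,\ldots,\bm u^{2k}$. The error is $0$ in the almost-sure sense, with failure only on a measure-zero set. The main obstacle is the non-identically-vanishing claim for $Q$, which rests on \autoref{obs:Hankel-charcterization}. If that were not available, I would expand $Q$ around a generic point, show that some single $S$ contributes a monomial in $\bm x$ that no other $S$ produces, and use the Vandermonde structure to make this precise.
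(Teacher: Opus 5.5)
Your proposal matches the paper's proof. Both sample one Gaussian point $\bm u$ and accept iff $\det H_{k+1}(f,\bm u)=0$; completeness comes from $\det H_{k+1}(f,\bm x)\equiv 0$, and soundness from that determinant being a nonzero polynomial (by \autoref{obs:Hankel-charcterization}) whose zero set is Gaussian-null. Your count of $2k+1$ queries for $\bm u^0,\dots,\bm u^{2k}$ is the correct one; the paper's proof writes $2k-1$.

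Your caution about cancellation across different $S$ is well placed, because it does occur when $k+1=1$. There $\det H_1(f,\bm x)=f(\bm 1)=\sum_i a_i$, which vanishes identically for, e.g., $f=x_1-x_2$. So your argument and the paper's both implicitly require $k\ge 1$.
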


\begin{proof}
The algorithm is as follows: Sample a single point $\vec{u} \sim \cN(\bm 0, I_n)$. Construct the Hankel matrix $H_{k+1}(f,\vec{u})$ as in \Cref{srt_defi:Hankel-matrix-poly} using $2k-1$ exact queries to $f$. Test whether $\det\paren{H_{k+1}(f,\vec{u)}} \stackrel{?}{=} 0$ and return \textsc{Accept} if the determinant is $0$, and \textsc{Reject} otherwise.
	 
Let $Q(\vec{x}) \equiv \det (H_{k+1}(f,\bm x)) \in \R[\vec{x}]$, so that the test is $Q(\vec{u}) \stackrel{?}{=} 0$. 
If $\|f\|_0 \leq k$, by \Cref{obs:Hankel-charcterization}, 
$Q$ is the zero polynomial and hence the algorithm will always \textsc{Accept} after finding that $Q(\vec{u}) = 0$.
If $\|f\|_0 \geq k + 1$, then $Q$ is a non-zero polynomial, and hence $\Pr_{\vec{u} \sim \cN(\bm 0,I_n)}[Q(\vec{u}) = 0] = 0$; this is because the Lebesgue measure, and hence the probability measure with respect to any continuous distribution, of the zero-set of any non-zero real polynomial is $0$. 
Thus, the algorithm will \textsc{Reject} with probability $1$ over the randomness of $\vec{u}$.
\end{proof}

\subsection{Testing sparsity of polynomials given approximate query access}\label{sec:sparsity-poly-approx}
Now, instead of exact query access, we have $\eta$-approximate query access to a polynomial $f$. 
We make the following observation about the Hankel matrix constructed with the queried values $\tilde{f}$, since the approximate-query oracle guarantees that $|\tilde{f}(\vec{z}) - f(\vec{z})| \leq \eta$ for all points $\vec{z} \in \R^n$. 
We provide a proof of this observation in \Cref{sec:omitted-proofs}.

\begin{restatable}{obs}{approxhankelstructure} \label{obs:approx-hankel-structure}
	Let $f: \R^n \rightarrow \R$ and let $\tilde{f}$ be an $\eta$-approximate query oracle to $f$. Then for any $t \geq 1$ and $\vec{u} \in \R^n$, we can express
	\[
		H_t(\tilde{f},\vec{u}) = H_t(f,\vec{u}) + E_t(\vec{u}),
	\]
	where $E_t(\vec{u})$ is a Hankel-structured, \emph{noise matrix}, such that $\|E_t(\vec{u})\|_\infty \leq \eta$, and $\|E_t(\vec{u})\|_{\rm op} \leq \eta t$.
\end{restatable}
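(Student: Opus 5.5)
Define the noise matrix entrywise and read off its structure. Since $H_t(\tilde f,\vec u)$ has $(i,j)$-entry $\tilde f(\vec u^{i+j})$ for $0\le i,j\le t-1$, set
\[
E_t(\vec u)\triangleq H_t(\tilde f,\vec u)-H_t(f,\vec u).
\]
Its $(i,j)$-entry is $e_{i+j}\triangleq \tilde f(\vec u^{i+j})-f(\vec u^{i+j})$, which depends only on $i+j$. So $E_t(\vec u)$ is Hankel-structured, and the decomposition $H_t(\tilde f,\vec u)=H_t(f,\vec u)+E_t(\vec u)$ holds by construction.

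One point needs care. The oracle could in principle return different values when the same point is queried twice. We assume, as the algorithm does, that each point $\vec u^m$ for $m=0,\dots,2t-2$ is queried once and the answer is reused in every entry with $i+j=m$. This is what gives the Hankel structure and keeps the query count at $2t-1$.

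The entrywise bound is immediate from Definition~\ref{defi:etaapproxquery}: $|e_m|=|\tilde f(\vec u^m)-f(\vec u^m)|\le\eta$ for every $m$, so $\|E_t(\vec u)\|_\infty\le\eta$. Here $\|\cdot\|_\infty$ is read as the maximum absolute entry, the "supremum" norm in the paper's notation.

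For the operator norm, bound it by the Frobenius norm:
\[
\|E_t(\vec u)\|_{\rm op}\le\|E_t(\vec u)\|_{\rm F}=\Big(\sum_{i,j=0}^{t-1}e_{i+j}^2\Big)^{1/2}\le\sqrt{t^2\eta^2}=\eta t.
\]
Equivalently, $\|E\|_{\rm op}\le\sqrt{\|E\|_1\|E\|_\infty}$, where the max row sum and max column sum are each at most $t\eta$. Either route gives the claimed bound. There is no real obstacle here; the only subtlety is the consistency convention on repeated queries noted above.
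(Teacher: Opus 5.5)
Your proof is correct and follows the paper's argument essentially verbatim: write the error entry $\tilde f(\vec u^{m}) - f(\vec u^{m})$, note that it depends only on $i+j$, bound it by $\eta$, and then bound $\|E_t(\vec u)\|_{\rm op}$ by $\|E_t(\vec u)\|_{\rm F} \le \eta t$. You also point out that each $\vec u^{m}$ must be queried once and its answer reused, which is what makes the error matrix genuinely Hankel; the paper assumes this implicitly and your proof states it.
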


We also need the following probabilistic upper bound on the eigenvalues of the Hankel matrix of a polynomial, that was briefly introduced in \autoref{sec:k-sparse-intro}. We restate it here for convenience.

\upperboundonsigmamax*

\begin{proof}[Proof of \autoref{lem:upperboundonsigmamax}]
Let $H_{\vec{u}}$ denote $H_t(f,\vec{u})$. Then for any $\vec{z}=(z_1,\hdots,z_t)^\top \in \R^t$, we have
\begin{align*}
    |\vec{z}^\top H_{\vec{u}} \vec{z}| &= \left|\sum_{i \in [t]} \sum_{j \in [t]} [H_{\vec{u}}]_{i,j} z_i z_j\right| \leq \sum_{i \in [t]} \sum_{j \in [t]} \left|f(\vec{u}^{i+j-2})\right| |z_i| |z_j| \tag{$\because [H_{\bm u}]_{i,j}=f(\bm u^{i+j-2})$}\\
    &\leq\! \sum_{i \in [t]} \sum_{j \in [t]} |z_i| |z_j| \left(\sum_{p \in [k]}\! |a_p| |M_p(\vec{u}^{i+j-2})|\right) \!\!= \vec{\overline{z}}^\top V^\top
		D V
 \vec{\overline{z}},\mbox{ where }\overline{\bm z}=(|z_1|,\hdots,|z_t|)^\top\!\!\in\!\R_{\geq 0}^t,\\
 V &= \begin{pmatrix}
				1 & 1 & \hdots & 1 \\
				|M_1(\bm u)| & |M_2(\bm u)| & \hdots & |M_k(\bm u)| \\
				|M_1(\bm u)|^2 & |M_2(\bm u)|^2 & \hdots & |M_k(\bm u)|^2 \\
				\vdots & \vdots & \ddots & \vdots \\
				|M_1(\bm u)|^{t-1} & |M_2(\bm u)|^{t-1} & \hdots & |M_k(\bm u)|^{t-1}
		\end{pmatrix}^\top, 
        \mbox{ and }
D = \begin{pmatrix}
			|a_1| & 0 & \hdots & 0 \\
			0 & |a_2| & \hdots & 0 \\
			\vdots & \vdots & \ddots & \vdots \\
			0 & 0 & \hdots & |a_k|
		\end{pmatrix},
\end{align*}
wherein in the last step, we use the same decomposition as in the proof of \Cref{obs:Hankel-charcterization}, but applied to $H_t(|f|,\vec{u})$, with $|f|(\vec{u}) \triangleq \sum_{p \in [k]} |a_p| |M_p(\vec{u}^{i+j-2})|$.

Thus, if $\|\vec{\overline{z}}\|_2 = \|\vec{z}\|_2 \leq 1$, we have
\[
\sigma_{\max}(H_{\vec{u}}) \leq \vec{\overline{z}} V^\top (D^{\frac12})^\top D^{\frac12} V \vec{\overline{z}} = \left\|D^{\frac12} V \vec{\overline{z}}\right\|_2^2 \leq \|D^{\frac12} V\|_{\rm op}^2 \|\vec{\overline{z}}\|_2^2 \leq \|D^{\frac12} V\|_{\rm op}^2 \leq \|D^{\frac12} V\|_{\rm F}^2\text{, where}
\]
\[D^{\frac12} V = 
\begin{pmatrix}
|a_1| & |a_1| |M_1(\vec{u})| & \cdots & |a_1| |M_1(\vec{u})|^{t-1}\\
|a_2| & |a_2| |M_2(\vec{u})| & \cdots & |a_2| |M_2(\vec{u})|^{t-1}\\
\vdots & \vdots & \ddots & \vdots\\
|a_k| & |a_k| |M_k(\vec{u})| & \cdots & |a_k| |M_k(\vec{u})|^{t-1}
\end{pmatrix}.\] 
By $U \geq \max\left\{2, \maxop_{p \in [k]} |M_p(\vec{u})|\right\}$, we get 
$\|D^{\frac12} V\|_{\rm F}^2 \leq \sum_{r=0}^{t-1} \|\vec{a}\|_2^2 U^{2r} = \|\vec{a}\|_2^2 \left(\frac{U^{2t}-1}{U-1}\right) \leq \|\vec{a}\|_2^2 \cdot U^{2t}$.

Finally, for any $\bm \alpha\triangleq(\alpha_1,\ldots,\alpha_n) \in \N^n$, we have $\E_{\vec{u} \sim \cN(\bm 0,I_n)}[|\vec{u}^{\bm \alpha}|] = \prod_{i \in [n]}\E_{u_i \sim \cN(0, 1)}[|u_i^{\alpha_i}|]$. 
From \cite[Equations (1), (4), and (15)]{elandt1961folded}, for $s \in \N_{>0}$, we get
\begin{equation}
    \E_{u \sim \cN(0,1)}[|u^{s}|] =\begin{cases}
        \frac{s!}{2^{s/2} (s/2)!} \leq \frac{2^{s/2} \lfloor s/2 \rfloor!}{\sqrt{\pi}} &\text{, if $s$ is even (and $> 0$), and }\\
        2^{\lfloor s/2 \rfloor} \left(\lfloor \frac{s}{2} \rfloor\right)!\sqrt{\frac{2}{\pi}} = \frac{2^{s/2} \lfloor s/2 \rfloor!}{\sqrt{\pi}} &\text{, if $s$ is odd.}
    \end{cases}
\end{equation}

The inequality in the ``$s$ is even'' case follows from: (i) $\frac{s!}{2^{s}(s/2)!} = \frac{\Gamma\left(\frac{s}{2} + \frac{1}{2} \right)}{\sqrt{\pi}}$ \cite{wolframalpha_eq1}, and (ii) the monotonicity of the Gamma function in $(\frac12, \infty)$ which implies $\Gamma\left(\frac{s}{2} + \frac12\right) \leq \Gamma\left(\frac{s}{2} + 1\right) = \frac{s}{2}!$ for even $s$.

Since $\E_{u \sim \cN(0,1)}[|u^0|] = 1$, for any $\bm \alpha = (\alpha_1,\ldots,\alpha_n) \in \N^n$ with $\|\bm \alpha\|_1 = \sum_{i \in [n]} \alpha_i \leq d$, we have 
\[\E_{\vec{u} \sim \cN(\bm 0,I_n)}[|\vec{u}^{\bm \alpha}|] \leq \prod_{i : \alpha_i \neq 0} \left(\frac{2^{\alpha_i/2}\lfloor \alpha_i /2 \rfloor!}{\sqrt{\pi}}\right) \leq 2^{d/2} \lceil d/2 \rceil !.\]
By a similar argument, $\mathrm{Var}[|\vec{u}^{\bm \alpha}|] \leq \E_{\vec{u}}[|\bm u^{2\bm\alpha}|] \leq 2^d \cdot d!$. 

Thus, by Chebyshev's inequality, for any $d \geq 0$, $k \geq 1$, and $\bm \alpha \in \N^n$ with $\|\bm \alpha\|_1 \leq d$,
\[\Pr_{\vec{u} \sim \cN(\bm 0,I_n)}\left[|\vec{u}^{\bm \alpha}| \geq 2^{d/2} \lceil d/2 \rceil ! + \sqrt{\frac{k}{\gamma}} 2^{d/2} \sqrt{d!}\right] \leq \frac{\gamma}{k}.\]

Thus, from the above discussion, $M_p(\bm{u}) \leq \left(2^{d/2} \lceil d/2 \rceil ! + \sqrt{\frac{k}{\gamma}} 2^{d/2} \sqrt{d!}\right)$ for every $p \in [k]$, w.p $\geq 1 - \gamma$ (using the union bound), and hence $\sigma_{\max}(H_t(f,\bm{u})) \leq \|\bm{a}\|_2^2 \left(2^{d/2} \lceil d/2 \rceil ! + \sqrt{\frac{k}{\gamma}} 2^{d/2} \sqrt{d!}\right)^{2t}$.
\end{proof}

We now present and analyze the correctness of \autoref{alg:approx-polynomial-sparse-test}:

\begin{algorithm}
	\caption{\textsc{Approx-Poly-Sparsity-Test}: Approx-query sparsity test for polynomials}\label{alg:approx-polynomial-sparse-test}
	{\bf Input:} $\eta$-approximate query oracle $\tilde{f}$ to a polynomial $f:\R^n\to\R$ of total degree $\leq d$, sparsity parameter $k \in \N$. \\

    {\bf Output:} \textsc{Accept} iff $f$ is a $k$-sparse polynomial. \\
 
	Let $T \gets 4\min\{C,1\}\cdot d(k+1)^2$, where $C$ is the absolute constant in \Cref{srt_thm:glazer-mikulincer}.\\
	\For{$t \in [T]$}{
		Sample $\vec{u}_t = (u_{t,1},\ldots,u_{t,n}) \sim \cN(\bm 0, I_n)$.\\
		\For{$i = 0,\ldots,2k$}{
			Compute $\vec{u}_t^i\triangleq (u_{t,1}^i,\hdots,u_{t,n}^i)$, and $\tilde{f}(\vec{u}_t^i)$.\\
		}
		Compute the Hankel matrix $H_{k+1}(\tilde{f},\vec{u}_t)$ as in \Cref{srt_defi:Hankel-matrix-poly}.\\
		Compute the smallest singular value $\sigma_{\min}^{(t)}$ of $H_{k+1}(\tilde{f},\vec{u}_t)$.\\
	}
	\If{$\sigma_{\min}^{(t)} \leq \eta(k + 1)$ for all $t \in [T]$} {
		\Return \textsc{Accept} (($\leq k$)-sparse).\\
	} 
	\Else {
		\Return \textsc{Reject} (($>k$)-sparse).\\
	}
\end{algorithm}

\begin{theo}[Sparsity testing of polynomials with approximate queries]\label{lem:approx-query-sparsity-testing-poly}
Given $\eta$-approximate query access $\widetilde{f}$ to a polynomial $f: \R^n \rightarrow \R$ of total degree $\leq d$, assuming $\eta \leq \frac{\left(\mathrm{coeff}_{d_Q}(Q)\right)^{\frac{1}{2}}}{(2(k+1))2^{\Theta(k^3 d)}\left(\sigma_{\max}(H_{\vec{u}})\right)^{k}}$, 
wherein $H_{\bm u}=H_{k+1}(f,\bm u), Q_{\bm u}=(\det(H_{\bm u}))^2, d_Q=\deg(Q)$, and  $\mathrm{coeff}^2_{d_Q}(Q)$ is as in \autoref{srt_thm:glazer-mikulincer}, {\rm\textsc{Approx-Poly-Sparsity-Test}}  (\autoref{alg:approx-polynomial-sparse-test}) performs at most {$O(dk^3)$} queries and guarantees:
\begin{enumerate}
	\item[(i)] If $f$ has sparsity at most $k$, the algorithm always \textsc{Accepts}, and
 
	\item[(ii)] If $f$ has sparsity $> k$, the algorithm \textsc{Rejects} with probability at least $\frac{2}{3}$.
\end{enumerate} 
\end{theo}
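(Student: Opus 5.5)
The proof has three parts: the query count, completeness via Weyl's inequality, and soundness via the Glazer--Mikulincer anti-concentration bound applied to $Q_{\bm u}$. Throughout, write $H_{\bm u}=H_{k+1}(f,\bm u)$ and $\widetilde H_{\bm u}=H_{k+1}(\tilde f,\bm u)$.

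\emph{Query count.} Each of the $T=O(dk^2)$ rounds queries $\tilde f$ at the $2k+1$ points $\bm u_t^0,\ldots,\bm u_t^{2k}$. This gives $O(dk^3)$ queries in total.

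\emph{Completeness.} Suppose $\|f\|_0\le k$. By \Cref{obs:Hankel-charcterization}, $\det(H_{\bm u})\equiv 0$ as a polynomial in $\bm u$. So $H_{\bm u}$ is singular at every point, and its smallest singular value is $0$. By \Cref{obs:approx-hankel-structure}, $\widetilde H_{\bm u}=H_{\bm u}+E$ with $E$ symmetric (Hankel) and $\|E\|_{\rm op}\le \eta(k+1)$. Both matrices are symmetric, so their singular values are the absolute values of their eigenvalues. Weyl's inequality (\Cref{srt_thm:weyls-ineq}) moves each eigenvalue by at most $\|E\|_{\rm op}$. The zero eigenvalue of $H_{\bm u}$ therefore becomes an eigenvalue of $\widetilde H_{\bm u}$ of absolute value at most $\eta(k+1)$. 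Hence $\sigma_{\min}(\widetilde H_{\bm u})\le \eta(k+1)$ deterministically in every round, and the algorithm always accepts.

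\emph{Soundness.} Suppose $\|f\|_0>k$. By \Cref{obs:Hankel-charcterization}, $\det(H_{\bm u})$ is a nonzero polynomial of degree at most $k(k+1)d$. Hence $Q_{\bm u}=\det(H_{\bm u})^2$ is a nonzero polynomial of degree $d_Q\le 2k(k+1)d$.

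Within one round, the goal is to show that $\sigma_{\min}(\widetilde H_{\bm u})>\eta(k+1)$ with constant probability. Weyl's inequality again gives $\sigma_{\min}(\widetilde H_{\bm u})\ge \sigma_{\min}(H_{\bm u})-\eta(k+1)$, so it suffices to have $\sigma_{\min}(H_{\bm u})>2\eta(k+1)$. To lower-bound $\sigma_{\min}$, use
\[
|\det H_{\bm u}|=\prod_i\sigma_i(H_{\bm u})\le \sigma_{\min}(H_{\bm u})\,\sigma_{\max}(H_{\bm u})^{k},
\]
which gives $\sigma_{\min}(H_{\bm u})\ge Q_{\bm u}^{1/2}/\sigma_{\max}(H_{\bm u})^k$. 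It then suffices to show two things with constant probability:
\begin{itemize}
\item[(a)] $Q_{\bm u}$ is not too small;
\item[(b)] $\sigma_{\max}(H_{\bm u})$ is not too large.
\end{itemize}

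For (a), apply \Cref{srt_thm:glazer-mikulincer} to $Q$ with $t=0$:
\[
\Pr\bigl[|Q_{\bm u}|\le \epsilon'\bigr]\le C d_Q\bigl(\epsilon'/\mathrm{coeff}_{d_Q}(Q)\bigr)^{1/d_Q}.
\]
Choose $\epsilon'=\mathrm{coeff}_{d_Q}(Q)\cdot 2^{-\Theta(k^3d)}$. Since $d_Q=O(k^2d)$, we get $(2^{-\Theta(k^3 d)})^{1/d_Q}=2^{-\Theta(k)}$. This kills the prefactor $Cd_Q$ and makes the probability at most a small constant. The $2^{\Theta(k^3d)}$ factor in the hypothesis on $\eta$ is exactly this cost.

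For (b), use \Cref{lem:upperboundonsigmamax} with a constant $\gamma$, or simply interpret the $\sigma_{\max}(H_{\bm u})$ in the hypothesis as this high-probability bound.

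Under the assumed bound on $\eta$, on the good event (a) and (b) hold together, and combining them gives
\[
\sigma_{\min}(H_{\bm u})\ge \frac{\bigl(\mathrm{coeff}_{d_Q}(Q)\,2^{-\Theta(k^3d)}\bigr)^{1/2}}{\sigma_{\max}(H_{\bm u})^k}>2\eta(k+1).
\]
So each round rejects with some constant probability $p$. The $T$ rounds use independent samples $\bm u_t$, so the probability that all of them accept is at most $(1-p)^T\le 1/3$.

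\emph{Main obstacle.} The hard step is the anti-concentration of $Q$. The quantity $\mathrm{coeff}_{d_Q}(Q)$ involves only the top-degree part of $Q$, and the hypothesis must absorb the exponential loss $(\cdot)^{1/d_Q}$. The exponent $2^{\Theta(k^3d)}$ has to be matched precisely against $d_Q$ and the Carbery--Wright prefactor. A second technical wrinkle is that $\sigma_{\max}(H_{\bm u})$ is random, so the hypothesis on $\eta$ must be read with $\sigma_{\max}$ replaced by its high-probability bound. The failure probabilities of (a) and (b) are then combined by a union bound within each round.
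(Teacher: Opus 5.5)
Your argument has the same skeleton as the paper's. Completeness comes from Weyl's inequality applied to the zero eigenvalue of $H_{\bm u}$. Soundness combines Weyl, the inequality $\det(H_{\bm u})^2\le\sigma_{\min}(H_{\bm u})^2\sigma_{\max}(H_{\bm u})^{2k}$, Glazer--Mikulincer applied to $Q$, and independent repetition.

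The difference is how you calibrate the per-round event. The paper only asks for $\Pr[Q(\bm u)\le\Delta]\le 1-\frac{1}{Cd_Q}$, which corresponds to a threshold of about $\mathrm{coeff}_{d_Q}(Q)\,(Cd_Q)^{-d_Q}$. Each round then rejects with probability only $\frac{1}{Cd_Q}$, and the proof needs $T\ge 2Cd_Q$ rounds to amplify. You aim instead for a constant rejection probability per round. That costs an extra factor $2^{-O(d_Q)}$ in the threshold, which the $2^{\Theta(k^3d)}$ slack in the hypothesis absorbs. This buys two things:
\begin{enumerate}
\item Only $O(1)$ rounds are needed, so the conclusion does not depend on the constant in $T$.
\item You can union-bound with the failure event of \autoref{lem:upperboundonsigmamax} at a constant $\gamma$. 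The paper plugs the random quantity $\sigma_{\max}(H_{\bm u})$ into $\Delta$ as if it were fixed. With only a $\frac{1}{Cd_Q}$ margin, a union bound there would force $\gamma<\frac{1}{Cd_Q}$.
\end{enumerate}

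One quantitative step needs correcting, though the paper shares it. The bound $(2^{-\Theta(k^3d)})^{1/d_Q}=2^{-\Theta(k)}$ beats the prefactor $Cd_Q=\Theta(k^2d)$ only when $\log d=O(k)$. In general the threshold must be $(c/(Cd_Q))^{d_Q}$, i.e.\ $2^{-\Theta(k^2d\log(kd))}$. The paper's claim $\mathsf{UB}_{d,k}\ge 2^{-\Theta(k^3d)}$ makes the same implicit assumption, so this is not a gap relative to the paper's proof. You should still state the exponent, or the regime $\log d=O(k)$, explicitly.
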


\begin{proof}
The query complexity follows directly from \autoref{alg:approx-polynomial-sparse-test}, since the tester performs at most $(2k+1)T$ queries for $T \leq O(k^2 d)$. It remains to argue the completeness and soundness guarantees.
 
For any $\vec{u} \in \R^n$, let us denote $H_{k+1}(\tilde{f}, \vec{u})$ by $\Tilde{H}_{\bm u}$ and $H_{k+1}(f, \vec{u})$ by $H_{\bm u}$. Observe, since $\Tilde{H}_{\vec{u}}$ and $H_{\vec{u}}$ are symmetric matrices, their singular values are just the absolute values of their eigenvalues, i.e., $\sigma_{\min}(\Tilde{H}_{\vec{u}}) = \min_{i \in [k+1]}|\lambda_{i}(\Tilde{H}_{\vec{u}})|$ and $\sigma_{\min}(H_{\vec{u}}) = \min_{j \in [k+1]} |\lambda_{j}(H_{\vec{u}})|$.

\paragraph*{Completeness:}We have $\|f\|_0 \leq k$. 
By \Cref{obs:Hankel-charcterization}, we get $\det\paren{H_{k+1}(f, \vec{x})}\equiv 0$, giving us: for any $\vec{u} \in \R^n$, $\det\paren{H_{\vec{u}}} = 0$, and hence, $\lambda_{i^\ast}(H_{\vec{u}}) = 0$ for some $i^\ast \in [k+1]$. 
From \Cref{obs:approx-hankel-structure}, we can write $\Tilde{H}_{\vec{u}} = H_{\vec{u}} + E$, for a symmetric matrix $E$ with $\|E\|_{\rm op} \leq \eta(k + 1)$. 
Then, by Weyl's inequality (\Cref{srt_thm:weyls-ineq}), we have $|\lambda_{i^\ast}(\Tilde{H}_{\vec{u}}) - \lambda_{i^\ast}(H_{\vec{u}})| \leq \eta(k + 1)$, which implies $\sigma_{\min}(\Tilde{H}_{\vec{u}}) \leq |\lambda_{i^\ast}(\Tilde{H}_{\vec{u}})| \leq \eta(k + 1)$. 
So, \autoref{alg:approx-polynomial-sparse-test} will always \textsc{Accept} (Lines 10--11).

Observe: For any non-singular symmetric matrix $A \in \R^{(k+1)\times (k+1)}$, we have 
\[ \left(\sigma_{\min}(A)\right)^{2(k+1)} \leq (\det(A))^2 = \prod_{i=1}^{k+1} (\sigma_i(A))^2 \leq (\sigma_{\min}(A))^{2} (\sigma_{\max}(A))^{2k},\]
where $\sigma_{\max}(A)$ is the largest magnitude eigenvalue of $A$ and $\sigma_{\min}(A)$ is the smallest magnitude eigenvalue of $A$. Thus, $\sigma
_{\min}(A) \leq \Xi$ implies $\det(A)^2 \leq (\Xi)^{2}\sigma_{\max}(A)^{2k}$.

\paragraph*{Soundness:} We have $\|f\|_0 > k$. 
Let $Q(\vec{x}) \triangleq \paren{\det\paren{H_{k+1}(f,\vec{x})}}^2$. 
By \Cref{obs:Hankel-charcterization}, $Q$ is a non-zero polynomial with total degree $\leq 2 (k+1)^2 d$, and $(\det(H_{\vec{u}}))^2 = Q(\vec{u})$ for all $\vec{u} \in \R^n$.

Let $\vec{u} \sim \cN(\bm 0, I_n)$, and $\sigma_{\min}(\Tilde{H}_{\vec{u}}) \leq \eta(k+1)$. 
Then, as in completeness, by Weyl's Inequality (\Cref{srt_thm:weyls-ineq}), and \Cref{obs:approx-hankel-structure}, we have $\sigma_{\min}(H_{\vec{u}}) \leq 2\eta(k+1)$. 
Thus, we must have 
\[Q(\vec{u}) = (\det(H_{\vec{u}}))^2 \leq (2\eta(k+1))^{2} \sigma_{\max}(H_{\vec{u}})^{2k} \triangleq \Delta,\]
as long as $\det(H_{\vec{u}}) \neq 0$ (which will happen with probability $1$ over the choice of $\vec{u}$, since $H_{\bm u}\not\equiv 0$).

First, suppose that $f$ is not a constant polynomial. 
Then, $d_Q \triangleq \deg(Q) \geq 2$ by construction, since it is the square of a non-constant polynomial. 
Now, we may invoke an anti-concentration result.

From \Cref{srt_thm:glazer-mikulincer}, we have $\Pr_{\vec{u} \sim \cN(\bm 0, I_n)}\left[Q(\vec{u}) \leq \Delta\right] \leq Cd_Q\left(\frac{\Delta}{\mathrm{coeff}_{d_Q}(Q)}\right)^{1/d_Q}$. Observe,

\begin{align*}
Cd_Q\left(\frac{\Delta}{\mathrm{coeff}_{d_Q}(Q)}\right)^{1/d_Q} \leq 1 - \frac{1}{C d_Q} \iff \left(\frac{\Delta}{\mathrm{coeff}_{d_Q}(Q)}\right) \leq \left(\frac{1}{C d_Q} - \left(\frac{1}{C d_Q}\right)^2\right)^{d_Q}.
\end{align*}

Assuming $C \geq 1$, so that $1/(C d_Q) \leq 1/2$ (since $d_Q \geq 2$)~\footnote{We renormalize the constant from \Cref{srt_thm:glazer-mikulincer} appropriately.}, we have

\begin{align*}
	\ln\left(\left(\frac{1}{C d_Q} - \left(\frac{1}{C d_Q}\right)^2\right)^{d_Q}\right) &= d_Q \left[\ln\left(1 - \frac{1}{C d_Q}\right) - \ln(C d_Q)\right] \geq d_Q\left[\frac{-1/(C d_Q)}{1 - 1/(C d_Q)} - \ln(C d_Q)\right]\\
	&\geq -d_Q \ln(C d_Q) - \frac{2}{Cd_Q} \geq -(d_Q \ln(C d_Q) + 1),
\end{align*}
where we use the inequality $\ln(1+z) \geq \frac{z}{1+z}$ for $z > -1$, and the fact that $1/(C d_Q) \leq 1/2$. 
Then 
\[\left(\! \frac{1}{C d_Q} \! - \! \left(\! \frac{1}{C d_Q} \! \right)^2 \! \right)^{d_Q} \!\!\!\!\!\!\! \geq \! \exp\left(-(d_Q \ln(C d_Q) + 1)\right) \! \geq \underbrace{\exp\left(-2 (k+1)^2 d \ln(2C (k+1)^2 d) + 1\right)}_{\triangleq \mathsf{UB}_{d,k}} \geq 2^{-\Theta(k^3 d)}.\]

Thus, if $\Delta \leq \mathsf{UB}_{d,k} \cdot \mathrm{coeff}_{d_Q}(Q)$, we will have $\Pr_{\vec{u} \sim \cN(\bm 0,I_n)}[Q(\vec{u}) \leq \Delta] \leq 1 - \frac{1}{C d_Q}$.
If $\eta \leq \frac{\left(\mathrm{coeff}_{d_Q}(Q)\right)^{\frac{1}{2}}}{(2(k+1))2^{\Theta(k^3 d)}\left(\sigma_{\max}(H_{\vec{u}})\right)^{k}} \leq \frac{\left(\mathsf{UB}_{d,k} \cdot \mathrm{coeff}_{d_Q}(Q)\right)^{\frac{1}{2}}}{(2(k+1))\left(\sigma_{\max}(H_{\vec{u}})\right)^{k}}$, then $\Delta = (2\eta(k+1))^{2}\sigma_{\max}(H_{\vec{u}})^{2k} \leq \mathsf{UB}_{d,k} \cdot \mathrm{coeff}_{d_Q}(Q)$.

With the above condition satisfied, we will have
\[
\Pr_{\vec{u}}\left[\sigma_{\min}(\Tilde{H}_{\vec{u}}) \leq \eta(k+1)\right] \leq \Pr_{\vec{u}}\left[\sigma_{\min}(H_{\vec{u}}) \leq 2\eta(k+1)\right] \leq \Pr_{\vec{u}}[Q(\vec{u}) \leq \Delta] \leq 1 - \frac{1}{C d_Q}. 
\]

Now, consider the operation of \autoref{alg:approx-polynomial-sparse-test}. In every iteration $t \in [T]$ for $T = 4 \min\{C, 1\} \cdot d(k+1)^2 \geq 2 C d_Q$, we will have $\Pr_{\vec{u}_t}\left[\sigma_{\min}(\Tilde{H}_{\vec{u}_t}) \leq \eta(k+1)\right] \leq 1 - 1/(C d_Q)$ by the above argument. 
Then, over $T$ independent rounds, the probability that this event occurs every time, i.e., $f$ is accepted, is 
\[\prod_{t \in [T]} \Pr_{\vec{u}_t}\left[\sigma_{\min}(\Tilde{H}_{\vec{u}_t}) \leq \eta(k+1)\right] \leq \left(1 - \frac{1}{C d_Q}\right)^T \leq\frac{1}{e^2} < \frac{1}{3}\]
for our choice of $T$. Thus, the algorithm will \textsc{Reject} with probability at least $\frac{2}{3}$.
\end{proof}

\subsection{Analysis of \texorpdfstring{$k$}{k}-Sparsity Low Degree Tester}\label{sec:k-sparsity-tester-analysis}

We are now ready to analyze our sparse low degree tester (\autoref{srt_alg:testksparse}).

\begin{proof}[Proof of \Cref{srt_theo:ksparsefinalmain}]

Let us start with completeness.

\textbf{Completeness:} Since $f$ is a $k$-sparse,\ degree-$d$ polynomial, 
from \autoref{thm:approximate_low_degree_tester}, we have: 
\textsc{Approx-Low-Degree-Tester} always accepts $f$, and hence, from \autoref{lem:main_approx_poly}, we have that $g$ is pointwise $2^{(2n)^{45d}}R^d 2^{d+1}\eta$-close to a degree-$d$ polynomial, say $h$, in $\ball(\bm 0, R)$. Moreover,
\[\Pr_{\bm p\sim\cD}\Big[|g(\bm{p})-\text{\textsc{ApproxQuery-$g$}($\bm{p}$})|\leq\Big(\frac{24d\sqrt{n}}{\srad}\Big
)^d2^{2d+5}\eta \mid \bm p\in \ball(\bm 0,2d\sqrt{n})\Big]\geq 1-\frac{\eps}{2}.\]
\begin{equation}\label{eq:h-approx-g-close}
    \implies\Pr_{\bm p\sim\cD}\Big[|h(\bm{p})-\text{\textsc{ApproxQuery-$g$}($\bm{p}$})|\leq\left(\Big(\frac{24d\sqrt{n}}{\srad}\Big)^d2^{2d+5}+ 2^{(2n)^{45d}}R^d 2^{d+1}\right)\eta\Big]\geq 1-\frac{\eps}{4}.
\end{equation}

By setting $\eta=1/2^{2^n}$, the conditions of \Cref{lem:approx-query-sparsity-testing-poly} are met, giving us that \textsc{Approx-Poly-Sparsity-Test(ApproxQuery-}$g$) (\autoref{alg:approx-polynomial-sparse-test}) will \textsc{Accept} with probability $\geq 1-\eps/4$. 
So with probability at least $1-\eps/4$, \textsc{Test-$k$-Sparse} (\autoref{srt_alg:testksparse}) will accept $f$.

\textbf{Soundness:} Let $f$ be $\eps$-far from all $k$-sparse, degree-$d$ polynomials. We will show that if \autoref{srt_alg:testksparse} accepts with probability at least $1/3$, then $f$ must be $\eps$-close to some $k$-sparse, degree-$d$ polynomial.
From the premise, \textsc{Approx-Low-Degree-Tester} (\autoref{alg:low_degree_main_algorithm_approx}) also accepts $f$ with probability at least $1/3$.
Then, as in completeness, from \autoref{lem:main_approx_poly}, we have:
$g$ is pointwise $2^{(2n)^{45d}}R^d 2^{d+1}\eta$-close to some degree-$d$ polynomial, say $h(\vec{x}) = \sum_{\|\bm M_i\|_1\leq d} a_i \bm M_i(\vec{x})$, 
i.e., for all $\bm p\in \ball(\bm 0,R)$, $|g(\bm p)-h(\bm p)|\leq 2^{(2n)^{45d}}R^d 2^{d+1}\eta$, and \eqref{eq:h-approx-g-close} still holds.

From the premise, \textsc{Approx-Poly-Sparsity-Test}($\apxqueryg$) (\autoref{alg:approx-polynomial-sparse-test}) does not reject with probability at least $1/3$. In this case, as long as the closeness of $\apxqueryg$ and $h$ satisfies the assumption in \Cref{lem:approx-query-sparsity-testing-poly}, $h$ will be $k$-sparse. The assumption is: 
\begin{equation}\label{eq:assumption1}
    (3\cdot 2^{(2n)^{45d}}+1)\eta\leq \frac{\left(\mathrm{coeff}_{d_Q}(Q)\right)^{\frac{1}{2}}}{(2(k+1))2^{\Theta(k^3 d)}\left(\sigma_{\max}(H_{\vec{u}})\right)^{k}},
\end{equation}
wherein $H_{\bm u}=H_{k+1}(h,\bm u), Q({\bm u})=(\det(H_{\bm u}))^2, d_Q=\deg(Q)$, $\mathrm{coeff}^2_{d_Q}(Q)\leq(\|\bm a\|_2^{2}n^d)^{(k+1)}((k+1)!)^2$, and $\sigma_{\max}(H_{\vec{u}})$ may be bounded using \autoref{lem:upperboundonsigmamax} (with $\gamma=0.01,t=k+1$), i.e., 
\[\Pr_{\vec{u} \sim \cN(\bm 0, I_n)}\left[\sigma_{\max}(H_{\bm u}) \geq \|\vec{a}\|_2^2 \left(2^{d/2} \lceil d/2 \rceil ! + \sqrt{\frac{k}{0.01}} 2^{d/2} \sqrt{d!}\right)^{2(k+1)}\right] \leq 0.01.\]
So, with probability at least $0.99$, $\sigma_{\max}(H_{\bm u}) \leq \|\vec{a}\|_2^2 \left(2^{d/2} \lceil d/2 \rceil ! + 10\sqrt{k} 2^{d/2} \sqrt{d!}\right)^{2(k+1)}$. Plugging these into \eqref{eq:assumption1}, we observe, setting $\eta\leq 1/2^{2^n}$ satisfies it, implying that $f$ is $\eps$-close to some $k$-sparse, degree-$d$ polynomial.

\textbf{Query complexity:} The query complexity of \autoref{srt_alg:testksparse} consists of two parts:
\begin{itemize}
    \item query complexity of \apxlowdegtest (\autoref{alg:low_degree_main_algorithm_approx}) which is $O(d^5+d^2/\eps)$ (from \autoref{thm:approximate_low_degree_tester}), and 
    \item the query complexity of \textsc{Approx-Poly-Sparsity-Test} (\autoref{alg:approx-polynomial-sparse-test}) which is at most $O(dk^3)$ (from \Cref{lem:approx-query-sparsity-testing-poly}).
\end{itemize} 
Combining the above, we can say that \autoref{srt_alg:testksparse} performs $O(d^5 + d^2/\eps + d k^3)$ queries in total.
\end{proof}

\section{\texorpdfstring{$k$}{k}-Junta Testing}\label{sec:kjuntatest}

In this section, we present and analyze our $k$-junta tester.

\begin{theo}[Generalization of \autoref{srt_thm:junta-intro}]\label{theo:kjuntafinalmain}
Let $\eps \in (0,1)$ and $\eta < \min\{\eps/16k^2, O(\nicefrac{1}{k^2 \log^2 k})\}$. Given $\eta$-approximate query access to an unknown function $f: \R^n \rightarrow \R$ \textcolor{black}{that is bounded in $\ball(\bm 0,2\sqrt{n})$}, there exists a one-sided error tester \testjunta (\autoref{alg:testjuntaapprox}) that performs $\widetilde{O}((k \log k)/\eps)$ queries and guarantees:
\begin{itemize}
    \item \textit{\bf Completeness:} If $f$ is a $k$-junta, then \autoref{alg:testjuntaapprox} always \textsc{Accepts}.

    \item \textit{\bf Soundness:} If $f$ is $\eps$-far from all $k$-juntas, then \autoref{alg:testjuntaapprox} \textsc{Rejects} with probability at least $2/3$.
    
\end{itemize}

\end{theo}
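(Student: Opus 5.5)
We will analyze \testjunta (\autoref{alg:testjuntaapprox}) in the style of the combinatorial junta tester of \cite{blais2015partially}. The tester randomly partitions $[n]$ into $r=O(k^2)$ buckets $B_1,\ldots,B_r$. It then runs $O(k/\eps)$ rounds; in each round it calls $\findbucket$ on the set $S$ of buckets not yet declared influential. It rejects as soon as more than $k$ influential buckets have been found, and accepts otherwise.

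The natural influence notion is an $\ell_1$-influence of a set $J\subseteq[n]$:
\[
\infl_f(J)\triangleq \E_{\bm x,\bm y\sim\cN(\bm 0,I_n)}\bigl[|f(\bm x_J\bm y_{\overline J})-f(\bm y)|\bigr],
\]
matching \autoref{defn:influence}. We will use three structural facts about it, which we take as the paper's \autoref{thm:structure-junta}, \autoref{lem:infl-submodular} and \autoref{lem:influence_partition_complement}.
\begin{enumerate}
\item Monotonicity and submodularity of $\infl_f$. These follow from the triangle inequality by a hybrid argument over coordinates.
\item A distance bound. If $J$ is any set with $|J|\le k$, then $\dist_{\cN,\ell_1}(f,\cJ_k)\le \infl_f(\overline J)$. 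The witness is the $k$-junta $h(\bm x)=\E_{\bm y}[f(\bm x_J\bm y_{\overline J})]$, or a suitable median version of it.
\item A partition bound. For a random partition into $O(k^2)$ buckets, if $f$ is $\eps$-far from $k$-juntas, then for every union $U$ of at most $k$ buckets we have $\infl_f(\overline U)\ge \eps/2$.
\end{enumerate}
The third fact comes from the second, applied with $J$ equal to the variables of $U$, together with monotonicity, since $\overline U$ contains the complement of that set.

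\textbf{Completeness (one-sided).} Suppose $f$ is a $k$-junta. The key point is that $\findbucket$ returns a bucket only if it observes $|\widetilde f(\bm x_V\bm y_{\overline V})-\widetilde f(\bm y)|>2\eta$ at the leaf, with $V=B_j$. As in Part 1 of \Cref{claim:close-to-linear-implies-good_new_bucket}, this is impossible when $B_j$ contains no relevant variable: in that case $f(\bm x_V\bm y_{\overline V})=f(\bm y)$ exactly, so the two approximate answers differ by at most $2\eta$. Hence every returned bucket truly contains a relevant variable. The buckets are disjoint and $f$ has at most $k$ relevant variables, so at most $k$ buckets can ever be returned. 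The tester therefore always accepts, with no dependence on the bucketing, which gives one-sided error.

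\textbf{Soundness.} Suppose $f$ is $\eps$-far. Consider any round in which fewer than $k+1$ buckets have been found, and let $U$ be their union. By the partition bound, $\infl_f(\overline U)\ge\eps/2$. We must show that one call to $\findbucket$ on the remaining buckets returns a new bucket with probability $\Omega(\eps)$, up to $\log k$ factors. This is the analogue of \autoref{claim:close-to-linear-implies-good_new_bucket_junta}, and it is the main obstacle. Two issues arise.
\begin{enumerate}
\item The test statistic is a thresholded absolute difference, not its expectation. We will apply \autoref{thm:paley-zygmund} to $X=|f(\bm x_V\bm y_{\overline V})-f(\bm y)|$. Its second moment is bounded using the boundedness $|f|\le C\sqrt n$ on $\ball(\bm 0,2\sqrt n)$, plus the $0.01$ Gaussian mass outside the ball (\autoref{fact:gaussian-concentration}). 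This gives $\Pr[X>4\eta]\gtrsim \infl_f(V)^2/\E[X^2]$, provided $\infl_f(V)\gg\eta$. The hypothesis $\eta<\eps/16k^2$ is exactly what makes $4\eta$ negligible against the influence.
\item Along the binary search, a sub-half with positive influence must again pass the $2\eta$ threshold. By submodularity, one half carries at least half of the parent's influence. The search has depth $\log r=O(\log k)$, so the influence can shrink to roughly $\eps/k^{O(1)}$; this is where the constraint $\eta=O(1/(k^2\log^2k))$ enters. A union bound over the $O(\log k)$ levels then bounds the failure probability.
\end{enumerate}
Granting this per-round success probability $p=\widetilde\Omega(\eps)$, after $O(k/\eps)$ rounds (with logarithmic factors) the tester finds $k+1$ distinct buckets with probability at least $2/3$, by a Chernoff or expectation argument. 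Each found bucket contains a relevant variable, so the tester rejects.

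\textbf{Query complexity.} Each round costs $O(\log^2 k)$ queries by the recursion count in \Cref{claim:close-to-linear-implies-good_new_bucket}. Over $\widetilde O(k/\eps)$ rounds this totals $\widetilde O((k\log k)/\eps)$.
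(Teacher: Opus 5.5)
Your completeness argument is fine and matches the paper. So is your distance bound (fact 2); your averaged witness is a slight variant of the fixed-restriction argument in \autoref{thm:structure-junta}. Both gaps are in soundness.

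\textbf{The partition bound does not follow from fact 2.} A union $U$ of $k$ buckets of a random $r$-partition contains about $kn/r$ coordinates, not at most $k$. So fact 2 cannot be applied with $J=U$. For any $J\subseteq U$ with $|J|\le k$ we have $\overline U\subseteq\overline J$, so monotonicity only gives $\infl_f(\overline U)\le\infl_f(\overline J)$, which is the wrong direction. Your derivation would be valid only if each bucket were a single coordinate; note that it never uses $r=\Theta(k^2)$ or the randomness of the partition.

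This step is exactly the content of \autoref{lem:influence_partition_complement}, and it needs an extra idea. Subadditivity and fact 2 show that $\cF_{1/2}=\{S:\infl_f(\overline S)<\eps/2\}$ is $(k+1)$-intersecting. There are then two cases.
\begin{itemize}
\item If every member of $\cF_{1/2}$ has size at least $2k$, then $\cF_{1/4}$ is $2k$-intersecting. A union of $k$ random parts is a $p$-biased set with $p=k/r\le 1/(2k+1)$. So \autoref{theo:pmesasure} bounds the probability that it lies in $\cF_{1/4}$ by $(k/r)^{2k}$, and this survives the union bound over the $\binom{r}{k}$ choices.
\item Otherwise some $J\in\cF_{1/2}$ has $|J|<2k$. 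The random partition puts its elements into distinct parts with probability $99/100$. Then no union of $k$ parts can belong to $\cF_{1/2}$: every member of $\cF_{1/2}$ meets $J$ in at least $k+1$ points, while such a union meets $J$ in at most $k$.
\end{itemize}
The outcome is $\infl_f(\overline U)\ge\eps/4$, holding only with probability $99/100$ over the partition. That failure probability must also be charged in soundness.

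\textbf{The per-round success probability is assumed, and your route does not give it.} Write $X=|f(\bm x_V\bm y_{\overline V})-f(\bm y)|$. Since $f$ is controlled only inside $\ball(\bm 0,2\sqrt n)$, the second moment of $X$ can be bounded only after conditioning on that event. The $0.01$ outside mass does not help, because $f$ is arbitrary there. Even after conditioning, the bound is only $O(C^2n)$.

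Paley--Zygmund then gives $\Pr[X>4\eta]\gtrsim\infl_f(V)^2/(C^2n)$, which depends on $n$ and is far from $1$. With per-level probabilities like this, the success probabilities along the $O(\log k)$-level search multiply; a union bound over failures says nothing. So neither $p=\widetilde\Omega(\eps)$ nor an $n$-independent query bound follows.

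The paper handles this step through a hypothesis built into \autoref{claim:close-to-linear-implies-good_new_bucket_junta}. There, the minimal conditional influence $\kappa$ of an influential set is assumed to satisfy $\kappa\ge\max\{2\|f\|_{\infty,\ball(\bm 0,2\sqrt n)},\eps/(4k^2)\}$. Under it, Paley--Zygmund with $\theta=4\eta/\kappa$ gives success at least $(1-4\eta/\kappa)^2$ at every level. A union bound over the at most $2\lceil\lg|S|\rceil$ recursive calls then gives failure at most $16\eta\lceil\log|S|\rceil^2/\kappa$.

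In that argument, the halving of influence down to the bucket level $\eps/(4k^2)$ is absorbed by $\eta<\eps/16k^2$, which keeps $\theta<1$. The $O(1/(k^2\log^2k))$ constraint is not what handles the halving, contrary to your attribution. You need either that hypothesis or a genuinely different anti-concentration argument for this step.
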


\begin{algorithm}[ht]
\caption{\testjunta }\label{alg:testjuntaapprox}
{\bf Inputs:} $\eta$-Approximate function oracle $\widetilde{f}: \R^n \rightarrow \R$, $k \in \N, \eps, \eta \in (0,1)$.\\

{\bf Output:} Output \textsc{Accept} iff $f$ is a $k$-junta. \\

Choose a random partition $\cB$ of $[n]$ into $r= O(k^2)$ parts. \

Initialize $S \gets [r]$, $I \gets \emptyset$. \

\For{$i=1$ to $O(k/\eps)$}{

$B \gets \mathsf{FindInfBucket}(\widetilde{f}, B, S)$ (\autoref{alg:findinfbucket}).\

\If{$B \neq \emptyset$ ($= B_j$ for some $j \in S$)}
{
    $I \gets I \cup \{j\}$.
    $S \gets S \setminus \{j\}$.
}

\If{$|I|>k$}{
    \Return \textsc{Reject}. \
}\label{line:junta-count-check}

}

\Return \textsc{Accept}. \

\end{algorithm}

In order to prove \autoref{theo:kjuntafinalmain}, we first build some machinery, and then analyze \autoref{alg:testjuntaapprox} and its subroutine (\autoref{alg:findinfbucket} for juntas) in \autoref{sec:k-junta-analysis}. Let us start with the notion of influence.

\begin{defi}[Influence]\label{defn:influence}
Let $f:\R^n \rightarrow \R$ be a function. For any set $S \subseteq [n]$, the influence of $f$ over $S$ with respect to a distribution $\cD$ over $\R^n$ is defined as follows:
\[\infl_f(S)=\E_{\bm x, \bm y \sim \cD} \left[\size{f(\bm y) - f(\bm x_S \bm y_{\overline{S}})}\right].\]
\end{defi}

We now prove a structural result.
Let $\mathcal{J}_k$ denote the set of all $k$-juntas on $n$ variables.

\begin{theo}\label{thm:structure-junta}
If $\dist(f,\mathcal{J}_k)\geq\varepsilon$, then for all $S\subseteq[n]:|S|\leq k$, with $\overline{S}\triangleq [n]\setminus S$, $\infl_f(\overline{S})\geq\varepsilon$.
\end{theo}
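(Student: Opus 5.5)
We argue by contrapositive, following the combinatorial approach of \cite{blais2015partially}: fix any $S\subseteq[n]$ with $|S|\leq k$, and exhibit a $k$-junta $h$ depending only on the coordinates in $S$ with $\dist(f,h)\leq\infl_f(\overline{S})$. Since $\dist(f,\mathcal{J}_k)\geq\eps$ forces $\dist(f,h)\geq\eps$, this gives $\infl_f(\overline{S})\geq\eps$.

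The junta is obtained by averaging out the coordinates in $\overline{S}$. For each $\bm z\in\R^n$, set
\[h_{\bm z}(\bm x)\triangleq f(\bm x_S\bm z_{\overline{S}}).\]
Each $h_{\bm z}$ depends only on the coordinates in $S$, so it is a $k$-junta. We then compute the average distance. Taking $\cD=\cN(\bm 0,I_n)$ (a product distribution) and $\bm x,\bm z\sim\cD$ independent, we have
\[\E_{\bm z}\bigl[\dist(f,h_{\bm z})\bigr]=\E_{\bm x,\bm z}\bigl[|f(\bm x)-f(\bm x_S\bm z_{\overline{S}})|\bigr].\]
Write $\bm x_S\bm z_{\overline{S}}=\bm z_{\overline{S}}\bm x_{S}$. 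Because the distribution is a product, the pair $(\bm x,\bm x_S\bm z_{\overline{S}})$ has the same joint law as $(\bm y,\bm x'_{\overline{S}}\bm y_S)$ with $\bm x',\bm y\sim\cD$ independent: both keep the $S$-coordinates shared and resample the $\overline{S}$-coordinates independently. Renaming variables, this expectation is exactly $\infl_f(\overline{S})=\E_{\bm x,\bm y}\bigl[|f(\bm y)-f(\bm x_{\overline{S}}\bm y_S)|\bigr]$, matching Definition~\ref{defn:influence} with the set $\overline{S}$.

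Averaging then finishes the argument. Since $\E_{\bm z}[\dist(f,h_{\bm z})]=\infl_f(\overline{S})$, some $\bm z^\ast$ satisfies $\dist(f,h_{\bm z^\ast})\leq\infl_f(\overline{S})$, and hence $\eps\leq\dist(f,\mathcal{J}_k)\leq\infl_f(\overline{S})$.

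The main obstacle is measure-theoretic. We must check that $h_{\bm z}$ is measurable and that the use of Fubini is valid, which needs $f$ integrable (or at least $\infl_f(\overline{S})<\infty$, since otherwise the claim is trivial). We also must identify the distributions carefully: the product structure of the Gaussian is used in exactly one place, the distributional identity above.
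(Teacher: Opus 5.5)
Your proof is correct and is essentially the paper's argument: both use the $S$-juntas $\bm x\mapsto f(\bm x_S\bm z_{\overline{S}})$, each at distance at least $\eps$ from $f$, and average their distance to $f$ over $\bm z\sim\cD$ to obtain exactly $\infl_f(\overline{S})$ (you phrase the last step as ``some $\bm z^\ast$ is at most the average'', the paper as ``the average of quantities $\geq\eps$ is $\geq\eps$''). One minor remark: identifying that average with $\infl_f(\overline{S})$ is just a renaming of the two independent samples, so the product structure of the Gaussian is not actually needed there.
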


\begin{proof}
Fix some $S\subseteq[n]$, such that $|S|\leq k$. Let $\mathcal{J}_S$ denote the set of all juntas on $S$. Note $\mathcal{J}_S\subsetneq\mathcal{J}_k$.
Define $g_S:\R^n\to\R$ as the junta on $S$ that minimizes the distance from $f$:
\[g_S\triangleq\arg\min_{g\in\mathcal{J}_{S}}\{\dist(f,g)\}.\]
    
Observe that $\infl_{g_S}(\overline{S})=0$. We give a method of constructing such a $g_S$, coset-wise.
For every $\bm x\in\R^{|\overline{S}|}$, define a function $f_{\bm x}:\R^n\to\R$ as $f_{\bm x}(\bm y)\triangleq f(\bm x_{\overline{S}} \bm y_S)$. 
Observe that $f_{\bm x}\in\mathcal{J}_S$, and hence $f_{\bm x}\in\mathcal{J}_k$, for every $\bm x\in\R^{|\overline{S}|}$. 
So, from the premise, we have, for every $\bm x\in\R^{|\overline{S}|}$,
\[\dist_{\mathcal{D},\ell_1}(f,f_{\bm x})=\E_{\bm y\sim\mathcal{D}}\left[\size{f(\bm y)- f_{\bm x}(\bm y)}\right]=\E_{\bm y\sim\mathcal{D}}\left[\size{f(\bm y)- f(\bm x_{\overline{S}} \bm y_S)}\right]\geq\varepsilon.\]
\color{black}
Since, the role of $\bm x$ is determined only by the values in the variables in $\overline{S}$, we may as well extend it for all $n$ variables, i.e., for all $\bm x\in\R^n$, we have 
\[\E_{\bm y\sim\mathcal{D}}\left[\size{f(\bm y)- f(\bm x_{\overline{S}} \bm y_S)}\right]\geq \eps.\]
\color{black}
Now we may sample $\bm x$ from any distribution $\mathcal{D}'$ supported on $\R^n$ as well, i.e.,
\[\E_{\bm x\sim\mathcal{D}',\bm y\sim\mathcal{D}}\left[\size{f(\bm y)- f(\bm x_{\overline{S}} \bm y_S)}\right]\geq \eps.\]
In particular, we may set $\mathcal{D}'=\mathcal{D}$ to get
\[\infl_f(\overline{S})=\E_{\bm x,\bm y\sim\mathcal{D}}\left[\size{f(\bm y)- f(\bm x_{\overline{S}} \bm y_S)}\right]\geq \eps.\qedhere\]
\end{proof}

Next we prove a lemma that connects the influence of the union of two sets of variables, with the influences of individual sets.

\begin{lem}[Sub-additivity of Influence]\label{lem:infl-submodular}
For every function $f:\R^n \rightarrow \R$, and any $S, T \subseteq [n]$,
\begin{equation*}
\max\{\infl_f(S),\infl_f(T)\} \leq \infl_f(S \cup T) \leq \infl_f(S) + \infl_f(T).    
\end{equation*}
\end{lem}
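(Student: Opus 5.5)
The plan has two parts: first the lower bound (monotonicity), then the upper bound (sub-additivity), which reduces to the lower bound via a hybrid argument. Throughout, $\cD=\cN(\bm 0,I_n)$ is a product distribution, so coordinates can be split and resampled independently. It suffices to prove the left inequality in the form $\infl_f(A)\le \infl_f(A\cup B)$ for disjoint $A,B$, applied with $A=S$, $B=T\setminus S$ and symmetrically with the roles of $S$ and $T$ exchanged.

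\textbf{Upper bound, assuming monotonicity.} Take $\bm x,\bm y\sim\cD$ independent and the intermediate point $\bm w\triangleq \bm x_S\bm y_{\overline S}$. The triangle inequality gives
\[
\size{f(\bm y)-f(\bm x_{S\cup T}\bm y_{\overline{S\cup T}})}\le \size{f(\bm y)-f(\bm w)}+\size{f(\bm w)-f(\bm x_{S\cup T}\bm y_{\overline{S\cup T}})}.
\]
The first term has expectation $\infl_f(S)$. For the second term, note that $\bm x_{S\cup T}\bm y_{\overline{S\cup T}}=\bm x_{T\setminus S}\bm w_{\overline{T\setminus S}}$. Also $\bm w\sim\cD$, and $\bm w$ is independent of $\bm x_{T\setminus S}$, because $\bm w$ uses $\bm x$ only on $S$. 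Hence the second term has expectation exactly $\infl_f(T\setminus S)$. Monotonicity gives $\infl_f(T\setminus S)\le \infl_f(T)$, which yields $\infl_f(S\cup T)\le\infl_f(S)+\infl_f(T)$. The only care needed is this independence bookkeeping, to avoid the spurious correlation on $S\cap T$.

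\textbf{Monotonicity.} Write coordinates as $(A,B,C)$ with $C=\overline{A\cup B}$, and condition on $\bm y_A,\bm x_A,\bm y_C$. Set $f_1(\bm b)\triangleq f(\bm y_A,\bm b,\bm y_C)$ and $f_2(\bm b)\triangleq f(\bm x_A,\bm b,\bm y_C)$. Then
\[
\infl_f(A)=\E\,\E_{\bm b}\size{f_1(\bm b)-f_2(\bm b)},\qquad \infl_f(A\cup B)=\E\,\E_{\bm b,\bm b'}\size{f_1(\bm b)-f_2(\bm b')},
\]
with $\bm b,\bm b'\sim\cN(\bm 0,I_{|B|})$ i.i.d. So the goal is to compare a ``diagonal'' average with an ``off-diagonal'' one. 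This comparison cannot hold pointwise in the conditioning: with $f_1(b)=b$ and $f_2(b)=-b$ the diagonal average is larger. Any proof must therefore use the outer average over $(\bm y_A,\bm x_A)$, and in particular its exchange symmetry $\bm x_A\leftrightarrow\bm y_A$, which swaps $f_1$ and $f_2$.

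The first thing I would try is the integral representation $|t|=\int_{\R}\bigl(\indic[t>s\ge0]+\indic[t<s<0]\bigr)\,ds$, or equivalently $|a-c|=\int_{\R}\bigl(\indic[a>s]-\indic[c>s]\bigr)^2\,ds$. Applied to $|f_1-f_2|$, this reduces the claim, for each threshold $s$, to a Boolean-valued statement about the events $\{f_1>s\}$ and $\{f_2>s\}$. Those events are exchangeable under the outer symmetry, so the per-threshold statement is a disagreement-probability monotonicity of the type used by \cite{blais2015partially}. There it follows from Cauchy--Schwarz: averaging the indicators over $(\bm x_A,\bm y_A)$ turns the diagonal term into $\E_{\bm b}\var$ and the off-diagonal term into a larger quantity involving $\var_{\bm b}$ of a conditional mean.

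\textbf{Main obstacle.} The main obstacle is this monotonicity step, specifically checking that the threshold decomposition really does commute with the outer symmetrization. If it fails, my fallback is the plain triangle inequality, which only gives a factor-$2$ loss, and I would then need to note the resulting weaker constants wherever the lemma is used.
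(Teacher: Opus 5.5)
Your proof is correct, and it differs from the paper's exactly where the paper's argument is incomplete.

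\textbf{Upper bound.} The paper uses the same hybrid point $\bm x_S\bm y_{\overline S}$. It then equates the second term with $\infl_f(T)$, treating $\bm y'=\bm x_S\bm y_{\overline S}$ as independent of $\bm x$. But $\bm y'$ and $\bm x$ coincide on $S\cap T$, so that term is really $\infl_f(T\setminus S)$, as you computed; for $S=T$ it is $0$, not $\infl_f(T)$. So the upper bound genuinely needs monotonicity whenever $S\cap T\neq\emptyset$.

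\textbf{Lower bound.} The paper justifies it only by ``non-negativity of $\infl$'', which does not give $\infl_f(S)\le\infl_f(S\cup T)$. Your structure (exact bookkeeping plus a separate monotonicity lemma) is the right one, and the paper supplies no argument for that lemma.

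\textbf{Your threshold route works.} The concern you flag about commuting with the outer symmetrization does not arise. Apply $|a-c|=\int_\R\size{\indic[a>s]-\indic[c>s]}\,ds$ pointwise to $a=f(\bm y)$ and $c=f(\bm x_A\bm y_{\overline A})$, before taking any expectation. Tonelli then gives $\infl_f(A)=\int_\R\infl_{g_s}(A)\,ds$ with $g_s\triangleq\indic[f>s]$. For $\{0,1\}$-valued $g$,
\[
\infl_g(A)=\E[(g(\bm y)-g(\bm x_A\bm y_{\overline A}))^2]=2\bigl(\E[g^2]-\E_{\bm y_{\overline A}}[(\E_{\bm y_A}g)^2]\bigr).
\]
The subtracted term can only decrease as $A$ grows, by conditional Jensen or equivalently by your law-of-total-variance comparison. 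Hence each $\infl_{g_s}$ is monotone, and so is $\infl_f$. You need neither the conditioning on $(\bm y_A,\bm x_A,\bm y_C)$ nor the factor-$2$ fallback.

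One caution: only your second integral identity does the job here. The first, applied to $t=f_1-f_2$, thresholds the difference rather than $f$ itself, so it does not reduce the claim to Boolean functions of $f$.
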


\begin{proof}
From \Cref{defn:influence}, we have
\[\infl_f(S)=\E_{\bm x,\bm y\sim\mathcal{D}}\left[\size{f(\bm y)- f(\bm x_{S} \bm y_{\overline{S}})}\right],\; \infl_f(T)=\E_{\bm x,\bm y\sim\mathcal{D}}\left[\size{f(\bm y)- f(\bm x_{T} \bm y_{\overline{T}})}\right]\text{, and }
\]
\begin{eqnarray*}
\infl_f(S\cup T) &=& \E_{\bm x,\bm y\sim\mathcal{D}}\left[\size{f(\bm y)- f(\bm x_{S \cup T} \bm y_{\overline{S \cup T}})}\right] \\
&=& \E_{\bm x,\bm y\sim\mathcal{D}}\left[\size{f(\bm y) - f({\bm x}_S {\bm y}_{\overline{S}}) + f({\bm x}_S {\bm y}_{\overline{S}}) - f(\bm x_{S \cup T} \bm y_{\overline{S \cup T}})}\right] \\
&\leq& \E_{\bm x,\bm y\sim\mathcal{D}}\left[ \size{f(\bm y) - f({\bm x}_S {\bm y}_{\overline{S}})}\right] + \E_{\bm x,\bm y\sim\mathcal{D}} \Big[\Big|f(\underbrace{{\bm x}_S {\bm y}_{\overline{S}}}_{\triangleq\bm y'\sim\cD}) - f(\underbrace{\bm x_{S \cup T} \bm y_{\overline{S \cup T}}}_{=\bm x_T \bm y'_{\overline{T}}})\Big|\Big] \\
&=& \E_{\bm x,\bm y\sim\mathcal{D}}\left[ \size{f(\bm y) - f({\bm x}_S {\bm y}_{\overline{S}})}\right] + \E_{\bm x,\bm y\sim\mathcal{D}}\left[ \size{f(\bm y) - f({\bm x}_T {\bm y}_{\overline{T}})}\right] \\
&=& \infl_f(S) + \infl_f(T).
\end{eqnarray*}
Non-negativity of $\infl$ trivially completes the other part of the lemma.
\end{proof}

Next we show, if $f$ is far from being a $k$-junta, the influence of the complement of $S$ in \autoref{alg:testjuntaapprox} (i.e., all buckets not yet identified as influential) is high, unless we identify more than $k$ parts.

\begin{lem}\label{lem:influence_partition_complement}
Let $f: \R^n \rightarrow \R$ and $\mathcal{B} = \{B_1,\ldots,B_r\}$ be a random partition of $[n]$, for $r = \Theta(k^2)$. 
If $\dist(f,\mathcal{J}_k)\geq\varepsilon$, with probability $\geq 99/100$ over the randomness of the partition, we have $\infl_f(\overline{S}) \geq \eps/4$ for any $S \subseteq [n]$ which is a union of at most $k$ parts of $\mathcal{B}$.
\end{lem}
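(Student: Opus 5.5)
The plan is to reduce to \Cref{thm:structure-junta} using the two inequalities of \Cref{lem:infl-submodular}.

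Fix a union of parts $S=\bigcup_{j\in J}B_j$ with $|J|\le k$. Suppose we can find $T\subseteq S$ with $|T|\le k$ such that $\infl_f(S\setminus T)\le 3\eps/4$. Since $\overline{T}=\overline{S}\cup(S\setminus T)$, \Cref{thm:structure-junta} and subadditivity give
\[
\eps\le \infl_f(\overline{T})\le \infl_f(\overline{S})+\infl_f(S\setminus T),
\]
and hence $\infl_f(\overline{S})\ge \eps/4$. So the whole proof reduces to a statement about the random partition alone. With probability $\ge 99/100$, every part $B_j$ should contain at most one ``heavy'' coordinate $i_j$, and its remainder $B_j\setminus\{i_j\}$ should carry influence at most $3\eps/(4k)$. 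Taking $T=\{i_j: j\in J\}$ and applying subadditivity over the at most $k$ parts then gives $\infl_f(S\setminus T)\le 3\eps/4$.

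To carry this out, I would first fix a threshold $\theta=c\,\eps/k$ and call a set $A$ \emph{heavy} if $\infl_f(A)\ge\theta$. I would then build a canonical small set $H$ of relevant coordinates greedily, as in \cite{blais2015partially}. Start with $H=\emptyset$, and repeatedly add a coordinate $i$ that raises $\infl_f(\overline{\text{current light region}})$ by at least $\theta$. Stop after at most $O(k)$ steps, or as soon as some $k$-set already witnesses the conclusion. If this process finds more than $k$ heavy coordinates, those coordinates behave like more than $k$ disjoint heavy blocks; the birthday argument with $r=\Theta(k^2)$ then spreads them into distinct parts, and the lemma follows directly. Otherwise $|H|=O(k)$, and the birthday paradox with $r=\Theta(k^2)$ parts puts the coordinates of $H$ into distinct parts with probability $\ge 199/200$. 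This handles the heavy coordinates.

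The main obstacle is the light remainder $L=[n]\setminus H$. Individually light coordinates can still sum to a large influence, and we only have subadditivity rather than a Fourier or Efron--Stein decomposition, so we cannot simply add up single-coordinate influences. What I would try first is a random-restriction bound. For a uniformly random set $A$ that includes each coordinate independently with probability $1/r$, as each part $B_j$ does marginally, I would aim to show $\E[\infl_f(A\cap L)]\lesssim \infl_f(L)/r+\theta$. The intended route is to telescope $\infl_f$ along a random ordering of the coordinates and use the monotonicity half of \Cref{lem:infl-submodular}. Given that bound, Markov's inequality and a union bound over the $r$ parts, with $r=\Theta(k^2)$ chosen large enough, would give $\infl_f(B_j\cap L)\le 3\eps/(4k)$ for all $j$ simultaneously with probability $\ge 199/200$. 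If that expectation bound fails for the $\ell_1$ influence, my fallback is to pass to the $\ell_2$ version $\E[(f(\bm y)-f(\bm x_A\bm y_{\overline A}))^2]$. That version is a genuine Efron--Stein quantity, so the averaging over random parts is exact, and one can return to $\ell_1$ via Cauchy--Schwarz together with the boundedness of $f$ on $\ball(\bm 0,2\sqrt n)$. Combining the two events by a union bound gives total failure probability at most $1/100$.
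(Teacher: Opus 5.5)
Your opening reduction is valid. If some $T\subseteq S$ with $|T|\le k$ has $\infl_f(S\setminus T)\le 3\eps/4$, then \Cref{thm:structure-junta} and subadditivity give $\infl_f(\overline{S})\ge\eps/4$. The problem is that this sufficient condition is false in general, so no construction of $T$ can work.

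Take $f(\bm x)=\prod_{i=1}^{m}\mathrm{sign}(x_i)$ with $m\gg kr\log r$.
\begin{itemize}
\item For every $U$ meeting $[m]$, $|f(\bm y)-f(\bm x_U\bm y_{\overline U})|\in\{0,2\}$, and the two values differ with probability exactly $1/2$. Hence $\infl_f(U)=1$.
\item $\dist(f,\mathcal{J}_k)=1$. Any $g$ depending on at most $k$ coordinates misses some $i_0\in[m]$, and averaging over the sign of $x_{i_0}$ gives $\E|f-g|\ge 1$.
\item With high probability every part contains more than $k$ coordinates of $[m]$. So for every nonempty union $S$ of at most $k$ parts and every $|T|\le k$, the set $S\setminus T$ meets $[m]$, giving $\infl_f(S\setminus T)=1>3\eps/4$.
\end{itemize}
The lemma holds for this $f$, but your route cannot reach it.

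The step that actually breaks is the random-restriction bound $\E[\infl_f(A\cap L)]\lesssim \infl_f(L)/r+\theta$. Under the marginal-gain rule you describe, the greedy procedure stops after one coordinate, since a second one raises the influence by $0$. Then $\infl_f(L)=1$, and $A\cap L$ meets $[m]$ almost surely, so $\E[\infl_f(A\cap L)]\approx 1$. The right-hand side is only $1/r+c\eps/k\ll 1$.

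The $\ell_2$ fallback does not rescue this. The Efron--Stein identity gives $\E_A\bigl[\E\,(f(\bm y)-f(\bm x_A\bm y_{\overline A}))^2\bigr]=2\sum_U\|f^{=U}\|_2^2\,(1-(1-1/r)^{|U|})$. This is bounded by $1/r$ times the \emph{total} (summed single-coordinate) influence, not by $1/r$ times the influence of $L$. When the weight sits on sets $U$ with $|U|\gg r$, it is comparable to the full $\ell_2$ influence of $L$.

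Separately, your ``more than $k$ heavy coordinates'' branch has its own gap. A heavy coordinate outside $S$ only certifies $\infl_f(\overline S)\ge c\eps/k$. Influence is not superadditive, so several such coordinates do not add up to $\eps/4$.

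The missing idea is a way to lower-bound $\infl_f(\overline S)$ without localizing influence inside individual parts. The paper does this with intersecting families:
\begin{itemize}
\item By subadditivity and \Cref{thm:structure-junta}, $\cF_{1/2}=\{U:\infl_f(\overline U)<\eps/2\}$ is $(k+1)$-intersecting.
\item If $\cF_{1/2}$ has a member $J$ with $|J|<2k$, the birthday paradox puts the elements of $J$ in distinct parts. Every $T\in\cF_{1/2}$ satisfies $|T\cap J|\ge k+1$, while a union of $k$ parts meets $J$ in at most $k$ elements. So no such union lies in $\cF_{1/2}$.
\item Otherwise $\cF_{1/4}$ is $2k$-intersecting. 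A fixed union of $k$ parts is a $(k/r)$-biased random set, so \Cref{theo:pmesasure} bounds the probability that it lies in $\cF_{1/4}$ by $(k/r)^{2k}$. A union bound over the $\binom{r}{k}$ choices of parts finishes.
\end{itemize}
In the example above, $\cF_{1/2}$ is exactly the family of supersets of $[m]$, so this argument goes through easily where yours fails.
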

Before proving it, let us first define the notions of intersecting families, and $p$-biased measure.

\begin{defi}[Intersecting family]\label{label:intersecting-family}
Let $\ell \in\N$. A family of subsets $\cC$ of $[n]$ is \emph{$\ell$-intersecting} if for any two sets $S, T \in \cC$, $\size{S \cap T} \geq \ell$.
\end{defi}

\begin{defi}[$p$-biased measure]\label{def:p-biased-measure}
Let $p \in (0,1)$. Construct a set $S\subseteq[n]$ by including each index $i \in [n]$ in $S$ with probability $p$. Then the \emph{$p$-biased measure} is defined as follows:
\[\mu_p(\cC)= \pr_{S}[S \in \cC].\]
\end{defi}

We will use the following result to bound the $p$-biased measure of intersecting families.

\begin{theo}[\cite{dinur2005hardness,friedgut2008measure}]\label{theo:pmesasure}
Let $\ell \geq 1$ be an integer and let $\cC$ be a $\ell$-intersecting family of subsets of $[n]$. 
For any $p < \frac{1}{\ell+1}$, the $p$-biased measure of $\cC$ is bounded by $\mu_p(\cC) \leq p^{\ell}$.  
\end{theo}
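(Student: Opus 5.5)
We would derive the bound from the uniform-measure Erdős--Ko--Rado theorem for $\ell$-intersecting families, which we take as known. This is Wilson's theorem: if $\cF\subseteq\binom{[N]}{m}$ is $\ell$-intersecting and $N\geq(\ell+1)(m-\ell+1)$, then $|\cF|\leq\binom{N-\ell}{m-\ell}$. The transfer from uniform $m$-subsets of a large ground set to the $p$-biased measure on $[n]$ is a standard limiting argument, essentially Friedgut's route in \cite{friedgut2008measure}.

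\textbf{Step 1 (lifting).} Given the $\ell$-intersecting family $\cC$ on $[n]$, fix a large $N\geq n$ and set $m\triangleq\lfloor pN\rfloor$. Define
\[\cF_N\triangleq\left\{A\in\binom{[N]}{m} : A\cap[n]\in\cC\right\}.\]
If $A,A'\in\cF_N$, then $|A\cap A'|\geq|(A\cap[n])\cap(A'\cap[n])|\geq\ell$, so $\cF_N$ is again $\ell$-intersecting.

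\textbf{Step 2 (applying Wilson).} We need $N\geq(\ell+1)(m-\ell+1)$. Since $m\leq pN$, it suffices that $N(1-(\ell+1)p)\geq-(\ell+1)(\ell-1)$. This holds for every $N$, because $p<\tfrac{1}{\ell+1}$ makes the left side positive. This is exactly where the hypothesis on $p$ enters. Wilson's theorem then gives
\[\frac{|\cF_N|}{\binom{N}{m}}\leq\frac{\binom{N-\ell}{m-\ell}}{\binom{N}{m}}=\prod_{i=0}^{\ell-1}\frac{m-i}{N-i}.\]
For fixed $\ell$, the right-hand side tends to $p^{\ell}$ as $N\to\infty$.

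\textbf{Step 3 (limit).} For a uniformly random $m$-subset $A$ of $[N]$, the trace $A\cap[n]$ has the distribution obtained by sampling $n$ coordinates without replacement. Its law converges to the law of a $p$-biased subset of $[n]$ as $N\to\infty$ with $m/N\to p$: for each fixed $T\subseteq[n]$, $\Pr[A\cap[n]=T]\to p^{|T|}(1-p)^{n-|T|}$. Since $n$ is fixed, $\cC$ is a finite union of such events, and therefore $|\cF_N|/\binom{N}{m}\to\mu_p(\cC)$. Combining with Step 2 gives $\mu_p(\cC)\leq p^{\ell}$.

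\textbf{Main obstacle.} The substantive content lies entirely in the uniform $\ell$-intersecting EKR bound in the regime $N\geq(\ell+1)(m-\ell+1)$, which is a deep result. Everything else is a routine limiting calculation; the only care needed is to check that the threshold condition holds along the whole sequence, which is what the strict inequality $p<1/(\ell+1)$ buys. If we wanted to avoid citing Wilson, we would first try the spectral (Hoffman ratio bound) approach directly on the $p$-biased product space in the style of Friedgut. That requires constructing a suitable pseudo-adjacency operator for $\ell$-intersection on the $p$-biased cube, and we would expect that construction to be the hard part.
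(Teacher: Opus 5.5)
The paper gives no proof of this statement; it imports it from Dinur--Safra and Friedgut. Your derivation is correct as written. The lifted family $\cF_N$ is $\ell$-intersecting, and Wilson's threshold $N\ge(\ell+1)(m-\ell+1)$ holds for every $N$. The ratio $\binom{N-\ell}{m-\ell}/\binom{N}{m}$ tends to $p^\ell$, and the hypergeometric trace probabilities converge to the $p$-biased weights on the finitely many members of $\cC$.

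One correction on attribution. The lifting-and-limit route is Dinur and Safra's argument. As far as I recall, they lift to uniform families and invoke the Ahlswede--Khachatrian complete intersection theorem; in this regime Wilson's bound is the case needed, as you observe. Friedgut's proof is instead the spectral one you mention as a fallback, carried out with a weighted pseudo-adjacency operator on the $p$-biased cube.

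The two routes buy different things. Yours is short and elementary modulo the uniform theorem, with all the difficulty pushed into Wilson. The spectral route works natively in the product space. It additionally yields uniqueness of the extremal families (the $\ell$-stars $\{A : T\subseteq A\}$ with $|T|=\ell$, when $p<1/(\ell+1)$) and stability, which a limiting argument does not directly give.

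Two small remarks on your write-up.

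\emph{The intersecting convention.} Step~1 uses the convention that $S=T$ is allowed in the definition of $\ell$-intersecting, so every member of $\cC$ has at least $\ell$ elements. You need this when two distinct lifted sets share the same trace. Without it the statement itself is false: $\cC=\{\emptyset\}$ is a counterexample.

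\emph{Strictness of $p<1/(\ell+1)$.} Strictness is not what your threshold check relies on. The required inequality $N(1-(\ell+1)p)\ge-(\ell+1)(\ell-1)$ also holds at $p=1/(\ell+1)$, because its right-hand side is nonpositive for $\ell\ge 1$. So your argument gives $\mu_p(\cC)\le p^\ell$ for all $p\le 1/(\ell+1)$, as continuity in $p$ would anyway. The hypothesis on $p$ matters only in excluding $p>1/(\ell+1)$, where the left-hand side tends to $-\infty$.
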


Now we are ready to prove \autoref{lem:influence_partition_complement}. 
We note that the proof is similar to that of {\cite[Lemma 2]{blais2015partially}}. We are adding it here for completeness.

\begin{proof}[Proof of \autoref{lem:influence_partition_complement}]

We will prove: with high probability over the random partition $\cI$, $\infl_f(\overline{S}) \geq \frac{\eps}{4}$, when $S$ is a union of $\leq k$ parts of $\cI$.

Consider the  family of all sets whose complements have influence at most $t \eps$, for some $0 \leq t \leq \frac{1}{2}$:
\[\cF_t= \{S \subseteq [n]: \infl_f(\overline{S}) < t \eps\}.\]

Consider any two sets $S_1, S_2 \in \cF_{1/2}$, i.e., $\max\{\infl_f(\overline{S_1}),\infl_f(\overline{S_2})\}<\eps/2$. 
By \autoref{lem:infl-submodular}, we get
\begin{equation}\label{eq:infl-of-compl-ofint-is-small}
    \infl_f(\overline{S_1 \cap S_2}) = \infl_f(\overline{S_1} \cup \overline{S_2}) \leq \infl_f(\overline{S_1}) + \infl_f(\overline{S_2}) <2\eps/2= \eps.
\end{equation}

As $\dist(f,\mathcal{J}_k)\geq\varepsilon$, for every $S \subseteq [n]$ of size $|S| \leq k$, we have $\infl_f(\overline{S}) \geq \eps$. 
Comparing with \eqref{eq:infl-of-compl-ofint-is-small}, we may infer $|\overline{\overline{S_1} \cup \overline{S_2}}|=|S_1 \cap S_2| > k$.
Since we started with two arbitrary sets $S_1$ and $S_2$ from $\cF_{1/2}$, $\cF_{1/2}$ must then be a $(k+1)$-intersecting family (\autoref{label:intersecting-family}).
Now, consider the two cases:
\begin{itemize}
    \item[(i)]{\bf $\cF_{1/2}$ contains only sets of size at least $2k$.}

We note that $\cF_{1/4}$ is a $2k$-intersecting family. To see this, let us assume that it is not the case. Then there exist $T_1,T_2\in\cF_{1/4}:|T_1\cap T_2|< 2k$. So, by \autoref{lem:infl-submodular}, we get
\begin{equation*}
    \infl_f(\overline{T_1 \cap T_2}) = \infl_f(\overline{T_1} \cup \overline{T_2}) \leq \underbrace{\infl_f(\overline{T_1})}_{<\eps/4} + \underbrace{\infl_f(\overline{T_2})}_{<\eps/4} <2\eps/4< \eps/2.
\end{equation*}
This means that $T_1 \cap T_2 \in \cF_{1/2}$ (by definition of $\cF_{1/2}$) with $|T_1 \cap T_2| < 2k$, which contradicts our assumption, in this case, that all sets in $\cF_{1/2}$ have cardinality $\geq 2k$.

Let $S \subseteq [n]$ be a union of $k$ parts in $\cI$. 
Since $\cI$ is a random $r$-partition of $[n]$, $S$ is a random subset obtained by including each element of $[n]$ in $S$ with probability $\frac{k}{r} \leq \frac{1}{2k+1}$. By \Cref{theo:pmesasure}, we can bound the following:
\[\pr\left[\infl_f(\overline{S}) < \frac{\eps}{4}\right] = \pr[S \in \cF_{1/4}]= \mu_{k/r}\left(\cF_{1/4}\right) \leq \left(\frac{k}{r}\right)^{2k}.\]

Now we apply the union bound over all possible choices of such $S$ (a union of $k$ parts of $\cI$). 
Then, the probability that at least one such $S$ has $\infl_f(\overline{S}) < \eps/4$ is at most
\[{r \choose k }\left(\frac{k}{r}\right)^{2k} \leq \left(\frac{er}{k}\right)^k\left(\frac{k}{r}\right)^{2k} \leq \left(\frac{ek}{r}\right)^k = O(k^{-k}) \ll \frac{1}{100}.\]

\item[(ii)]{\bf $\cF_{1/2}$ contains at least one set of size less than $2k$}

Let $J \in \cF_{1/2}$ be such that $|J| < 2k$. 
For any random partition $\cI$ with $\Theta(k^2)$ parts, all the elements of $J$ are in different parts of $\cI$, with high probability ($\geq 99/100$). 
For any $T \in \cF_{1/2}$, $|J \cap T| \geq k+1$ (since, $\cF_{1/2}$ is a $(k+1)$-intersecting family), and hence $T$ is not covered by any union of $k$ parts of $\cI$. That is, for any $S \subseteq [n]$ which is covered by $\leq k$ parts of $\cI$, $S \not\in \cF_{1/2}$ which implies $\infl_f(\overline{S}) \geq \eps/2$. \qedhere
\end{itemize}
\end{proof}

\subsection{Proof of correctness of \texorpdfstring{\testjunta}{test-k-junta}}\label{sec:k-junta-analysis}

As \autoref{alg:testjuntaapprox} invokes \autoref{alg:findinfbucket}, we first prove its correctness for juntas.

\begin{defi}[Influential variables and buckets]
Let $f: \R^n \rightarrow \R$, and $k \in \N$. A variable ${\bm x}_i,i \in [n]$ is said to be \emph{influential} with respect to $f$, if for some $\bm x\in\R^n$, only changing ${\bm x}_i$ changes the value of $f(\bm x)$. 
A bucket $B \subseteq [n]$ is said to be influential if it contains the index of at least one influential variable, and non-influential otherwise.
\end{defi}
Let $\mathcal{E}$ denote the event that for $\bm x,\bm y\sim\cN(\bm 0,I_n)$, $\bm x,\bm y\in\ball(\bm 0,2\sqrt{n})$. 
By \autoref{fact:gaussian-concentration}, we have $\Pr_{\bm x,\bm y\sim\cN(\bm 0,I_n)}[\overline{\mathcal{E}}]=\Pr_{\bm x,\bm y\sim\cN(\bm 0,I_n)}[\bm x,\bm y\not\in\ball(\bm 0,2\sqrt{n})\leq 0.01$. 
Let \[\kappa\triangleq \minop_{\substack{V \subseteq [n]\\V \mbox{ is influential}}} \left\{\E_{\bm x, \bm y\sim\cN(\bm 0,I_n)} \left[|f({\bm x}_V {\bm y}_{\overline{V}}) - f(\bm y)|\mid\mathcal{E}\right]\right\}.\]

As mentioned before, we use $\findbucket$ (\autoref{alg:findinfbucket}) for testing $k$-linearity as well. 
However, the proof of correctness of \autoref{alg:findinfbucket} in \autoref{sec:klinearitytest}, presented in \autoref{claim:close-to-linear-implies-good_new_bucket}, holds only for linear functions, which may not be the case in general. 
So, we prove the same for juntas as well.

\begin{cl}[Correctness of \autoref{alg:findinfbucket} for juntas]\label{claim:close-to-linear-implies-good_new_bucket_junta}
Let $f: \R^n \rightarrow \R$ be given via the $\eta$-approximate query oracle $\tilde f$, where $\eta \leq \min\{\kappa/4, 1/(1000k^2 \log^2k)\}$, $B
 = \{B_1,\ldots,B_r\}$ be a partition of $[n]$, and $\emptyset\neq S \subseteq [r]$. With $\kappa\geq\max\{ 2\|f\|_{\infty,\ball(\bm 0,2\sqrt{n})},\eps/(4k^2)\}$, $\mathsf{FindInfBucket}(f,B,S)$ guarantees:
    \begin{enumerate}
        \item If none of the $B_i$'s are influential, $\mathsf{FindInfBucket}(f, B, S)$ always returns $\emptyset$ and performs exactly $2$ queries to $f$.
        
        \item Otherwise, with probability at least $1 - 16\eta\lceil\log |S|\rceil^2/\kappa$, $\mathsf{FindInfBucket}(f,B,S)$ returns $B_j$ for some $j \in S$ which is $\kappa$-influential and performs $\leq 8 \ceil{\lg(|S|)}^2$ queries to $f$.
    \end{enumerate}
\end{cl}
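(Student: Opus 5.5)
The plan mirrors the proof of \autoref{claim:close-to-linear-implies-good_new_bucket}, with the Carbery--Wright step replaced by a Markov/Paley--Zygmund-type argument driven by the quantity $\kappa$. Throughout, $\bm x,\bm y\sim\cN(\bm 0,I_n)$ are the two samples drawn by \autoref{alg:findinfbucket}, $V$ is the union of the buckets indexed by $S$, and $\bm w\triangleq\bm x_V\bm y_{\overline V}$.

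\textbf{Part 1 (no influential bucket).} If no bucket in $S$ is influential, then no coordinate in $V$ is influential. Changing the coordinates in $V$ one at a time therefore never changes $f$, so $f(\bm w)=f(\bm y)$ pointwise. The triangle inequality together with the $\eta$-approximate oracle gives $|\widetilde f(\bm w)-\widetilde f(\bm y)|\le 2\eta$. Hence the test in Line 5 always succeeds, the routine returns $\emptyset$, and it uses exactly $2$ queries. This is verbatim the argument of \autoref{claim:close-to-linear-implies-good_new_bucket}.

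\textbf{Part 2, one level of recursion.} Suppose $V$ is influential. I want to bound the probability that the top-level check wrongly returns $\emptyset$, that is,
\[
\Pr\bigl[|\widetilde f(\bm w)-\widetilde f(\bm y)|\le 2\eta\bigr]\le \Pr\bigl[|f(\bm w)-f(\bm y)|\le 4\eta\bigr].
\]
Set $X\triangleq |f(\bm w)-f(\bm y)|$. I condition on $\mathcal E$, so that both query points lie in $\ball(\bm 0,2\sqrt n)$ and boundedness applies; the event $\overline{\mathcal E}$ is folded into the error. By the definition of $\kappa$ we have $\E[X\mid\mathcal E]\ge\kappa$, and by boundedness $X\le 2\|f\|_{\infty,\ball(\bm 0,2\sqrt n)}\le\kappa$ on $\mathcal E$ (this is where the hypothesis $\kappa\ge 2\|f\|_\infty$ enters). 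Consequently $X$ is concentrated near its maximum, and a reverse-Markov bound applied to the nonnegative variable $\kappa-X$ gives
\[
\Pr[X\le 4\eta\mid\mathcal E]=\Pr[\kappa-X\ge\kappa-4\eta\mid\mathcal E]\le \frac{\kappa-\E[X\mid\mathcal E]}{\kappa-4\eta}.
\]
The right-hand side tends to $0$ as $\E[X]\to\kappa$. If this bound turns out too weak, I will instead apply Paley--Zygmund (\autoref{thm:paley-zygmund}) with $\theta=4\eta/\kappa\le 1$ (using $\eta\le\kappa/4$) and bound $\E[X^2]\le\kappa\,\E[X]$. The goal is a per-level failure probability of order $\eta/\kappa$, say at most $4\eta/\kappa$, plus the $\overline{\mathcal E}$ term. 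Since the stated bound has no constant term, the $\overline{\mathcal E}$ contribution must either be absorbed into the conditional definition of $\kappa$ or the analysis must work directly under the conditioning. Extracting a clean $O(\eta/\kappa)$ bound is the step I expect to be the main obstacle.

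\textbf{Recursion and totals.} This follows the linear case exactly. Split $S=S^{(L)}\sqcup S^{(R)}$. If $S^{(L)}$ contains no influential bucket, Part 1 shows the left call returns $\emptyset$ deterministically and the recursion proceeds into $S^{(R)}$. Otherwise, strong induction applies to $S^{(L)}$. Every set visited along the way is influential, and its influence is at least $\kappa$ because $\kappa$ is a minimum over all influential $V$. The recursion makes at most $2\lceil\lg|S|\rceil$ nested calls, and each call uses at most $4\lceil\lg |S|\rceil$ queries by the same count as before. A union bound over the per-call failures then gives total failure probability at most $16\eta\lceil\log|S|\rceil^2/\kappa$ and at most $8\lceil\lg|S|\rceil^2$ queries. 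At the leaf $S=\{j\}$, the returned bucket $B_j$ passed the test, so it is influential and hence $\kappa$-influential by the definition of $\kappa$.
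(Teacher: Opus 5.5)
Your plan is the paper's proof. Part~1 is verbatim. Part~2 applies Paley--Zygmund conditioned on $\mathcal E$ with $\theta=4\eta/\kappa$, using $\kappa\ge 2\|f\|_{\infty,\ball(\bm 0,2\sqrt n)}$ to make the moment ratio at least $1$; your fallback $\E[X^2\mid\mathcal E]\le\kappa\,\E[X\mid\mathcal E]$ gives exactly the paper's $\Pr[Z\ge 4\eta\mid\mathcal E]\ge(1-4\eta/\kappa)^2$. The recursion and union bound are then the same as in the linear case.

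The step you single out is a genuine gap, but the paper has it too. After the conditional bound, it simply asserts that Line~5 fails with probability at least $1-8\eta/\kappa$, silently dropping $\Pr[\overline{\mathcal E}]$. Off $\mathcal E$ the hypotheses say nothing about $f$. So what either argument genuinely proves is the stated bound conditional on every sampled pair satisfying $\mathcal E$, or the unconditional bound plus an additive $(\lceil\lg|S|\rceil+1)\Pr[\overline{\mathcal E}]$. You should not expect to remove that term.

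Two further remarks. First, your reverse-Markov bound is not too weak; it is decisive. Since $\E[X\mid\mathcal E]\ge\kappa$ while $0\le X\le\kappa$ on $\mathcal E$, its right-hand side is $\le 0$. Hence $X=\kappa$ almost surely on $\mathcal E$, and the conditional failure probability is exactly $0$ once $4\eta<\kappa$.

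Now apply this to $V=[n]$. Reading $\|f\|_\infty$ as $M\triangleq\sup_{\ball(\bm 0,2\sqrt n)}|f|$, as both your bound $X\le 2\|f\|_\infty$ and the paper's second-moment bound require, we have:
\begin{itemize}
\item $V=[n]$ is influential whenever Part~2 applies, and $\kappa$ is a minimum over all influential $V\subseteq[n]$.
\item For this $V$, $\bm w=\bm x$, and $\bm x,\bm y$ are i.i.d.\ given $\mathcal E$.
\item The hypotheses then force $M=\kappa/2\ge\eps/(8k^2)>0$ and $|f(\bm x)-f(\bm y)|=2M$ almost surely.
\item That makes $f(\bm x)\in\{\pm M\}$ almost surely, and then $f(\bm x)=f(\bm y)$ with probability $p^2+(1-p)^2\ge 1/2$, a contradiction.
\end{itemize}
So the hypothesis on $\kappa$ can never hold in Part~2, and the claim is true there only vacuously. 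The difficulty you ran into lies in the statement, not in your argument.

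Second, you and the paper both assume that on $\mathcal E$ both query points lie in $\ball(\bm 0,2\sqrt n)$. But $\|\bm x_V\bm y_{\overline V}\|_2^2\le\|\bm x\|_2^2+\|\bm y\|_2^2$ only places $\bm w$ in $\ball(\bm 0,2\sqrt{2n})$. Taking $\mathcal E=\{\bm w,\bm y\in\ball(\bm 0,2\sqrt n)\}$ fixes this at no cost, since $\bm w$ is marginally $\cN(\bm 0,I_n)$.
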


\begin{proof}
Let $\vec{x}$ and $\vec{y}$ be the Gaussian random vectors sampled by \autoref{alg:findinfbucket}, and $\vec{w} \triangleq  \vec{y}_{\overline{V}} \vec{x}_V\in \R^n$.

\begin{enumerate}
    \item As none of the $B_i$'s are influential, $V (= \{i \in [n] : i \in B_j \mbox{ for some } j \in S\})$ does not contain any influential variable. So, $f \equiv f\mid_{\overline{V}}$, where $f\mid_{\overline{V}}: \R^{[|n] \setminus V|} \rightarrow \R$, i.e., for every $\bm x\in\R^{n}$, with $\bm x_{\overline{V}}\triangleq (\bm x_i,i\not\in V)\in\R^{[|n] \setminus V|}$, $f(\bm x)=f\mid_{\overline{V}}({\bm x_{\overline{V}}})$. 
    So, in Line $5$ of \autoref{alg:findinfbucket}, $f(\bm w)=f(\bm y)$, and hence 
    $|\widetilde f(\bm w)-\widetilde f(\bm y)|=|\widetilde f(\bm w)-f(\bm w)-(\widetilde f(\bm y)-f(\bm y))|
\leq \overbrace{|\widetilde f(\bm w)-f(\bm w)|}^{\leq\eta}+\overbrace{|\widetilde f(\bm y)-f(\bm y)|}^{\leq\eta}\leq 2\eta$, 
    by the triangle inequality, implying it will always return $\emptyset$, performing exactly $2$ queries to $f$.

\item

We will show: When $S$ has some $j$ such that $B_j$ contains a $\kappa$-influential variable, then the test in lines 5-10 will fail with probability $1-8\eta\lceil\log |S|\rceil/\kappa$. Let $Z \triangleq |f({\bm x}_V {\bm y}_{\overline{V}}) - f({\bm y})|$.

If $\E_{\bm x, \bm y} \left[Z\right] \geq \kappa$, 
then, we argue (by anticoncentration) that $\Pr_{\bm x, \bm y}\left[Z > 4\eta\right] \geq 1 - 8\eta\lceil\log |S|\rceil/\kappa$, and hence $\Pr_{\bm x, \bm y}\left[|\widetilde{f}({\bm x}_V {\bm y}_{\overline{V}}) - \widetilde{f}(\bm y)| > 2\eta\right]$ with the same probability. 
To use anticoncentration on $Z$, we first bound its second moment, conditioned on the event $\mathcal{E}$: 
\[\E_{\bm x,\bm y\sim\cN(\bm 0,I_n)}[Z^2\mid\mathcal E]= \E_{\bm x,\bm y\sim\cN(\bm 0,I_n)}[f({\bm x}_V {\bm y}_{\overline{V}})^2 + f({\bm y})^2-2f({\bm x}_V {\bm y}_{\overline{V}})f({\bm y})\mid\mathcal{E}]\leq 4\|f\|^2_{\infty,\ball(\bm 0,2\sqrt{n})}.\]

Then, applying \autoref{thm:paley-zygmund} (with $\theta=\frac{4\eta}{\kappa}\in(0,1)$), conditioned on $\mathcal{E}$, we get
\[\Pr[Z \geq 4\eta\mid\mathcal E] \geq \Pr\Bigg[Z\geq \frac{4\eta}{\kappa}\underbrace{\E[Z]}_{\geq \kappa}\mid\mathcal{E}\Bigg]\geq \left(\! 1-\frac{4\eta}{\kappa}\right)^2 \frac{\E^2[Z\mid\mathcal{E}]}{\E[Z^2\mid\mathcal{E}]} \geq \left(\! 1-\frac{4\eta}{\kappa}\right)^2 \frac{\kappa^2}{4\|f\|^2_{\infty,\ball(\bm 0,2\sqrt{n})}}.\]

From the premise, we have $\kappa\geq 2\|f\|_{\infty,\ball(\bm 0,2\sqrt{n})}$, giving us $\Pr[Z \geq 4\eta\mid\mathcal E]\geq (1-4\eta/\kappa)^2$.

Thus, with probability at least $1-8\eta/\kappa$, the condition in Line 5 will not hold, ensuring \autoref{alg:findinfbucket} reaches the recursive step (Lines 8--13). By construction, $S = S^{(L)} \sqcup S^{(R)}$, with $1 \leq |S^{(L)}| \leq \ceil{|S|/2}$ and $1 \leq |S^{(R)}| \leq \floor{|S|/2}$. We have two possibilities:
    \begin{enumerate}[label=(\roman*)]
        \item $S^{(L)}$ contains no influential buckets and there exists $\emptyset\neq S_R \subseteq S^{(R)}$ such that $B_{j}$ is influential for all $j \in S_R$.
        
        \item There exists $\emptyset\neq S_L \subseteq S^{(L)}$ such that $B_{j}$ is influential for all $j \in S_L$.
    \end{enumerate}

    In Case (i), $\mathsf{FindInfBucket}(f,B,S^{(L)})$ will always return $\emptyset$ and perform $2$ queries (by Part 1) and hence the return value of $\mathsf{FindInfBucket}(f,B,S)$ will be $\mathsf{FindInfBucket}(f,B,S^{(R)})$. Using the strong induction hypothesis we get, with probability at least $1-8\eta\lceil\log |S|\rceil/\kappa$, this return value will be $\{j\}$ for some $j \in S_R$ and the total number of queries made will be $\leq 2 + 2 + 4 \ceil{\lg(\floor{|S|/2})} \leq 4 + 4(\ceil{\lg |S|} - 1) = 4 \ceil{\lg |S|}$.

    In Case (ii), again using the strong induction hypothesis, with probability at least $1-8\eta\lceil\log |S|\rceil/\kappa$, $\mathsf{FindInfBucket}(f,B,S^{(L)})$ will return $\{j\}$ for some $j \in S_L$ and thus the algorithm will return $\{j\}$ (line 13). The number of queries made will be $\leq 2 + 4 \ceil{\lg(\ceil{|S|/2})} \leq 2 + 4(\ceil{\lg |S|}-\frac{1}{2}) = 4 \ceil{\lg |S|}$.
    \end{enumerate}

    Note that, if we run $\mathsf{FindInfBucket}(f, B, S)$ for some set $S$ of bucket-indices with $|S| > 1$, each recursive step (irrespective of whether it falls in case (i) or case (ii) above) will succeed with probability $\geq 1-8\eta\lceil\log |S|\rceil/\kappa$. 
    For the top level call to succeed, all the recursive calls must succeed. 
    The number of recursive calls made is at most $2 \lceil \lg |S| \rceil$. 
    So, we do a union bound over all the failure events to bound the failure probability of the top-level call by $16 \eta\lceil\log |S|\rceil^2/\kappa$.
\end{proof}

We are now ready to prove the main theorem of this section:

\begin{proof}[Proof of \texorpdfstring{\Cref{theo:kjuntafinalmain}}{thm-k-junta}]
Let us start with the completeness proof.

\textbf{Completeness:}
The argument of completeness of \autoref{alg:testjuntaapprox} holds easily, as when $f$ is a $k$-junta, at most $k$ of the buckets in the random partition $\mathcal{B}$ will be influential. From \autoref{claim:close-to-linear-implies-good_new_bucket_junta}, we get that each execution of the $\mathsf{For}$ loop increments $I$ by: either the index of an influential bucket (if it finds one), or nothing, while also removing the bucket from future loop executions. Hence after all its executions, $|I|$ is incremented at most $k$ times, falsifying the check in \autoref{line:junta-count-check}, making $f$ $\textsc{Accept}$.

\textbf{Soundness:}
    Now we prove soundness. 
    Here our goal is to prove that if $f$ is $\eps$-far from all $k$-juntas, then \autoref{alg:testjuntaapprox} rejects it with probability at least $2/3$. 
    In particular, if $f$ is $\eps$-far from being $k$-junta, and $S$ is the union of at most $k$-parts of $\cI$, then Line 7 of \autoref{alg:testjuntaapprox} will be satisfied with high probability.
    
    By \autoref{lem:influence_partition_complement}, we have: if $|I|\leq k$, $\infl_f(S) \geq \eps/4$ ($S=[n]\setminus I$ in the execution of \autoref{alg:testjuntaapprox}). 
    At a point in the execution, when exactly $k$ influential buckets have been identified, $\infl_f(S) \geq \eps/4$, $|S|=O(k^2)-k=O(k^2)$, and thus, by \autoref{lem:infl-submodular}, there must exist a bucket $B_i,i\in S$ such that $\infl_f(B_i)\geq\eps/(4k^2)$.
    This bucket will then be output by $\mathsf{FindInfBucket}(f,B,S)$.
    So, until $k + 1$ influential buckets have been identified, each invocation of $\mathsf{FindInfBucket}(f,B,S)$ in an execution of its $\mathsf{For}$ loop, returns an influential bucket with probability at least $1-64\eta\lceil\log k\rceil^2/\kappa$. 
    Treating the outcome of each invocation as a geometric random variable, we conclude: to recover at least $k+1$ such buckets with probability $\geq 2/3$, \[ k/\left(1-64\eta\lceil\log k\rceil^2/\kappa\right)=k+k\sum_{i=1}^\infty (64\eta\lceil\log k\rceil^2/\kappa)^i=O(k/\eps)\] iterations of the $\mathsf{For}$ loop suffice. When $|I| > k$, in \autoref{line:junta-count-check}, algorithm rejects with the same probability.
    
    \textbf{Query Complexity:} From \autoref{claim:close-to-linear-implies-good_new_bucket_junta}, we know: each invocation of $\mathsf{FindInfBucket}(f,B,S)$ performs $\leq 4\log|S|$ queries to $f$. 
    With $O(k/\eps)$ such invocations in total, and $|S|=r=O(k^2)$, the overall query complexity thus is $O(k\log k/\eps)$.
\end{proof}

\section{Lower Bounds}\label{sec:lowerbound}

In this section, we will present our lower bound results, restated here for convenience:

\lowerbounds*
For proving the lower bounds, we follow the approach of \cite{blais2012property}, i.e., we show the hardness of testing these problems by reducing the \textsc{Set-Disjointness} problem, a canonical hard problem in the field of communication complexity, to them. 
Proving lower bounds in the exact query model directly gives us lower bounds in the approximate query model.
For brevity, we only present the lower bound argument for $k$-linearity. The other results follow the same approach.

\textbf{A lower bound of $\Omega(\frac{1}{\eps})$ queries:}
For any $\eps \in (0,1)$, it is a folklore result that distinguishing if a function $f$ is linear, or $\eps$-far from being linear in $\ell_1$-distance, requires $\Omega(1/\eps)$ queries. An exposition for this is provided in \cite{fischer2024basiclowerboundproperty}.

\textbf{A lower bound of $\Omega(d)$ queries for $k$-sparse, degree-$d$ polynomial testing:}
As before, it is another folklore result that given a proximity parameter $\eps \in (0,1)$, $\Omega(d)$ queries are necessary to distinguish if an unknown function $f$ is a degree-$d$ polynomial, or a degree-$(d+1)$ (or $(d-1)$) polynomial (and hence, is $\eps$-far from all degree-$d$ polynomials) 
with probability at least $2/3$. 

To prove the $\Omega(k)$ lower bound, we next present a reduction from \textsc{Set-Disjointness}.

\subsection{Communication complexity setting}

We consider the two-player communication game setting, featuring Alice and Bob. 
Alice has a function $f$, and Bob has a function $g$, and they jointly want to evaluate another function $h$ (which is a function of $f$ and $g$). 
We assume that both the players have unbounded computational power, but to evaluate $h$, they need to communicate among themselves, and the objective is to keep this communication overhead as small as possible. 

It is known that it takes linear in the size of the bits of the largest set to solve \textsc{Set-Disjointness}.

\begin{theo}[\cite{haastad2007randomized}, \textsc{Set-Disjointness} Lower Bound]\label{theo:setdisjointnesslb}
Let Alice and Bob have two sets $A$ and $B$, respectively, each of size at most $k$ from a universe of size $n$. In order to distinguish if $|A \cap B|=1$, or $|A \cap B|=0$, $\Omega(k)$ bits of communication between Alice and Bob are required.
\end{theo}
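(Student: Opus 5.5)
The plan is to reduce the statement to the classical (unique) set-disjointness lower bound on a universe of size exactly $k$, and then prove that bound with the information-complexity argument of Bar-Yossef, Jayram, Kumar and Sivakumar.

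\emph{Reduction.} Since $|A|,|B|\leq k\leq n$, we may restrict attention to inputs $A,B\subseteq[k]\subseteq[n]$. Identify them with characteristic vectors $a,b\in\{0,1\}^k$. Every pair of subsets of $[k]$ automatically satisfies the size constraint. So any protocol for the stated promise problem is also a protocol for \textsc{Unique-Disj}$_k$: decide whether $\sum_{i\in[k]} a_ib_i=0$ or $=1$, promised that one of the two holds. It therefore suffices to show that every randomized protocol for \textsc{Unique-Disj}$_k$ with error $\leq 1/3$ communicates $\Omega(k)$ bits.

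\emph{Direct sum.} I would define a hard distribution $\mu$ on a single coordinate $(a_i,b_i)$ as follows. Pick an auxiliary variable $D_i\in\{\text{Alice},\text{Bob}\}$ uniformly at random. The player named by $D_i$ gets $0$; the other player gets a uniform bit. Under $\mu$ we always have $a_ib_i=0$. Take $\mu^{\otimes k}$ on all coordinates, and let $\Pi$ be the transcript of a protocol with public randomness. Then
\[|\Pi|\geq I(\Pi; a,b\mid D)\geq \sum_{i\in[k]} I(\Pi;a_i,b_i\mid D),\]
where the second inequality uses that the coordinates are independent given $D$. Next, for each $i$, Alice and Bob embed a single-coordinate \textsc{AND} instance in coordinate $i$. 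They sample the other coordinates privately using $D_{-i}$ (placed in public randomness), which is possible because each $D_j$ fixes one side. This converts $\Pi$ into a protocol for \textsc{AND} that errs with probability $\leq 1/3$, including on the input $(1,1)$, since the promise holds. Consequently each term is at least $\mathrm{IC}_{\mu}(\textsc{AND})$, the conditional information cost of \textsc{AND} under $\mu$.

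\emph{Single-coordinate bound (main obstacle).} The remaining step is to show $\mathrm{IC}_\mu(\textsc{AND})=\Omega(1)$. Here I would pass from mutual information to Hellinger distance: $I(\Pi;a_i,b_i\mid D)$ lower-bounds an average of $h^2(\Pi_{00},\Pi_{01})$ and $h^2(\Pi_{00},\Pi_{10})$. By Cauchy--Schwarz and the triangle inequality, this average is at least $\tfrac14 h^2(\Pi_{01},\Pi_{10})$. The cut-and-paste property of protocol transcripts then gives $h(\Pi_{01},\Pi_{10})=h(\Pi_{00},\Pi_{11})$. Correctness forces the transcript distributions on inputs $00$ and $11$ to have total-variation distance $\geq 1/3$, and hence Hellinger distance $\Omega(1)$. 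Chaining these inequalities yields $\mathrm{IC}_\mu(\textsc{AND})\geq c>0$, so the total communication is at least $ck=\Omega(k)$.

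I expect the delicate point to be the cut-and-paste step, and in particular getting the Hellinger/mutual-information conversions right with public randomness. If it proves troublesome, the fallback is Razborov's corruption (rectangle) argument: under his distribution, large rectangles cannot be mostly-$0$ without containing a constant fraction of the intersecting inputs. That argument also gives $\Omega(k)$ directly for \textsc{Unique-Disj}$_k$.
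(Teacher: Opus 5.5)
Your proof is correct, but it does not follow the paper, which gives no argument for this statement. The paper imports it from H{\aa}stad--Wigderson, where the $\Omega(k)$ side in turn rests on the Kalyanasundaram--Schnitger/Razborov disjointness bound, transferred to small sets by essentially the universe restriction you perform.

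You instead make the statement self-contained. You restrict to subsets of $[k]$ and then run the Bar-Yossef--Jayram--Kumar--Sivakumar information-complexity proof. The restriction silently uses $k\le n$; this is genuinely needed, since for $k>n$ sending a characteristic vector costs only $n$ bits. The steps you list all go through:
\begin{enumerate}
\item superadditivity of $I(\Pi;a,b\mid D)$ over coordinates that are independent given $D$;
\item an embedding that respects the promise, because the sampled coordinates never intersect, so the $(1,1)$ input is a legal unique-intersection instance;
\item the Hellinger chain $\tfrac12\bigl(h^2(\Pi_{00},\Pi_{01})+h^2(\Pi_{00},\Pi_{10})\bigr)\ge\tfrac14 h^2(\Pi_{01},\Pi_{10})=\tfrac14 h^2(\Pi_{00},\Pi_{11})$.
\end{enumerate}

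The public-coin issue you flag is handled by conditioning on the public string $r$, which contains $D_{-i}$. Each $\Pi_r$ is then a private-coin protocol, so cut-and-paste applies to it. Write $\delta_r$ for the total-variation distance between $\Pi_{r,00}$ and $\Pi_{r,11}$. Correctness holds only on average over $r$, but it still gives $\E_r\delta_r\ge 1-2\epsilon$. Combining $h^2\ge\tfrac12\delta^2$ with Jensen then gives $\E_r h^2(\Pi_{r,00},\Pi_{r,11})\ge\tfrac12(1-2\epsilon)^2$. This yields communication at least $k(1-2\epsilon)^2/8$.

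What your route buys is an explicit proof of exactly the promise version (intersection size $0$ versus $1$) that the later reduction uses. What the citation buys is brevity and the exactly-$k$-element formulation. You can also recover that formulation yourself: each player pads locally with its own reserved block of dummy elements, which requires $n\ge 3k$.
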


\subsection{Connection between Communication and Query complexity}

Consider two functions $f$ and $g$, a property $\cP$, $h\triangleq f-g$, and a communication problem $C_{h,\cP}$: 
Alice and Bob receive $f$ and $g$, respectively, and they want to decide if $h\in\cP$, or $f$ is $\eps$-far from $\cP$. 
\cite{blais2012property} proved that $\cR(C_{h,\cP})$, the randomized communication complexity of $C_{h,\cP}$, is at most twice $Q(\cP)$, the query complexity of deciding the property $\cP$.

\begin{lem}[{\cite[Lemma 2.2]{blais2012property}}]\label{lem:querycommunicationbound}
Let $\cP$ be a property of functions, and $h$ be a function. Then $\cR(C_{h,\cP}) \leq 2 Q(\cP)$.
\end{lem}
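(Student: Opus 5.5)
The plan is the standard simulation argument of \cite{blais2012property}. Fix a property $\cP$ and let $T$ be an optimal (randomized, possibly adaptive) tester for $\cP$ that makes $q = Q(\cP)$ queries and succeeds with probability at least $2/3$. Alice holds $f$ and Bob holds $g$; the function whose membership in $\cP$ we must decide is $h = f - g$. Using public randomness, Alice and Bob both run $T$ on the common random string. Because the random string is shared, the two players always agree on the next query point.

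The simulation proceeds query by query. Whenever $T$ asks for the value $h(\bm x)$ at a point $\bm x$, Alice sends $f(\bm x)$ and Bob sends $g(\bm x)$. Both players then compute $h(\bm x) = f(\bm x) - g(\bm x)$ locally and feed it to their copies of $T$. By induction on the number of queries answered so far, both copies of $T$ see identical transcripts. Hence both copies make the same next (adaptive) query, and at the end both output the same decision, which they return.

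Correctness is inherited directly from $T$. If $h \in \cP$, the tester accepts with probability at least $2/3$. If $h$ is $\eps$-far from $\cP$, it rejects with probability at least $2/3$. So the protocol solves $C_{h,\cP}$ with the same error.

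The cost is two messages per query, one from each player, for a total of $2q$ messages. This gives $\cR(C_{h,\cP}) \le 2Q(\cP)$.

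The only delicate point is the unit of communication. Over the reals, a value $f(\bm x)$ is not a single bit. I would handle this the way the lemma is used in the paper: count each exchanged function value as one unit. In the reductions the functions take values in a bounded finite set (e.g.\ $\{0,1\}$-coefficient sums coming from set indicators), so each value costs $O(1)$ bits; equivalently, one can encode $f, g$ so that the range is finite. With that convention the bound holds exactly as stated.
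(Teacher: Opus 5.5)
Your simulation is the argument of \cite{blais2012property}, which the paper cites without proof: shared public randomness, both players send their value at each query point, and correctness is inherited from the tester. That part is correct. What it actually proves, however, is $\cR(C_{h,\cP})\le 2b\,Q(\cP)$ when every answer $f(\bm x)$, $g(\bm x)$ fits in $b$ bits. This gives the stated bound only in the Boolean setting ($b=1$) of the cited lemma.

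The gap is your last paragraph, which tries to dispose of the real-valued case. Your claim that the functions in the reduction take values in a finite set is false. The function $f=\sum_{i\in A}\bm x_i$ is evaluated at real query points chosen by the tester, so $\{0,1\}$ coefficients do not give a finite range. Even a finite range of size $N$ would still cost $\log_2 N$ bits per answer, not $O(1)$.

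Counting one real value as one unit is not a harmless convention either, because a single real carries unboundedly many bits. In the paper's own instance, one exact query of $h$ at $\bm x=(4^{-1},4^{-2},\ldots,4^{-n})$ returns $\sum_i(\indic_A(i)-\indic_B(i))4^{-i}$. From this value the digits $\indic_A(i)-\indic_B(i)\in\{-1,0,1\}$, and hence $|A\cap B|$, can be read off. So in your unit the hard instance costs $O(1)$, and \Cref{theo:setdisjointnesslb}, which counts bits, yields nothing.

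With $\cR$ measured in bits, the inequality is in fact false for real-valued functions. Take the domain $\{1,2\}$ with uniform measure and Hamming distance, and let $\cP=\{h: h(2)=\chi(h(1))\}$, where $\chi(c)=1$ iff $c\in[0,1)$ has a terminating base-$4$ expansion avoiding the digit $3$. Let $f(1)=\sum_{i\in A}4^{-i}$, $f(2)=1$, $g(1)=-2\sum_{i\in B}4^{-i}$, $g(2)=0$. Two queries decide $\cP$, and every $h\notin\cP$ is $\frac12$-far, so $2Q(\cP)\le 4$. Yet $f-g\in\cP$ iff $A\cap B=\emptyset$, so $\cR(C_{h,\cP})=\Omega(k)$.

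To use the lemma over $\R$ you would need an extra discretization step. For example, each player rounds its answer to within $\eta/2$, which gives a valid $\eta$-approximate oracle for $h$. But this covers only testers that tolerate approximate queries, and it costs $O(\log(M/\eta))$ bits per query when values are bounded by $M$. So the resulting bound is weaker than $2Q(\cP)$.
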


Now, given two sets $A$ and $B$, we construct two functions $f$ and $g$ suitably as follows.

\subsection{Construction of the hard instances for the lower bound}
Given $A \subseteq [n]:|A|=k$, Alice constructs a polynomial $f= \sum_{i \in A} {\bm x}_i$. 
Similarly, Bob constructs a polynomial $g = \sum_{i \in B} {\bm x}_i$. Note that $f,g:\R^n\to\R$. 
Let $h=f-g$.
Consider the two cases:

\begin{itemize}
    \item[(i)]{\bf $|A \cap B|=1$:} In this case $h$ would be a $(2k+2)$ linear function.

    \item[(ii)]{\bf $|A \cap B|=0$:} In this case, $h$ would be a $2k$-linear function.
    
\end{itemize}

Now we show that under the standard Gaussian distribution $\cN(\bm 0,I_n)$, $h$'s corresponding to cases (i), and (ii) above, are sufficiently far in $\ell_1$-distance with probability at least $2/3$.

\begin{lem}\label{lem:lbfar}
Let $f_1$ be a $(2k+2)$-linear function, and $f_2$ be a $2k$-linear function with coefficients from $\{0,1\}$. Under the standard Gaussian distribution $\cN(\bm 0,I_n)$, $f_1$ is $\Omega(1)$-far from $f_2$ in $\ell_1$-distance with probability at least $2/3$.  
\end{lem}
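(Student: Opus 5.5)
The plan is to reduce the statement to a computation about a single Gaussian random variable. Write $f_1(\bm x)=\sum_{i\in S_1}c_i\bm x_i$ and $f_2(\bm x)=\sum_{i\in S_2}\bm x_i$ with $|S_1|=2k+2$, $|S_2|=2k$, and all coefficients in $\{0,1\}$ (this is exactly the situation produced by the construction above, up to the signs $\pm1$ arising from $h=f-g$; the argument below is insensitive to signs). Then $h\triangleq f_1-f_2$ is itself linear, $h(\bm x)=\iprod{\bm c}{\bm x}$ with $\bm c\in\{-1,0,1\}^n$, or $\{-2,\ldots,2\}^n$ if signs are allowed. The first step is a counting observation: since $|S_1|>|S_2|$, there is at least one (in fact at least two) index $i^\ast\in S_1\setminus S_2$, so $c_{i^\ast}\neq 0$ and $|c_{i^\ast}|\ge 1$. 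Hence $\|\bm c\|_2\ge 1$.

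The second step is to compute the $\ell_1$-distance exactly. Under $\bm x\sim\cN(\bm 0,I_n)$, the random variable $h(\bm x)=\iprod{\bm c}{\bm x}$ is distributed as $\cN(0,\|\bm c\|_2^2)$. Therefore
\[\mathsf{dist}_{\cN(\bm 0,I_n),\ell_1}(f_1,f_2)=\E_{\bm x}\left[|\iprod{\bm c}{\bm x}|\right]=\sqrt{2/\pi}\,\|\bm c\|_2\ge\sqrt{2/\pi},\]
which is $\Omega(1)$ and uniform in $k$ and $n$. The "with probability at least $2/3$" clause then holds trivially, since the distance is deterministic once the sets are fixed. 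Alternatively, if the intended reading is over a random draw of the point, I would use anticoncentration: by \autoref{srt_thoe:carbery-wright} with $d=1$ applied to $h/\|\bm c\|_2$, we get $\Pr[|h(\bm x)|\le c_0]\le O(c_0)\le 1/3$ for a small constant $c_0$. This gives $|f_1(\bm x)-f_2(\bm x)|=\Omega(1)$ with probability at least $2/3$.

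The third step, needed for the application to the reduction, is a farness statement against the whole class rather than against $f_2$ alone. Here I would argue that $f_1$ is $\Omega(1)$-far from every $2k$-linear $g$, by the same computation: $\E|\iprod{\bm c-\bm b}{\bm x}|=\sqrt{2/\pi}\,\|\bm c-\bm b\|_2$. Any $\bm b$ with at most $2k$ nonzeros misses at least two coordinates of the support of $\bm c$, so $\|\bm c-\bm b\|_2\ge\sqrt2$. A general $g$ can be handled by first projecting onto linear functions, but with $\ell_1$ it is cleanest to note that the paper's reduction only needs linear competitors, and I would state the lemma in that form.

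The main obstacle is interpretive rather than technical: the lemma as stated mixes "far in $\ell_1$" with "with probability $2/3$", and the distance to the $2k$-linear class must be $\Omega(1)$ for the reduction through \autoref{lem:querycommunicationbound} and \autoref{theo:setdisjointnesslb} to go through. I would resolve it by proving the deterministic bound $\sqrt{2/\pi}$ (respectively $2/\sqrt\pi$ against the class) and remarking that the probabilistic version follows from Carbery–Wright as above.
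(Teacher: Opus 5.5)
Your argument is correct, but its main step differs from the paper's. The paper argues by anticoncentration. It sets $g=f_1-f_2$, takes ``without loss of generality'' $g(\bm x)=\bm x_1+\bm x_2$, and applies \autoref{srt_thm:glazer-mikulincer} with $d=1$, $t=0$. This shows $|g(\bm x)|>1/(2C)$ with probability at least $2/3$ over $\bm x\sim\cN(\bm 0,I_n)$. That pointwise statement is what the paper means by ``with probability at least $2/3$'', and the $\ell_1$ bound is left implicit.

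You instead use $\iprod{\bm c}{\bm x}\sim\cN(0,\|\bm c\|_2^2)$ to compute $\E|\iprod{\bm c}{\bm x}|=\sqrt{2/\pi}\,\|\bm c\|_2$ exactly. This gives the $\ell_1$ distance directly, with an explicit constant. It also handles any additional terms of $f_1-f_2$ transparently through $\|\bm c\|_2\ge\sqrt2$. The paper's WLOG silently needs $\mathrm{coeff}_1(g)\ge\sqrt2$ in general, and its constants are slightly off: with $\eps=1/(2C)$ the bound is $1/(2\sqrt2)$, which exceeds $1/3$, so a smaller $\eps$ is needed.

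The anticoncentration route, which you also sketch as your alternative reading, has its own advantages. It matches the lemma's probabilistic phrasing, and it does not rely on an exact Gaussian formula, so it would extend to other log-concave product measures.

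Both proofs need the nonzero coefficients of $f_1$, not just those of $f_2$, to be $\pm1$. Otherwise the lemma fails, e.g.\ for $f_1=f_2+\delta(\bm x_a+\bm x_b)$ with tiny $\delta$. You state this assumption explicitly, which is good.

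Two small cleanups in your third step. There $\bm c$ must denote the coefficient vector of $f_1$, whereas earlier you used it for $f_1-f_2$, so rename one of them. The hedge about ``a general $g$'' is unnecessary, since every member of the $2k$-linear class is linear by definition. An $\ell_1$ ``projection onto linear functions'' would itself need justification, so it is better dropped.
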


\begin{proof}
Consider the function $g=f_1-f_2$. Observe that $g$ is at least a $2$-linear function, with at least the two terms from $f_1$ not present in $f_2$. Without loss of generality, let $g(\bm x)= \bm x_1+ \bm x_2$.
We are interested in the event where
$|g(\bm x)|>O(1)$.

Using \Cref{srt_thm:glazer-mikulincer} (with $d=1,t=0$, $g({\bm x}) = {\bm x_1} + {\bm x_2}\text{, and coeff}_1(g)=\sqrt{2}$), we get: $\forall\eps>0$,

\[\Pr_{\vec{x} \sim \cN(\bm 0,I_n)}\Big[|g(\vec{x})| \leq \eps \Big] \leq C\left(\frac{\eps}{\sqrt{2}}\right).\]

Setting $\eps= \frac{1}{2C}$, we have the following:
\[\Pr_{\vec{x} \sim \cN(\bm 0,I_n)}\left[|g(\vec{x})| \leq \frac{1}{2C}\right] \leq \left(\frac{1}{2\sqrt{2}}\right) \leq 1/3.\]

Thus, $f_1$ is at least $\frac{1}{2C}$-far from $f_2$ with probability at least $2/3$. 
\end{proof}

\subsection*{Simulation}
Let $C_h$ be the communication problem where Alice and Bob have two sets $A$ and B, respectively, each of size $k$. 
As discussed before, they have constructed the two functions $f$ and $g$, respectively, from $A$ and $B$. 
Equivalently, we can also say that Alice and Bob are given two functions $f$ and $g$ respectively, with the promise that the function $h=f-g$ is either a $2k$-linear, or a $(2k+2)$-linear function, and they should accept if and only if $h$ is a $2k$-linear function.

From the above reduction, it is clear that $\cR(C_h) \geq \cR(\textsc{Set Disjointness}) = \Omega(k)$. By \Cref{lem:querycommunicationbound}, we can say that any testing algorithm that can distinguish between $2k$-linear and $(2k+2)$-linear functions with probability at least $2/3$, requires $\Omega(k)$ queries. Moreover, from \autoref{lem:lbfar}, we know that any $2k$-linear function is $\Omega(1)$-far from any $(2k+2)$-linear function. Thus, any $2k$-linearity tester (with $\eps = o(1)$) that uses $o(k)$ queries can not distinguish $2k$-linear functions (which should be accepted) and $(2k+2)$-linear functions (which should be rejected) correctly, with probability $\geq 2/3$. 
This completes the proof of the lower bound of $k$-linearity testing.

\bibliographystyle{alpha}
\bibliography{reference}

\appendix

\section{Omitted Algorithms from the Main Body}

Here we state the algorithms from \cite{fleming2020distribution}, \cite{arora2} and \cite{arora2024optimaltestinglinearitysosa} for completeness. 

We begin with \autoref{alg:zero-mean-additivity}, a query-optimal approximate additivity tester, with its subroutines provided in \autoref{alg:subroutines_noisy}. The properties of this tester were recorded earlier in \autoref{srt_thm:additivity-tester-arora2}. This is followed by \autoref{alg:low_degree_main_algorithm_approx}, a query-optimal (w.r.t. the distance parameter $\eps$) approximate low-degree tester, with its subroutines in \autoref{alg:subroutines_approx}, and its properties recorded in \autoref{thm:approximate_low_degree_tester}. 

\begin{algorithm}[ht]
  \caption{{\cite[Algorithm 7]{arora2} and \cite[Algorithm 5]{arora2024optimaltestinglinearitysosa}:} {(Query-)Optimal Approximate Additivity Tester}}\label{alg:zero-mean-additivity}
  \Procedure{\textsc{Approximate Additivity Tester}($f,\mathcal{D},\alpha,\eps,R$)}
  {
    \Given{Query access to $f\colon \mathbb{R}^n \to \mathbb{R}$, sampling access to an unknown $(\eps/4,R)$-concentrated distribution $\mathcal{D}$, a noise parameter $\alpha>0$, and a farness parameter $\eps> 0$;}
    $\delta\gets 3\alpha$, $r\gets 1/50$\;
\Return  \textsc{Reject} if \textsc{TestAdditivity}($f,\delta$) returns \textsc{Reject}\;
    \For{$N_{\ref{alg:zero-mean-additivity}} \gets O(1/\varepsilon)$ times}{
      Sample $\bm p \sim \mathcal{D}$\;
    \If{$\bm p\in\ball(\bm 0,\lrad)$}{
      \Return \textsc{Reject} if $|f(\bm p) - $ \textsc{Approximate-}$g$($\bm p,f,\delta$)$|>5\delta n^{1.5}\kappa_{\bm p}$, or if \textsc{Approximate-}$g$($\bm p,f,\delta$) returns \textsc{Reject}.
    }}
    
\Return    \textsc{Accept}.}
\end{algorithm}

\begin{algorithm}[ht]
  \caption{{\cite[Algorithm 8]{arora2} and \cite[Algorithm 6]{arora2024optimaltestinglinearitysosa}} Additivity Subroutines} \label{alg:subroutines_noisy}
  \Procedure{\textsc{TestAdditivity}($f,\delta$)}
  {
    \Given{Query access to $f\colon \mathbb{R}^n \to \mathbb{R}$, threshold parameter $\delta>0$;}
    \For{$N_{\ref{alg:subroutines_noisy}} \gets O(1)$ times}{
      Sample $\bm{x},\bm{y},\bm{z} \sim \cN(\bm 0, I_n)$\;
      \Return \textsc{Reject} if $|f(-\bm x)+f(\bm x)|>\delta$\;
      
      \Return \textsc{Reject} if $|f(\bm x - \bm y) - \left(f(\bm x) - f(\bm y)\right)|>\delta$\;
      
      \Return \textsc{Reject} if $\left|f\left(\frac{\bm x - \bm y}{\sqrt{2}} \right) - \left( f \left(\frac{\bm x -\bm z}{\sqrt{2}} \right) + f \left(\frac{\bm z - \bm y}{\sqrt 2} \right)\right)\right|>\delta$\; 
    }
\Return    \textsc{Accept}.
  }
  \Procedure{\textsc{Approximate}-$g$($\bm p, f,\delta$)}
  {
    \Given{$\bm p \in \mathbb{R}^n$, query access to $f\colon \mathbb{R}^n \to \mathbb{R}$, threshold parameter $\delta>0$;}
    Sample $\bm x_{1}
    \sim \mathcal{N}(\bm 0,I_n)$\;
    \Return $\kappa_{\bm p} \left( f( \bm p/\kappa_{\bm p} - \bm x_1) + f(\bm x_1) \right)$.
  }
\end{algorithm}

\begin{algorithm}[ht]
\caption{{\cite[Algorithm 3]{arora2} and \cite[Algorithm 7]{arora2024optimaltestinglinearitysosa}} Optimal Approximate Low Degree Tester} \label{alg:low_degree_main_algorithm_approx}
\Procedure{\textsc{ApproxLowDegreeTester}($f,d,\mathcal{D},\alpha,\eps,\lrad,L$)}
{
  \Given {Query access to $f\colon \mathbb{R}^n \to \mathbb{R}$ that is bounded in $\ball(\bm 0,L)$ for some $L>0$, a degree $d\in\mathbb{N}$, sampling access to an unknown $(\varepsilon/4, R)$-concentrated distribution $\mathcal{D}$, a noise parameter $\alpha > 0$, and a farness parameter $\varepsilon> 0$. 
  }
   $\delta\gets 2^{d+1}\alpha$, $r\gets (4d)^{-6}$\;

\Return \textsc{Reject} if \textsc{ApproxCharacterizationTest}  \textbf{rejects}\;
   \For{$N_{\ref{alg:low_degree_main_algorithm_approx}} \gets O(\varepsilon^{-1})$ times}{
      Sample $\bm{p} \sim \frac{2d\sqrt{n}}{L}\mathcal{D}$;\\
      \If{$\bm p\in\ball(\bm 0,2d\lrad\sqrt{n}/L)$}{
     \Return  \textsc{Reject} if $|f(\bm{p}) -$ \textsc{ApproxQuery-}$g$($\bm{p}$)$|>2\cdot 2^{(2n)^{45d}}(\lrad/L)^d\delta$, or if \textsc{ApproxQuery-}$g$($\bm{p}$)  \textbf{rejects}.}
    }
\Return    \textsc{Accept}.
  }  
  
\end{algorithm}

\begin{algorithm}[!t]
\caption{{\cite[Algorithm 4]{arora2} and \cite[Algorithm 8]{arora2024optimaltestinglinearitysosa}:} Approximate Subroutines} \label{alg:subroutines_approx}
[Recall $\alpha_{i}\triangleq (-1)^{i+1}\binom{d+1}{i}$ and $\delta=2^{d+1}\alpha$.]\\
\Procedure{\textsc{ApproxCharacterizationTest}}
{
$N_{\ref{alg:subroutines_approx}} \gets O(d^2)$ \;
\For{$N_{\ref{alg:subroutines_approx}}$ times }{
\For{$j\in \{1,\dots, d+1\}$}{
    \For{$t\in \{0,\dots, d+1\}$}{
      Sample $\bm{p}\sim\mathcal{N}(\bm{0},j^2(t^2+1) I_n),\bm{q}\sim\mathcal{N}(\bm{0}, I_n)$;\\
      \Return \textsc{Reject} if $|\sum_{i=0}^{d+1}\alpha_i\cdot f(\bm{p}+i\bm{q})|>\delta$\;
      Sample $\bm{p}\sim\mathcal{N}(\bm{0},j^2 I_n),\bm{q}\sim\mathcal{N}(\bm{0},(t^2+1) I_n)$;\\
      \Return \textsc{Reject} if $|\sum_{i=0}^{d+1}\alpha_i\cdot f(\bm{p}+i\bm{q})|>\delta$\;
      }
      Sample $\bm{p},\bm{q}\sim\mathcal{N}(\bm{0},j^2 I_n)$;\\
      \Return \textsc{Reject} if $|\sum_{i=0}^{d+1}\alpha_i\cdot f(\bm{p}+i\bm{q})|>\delta$\;
    }}
\Return     \textsc{Accept}\;
    }

\Procedure{\textsc{ApproxQuery-}$g$($\bm{p}$)}
{

\If{$\bm{p}\in \ball(\bm{0},\srad)$}
{\textbf{return} \textsc{ApproxQuery-$g$-InBall}($\bm{p}$)\;}

\For{$i\in \{0,1,\hdots,d\}$}{
 $c_i\gets \frac{\srad}{\Vert \bm{p}\Vert_{2}} \cos\big(\frac{\pi(i+1/2)}{d+1}\big)$\;
 $v({c_i})\gets$ \textsc{ApproxQuery-$g$-InBall}($c_i\bm p$)  \;
}
Let $p_{\bm p} \colon \mathbb{R}\to\mathbb{R}$ be the unique degree-$d$ polynomial such that $p_{\bm p}(c_i) = v(c_i)$  for all $i$\;
\textbf{return} $p_{\bm p}(1)$\;
}

\Procedure{\textsc{ApproxQuery-$g$-InBall}($\bm{p}$)}
{
Sample $\bm q_1
\sim \mathcal{N}(\bm 0, I_n)$\;

\textbf{return} $ \sum_{i=1}^{d+1} \alpha_i \cdot f(\bm{p}+i\bm{q}_1)$\;
}
\end{algorithm}

\section{Omitted Proofs from \texorpdfstring{\Cref{sec:ksparsepolynomialtest}}{Section 5}}\label{sec:omitted-proofs}

\generalizedhenkelmatrix*
\begin{proof}
	Since for all $i \in [k]$ and $\alpha,\beta \in \N_{\geq 0}$, $M_i(\vec{x}^\alpha) M_i(\vec{x}^\beta)= M_i(\vec{x}^{\alpha+\beta})$, we get: for all $\vec{u} \in \R^n$,
	
 \begin{align*}
		& H_{\ell+1}(f,\bm u) = 
		\begin{pmatrix}
			\sum_{i=1}^k a_i M_i(\bm u^0) & \sum_{i=1}^k a_i M_i(\bm u^1) & \hdots & \sum_{i=1}^k a_i M_i(\bm u^{\ell}) \\
			\sum_{i=1}^k a_i M_i(\bm u^1) & \sum_{i=1}^k a_i M_i(\bm u^2) & \hdots & \sum_{i=1}^k a_i M_i(\bm u^{\ell+1}) \\
			\vdots & \vdots & \ddots & \vdots \\
			\sum_{i=1}^k a_i M_i(\bm u^{\ell}) & \sum_{i=1}^k a_i M_i(\bm u^{\ell+1}) & \hdots & \sum_{i=1}^k a_i M_i(\bm u^{2\ell})
		\end{pmatrix} \\
		& = \begin{pmatrix}
			M_1(\bm u^0) & M_2(\bm u^0) & \hdots & M_k(\bm u^0) \\
			M_1(\bm u^1) & M_2(\bm u^1) & \hdots & M_k(\bm u^1) \\
			M_1(\bm u^2) & M_2(\bm u^2) & \hdots & M_k(\bm u^2) \\
			\vdots & \vdots & \ddots & \vdots \\
			M_1(\bm u^{\ell}) & M_2(\bm u^{\ell}) & \hdots & M_k(\bm u^{\ell})
		\end{pmatrix}
		\begin{pmatrix}
			a_1 M_1(\bm u^0) & a_1 M_1(\bm u^1) & \hdots & a_1 M_1(\bm u^{\ell}) \\
			a_2 M_2(\bm u^0) & a_2 M_2(\bm u^1) & \hdots & a_2 M_2(\bm u^{\ell}) \\
			\vdots & \vdots & \ddots & \vdots \\
			a_k M_k(\bm u^0) & a_k M_k(\bm u^1) & \hdots & a_k M_k(\bm u^{\ell})
		\end{pmatrix} \\
		& = \underbrace{\begin{pmatrix}
				1 & 1 & \hdots & 1 \\
				M_1(\bm u) & M_2(\bm u) & \hdots & M_k(\bm u) \\
				(M_1(\bm u))^2 & (M_2(\bm u))^2 & \hdots & (M_k(\bm u))^2 \\
				\vdots & \vdots & \ddots & \vdots \\
				(M_1(\bm u))^{\ell} & (M_2(\bm u))^{\ell} & \hdots & (M_k(\bm u))^{\ell}
		\end{pmatrix}}_{h_1:\, \R^k \rightarrow \R^{\ell+1}}
		\underbrace{\begin{pmatrix}
			a_1 & 0 & \hdots & 0 \\
			0 & a_2 & \hdots & 0 \\
			\vdots & \vdots & \ddots & \vdots \\
			0 & 0 & \hdots & a_k
		\end{pmatrix}}_{h_2:\, \R^k \rightarrow \R^{k}}
		\underbrace{\begin{pmatrix}
				1 & M_1(\bm u) & \hdots & (M_1(\bm u))^{\ell} \\
				1 & M_2(\bm u) & \hdots & (M_2(\bm u))^{\ell} \\
				\vdots & \vdots & \ddots & \vdots \\
				1 & M_k(\bm u) & \hdots & (M_k(\bm u))^{\ell}
		\end{pmatrix}}_{h_3:\, \R^{\ell+1} \rightarrow \R^k}.
	\end{align*}
	Thus, if $\ell+1>k$, the mapping $h_3$ in the above expression is necessarily non-injective, and hence $H_{\ell+1}(f,\bm u)$ is necessarily singular, for any $\bm u$. This implies that $\det\paren{H_{\ell+1}(f,\bm x)}\equiv 0$.
	If $k \geq \ell+1$, the determinant expansion in the observation follows from the following argument in \cite{bot88}:
	\begin{enumerate}
		\item For any fixed $\bm u$, by standard determinant expansion, $\det (H_{\ell+1}(f, \bm u))$ is a polynomial $Q(\bm a)$, and the total degree of each monomial (in the ``variables'' $a_1,\ldots,a_k$) in $Q$ is exactly $\ell + 1$.
  
		\item If $\mathrm{rank}(H_{\ell+1}(f,\bm u)) \leq \|[a_1 \cdots a_k]\|_0 < \ell + 1$, we must have $\det\paren{H_{\ell+1}(f,\bm u)} \equiv 0$, irrespective of the values of the non-zero $\{a_1,\ldots,a_k\}$, for any $\bm u\in\R^n$. Hence each monomial (in $a_1,\ldots,a_k$) of $Q(a_1,\ldots,a_k)$ must have at least $\ell+1$ of the variables $\{a_i\}$. Since its total degree is exactly $\ell + 1$, each monomial must be of the form $\prod_{i \in S} a_i$ for some $S \subseteq [k]\text{, with } |S| = \ell + 1$.
  
		\item The coefficient of monomial $\prod_{i \in S} a_i$ in $Q$ will be $Q(c_1,\ldots,c_k)$ with $c_i = \indic\{i \in S\}$. But, by the decomposition above, this will be the square of the Vandermonde determinant (see \Cref{srt_defi:vandermonde}), $\det\paren{V_{\ell+1}\left(\{x_i\}_{i \in S}\right)} = \prod_{\substack{i, j \in S\\i < j}} (M_j(\bm u) - M_i(\bm u))$.
	\end{enumerate}
Summing these terms over all $S \subseteq[k]$ with $|S|=\ell+1$, completes the determinant expansion.
\end{proof}

\approxhankelstructure*
\begin{proof}[Proof of \Cref{obs:approx-hankel-structure}]
Let $\vec{u} \in \R^n$, and $\alpha_i \triangleq \tilde{f}(\vec{u}^i) - f(\vec{u}^i)$ for each $i \in\{ 0,1,\ldots,2t-2\}$, so that $|\alpha_i| \leq \eta$ by the approximation-oracle guarantee. We can rewrite $H_t(\tilde{f}, \vec{u})$ (by \Cref{srt_defi:Hankel-matrix-poly}) as:

\begin{align*}
    H_{t}(\Tilde{f},\bm u)\!&=\!\!\begin{pmatrix}
	        \Tilde{f}(\bm u^0) & \!\Tilde{f}(\bm u^1)\! & \!\!\hdots\!\! & \Tilde{f}(\bm u^{t-1})\\
	        \Tilde{f}(\bm u^1) & \!\Tilde{f}(\bm u^2)\! & \!\!\hdots\!\! & \Tilde{f}(\bm u^{t}) \\
	        \vdots & \!\vdots\! & \!\!\ddots\!\! & \vdots \\
	        \Tilde{f}(\bm u^{t-1}) & \!\Tilde{f}(\bm u^{t})\! & \!\!\hdots\!\! & \Tilde{f}(\bm u^{2t-2})
	 \end{pmatrix}\!\!= \!\!\begin{pmatrix}
	        f(\bm u^0)+\alpha_0 & \!\!f(\bm u^1)+\alpha_1 & \!\!\hdots\!\! & f(\bm u^{t-1})+\alpha_{t-1} \\
	        f(\bm u^{1})+\alpha_{1} & \!\!f(\bm u^{2})+\alpha_{2} & \!\!\hdots\!\! & f(\bm u^{t})+\alpha_{t}\\
	        \vdots & \!\!\vdots & \!\!\ddots\!\! & \vdots\\
	        f(\bm u^{t-1})+\alpha_{t-1} & \!\!f(\bm u^{t})+\alpha_{t} & \!\!\hdots\!\! & f(\bm u^{2t-2})+\alpha_{2t-2}
	 \end{pmatrix} \\
    & = \underbrace{\begin{pmatrix}
		        f(\bm u^0) & f(\bm u^1) & \hdots & f(\bm u^{t-1}) \\
		        f(\bm u^1) & f(\bm u^2) & \hdots & f(\bm u^{t}) \\
		        \vdots & \vdots & \ddots & \vdots \\
		        f(\bm u^{t-1}) & f(\bm u^{t}) & \hdots & f(\bm u^{2t-2})
	  \end{pmatrix}}_{=H_{t}(f,\bm u)}
     +\underbrace{\begin{pmatrix}
		        \alpha_0 & \alpha_1 & \hdots & \alpha_{t-1} \\
		        \alpha_1 & \alpha_2 & \hdots & \alpha_{t}\\
		        \vdots & \vdots & \ddots & \vdots \\
		        \alpha_{t-1} & \alpha_{t} & \hdots & \alpha_{2t-2}
		    \end{pmatrix}}_{\triangleq E_t(\bm u)}.
\end{align*}
The Hankel structure of $E_t(\vec{u})$ is evident. $\|E_t(\vec{u})\|_\infty \leq \eta$ follows from the approximation-oracle guarantee. This in turn shows $\|E_t(\vec{u})\|_{\rm F} \leq \sqrt{\eta^2 t^2} = \eta t$, which implies the bound on $\|\E_t(\vec{u})\|_{\rm op}$.
\end{proof}

\end{document}